\newtheorem{theorem}{Theorem}%
\newtheorem{proposition}[theorem]{Proposition}%
\newtheorem{remark}{Remark}%
\newtheorem{fact}{Fact}
\newtheorem{definition}{Definition}%
\newtheorem{corollary}{Corollary}[theorem]
\newtheorem{lemma}[theorem]{Lemma}
\newcommand{\QOTPEnc}{\mathsf{QOTPEnc}}
\newcommand{\Uperm}{U_{\sigma_\ell}}
\newcommand{\Deph}{\Gamma_{\lambda}} 
\newcommand{\secpar}{\eta}
\let\if@restonecol\iftrue
\let\if@twocolumn\iffalse
\begin{document}

\title{Computational Quantum Anamorphic Encryption and Quantum Anamorphic Secret-Sharing}

\author{Sayantan Ganguly}
\email{sayantan.ganguly.90@tcgcrest.org}
\affiliation{Institute for Advancing Intelligence (IAI), TCG CREST, EP \& GP Block, Sector - V, Salt Lake, Kolkata-700091, West Bengal, India.}
\orcid{0000-0002-2445-2701}

\affiliation{Ramakrishna Mission Vivekananda Educational and Research Institute (RKMVERI), Belur, West Bengal, India.}
\orcid{0000-0003-0290-4698}

\author{Shion Samadder Chaudhury}
\email{chaudhury.shion@gmail.com}
\affiliation{Institute for Advancing Intelligence (IAI), TCG CREST, EP \& GP Block, Sector - V, Salt Lake, Kolkata-700091, West Bengal, India.}
\orcid{0000-0002-2445-2701}
\affiliation{Faculty of Mathematical and Information Sciences, Academy of Scientific and Innovative Research(AcSIR), Gazhiabad, India}

\maketitle

\begin{abstract}
  The concept of anamorphic encryption, first formally introduced by Persiano et al. in their influential 2022 paper titled ``Anamorphic Encryption: Private Communication Against a Dictator,'' enables embedding covert messages within ciphertexts. One of the key distinctions between a ciphertext embedding a covert message and an original ciphertext, compared to an anamorphic ciphertext, lies in the indistinguishability between the original ciphertext and the anamorphic ciphertext. This encryption procedure has been defined based on a public-key cryptosystem. Initially, we present a quantum analogue of the classical anamorphic encryption definition that is based on public-key encryption. Additionally, we introduce a definition of quantum anamorphic encryption that relies on symmetric key encryption. Furthermore, we provide a detailed generalized construction of quantum anamorphic symmetric key encryption within a general framework, which involves taking any two quantum density matrices of any different dimensions and constructing a single quantum density matrix, which is the quantum anamorphic ciphertext containing ciphertexts of both of them. Subsequently, we introduce a definition of computational anamorphic secret-sharing and extend the work of \c{C}akan et al. on computational quantum secret-sharing to computational quantum anamorphic secret-sharing, specifically addressing scenarios with multiple messages, multiple keys, and a single share function. This proposed secret-sharing scheme demonstrates impeccable security measures against quantum adversaries.
\end{abstract}

\tableofcontents

\section{Introduction}\label{sec1}
With the advent of quantum computing, the field of cryptography is experiencing an unprecedented paradigm shift \cite{Shor1997Quantum, Grover1996Fast}. Quantum communication systems have several crucial advantages compared to classical cryptographic and communicational methods \cite{Bennett1984Quantum}. These advantages stem from the unique properties of quantum mechanics, enabling new forms of security and computational capabilities that classical systems cannot provide \cite{Nielsen2010Quantum}. Quantum communication offers several advantages over classical communication systems, mainly in terms of its security and efficiency \cite{Gisin2002Quantum}. A key quantum communication advantage is unconditional security, which is especially suitable for QKD applications that cannot be guaranteed by any classical communication system \cite{Lo1999Unconditional}. The no-cloning theorem states that under quantum communication, it is not possible to copy an unknown quantum state exactly \cite{Wootters1982Single}. Quantum communication is a very important thread in the framework for quantum computing systems \cite{Kimble2008Quantum}. With quantum computation, quantum communication allows for exponential gains in processing large datasets or complex algorithms \cite{Ladd2010Quantum}. Quantum communication is expected to be the backbone of the quantum networks spreading the quantum states among distributed quantum computers and sensors to perform complex tasks like distributed quantum computing, quantum-enhanced sensing, and secure global QKD \cite{Wehner2018Quantum}.

\medskip

The word ``anamorphic" characterizes a distorted or deformed projection or drawing; however, from a given point of view or technique, it seems or appears normal. Anamorphic encryption is a cryptographic encryption technique, a notion invented by Persiano et. al. \cite{persiano2022anamorphic}. According to Persiano et al. \cite{persiano2022anamorphic}, its success depends on two often-taken-for-granted assumptions: sender freedom and receiver privacy. The first assumes senders can choose the message, and the second assumes the receiver's secret key is uncompromised. While these assumptions are natural in most cases, they may be at risk in nations where law enforcement can force users to hand over their decryption keys. In dictatorships, citizens may only communicate regime-approved content, diminishing the sender's freedom \cite{catalano2024anamorphic}. Persiano et al. \cite{persiano2022anamorphic} added anamorphic encryption to these challenging cases. In \cite{persiano2022anamorphic}, two primitives are proposed depending on which assumption is unreliable: Sender anamorphic encryption handles circumstances where the sender's freedom assumption fails, whereas receiver anamorphic encryption addresses compromised private keys \cite{catalano2024anamorphic}. In an anamorphic encryption scheme, Alice can send a message to Bob, an original message and a covert message under dictatorial supervision in such a way that the original ciphertexts and the anamorphic ciphertexts are indistinguishable from the dictator \cite{persiano2022anamorphic}. One thing that makes anamorphic encryption stand out is that, unlike other steganographic methods, it can hide communication in a message that looks like any other encrypted message. It keeps the very existence of the hidden anamorphic message undetectable unless performed with the usage of the correct decryption key or method. The anamorphic message uses a special anamorphic key or protocol to decrypt it, while the original message can be decrypted through another key distinct from the anamorphic key. The existing works on classical encryption are mainly based on public-key encryption \cite{persiano2022anamorphic,catalano2024anamorphic,banfi2024anamorphic}. 

\medskip

Here, we shall make a distinction between steganography and anamorphic encryption \cite{Anderson1998Limits}. Steganography is the technique of masking information within some other unsuspecting data in such a way that even the existence of the hidden data may not be detectable \cite{Johnson2001Survey}. A steganographic technique typically hides the original message in a \textit{non-encrypted} cover message through images, audio, or text using subtle modifications so that it would not raise suspicions in an observer that something is being hidden \cite{Fridrich2009Steganography}. The challenge is to ensure that the modifications made to the cover object do not raise suspicion \cite{Petitcolas1999Information}. This requires that the alterations are small enough to be undetectable \cite{Katzenbeisser2000Principles}. If the cover object is examined closely, statistical analysis (e.g., using \textit{steganalysis} techniques) might reveal the existence of hidden data \cite{Chandramouli2002Analysis}.

\medskip

The main goal of anamorphic encryption is to construct such an encryption scheme in such a way that the original ciphertext and the anamorphic ciphertext are indistinguishable from the observation of the adversary. The crucial point here is that the ciphertext looks like normal encryption, and an observer is unaware that a second covert message exists. The encryption process ensures that the ciphertext can be decrypted in two ways, depending on the key used. The challenge is to ensure that the ciphertext is \textit{indistinguishable} from a normal ciphertext, even to an adversary who suspects that hidden messages might exist. The observer cannot tell that the ciphertext contains more than one message without the anamorphic key.

\medskip
In quantum secret-sharing(\(\mathsf{QSS}\)), a dealer distributes a secret, which is a quantum state, among the set of players. In the paper \cite{ccakan2023computational}, \c{C}akan et al. initiated the \textit{computationally secure} \(\mathsf{QSS}\) and showed that similar to the classical secret-sharing scheme, computational assumptions can significantly help in building \(\mathsf{QSS}\) schemes. In that paper, they have constructed \textit{polynomial-time} computationally secure \(\mathsf{QSS}\) schemes under standard hardness assumptions for a wide class of access structures, which also includes many access structures that necessarily require exponential share size. They have also studied the class of access structures that can be implemented \textit{efficiently} when the \(\mathsf{QSS}\) scheme has access to a given number of copies of the secret, including all functions in \(\mathsf{P}\) and \(\mathsf{NP}\) \cite{ccakan2023computational}. Here \textit{efficient} means both the share and the reconstruction function can be computed in polynomial time \cite{ccakan2023computational}. 
\medskip

The no-cloning theorem prohibits the exact duplication of unknown quantum states \cite{wootters1982nocloning, girling2024nocloning}. This implies that fundamental approaches, such as distributing identical components to multiple players, cannot be applied in quantum contexts, as the precise replication of shared states is infeasible \cite{gottesman1999theory, yu2022qss}. Furthermore, no quantum secret-sharing techniques currently exist to implement the OR function, which underscores the difficulty of directly adapting classical methodologies for quantum applications \cite{cakan2023unclonable}. Consequently, essential classical outcomes, including the use of logical functions like OR, face intrinsic incompatibilities with quantum mechanics due to these constraints \cite{li2024threshold}.

\medskip
In this paper, we have addressed the following specific questions:

\begin{itemize}
    \item \textbf{Question 1.} Is it possible to define an analogous quantum anamorphic encryption scheme where both the original and covert messages are represented as general quantum states or quantum density matrices of any finite dimension, while accounting for the presence of quantum adversaries?

    \medskip
    \item  \textbf{Question 2.} If such a quantum anamorphic encryption scheme is feasible, how can it be constructed to securely encode general quantum density matrices?
    
    \medskip
    \item \textbf{Question 3.} While the classical anamorphic encryption framework is built upon public-key encryption (\(\mathsf{PKE}\)), is it possible to develop a quantum anamorphic symmetric key encryption scheme that is secure against quantum adversaries? What are the inherent challenges in such a construction?

    \medskip
    \item \textbf{Question 4.} How can an anamorphic secret-sharing scheme be formally defined? What are the associated challenges, and what are the potential attacks that need to be considered?

    \medskip
    \item \textbf{Question 5.} How can anamorphic encryption, and consequently anamorphic secret-sharing, be designed to ensure that even if an adversary suspects the presence of a covert message, they remain incapable of decrypting it from the ciphertext?

\end{itemize}

\medskip

\noindent \(\bullet\) \textbf{Quantum advantage:}
Quantum anamorphic encryption can exploit quantum superposition and entanglement to hide the existence of the inner message more effectively. For instance, quantum states can encode multiple messages simultaneously, and any attempt by an adversary to measure or intercept the message would disturb the quantum state, revealing the presence of an eavesdropper. Also, because of the no-cloning theorem, it is impossible to create an exact copy of an unknown quantum state. This property ensures that an adversary cannot clone the quantum-encrypted message to analyze it without detection. In high-stakes environments (e.g., political dissidents, and military operations), quantum anamorphic encryption provides a higher level of assurance that the hidden message cannot be detected or decrypted by a coercer, even if they have access to quantum computing resources. Quantum entanglement can be used to distribute shares of a secret in a way that any unauthorized attempt to access the shares would disrupt the entangled state, alerting the participants to the breach.  Quantum systems can encode information in higher-dimensional Hilbert spaces, allowing for more efficient and secure sharing of secrets compared to classical systems. Quantum symmetric key anamorphic encryption is practical for scenarios where high-speed encryption and decryption are required, such as in real-time communication systems. The use of symmetric keys reduces computational overhead compared to public-key systems, while the quantum components ensure security against quantum attacks.

\medskip
\noindent \textbf{Applications:} There are many real-life applications of anamorphic encryption, for example, in diplomatic or military communications, for whistleblowing, or in activism. Whether it is an autocratic regime or an environment that suppresses free speech, a journalist, activist, or whistleblower will need to air out sensitive information without the hawk-eyed views of governmental censors or repressive regimes. A whistleblower operating within one of many corrupt governmental agencies decides to leak classified documents to a journalist. In international diplomacy and military operations, sensitive communications must be kept from adversaries or other foreign intelligence agencies. Anamorphic encryption allows the diplomat or military person to send secret messages without giving away the fact that they are transmitting sensitive information. In this domain, anamorphic encryption can give added security by embedding the covert information and making it accessible only through the proper anamorphic key. Sensitive data is often stored on a third-party cloud server in cloud storage. Even if the data are encrypted, it may be in a form that the service provider can detect the existence of sensitive data, thus raising concerns about the privacy of stored information. Anamorphic encryption is one way by which data can be stored hidden without being detected by a cloud provider. A company might store business reports on a cloud server encrypted with the original key but allow the service provider to audit or perform checks. Yet those very same files can also have classified financial data or intellectual property embedded in them with the anamorphic key, whose access is limited to only the authorized personnel of the company. 

\subsection{Related Works}
In classical cryptography, particularly in public-key encryption, significant progress has been made in the study of anamorphic encryption in recent years. Notable contributions include works by Banfi et al., Catalano et al., and Kutylowski \cite{banfi2024anamorphic,catalano2024anamorphic,catalano2024generic,catalano2024limits,kutylowski2023self}, building upon its original introduction by Persiano et al. in \cite{persiano2022anamorphic}. More recently, Jaeger and Stracovsky \cite{jaeger2025dictators} proposed additional conditions to refine the definition of anamorphic encryption. Furthermore, Wang et al. \cite{wang2023sender} presented a robust and generic construction, reformulating the concept of sender-anamorphic encryption.  

In the quantum setting, there has been growing interest in quantum public-key encryption, with foundational work by Okamoto et al. \cite{okamoto2000quantum}. More recently, Barooti et al. \cite{barooti2023public} introduced a quantum public-key encryption scheme utilizing quantum public keys, expanding the scope of secure quantum communication.  

Secret-sharing plays a fundamental role in classical cryptography. For a comprehensive survey on classical secret-sharing schemes, we refer the reader to Beimel’s article \cite{beimel2011secret}. Efficiency is a crucial aspect of secret-sharing, ensuring that both the sharing and reconstruction processes are computationally feasible, i.e., executable in polynomial time. The seminal works of Shamir \cite{shamir1979share} and Blakley \cite{blakley1979safeguarding} introduced efficient threshold secret-sharing schemes for \( t \)-out-of-\( n \) access structures. For all functions in \(\mathsf{monotone}\) \(\mathsf{P}\), Yao \cite{Yao89} and Vinod et al. \cite{vinod2003power} developed efficient computational secret-sharing schemes. Additionally, Komargodski, Naor, and Yogev \cite{KNY14} constructed efficient computational secret-sharing schemes capable of realizing all functions in \(\mathsf{mNP}\).  

In quantum cryptography, secret-sharing naturally extends to the sharing of quantum states. However, not all monotone functions permit quantum secret-sharing. The problem of quantum secret-sharing for specific classes of monotone functions has been explored in \cite{HBB99,KKI99,CGL99}. Gottesman \cite{Got00} and Smith \cite{Smi00} provided constructions for quantum secret-sharing schemes realizing all allowable monotone functions. Imai et al. \cite{imai2003quantum} proposed a general model for quantum secret-sharing, while Smith \cite{smith2000quantum} constructed quantum secret-sharing schemes for general access structures. Furthermore, in \cite{Smi00}, Smith developed quantum secret-sharing schemes for monotone functions \( f \), ensuring that the total share size corresponds to the size of the smallest monotone span program (MSP) computing \( f \), thus extending a classical result by Karchmer and Wigderson to the quantum domain. More recently, Çakan et al. \cite{ccakan2023computational} constructed and described an efficient computational model for quantum secret-sharing, further advancing this field.

\if 0

\subsection{Our Contribution}
In this paper, first, we have proposed a quantum analogue of the definition of classical anamorphic encryption in the public-key encryption setup [Subsection \ref{sec:qam}, Definition \ref{def:qape}]. Also, we have proposed a definition of quantum anamorphic encryption in the symmetric-key encryption setup (Subsection \ref{sec:qam}, Definition \ref{def:qase}). We have constructed a general quantum anamorphic symmetric-key encryption in [Subsection \ref{subsec:main}] and proved the computational indistinguishability of the original and the anamorphic ciphertexts in Theorem \ref{thm:ind}. Let \(M_{o}\) be the \(d_{1} \times d_{1}\) quantum density matrix representing the original message and let \(M_{f}^{(0)}\) be the original ciphertext after encrypting the original message. Let \(M_{c}\) be the \(d_{2} \times d_{2}\) quantum density matrix representing the covert message and let \(M_{c}'\) be the covert ciphertext after encryption. Let \(M_{f}^{(1)}\) be the anamorphic ciphertext, which is a quantum density matrix containing both the original and anamorphic ciphertexts in such a way that \(M_{f}^{(1)}\) and \(M_{f}^{(0)}\) become indistinguishable to the adversary. The main challenge is to construct such an anamorphic ciphertext \(M_{f}^{(1)}\), which has to be a quantum density matrix, so that it becomes indistinguishable from the original ciphertext \(M_{f}^{(0)}\) to the adversary. In our construction we have proved rigorously that both anamorphic ciphertext \(M_{f}^{(1)}\) and original ciphertext \(M_{f}^{(0)}\) are both quantum density matrices, and we have also shown in Theorem \ref{thm:main}, how the multiplicative factor \(\eta\) plays an important role in our construction to show the positive semi-definiteness of the quantum density matrix \(M_{f}^{(1)}\) for which it is \(\frac{1}{\eta}\)-indistinguishable to the adversary. We have also derived the lower bound for the multiplicative factor \(\eta\) in the Corollary \ref{cor:main}. Also, we have computed the expected states and computational indistinguishability between expected states in Theorem \ref{thm:exp}. As fidelity measures the closeness of two quantum states, we have showed that the fidelity between the original quantum state and anamorphic quantum state is very close to 1, that is grater than equal to \(\Big(1-\frac{1}{\eta}\Big)\) in the Theorem \ref{thm:fid}. Next, in Section \ref{sec:ass}, we have proposed a definition of anamorphic secret-sharing in Definition \ref{def:ass} and constructed a quantum anamorphic secret-sharing scheme in Theorem \ref{thm:ass}, and proved the correctness in Theorem \ref{thm:corr} and perfect privacy in Theorem \ref{thm:priv} generalizing the work of \c{C}akan et. al. \cite{ccakan2023computational} for multiple key distributions. In the Section \ref{sec:complexity}, we have computed done the complexity analysis for our construction. In Section \ref{sec:attack}, we have two possible attacks and how to prevent the dictator to reconstruct the covert secret by extending a little the definition of cheating probability defined by Ogata et al. in \cite{ogata2004new} to partial cheating probability \ref{def:parchit} in the anamorphic secret-sharing context and also showed that the adversary or the dictator can be prevented with high partial cheating probability \ref{eq:parchit}.

\fi

\subsection{Our Contribution}

In this paper, we introduce and rigorously analyze quantum analogues of anamorphic encryption and secret-sharing in quantum communications.

\noindent \textbf{Quantum Anamorphic Encryption:}
 We propose a quantum analogous definition of the classical anamorphic encryption scheme definition in the quantum public-key encryption setting [Subsection \ref{sec:qae}, Definition \ref{def:qape}] as well as in the quantum symmetric-key encryption setup [Subsection \ref{sec:qae}, Definition \ref{def:qase}]. We have constructed a general quantum anamorphic symmetric-key encryption scheme [Subsection \ref{subsec:main}] and rigorously proved the computational indistinguishability of the original ciphertext and the anamorphic ciphertext in the Theorem \ref{thm:ind}. Our construction ensures that the anamorphic ciphertext $M_f^{(1)}$, which contains both the original ciphertext and a covert ciphertext, remains indistinguishable from the original ciphertext $M_f^{(0)}$ to an adversary or to the dictator in our model. We formally established that both $M_f^{(1)}$ and $M_f^{(0)}$ are valid quantum density matrices, ensuring the integrity of our construction.

\medskip
\noindent \textbf{Indistinguishability and Fidelity Analysis:}
 We demonstrate in Theorem \ref{thm:main} the crucial role of the multiplicative factor $\eta$ in maintaining the positive semi-definiteness of the quantum density matrix $M_f^{(1)}$, showing that it is $\frac{1}{\eta}$-indistinguishable from $M_f^{(0)}$. We derived a lower bound on $\eta$ in Corollary \ref{cor:main}, further deriving the sufficient condition to ensure that the anamorphic ciphertext \(M_{f}^{(1)}\) is a valid quantum density matrix. We analyzed the expected states and their computational indistinguishability (Theorem \ref{thm:exp}). Utilizing fidelity as a measure of closeness between quantum states, we establish in Theorem \ref{thm:fid} that the fidelity between the original quantum ciphertext and the anamorphic quantum ciphertext is at least $\Big(1-\frac{1}{\eta}\Big)$, indicating a high level of similarity. Also, we have analyzed the von Neumann entropy, mutual information and relative entropy for our construction and for some particular cases in [Section \ref{discussion}].

\medskip
\noindent \textbf{Quantum Anamorphic Secret-Sharing:}
We propose a new definition of anamorphic secret-sharing (Definition \ref{def:ass}) and construct a quantum anamorphic secret-sharing scheme (Theorem \ref{thm:ass}) and rigorously prove the correctness of our scheme (Theorem \ref{thm:corr}) and establish perfect privacy (Theorem \ref{thm:priv}). Our scheme generalizes the work of \c{C}akan et al. \cite{ccakan2023computational} to support multiple key distributions.

\medskip
\noindent \textbf{Security Analysis:}
We analyze two potential attacks in Section \ref{sec:attack} and propose countermeasures to prevent adversarial reconstruction of the covert secret. We extend Ogata et al.'s definition of cheating probability \cite{ogata2004new} to \textit{partial cheating probability} (Definition \ref{def:parchit}) within the anamorphic secret-sharing context. We demonstrate that the adversary or dictator can be effectively prevented with high partial cheating probability (Equation \ref{eq:parchit}).

\subsection{Paper Organization} The paper is distributed among the following sections: In section \ref{prelim}, we outline the preliminary concepts and notations required for this work. In the section \ref{sec:cae} we have described the classical anamorphic encryption. We proposed a definition of quantum anamorphic public-key encryption and quantum anamorphic symmetric key encryption in \ref{sec:qae}. We present our main construction of quantum anamorphic encryption in the symmetric-key encryption setup and computational analysis in section \ref{technical}. The study on quantum anamorphic secret-sharing schemes is done in section \ref{sec:ass} and the compiler is presented in section \ref{compiler}. In section \ref{discussion}, we analyze the qubit requirements, analyzed the von Neumann entropy, mutual information, relative entropy for our construction and also we have discussed some possible attacks and how to prevent it. The paper is concluded in section \ref{conclude}.

\section{Preliminaries}\label{prelim}
\subsection{Notations}
In this paper, we have denoted non-empty sets by uppercase letters. We denote \( [n]= \{1, \ldots, n\} \). Let \(v\) be a vector and \(S\) be a non-empty set. We denote the symmetric group of \(n\) elements by \(\operatorname{Sym}(n)\). We denote \(v^{P}\) to indicate the vector \((v_{i})_{i \in P}\). Let \(\{\mathcal{S}_{i}\}_{i \in [n]}\) be a family of sets, and for \(P \subseteq [n]\), we denote \(\mathcal{S}_{P}:=\prod_{i\in P \subseteq [n]}\mathcal{S}_{i}\). For convenience, we have denoted the set of \(n\) players by \([n]\) and \(\{P_{1},\ldots,P_{n}\}\) interchangeably at appropriate places. We denote \(R {\gets} \mathcal{R}\) to denote that \(R\) is uniformly distributed on \(\mathcal{R}\). We used \(\rho\) to denote a density matrix acting on \(\mathcal{H}\). It will be clear from the context whether we mean a vector \(\ket{\psi}\) in a Hilbert space \(\mathcal{H}\) representing a pure state or a density matrix \(\rho\) acting on \(\mathcal{H}\), representing a mixed state. The \textit{trace norm} \( \|\cdot\|_1 \) for any operator \( X \) is defined by
\(\|X\|_1 = \text{Tr}\left( \sqrt{X^\dagger X} \right),\)
with \( X^\dagger \) denoting the Hermitian conjugate (or adjoint) of \( X \).  The \textit{operator norm} of \( X \), denoted \( \|X\| \) or \( \|X\|_{\infty} \), is given by \(\|X\|= \sup_{|\psi\rangle \in \mathcal{H}, ||\psi||=1} \| X|\psi\rangle \|_2\), where $\| X|\psi\rangle \|_2$ is the Euclidean norm of the resulting vector. For the important special case of a \textit{Hermitian positive semi-definite (PSD)} operator $X \succeq 0$, that is $X=X^\dagger$ and $\langle \psi | X | \psi \rangle \ge 0$ for all $|\psi\rangle$, all eigenvalues are non-negative. In this case, the operator norm, the numerical radius, and the largest eigenvalue coincide:
\(\|X\| = \sup_{|\psi\rangle \in \mathcal{H}, \|\psi\| = 1} \langle \psi | X | \psi \rangle,\)
where the supremum is taken over all unit vectors \( |\psi\rangle \in \mathcal{H} \). We have used both the notations \(\|X\|\) and \(\|X\|_{\infty}\) interchangebly. For $X\in\mathcal{L}(\mathbb{C}^d)$, the Frobenius norm is defined by \(\|X\|_2 \ :=\ \big(\mathrm{Tr}(X^\dagger X)\big)^{1/2}\). If \( x \) is a vector in \( \mathbb{R}^n \), its Euclidean norm is denoted \( \|x\|_2 \) and is defined as
\(\|x\|_2 = \sqrt{x_1^2 + x_2^2 + \cdots + x_n^2}.\) We denote \(\log x\) as the base-2 logarithm of \(x\), unless explicitly specified otherwise. We have used the notation \(P\) to denote different quantities in different scenarios. \(P\) may denote the set of players, the Pauli operators or the distributions. We have taken care to mention the context where the notation is used. Given an adversary or distinguisher $\mathcal{D}$ and an oracle channel~$\mathcal{O}$, the notation $\mathcal{D}^{\mathcal{O}}$ indicates that $\mathcal{D}$ can make adaptive, polynomially many queries to~$\mathcal{O}$, with fully quantum inputs and outputs. The identity operator defined on a finite-dimensional Hilbert space \(\mathcal{H}_{M}\) is denoted as \(I_{M}\) and when \(\dim_{\mathbb{C}}(\mathcal{H}_{M})=2^{d_{1}}\), we also denote \(I_{M}\) as \(I_{2^{d_{1}}}\).

\subsection{Quantum information theory}
In this section, we review a few basic definitions and a few theoretical concepts of quantum information that will be used in our paper. We refer the following references \cite{nielsen2001quantum,wilde2017quantum,choi1975completely} to the reader.

\medskip

Let \( \mathcal{H} \) and \( \mathcal{K} \) be finite-dimensional complex Hilbert spaces associated with the input and output quantum systems, respectively. The space of linear operators on \( \mathcal{H} \) is denoted \( \mathcal{L}(\mathcal{H}) \). The state of the quantum system is described by a density operator \( \rho \in \mathcal{L}(\mathcal{H}) \), which satisfies:
\begin{enumerate}
    \item \textbf{Hermitian}: \(\rho^{\dagger}=\rho\)
    \item \textit{\textbf{Positive semi-definiteness:}} \( \rho \succeq 0 \) 
    \item \textit{\textbf{Unit trace:}} \( \operatorname{Tr}(\rho) = 1 \).
\end{enumerate}

We denote the set of all density matrices defined on \(\mathcal{H}\) by 
\[
\mathcal{D}(\mathcal{H}) = \{ \rho \in \mathcal{L}(\mathcal{H}) \mid \rho^{\dagger}=\rho,\rho \succeq 0, \operatorname{Tr}(\rho) = 1 \}.
\]
\medskip

A quantum channel maps quantum states (represented as density operators on a Hilbert space) to other quantum states, accounting for potential noise and decoherence effects. It is a \textit{completely positive, trace-preserving (CPTP) linear map} on the space of density operators.

\begin{definition}(Quantum channel \cite{wilde2017quantum,choi1975completely,nielsen2001quantum})
    A \textit{quantum channel} is a linear map \( \Phi: \mathcal{L}(\mathcal{H}) \longrightarrow \mathcal{L}(\mathcal{K}) \) satisfying the following properties:
\begin{enumerate}
    \item \textbf{Complete Positivity}: For any \( n \in \mathbb{N} \), the map \(( \Phi \otimes \operatorname{Id}_n) : \mathcal{L}(\mathcal{H} \otimes \mathbb{C}^n) \longrightarrow \mathcal{L}(\mathcal{K} \otimes \mathbb{C}^n) \) is positive, where \( \operatorname{Id}_n \) is the identity map on \( \mathbb{C}^n \). That is, for all \( X \in \mathcal{L}(\mathcal{H} \otimes \mathbb{C}^n) \) with \( X \succeq 0 \), we have \( (\Phi \otimes \operatorname{Id}_n)(X) \succeq 0 \).
    
    \item \textbf{Trace Preservation}: For all \( \rho \in \mathcal{L}(\mathcal{H}) \), \( \operatorname{Tr}(\Phi(\rho)) = \operatorname{Tr}(\rho) \).
\end{enumerate}

Therefore, \( \Phi \) maps density operators on \( \mathcal{H} \) to density operators on \( \mathcal{K} \), preserving the physical validity of quantum states.
\end{definition}

\medskip

We will consider finite-dimensional quantum systems with $m$ degrees of freedom, represented by the algebra of $m \times m$ matrices over \(\mathbb{C}\), referred to as $\mathbb{M}_m$. The state of a quantum system $X$ is characterized by its density matrix $\rho_X \in \mathbb{M}_m$.

\medskip

\begin{definition}(von Neumann entropy \cite{schumacher1995quantum,nielsen2001quantum,wilde2017quantum,wehrl1978general})
The von Neumann entropy of a quantum system $X$ with a density matrix $\rho_X \in \mathbb{M}_m$ is defined as
\[
\mathsf{S}(X) = -\Tr(\rho_X \log \rho_X) = -\sum_{1 \leq j \leq m} \lambda_j \log \lambda_j,
\]
where $\lambda_1, \ldots, \lambda_m$ are the eigenvalues of $\rho_X$. One way to look at the quantum entropy is as the average amount of qubits needed to describe a system $X$ \cite{schumacher1995quantum}.
\end{definition}

The maximum value for \( \mathsf{S}(A) \) is given by:
\[
\mathsf{S}(X) \leq \log \dim (\mathcal{H}_{X})= \log m,
\]
where \(\mathcal{H}_{X}\) is the Hilbert space associated with the quantum system \(X\).

\medskip

Let $XY$ be a bipartite quantum system. Let $\rho_{XY}$ be the density matrix associated to \(XY\) resides on the Hilbert space $\mathcal{H}_{XY} = \mathcal{H}_X \otimes \mathcal{H}_Y$. The subsystems $X$ and $Y$ will be represented by the partial traces $\rho_X = \Tr_{Y} (\rho_{XY})$ and $\rho_Y = \Tr_{X} (\rho_{XY})$. The von Neumann entropy of a quantum system $X$ conditional on another quantum system $Y$ is defined as (see \cite{nielsen2001quantum,imai2003quantum}):
\[
\mathsf{S}(X|Y) = \mathsf{S}(XY) - \mathsf{S}(Y),
\]
where $\mathsf{S}(XY) = -\Tr(\rho_{XY} \log \rho_{XY})$ is the \textit{joint entropy of} \(XY\) and $\mathsf{S}(Y) = -\Tr(\rho_Y \log \rho_Y)$.

\medskip
The joint entropy of two quantum systems satisfies two important properties:
\begin{itemize}
\item \textbf{Subadditivity:} \[
\mathsf{S}(XY) \leq \mathsf{S}(X) + \mathsf{S}(Y) ,
\]

\item \textbf{Araki-Lieb inequality:} \[
\mid \mathsf{S}(X) - \mathsf{S}(Y) \mid \leq \mathsf{S}(XY)
.\]

\end{itemize}

The mutual information between two quantum systems \( X\) and \( Y \) is defined as:
\[
I(X : Y) = \mathsf{S}(X) + \mathsf{S}(Y) - \mathsf{S}(XY).
\]

\medskip

Despite many analogies between quantum and classical entropies, they are fundamentally distinct. The conditional von Neumann entropy can be negative, but the conditional classical Shannon entropy is always non-negative.

\medskip

\begin{definition} (Statistical distance/Total variation distance \cite{levin2009markov, cover2006elements,billingsley1995probability,gibbs2002choosing}). The total variation distance between two random variables \( X\) and \(Y \) defined on the same sample space \( \mathcal{R} \) is defined as
\[
\Delta(X, Y) = \max_{\mathcal{G} \subseteq \mathcal{R}} \Big| \Pr[X \in \mathcal{G}] - \Pr[Y \in \mathcal{G}] \Big| = \frac{1}{2} \left( \sum_{a \in \mathcal{R}} | \Pr[X = a] - \Pr[Y = a] | \right).
\]
\end{definition}

The analogue of the total variation distance in the quantum setting is the trace distance.

\medskip

\begin{definition}(Adversarial Pseudometric \cite{ccakan2023computational})\label{def:qadv}
For a family \( \mathcal{F} \) of quantum circuits that produce a single-bit classical output, the \textit{distinguishing advantage} of \( \mathcal{F} \) between two quantum density matrices \( \rho \) and \( \sigma \) of appropriate dimensions is defined as:
\[
\mathsf{Adv}_\mathcal{F}(\rho, \sigma) = \max_{C \in \mathcal{F}} \Big| \Pr[C(\rho) = 1] - \Pr[C(\sigma) = 1] \Big|.
\]

The adversarial pseudometric quantifies the maximum probability difference with which any circuit in \( \mathcal{F} \) can distinguish between \( \rho \) and \( \sigma \).
\end{definition}

\medskip

\begin{definition}(Trace distance \cite{nielsen2001quantum,wilde2017quantum,fuchs1999cryptographic}) The trace distance between two density matrices \( \rho \) and \( \sigma \) with the same dimensions is defined as
\[
D(\rho, \sigma) = \frac{1}{2} \|\rho - \sigma\|_1.
\]
\end{definition}

\medskip

\begin{definition}(Fidelity \cite{nielsen2001quantum,wilde2017quantum,uhlmann1976transition,jozsa1994fidelity})
Let \(\rho\) and \(\sigma\) be two density matrices (quantum states) acting on the same Hilbert space \(\mathcal{H}\). Fidelity \(F(\rho, \sigma)\) is defined as:
\[
F(\rho, \sigma) = \operatorname{Tr}\left(\sqrt{\sqrt{\rho} \sigma \sqrt{\rho}}\right).
\]
\end{definition}

\medskip

\begin{lemma}(\cite{nielsen2001quantum}) \label{lemma:advbd}
For any family of quantum circuits $\mathcal{F}$ and two density matrices \(\rho\) and \(\sigma\) of the same dimension, the adversarial advantage is bounded by the trace distance
\[
\operatorname{Adv}_{\mathcal{F}}(\rho, \sigma) \leq D(\rho, \sigma).
\]
\end{lemma}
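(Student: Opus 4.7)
The plan is to reduce the bound to the contractivity of the trace norm under CPTP maps, combined with the observation that for diagonal (classical) density matrices the trace distance equals the total variation distance. This makes the lemma essentially a data-processing inequality specialized to single-bit measurement outcomes.

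First, any single-bit-output quantum circuit $C \in \mathcal{F}$, even when it employs fresh ancillas and intermediate measurements, can be modeled as a CPTP map $\Phi_{C}$ whose output lives on $\mathbb{C}^{2}$ and is diagonal in the computational basis:
$$\Phi_{C}(\rho) = \Pr[C(\rho)=0]\,|0\rangle\langle 0| + \Pr[C(\rho)=1]\,|1\rangle\langle 1|.$$
This step uses standard Stinespring-type reasoning: appending a pure ancilla followed by a unitary is CPTP, intermediate measurements can be deferred via the principle of deferred measurement, and the final readout (partial trace followed by a computational-basis measurement) is again CPTP. The second ingredient is the monotonicity of the trace distance under CPTP maps, $D(\Phi_{C}(\rho),\Phi_{C}(\sigma)) \le D(\rho,\sigma)$, which is standard (see \cite{nielsen2001quantum}).

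Next, I would compute the left-hand side explicitly for the diagonal output states. For any two probability distributions $(p_{0},p_{1})$ and $(q_{0},q_{1})$ on $\{0,1\}$, normalization forces $|p_{0}-q_{0}| = |p_{1}-q_{1}|$, hence
$$D(\Phi_{C}(\rho),\Phi_{C}(\sigma)) = \frac{1}{2}\sum_{i=0}^{1}\bigl|\Pr[C(\rho)=i]-\Pr[C(\sigma)=i]\bigr| = \bigl|\Pr[C(\rho)=1]-\Pr[C(\sigma)=1]\bigr|.$$
Chaining this equality with the monotonicity inequality and taking the supremum over $C \in \mathcal{F}$ yields $\mathrm{Adv}_{\mathcal{F}}(\rho,\sigma) \le D(\rho,\sigma)$, as required.

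The main (mild) subtlety will be justifying that the CPTP modeling remains valid when $\mathcal{F}$ is permitted to use arbitrarily large ancillas and mid-circuit measurements; neither resource enlarges the distinguishing advantage, because appending a fixed pure ancilla leaves trace distance invariant and partial traces are CPTP. As a self-contained alternative that avoids invoking monotonicity of $D$ as a black box, one may associate to each circuit a POVM element $0 \preceq M_{C} \preceq I$ with $\Pr[C(\rho)=1]=\mathrm{Tr}(M_{C}\rho)$, apply the Jordan--Hahn decomposition $\rho-\sigma = P-Q$ with $P,Q \succeq 0$ and $PQ=0$, and bound $|\mathrm{Tr}(M_{C}(\rho-\sigma))| \le \max(\mathrm{Tr}(P),\mathrm{Tr}(Q)) = \tfrac{1}{2}\|\rho-\sigma\|_{1}$, where the last equality uses $\mathrm{Tr}(\rho-\sigma)=0$.
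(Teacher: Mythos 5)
Your proposal is correct, and your primary route differs from the paper's. The paper argues directly at the level of POVMs: it writes $\Pr[C(\rho)=1]=\operatorname{Tr}(E\rho)$ for a measurement operator $0\preceq E\preceq I$, so the advantage of any circuit is $\bigl|\operatorname{Tr}\bigl(E(\rho-\sigma)\bigr)\bigr|$, and then invokes the standard hypothesis-testing identity $\max_{0\preceq E\preceq I}\bigl|\operatorname{Tr}\bigl(E(\rho-\sigma)\bigr)\bigr|=\tfrac12\|\rho-\sigma\|_1$ (cited, not re-derived). You instead package the whole circuit as a CPTP map with diagonal single-qubit output and appeal to monotonicity of the trace distance under CPTP maps, observing that for two binary distributions the output trace distance collapses to $\bigl|\Pr[C(\rho)=1]-\Pr[C(\sigma)=1]\bigr|$; this buys a cleaner treatment of ancillas and mid-circuit measurements (they are absorbed into the channel, so no separate argument is needed that they cannot help), at the cost of importing data processing as a black box. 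Your ``self-contained alternative'' is essentially the paper's proof, except that where the paper cites the maximization identity you actually prove it via the Jordan--Hahn decomposition $\rho-\sigma=P-Q$ with $PQ=0$ and $\operatorname{Tr}(P)=\operatorname{Tr}(Q)=\tfrac12\|\rho-\sigma\|_1$; that fills in the one step the paper leaves to the literature, so in that respect your write-up is the more complete of the two. Both arguments are sound.
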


\begin{proof}
Consider any quantum circuit $C \in \mathcal{F}$ that performs a measurement on a quantum state and outputs a bit $b \in \{0, 1\}$. Any such measurement can be described by a positive operator-valued measure (POVM).

\medskip

We denote the measurement operators corresponding to output $1$ and $0$ as $E$ and $(I - E)$, respectively, where $0 \preceq E \preceq I$ and $I$ is the identity operator on $\mathcal{H}$. The probability that $C$ outputs $1$ when measuring $\rho$ is
\(
\Pr\left[ C(\rho) = 1 \right] = \operatorname{Tr}( E \rho).
\)

Similarly, the probability that $C$ outputs $1$ when measuring $\sigma$ is
\(
\Pr\left[ C(\sigma) = 1 \right] = \operatorname{Tr}( E \sigma).
\)
Then
\(
\left| \Pr\left[ C(\rho) = 1 \right] - \Pr\left[ C(\sigma) = 1 \right] \right| = \left| \operatorname{Tr}( E (\rho - \sigma) ) \right|.
\)
To find the maximum advantage over all circuits in $\mathcal{F}$, we consider the maximal value over all valid measurement operators $E$. Since $0 \preceq E \preceq I$, we have
\[
\operatorname{Adv}_{\mathcal{F}}(\rho, \sigma) \leq \max_{0 \preceq E \preceq I} \left| \operatorname{Tr}( E (\rho - \sigma) ) \right|.
\]
However, in quantum hypothesis testing, it is known that the maximum of $\left| \operatorname{Tr}( E (\rho - \sigma) ) \right|$ over all $0 \preceq E \preceq I$ is equal to the trace distance between $\rho$ and $\sigma$. Specifically, from the definition of the trace norm:
\[
\| \rho - \sigma \|_1 = \operatorname{Tr} \left[ | \rho - \sigma | \right ] = 2 \max_{0 \preceq E \preceq I} \left| \operatorname{Tr}( E (\rho - \sigma) ) \right|.
\]

Therefore,
\[
\max_{0 \preceq E \preceq I} \left| \operatorname{Tr}(E (\rho - \sigma)) \right| = \frac{1}{2} \| \rho - \sigma \|_1 = D(\rho, \sigma).
\]

and consequently,
\[
\operatorname{Adv}_{\mathcal{F}}(\rho, \sigma) \leq D(\rho, \sigma).
\]
\end{proof}

\medskip

\begin{lemma}(The Fuchs–van de Graaf inequalities \cite{nielsen2001quantum})
The Fuchs–van de Graaf inequalities are given by:
\[
1 - F(\rho, \sigma) \leq D(\rho, \sigma) \leq \sqrt{1 - F(\rho, \sigma)^2}.
\]
\end{lemma}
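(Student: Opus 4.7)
The plan is to prove the two inequalities separately, reducing the mixed-state statement to a pure-state computation on one side and to a classical probabilistic computation on the other. The two standard auxiliary tools I would invoke are Uhlmann's theorem and the variational (measurement-based) characterization of fidelity, both of which are available in the references already cited in the preliminaries.

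First I would establish the closed-form pure-state identity $D(|\psi\rangle\langle\psi|,|\phi\rangle\langle\phi|) = \sqrt{1-|\langle\psi|\phi\rangle|^2}$ by writing the rank-at-most-two Hermitian operator $|\psi\rangle\langle\psi|-|\phi\rangle\langle\phi|$ in the two-dimensional subspace $\mathrm{span}\{|\psi\rangle,|\phi\rangle\}$; a short Gram-matrix computation shows its nonzero eigenvalues are $\pm\sqrt{1-|\langle\psi|\phi\rangle|^2}$, so the trace norm equals $2\sqrt{1-|\langle\psi|\phi\rangle|^2}$. Combined with $F(|\psi\rangle\langle\psi|,|\phi\rangle\langle\phi|)=|\langle\psi|\phi\rangle|$, this gives equality in the upper bound for pure states and also the pointwise inequality $1-x \leq \sqrt{1-x^2}$ for $x\in[0,1]$, which will handle the lower bound in the pure case.

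For the upper bound $D(\rho,\sigma)\leq\sqrt{1-F(\rho,\sigma)^2}$ I would use Uhlmann's theorem to pick purifications $|\psi\rangle$ of $\rho$ and $|\phi\rangle$ of $\sigma$ on an enlarged space $\mathcal{H}\otimes\mathcal{H}_E$ that attain $|\langle\psi|\phi\rangle|=F(\rho,\sigma)$, and then apply monotonicity of the trace distance under the partial trace (a CPTP map), which gives
\[
D(\rho,\sigma) \ \leq\ D(|\psi\rangle\langle\psi|,|\phi\rangle\langle\phi|) \ =\ \sqrt{1-|\langle\psi|\phi\rangle|^2} \ =\ \sqrt{1-F(\rho,\sigma)^2}.
\]
For the lower bound $1-F(\rho,\sigma)\leq D(\rho,\sigma)$, Uhlmann alone is insufficient because monotonicity points the wrong way. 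Instead I would invoke the dual variational formula $F(\rho,\sigma)=\min_{M}\sum_i\sqrt{\mathrm{Tr}(M_i\rho)\mathrm{Tr}(M_i\sigma)}$, taking an optimal POVM $M^{*}=\{M_i^{*}\}$ and setting $p_i=\mathrm{Tr}(M_i^{*}\rho)$, $q_i=\mathrm{Tr}(M_i^{*}\sigma)$. The algebraic identity $1-\sum_i\sqrt{p_iq_i}=\tfrac12\sum_i(\sqrt{p_i}-\sqrt{q_i})^2$, together with $(\sqrt{p_i}-\sqrt{q_i})^2\leq|\sqrt{p_i}-\sqrt{q_i}|(\sqrt{p_i}+\sqrt{q_i})=|p_i-q_i|$, yields the classical Fuchs--van de Graaf bound $1-F_{\mathrm{cl}}(p,q)\leq\tfrac12\sum_i|p_i-q_i|$. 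Chaining this with the fact that the classical total variation distance after any POVM is bounded above by $D(\rho,\sigma)$ gives $1-F(\rho,\sigma)\leq D(\rho,\sigma)$.

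The main obstacle will be justifying the two ``black-box'' inputs, namely Uhlmann's theorem and the existence of a fidelity-achieving POVM; both are nontrivial but classical results (Uhlmann 1976, Fuchs--Caves) and are already cited in the bibliography, so in the paper I would simply invoke them. The remaining steps are elementary linear algebra and the elementary inequality $(\sqrt{p}-\sqrt{q})^2\leq|p-q|$, so no surprise complications are expected.
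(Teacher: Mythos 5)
Your proposal is correct and is essentially the canonical argument: the paper itself states this lemma without proof, citing Nielsen and Chuang, whose proof is exactly your route (Uhlmann purifications plus monotonicity of the trace distance under partial trace for the upper bound, and the fidelity-achieving POVM with the classical bound $1-\sum_i\sqrt{p_iq_i}=\tfrac12\sum_i(\sqrt{p_i}-\sqrt{q_i})^2\le\tfrac12\sum_i|p_i-q_i|\le D(\rho,\sigma)$ for the lower bound). The conventions also match, since the paper uses the square-root fidelity $F(\rho,\sigma)=\operatorname{Tr}\sqrt{\sqrt{\rho}\,\sigma\sqrt{\rho}}$, so no adjustment is needed.
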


\medskip
Let $X\in \mathcal{L}(\mathcal{H})$ be a positive semidefinite operator on a finite-dimensional Hilbert space $\mathcal{H}$. The \emph{support} of $X$, denoted $\textit{supp}(X)$, is the subspace spanned by all eigenvectors of $X$ with nonzero eigenvalues. Equivalently, if $X$ has spectral decomposition,
\[
  X \;=\; \sum_{i}\,\lambda_i\,\ket{u_i}\bra{u_i}
  \quad(\lambda_i\ge0),
\]
then $\textit{supp}(X) := span\{\ket{u_i} : \lambda_i>0\}$.

\medskip
When we write $(X)^{-1}$ for $X \succeq 0$, we always mean the inverse restricted to $\textit{supp}(X)$; that is, we invert only the strictly positive eigenvalues, and vectors in the kernel of $X$ are understood to be outside the domain of $(X)^{-1}$. Such an inverse is sometimes called the \emph{generalized inverse on the support}.

\medskip

\noindent For a linear operator $Y \in \mathcal{L}(\mathcal{H})$ on a finite-dimensional complex Hilbert space $\mathcal{H}$, the \emph{spectral norm} $\|Y\|$ is defined as

\begin{align*}
\|Y\|
  \;=\;
  \sup_{\ket{\psi}\neq 0, 
        \ket{\psi} \in \mathbb{C}^{n}}
  \frac{\|Y\ket{\psi}\|_{2}}{\|\ket{\psi}\|_{2}},
\end{align*}
where $\|\ket{\psi}\|_{2}$ is the usual Euclidean norm of the vector $\ket{\psi}$. The spectral norm is equivalent to the largest singular value of \( A \). Formally, if the singular values of \( A \), denoted \( \sigma_1, \sigma_2, \ldots, \sigma_n \), are the square roots of the eigenvalues of the positive-semidefinite matrix \( A^*A \) (where \( A^* \) is the conjugate transpose of \( A \)), then:
\[
\|A\| = \sigma_{\text{max}} = \sqrt{\lambda_{\text{max}}(A^*A)},
\]
where \( \lambda_{\text{max}}(A^*A) \) is the largest eigenvalue of \( A^*A \).

\medskip

\begin{definition}(Quantum Relative Entropy \cite{nielsen2001quantum})
Given two quantum states \(\rho\) and \(\sigma\), where \(\rho\) and \(\sigma\) are density matrices, the \textit{quantum relative entropy} is defined as
\[
\mathsf{S}(\rho || \sigma) =
\begin{cases}
\text{Tr} \big(\rho (\log \rho - \log \sigma) \big), & \text{if } \text{supp}(\rho) \subseteq \text{supp}(\sigma), \\
+\infty, & \text{otherwise}.
\end{cases}
\]
\end{definition}

\medskip

Now we will go through some definitions and results, which will be useful to prove the Theorem \ref{thm:main}.

\medskip

\begin{definition}(Moore-Penrose Inverse \cite{horn2012matrix,BenIsraelGreville}) \label{def:MPinverse}
Let $A$ be a (real or complex) $m\times n$ matrix. The \emph{Moore-Penrose inverse} of $A$, denoted $A^+$, is the unique $n\times m$ matrix satisfying the following four equations called the \textit{Penrose equations:}

\medskip
\noindent i) \(A\,A^+\,A \;=\; A,\) \\
\noindent ii) \(A^+\,A\,A^+ \;=\; A^+,\) \\
\noindent iii) \((A\,A^+)^\dagger \;=\; A\,A^+,\) \\
\noindent iv) \((A^+\,A)^\dagger \;=\; A^+\,A,\) \\

\noindent where $X^\dagger$ denotes the conjugate transpose of $X$.
\end{definition}

\medskip
It is well known that for a strictly positive-definite, square, Hermitian matrix $M$, its Moore-Penrose inverse $M^{+}$ is precisely $M^{-1}$.

\begin{theorem}\label{thm:MPisInverse}(\cite{horn2012matrix,BenIsraelGreville,Penrose1955})
Let $M$ be an $n\times n$ strictly positive-definite matrix over $\mathbb{C}$ or $\mathbb{R}$. Then its Moore-Penrose inverse $M^{+}$ equals the usual inverse $M^{-1}$.
\end{theorem}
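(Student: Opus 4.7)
The plan is to verify that the ordinary inverse $M^{-1}$ satisfies the four Penrose equations in Definition \ref{def:MPinverse}, and then invoke uniqueness of the Moore-Penrose inverse to conclude $M^{+}=M^{-1}$.

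First I would record that strict positive-definiteness of $M$ forces every eigenvalue of $M$ to be strictly positive, so $\det(M)\neq 0$ and hence the usual two-sided inverse $M^{-1}$ exists. (In particular, $M$ is Hermitian by the convention that positive-definiteness is imposed on a Hermitian operator; this will be needed only implicitly, since the four Penrose identities below are verified directly through $M M^{-1} = M^{-1} M = I$.)

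Next I would substitute $X := M^{-1}$ into each of the four Penrose equations and check them in turn. Equation (i) reads $M\,M^{-1}\,M = I\cdot M = M$. Equation (ii) reads $M^{-1}\,M\,M^{-1} = I\cdot M^{-1}=M^{-1}$. Equations (iii) and (iv) reduce to $(M M^{-1})^{\dagger} = I^{\dagger}=I = M M^{-1}$ and $(M^{-1}M)^{\dagger}=I^{\dagger}=I=M^{-1}M$, since the identity operator is self-adjoint. Thus all four Penrose conditions are satisfied by $M^{-1}$.

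Finally I would invoke the uniqueness clause of Definition \ref{def:MPinverse}: the Moore-Penrose inverse is the \emph{unique} matrix satisfying the four Penrose equations. Since $M^{-1}$ satisfies them, it must coincide with $M^{+}$, proving the theorem. There is essentially no obstacle here; the proof is a direct verification together with uniqueness. The only conceptual subtlety worth flagging is that positive-definiteness (as opposed to mere positive semi-definiteness) is precisely what guarantees that $M^{-1}$ exists on all of $\mathcal{H}$, so that the ``generalized inverse on the support'' discussed earlier collapses to the ordinary inverse.
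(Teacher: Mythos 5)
Your proposal is correct and complete: $M$ strictly positive definite implies invertibility, the substitution $X=M^{-1}$ visibly satisfies all four Penrose equations because $MM^{-1}=M^{-1}M=I$ and $I^{\dagger}=I$, and uniqueness in Definition \ref{def:MPinverse} then forces $M^{+}=M^{-1}$. The paper itself does not argue this at all — its ``proof'' is a one-line citation to \cite{horn2012matrix} — so your write-up is strictly more informative than what appears in the text, while being the standard textbook argument the citation points to. One minor remark: Hermiticity of $M$ plays no role in the verification (as you note, the four identities only use $MM^{-1}=M^{-1}M=I$), so the only substantive hypothesis being used is invertibility; strict positive-definiteness matters exactly as you flag, namely to guarantee that the generalized inverse on the support coincides with the ordinary inverse on all of $\mathcal{H}$, which is how the theorem is used later in the proof of Theorem \ref{thm:main}.
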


\begin{proof}
For the proof, see \cite{horn2012matrix}.
\end{proof}

\medskip

\noindent \(\textbf{Schur complement \cite{horn2012matrix}. }\) Consider a partitioned block matrix \(M \in \mathbb{C}^{n \times n}\), structured as:
\[
M = \begin{pmatrix}
B & C \\
C^\dagger & D
\end{pmatrix},
\]
where \(B \in \mathbb{C}^{k \times k}\), \(C \in \mathbb{C}^{k \times (n-k)}\), \(D \in \mathbb{C}^{(n-k) \times (n-k)}\).

If \(D\) is invertible, the \textit{Schur complement of \(D\) in \(M\)} is defined as:
\[
S_D = B - C D^{-1} C^\dagger.
\]

If \(B\) is invertible, the \textit{Schur complement of \(B\) in \(M\)} is defined as:
\[
S_B = D - C^\dagger B^{-1} C.
\]

\begin{theorem}(Schur complement condition \cite{horn2012matrix})\label{thm:schur}
A Hermitian matrix \(M \in \mathbb{C}^{n \times n}\) is positive semi-definite (\(M \succeq 0\)) if and only if
\begin{enumerate}
    \item The leading principal submatrix \(B\) is positive semi-definite: \(B \succeq 0\).
    \item The Schur complement of \(B\) in \(M\), defined as \(D - C^\dagger B^{+} C\), is positive semi-definite:
    \[
    D - C^\dagger B^{+} C \succeq 0,
    \]
    where \(B^{+}\) is the Moore-Penrose pseudo-inverse of \(B\).
\end{enumerate}
\end{theorem}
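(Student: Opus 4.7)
The plan is to prove both implications via a block factorization that generalizes the classical completing-the-square identity. The easy direction is immediate: if $M \succeq 0$, restricting the quadratic form $\langle v, Mv\rangle$ to vectors of the form $v = (x, 0)$ gives $x^\dagger B x \geq 0$ for all $x$, so condition (1) holds. For condition (2), I would test $M$ against vectors of the form $(-B^+ C y,\, y)$ and expand, using the range-inclusion fact described below to obtain $y^\dagger(D - C^\dagger B^+ C) y \geq 0$.

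The heart of the argument is the converse. When $B$ is strictly positive definite, Theorem \ref{thm:MPisInverse} ensures $B^+ = B^{-1}$, and I would exhibit the explicit factorization
\begin{equation*}
M \;=\; \begin{pmatrix} I_k & 0 \\ C^\dagger B^{-1} & I_{n-k} \end{pmatrix} \begin{pmatrix} B & 0 \\ 0 & D - C^\dagger B^{-1} C \end{pmatrix} \begin{pmatrix} I_k & B^{-1} C \\ 0 & I_{n-k} \end{pmatrix}.
\end{equation*}
The outer matrices are mutual Hermitian conjugates and invertible, so this is a congruence of $M$ with the block-diagonal middle factor. By Sylvester's law of inertia the inertia of $M$ matches that of the middle factor, so $M \succeq 0$ if and only if $B \succeq 0$ and $D - C^\dagger B^{-1} C \succeq 0$.

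To lift this to the general case where $B$ is only PSD, the key technical lemma is that $M \succeq 0$ forces $\mathrm{range}(C) \subseteq \mathrm{range}(B)$, equivalently $(I - BB^+)\,C = 0$. This can be established by fixing $v \in \ker(B)$ and an arbitrary $u$, then expanding $\langle (v, tu), M(v, tu)\rangle = 2t\,\mathrm{Re}(v^\dagger C u) + t^2\, u^\dagger D u$ as a real-variable quadratic in $t$; non-negativity for all $t \in \mathbb{R}$, combined with the substitution $u \mapsto iu$, forces $C^\dagger v = 0$. With this range condition in hand one has $B B^+ C = C$, and the block factorization above goes through verbatim with $B^+$ in place of $B^{-1}$. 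The main obstacle will be making the congruence step rigorous in the singular setting: the triangular factors involving $B^+$ are no longer invertible on all of $\mathbb{C}^n$, so I would either carry out the Sylvester inertia argument on $\mathrm{supp}(B) \oplus \mathbb{C}^{n-k}$, noting that the kernel of $B$ contributes no constraint thanks to the range inclusion, or regularize by $B \mapsto B + \varepsilon I$, apply the invertible case, and pass $\varepsilon \to 0^+$ using continuity of $(B+\varepsilon I)^{-1}$ on $\mathrm{supp}(B)$.
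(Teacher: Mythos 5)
The paper itself offers no proof of this statement (it is quoted from Horn and Johnson), so the only question is whether your argument is sound. Your forward direction and your invertible-$B$ congruence are both correct: testing $M$ on vectors $(x,0)$ and on $(-B^{+}Cy,\,y)$ (using the Penrose identity $B^{+}BB^{+}=B^{+}$ and Hermiticity of $B^{+}$) yields conditions (1) and (2), and for $B\succ 0$ the factorization $M=X^{\dagger}\operatorname{diag}\bigl(B,\;D-C^{\dagger}B^{-1}C\bigr)X$ with $X$ unit upper triangular settles the equivalence (Sylvester's inertia is more than you need: invertibility of $X$ already gives $M\succeq 0$ iff the middle factor is $\succeq 0$). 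This fully covers Corollary \ref{cor:schur}, which is the only case the paper actually uses in Theorem \ref{thm:main}, since there $B=\tfrac12 M_o'\succ 0$.

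The gap is in the converse for singular $B$. Your ``key technical lemma'' derives $\mathrm{range}(C)\subseteq\mathrm{range}(B)$ \emph{from the hypothesis} $M\succeq 0$; but in the direction (1),(2) $\Rightarrow M\succeq 0$ you do not yet have $M\succeq 0$, so invoking that lemma is circular, and the identity $BB^{+}C=C$ needed for the factorization to ``go through verbatim with $B^{+}$'' is unavailable. The problem is not merely one of rigor: without the range condition the ``if'' direction is false. Take $B=0$, $C=1$, $D=1$ (so $n=2$, $k=1$); then $B\succeq 0$, $B^{+}=0$, and $D-C^{\dagger}B^{+}C=1\succeq 0$, yet $M=\begin{psmallmatrix}0 & 1\\ 1 & 1\end{psmallmatrix}$ has determinant $-1$ and is not positive semi-definite. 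The regularization route fails for the same reason: $C^{\dagger}(B+\varepsilon I)^{-1}C$ diverges as $\varepsilon\to 0^{+}$ unless the columns of $C$ lie in $\mathrm{supp}(B)$, which is exactly the missing condition. The correct general statement (Albert's theorem, and the version in Horn--Johnson) includes $(I-BB^{+})C=0$ as a third equivalent condition, which the statement as reproduced here omits; your strategy becomes a complete proof only once that condition is added as a hypothesis, or the theorem is restricted to $B\succ 0$ as in Corollary \ref{cor:schur}.
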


\begin{corollary}(Schur complement condition \cite{horn2012matrix})\label{cor:schur}
    If \(B\) is invertible, the condition simplifies:
\begin{enumerate}
    \item \textbf{Positive definiteness of \(B\):} \(B \succ 0\).
    \item \textbf{Positive semi-definiteness of the Schur complement:} \(D - C^\dagger B^{-1} C \succeq 0\).
\end{enumerate}
\end{corollary}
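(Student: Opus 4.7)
The plan is to derive this corollary directly from Theorem \ref{thm:schur} together with Theorem \ref{thm:MPisInverse}, using the additional hypothesis that $B$ is invertible to collapse both simplifications simultaneously. Because $M$ is Hermitian, the diagonal block $B$ is automatically Hermitian, so the only new input is invertibility.

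First I would argue the forward direction. Assume $M \succeq 0$. By Theorem \ref{thm:schur}, we obtain $B \succeq 0$ and $D - C^\dagger B^{+} C \succeq 0$. Since $B$ is Hermitian, positive semi-definite, and invertible, all of its eigenvalues are real, non-negative, and non-zero, hence strictly positive; therefore $B \succ 0$. Next, because $B$ is strictly positive-definite, Theorem \ref{thm:MPisInverse} guarantees $B^{+} = B^{-1}$, and the Schur-complement condition becomes $D - C^\dagger B^{-1} C \succeq 0$, which is the simplified form claimed in the corollary.

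For the converse, assume $B \succ 0$ and $D - C^\dagger B^{-1} C \succeq 0$. Strict positive-definiteness implies both $B \succeq 0$ and that $B$ is invertible, and by Theorem \ref{thm:MPisInverse} we again have $B^{+} = B^{-1}$, so the assumed condition is exactly $D - C^\dagger B^{+} C \succeq 0$. Both hypotheses of Theorem \ref{thm:schur} are therefore satisfied, yielding $M \succeq 0$ and completing the equivalence.

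The step I expect to require the most care is the passage from $B \succeq 0$ plus invertibility to $B \succ 0$: one must explicitly invoke Hermiticity of $B$ (inherited from $M$) to guarantee a real spectral decomposition, before concluding that non-negative and non-zero eigenvalues are strictly positive. Everything else is a mechanical substitution $B^{+} \mapsto B^{-1}$ justified by Theorem \ref{thm:MPisInverse}, so no additional routine calculation is needed.
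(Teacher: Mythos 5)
Your argument is correct. Note, however, that the paper does not prove this corollary at all: it is stated with a citation to \cite{horn2012matrix}, so there is no in-paper proof to match. Your derivation — invoking Theorem~\ref{thm:schur} to get \(B \succeq 0\) and \(D - C^\dagger B^{+} C \succeq 0\), upgrading \(B \succeq 0\) plus invertibility (and Hermiticity inherited from \(M\)) to \(B \succ 0\), and then replacing \(B^{+}\) by \(B^{-1}\) via Theorem~\ref{thm:MPisInverse}, with the converse running the same substitutions backwards — is a legitimate, self-contained way to obtain the corollary from results already present in the paper. The textbook route behind the citation is typically a direct congruence argument, conjugating \(M\) by the block-triangular matrix \(\begin{pmatrix} I & 0 \\ -C^\dagger B^{-1} & I \end{pmatrix}\) to diagonalize it into \(\mathrm{diag}(B,\, D - C^\dagger B^{-1} C)\); that proof is independent of the pseudo-inverse machinery, whereas yours is shorter given that Theorems~\ref{thm:schur} and~\ref{thm:MPisInverse} are already available. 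One cosmetic simplification: \(B^{+} = B^{-1}\) holds for \emph{any} invertible matrix directly from the Penrose equations, so you could make that substitution before, rather than after, establishing \(B \succ 0\); your ordering is nonetheless sound, and your care in using Hermiticity of \(B\) to pass from nonzero, non-negative eigenvalues to strict positivity is exactly the right point to flag.
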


\medskip

\begin{definition}(Twirling map \cite{keyl2002fundamentals,chiribella2010on})
Let \( \mathcal{H} \) be a finite-dimensional Hilbert space, and let \( \rho \) be a density matrix on \( \mathcal{H} \), i.e.,
\begin{equation}
    \rho \in \mathcal{D}(\mathcal{H}) = \{ \rho \in \mathcal{L}(\mathcal{H}) \mid \rho^{\dagger}=\rho,\rho \succeq 0, \operatorname{Tr}(\rho) = 1 \}.
\end{equation}
Given a unitary representation \( U: G \longrightarrow \mathcal{U}(\mathcal{H}) \) of a compact group \( G \), the \emph{twirling map} is a quantum channel defined by:
\begin{equation}
    \mathcal{T}_G(\rho) = \int_G U(g) \rho U^\dagger(g) \, d\mu(g),
\end{equation}
where \( d\mu(g) \) is the \emph{Haar measure} on \( G \), \( U(g) \) is the unitary representation of \( g \in G \), and the integration is taken over all elements of \( G \). The twirling operation averages the state \( \rho \) over the unitary transformations of the group \( G \), producing a state that is invariant under the group \( G \) \cite{fulton2013representation, serre1977linear}.
\end{definition}

\medskip
 When \(G\) is the symmetric group \(\operatorname{Sym(n)}\) and the unitary operators are the permutation matrices, then we can define the twirling map over \(\operatorname{Sym}(n)\). We use this definition to compute the expectations later.

\begin{definition}(\cite{nielsen2001quantum})
    Let \( \mathcal{H} = (\mathbb{C}^d)^{\otimes n} \) be the Hilbert space of \( n \) quantum systems, each of dimension \( d \). The symmetric group \( \operatorname{Sym}(n) \) acts on \( \mathcal{H} \) via \emph{permutation operators} \( U_{\sigma_l} \), which permute the tensor factors according to \( \sigma_l \in \operatorname{Sym}(n) \).

\smallskip
\noindent The \emph{twirling operation over \( \sigma_l \in \operatorname{Sym}(n) \)} is defined as:
\begin{equation}
    \mathcal{T}_{ \operatorname{Sym}(n)}(\Phi) = \frac{1}{| \operatorname{Sym}(n) |} \sum_{\sigma_l \in \operatorname{Sym}(n)} U_{\sigma_{l}} \Phi U_{\sigma_{l}}^\dagger,
\end{equation}
where \( U_{\sigma_l} \) is the unitary \emph{permutation operator} corresponding to \( \sigma_l \in \operatorname{Sym}(n) \). This operation symmetrizes the density matrix \( \Phi \) with respect to all possible permutations of the subsystems.

\end{definition}

\medskip

The \(\operatorname{Sym}(n)\)-twirling map is a completely positive and trace-preserving (CPTP) map that has the following properties:
\begin{itemize}
    \item \textbf{Linear}: For \(a,b \in \mathbb{C}\), \( \mathcal{T}_{\operatorname{Sym}(n)} (a\rho_1 + b\rho_2) = a \mathcal{T}_{\operatorname{Sym}(n)}(\rho_1) + b \mathcal{T}_{\operatorname{Sym}(n)}(\rho_2) \),
    
    \item \textbf{Trace-Preserving}: \( \operatorname{Tr}(\mathcal{T}_{\operatorname{Sym}(n)}(\rho)) = \operatorname{Tr}(\rho) = 1 \).
\end{itemize}

\medskip

\if 0

\noindent The representation of \(\operatorname{Sym}(n)\) is defined by \( \pi: \operatorname{Sym}(n) \longrightarrow \mathcal{U}(\mathbb{C}^n) \) by
\[
\pi(\sigma) \, e_i = e_{\sigma(i)}, \quad \text{for} \quad i=1,\ldots,n,
\]
where \( \{e_i\}_{i=1}^n \) is the standard orthonormal basis of \( \mathbb{C}^n \) and \(\mathcal{U}(\mathbb{C}^n)\) denotes the group of unitary operators on \( \mathbb{C}^n \), [See \cite{fulton2013representation, serre1977linear}]. It is well known that this representation is \textit{reducible}.

\fi

\medskip

\noindent \textbf{Quantum one-time pad encryption(\(\mathsf{QOTP})\) \cite{ambainis2000private,cakan2023unclonable}.} The quantum one-time pad encryption (\(\mathsf{QOTP}\)) can perfectly hide any quantum message using a random classical key.

\noindent The quantum one-time pad encryption scheme is defined by a pair of quantum encryption and decryption circuits \( (\mathsf{QOTPEnc}, \mathsf{QOTPDec}) \) with
\[
\mathsf{QOTPEnc} : \mathcal{D}((\mathbb{C}^2)^{\otimes n}) \times \{0, 1\}^{2n} \longrightarrow \mathcal{D}((\mathbb{C}^2)^{\otimes n}) \text{ and } \mathsf{QOTPDec} : \mathcal{D}((\mathbb{C}^2)^{\otimes n}) \times \{0, 1\}^{2n} \longrightarrow \mathcal{D}((\mathbb{C}^2)^{\otimes n})
\]
defined as
The encryption and decryption are defined as:
\[
\mathsf{QOTPEnc}(\rho, k) = U_k \rho U_k^\dagger,
\]
\[
\mathsf{QOTPDec}(\sigma, k) = U_k^\dagger \sigma U_k.
\]
for any message \( \rho \in \mathcal{D}((\mathbb{C}^2)^{\otimes n}) \) and key \( k \in \{0, 1\}^{2n} \), where \(U_k = \bigotimes_{j=1}^{n} X^{k_{2j-1}} Z^{k_{2j}}\) represent the quantum operation applying the standard Pauli gates \( X, Z \).

\medskip

\begin{lemma}\label{prop:otp}(\cite{ambainis2000private,ccakan2023computational}) The quantum one-time pad encryption scheme is correct and perfectly secure for a randomly chosen key. That is,
\[
\mathsf{QOTPDec}(\mathsf{QOTPEnc}(\rho, k), k) = \rho
\]
for any key \( k \in \{0, 1\}^{2n} \), and
\[
\sum_{k \in \{0,1\}^{2n}} \frac{1}{2^{2n}} \mathsf{QOTPEnc}(\rho, k) = \sum_{k \in \{0,1\}^{2n}} \frac{1}{2^{2n}} \mathsf{QOTPEnc}(\sigma, k)
\]
for any two quantum states \( \rho, \sigma \in \mathcal{D}((\mathbb{C}^2)^{\otimes n}) \).
\end{lemma}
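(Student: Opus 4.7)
The plan is to treat the two statements separately, with correctness being essentially immediate from unitarity and perfect security reducing to the standard Pauli twirl. For correctness, I would observe that each factor $X^{k_{2j-1}} Z^{k_{2j}}$ is a product of Pauli operators, hence unitary, so $U_k$ is a tensor product of unitaries and therefore unitary itself: $U_k^\dagger U_k = I_{2^n}$. Then
\[
\mathsf{QOTPDec}(\mathsf{QOTPEnc}(\rho,k),k) \;=\; U_k^\dagger \bigl(U_k \rho\, U_k^\dagger\bigr) U_k \;=\; \rho.
\]

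For perfect security, I would prove the stronger claim that for every $\rho \in \mathcal{D}((\mathbb{C}^2)^{\otimes n})$,
\[
\frac{1}{2^{2n}} \sum_{k \in \{0,1\}^{2n}} U_k\, \rho\, U_k^\dagger \;=\; \frac{I_{2^n}}{2^n},
\]
which immediately implies the equality of the averaged encryptions of any two states $\rho,\sigma$. The main step is to expand $\rho$ in the $n$-qubit Pauli basis $\{\sigma_{a,b} := X^a Z^b : a,b \in \{0,1\}^n\}$, which (up to phases) spans $\mathcal{L}((\mathbb{C}^2)^{\otimes n})$, writing $\rho = \frac{1}{2^n}\sum_{a,b} c_{a,b}\,\sigma_{a,b}$ with $c_{0,0} = \mathrm{Tr}(\rho) = 1$.

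Next, by linearity I would compute the twirl $\frac{1}{2^{2n}} \sum_k U_k\, \sigma_{a,b}\, U_k^\dagger$ for each basis element. Using the single-qubit identity $X^s Z^t X^a Z^b = (-1)^{ta+sb} X^a Z^b X^s Z^t$ and tensoring, each $U_k$ either commutes or anticommutes with $\sigma_{a,b}$, with sign $(-1)^{\langle k,(a,b)\rangle}$ for a nontrivial bilinear pairing on $\{0,1\}^{2n}$ whenever $(a,b)\neq (0,0)$. Summing $(-1)^{\langle k,(a,b)\rangle}$ over all $k\in\{0,1\}^{2n}$ gives $0$ by orthogonality of characters, so every non-identity term is annihilated; only the $(a,b)=(0,0)$ contribution $\frac{1}{2^n} I_{2^n}$ survives.

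The only real computational work is verifying the commutation/character cancellation; the tensor-product structure of $U_k$ and of the Pauli basis makes this factorize qubit-by-qubit, reducing everything to the single-qubit fact $\tfrac{1}{4}(\rho + X\rho X + Z\rho Z + XZ\rho ZX) = \tfrac{I}{2}$. The claimed perfect-security equation then follows because both sides equal $I_{2^n}/2^n$ independently of the input state. I do not anticipate a serious obstacle; the main care is in tracking the phases produced by the Pauli commutation relations so that the character-orthogonality argument applies cleanly on all $2^{2n}$ keys.
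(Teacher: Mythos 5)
Your proof is correct, and it follows essentially the same route the paper relies on: the paper states this lemma with a citation rather than a proof, but the identical Pauli-basis expansion and twirl cancellation appear verbatim in its proof of Theorem~\ref{thm:exp} (and are encapsulated in Lemma~\ref{lem:pauli-twirl}). Your correctness argument via unitarity of $U_k$ and your character-orthogonality cancellation of the non-identity Pauli terms are exactly the standard argument, so nothing further is needed.
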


\medskip
\begin{definition}[Contraction]
Let $\mathcal{H}$ be a Hilbert space. A bounded linear operator $T \in \mathcal{B}(\mathcal{H})$ is called a \textbf{contraction} if its operator norm is less than or equal to one:
\[
\|T\| \le 1.
\]
\end{definition}

\begin{theorem}[Halmos's Dilation Theorem (1950) \cite{Halmos1967}]
Let $T \in \mathcal{B}(\mathcal{H})$ be a contraction on a Hilbert space $\mathcal{H}$. Then there exists a larger Hilbert space $\mathcal{K}$ and a unitary operator $U \in \mathcal{L}(\mathcal{K})$ such that $\mathcal{H}$ is a subspace of $\mathcal{K}$ and
\[
T^n = P_\mathcal{H} U^n \big|_{\mathcal{H}}
\quad \text{for all integers } n \ge 0.
\]
In this case, $U$ is called a \textbf{unitary dilation} of $T$.

\noindent Here:
\begin{itemize}
    \item $P_\mathcal{H}: \mathcal{K} \longrightarrow \mathcal{H}$ is the orthogonal projection onto the subspace $\mathcal{H}$.
    \item $\big|_{\mathcal{H}}$ denotes the restriction of the operator to the subspace $\mathcal{H}$.
\end{itemize}
\end{theorem}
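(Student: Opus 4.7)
The plan is to construct an explicit unitary dilation using defect operators and then verify the power dilation property. First, since $T$ is a contraction, the operators $I - T^*T$ and $I - TT^*$ are positive semidefinite, so the continuous functional calculus yields well-defined defect operators $D_T := (I - T^*T)^{1/2}$ and $D_{T^*} := (I - TT^*)^{1/2}$. Starting from the algebraic identity $T(I - T^*T) = (I - TT^*)T$ and applying the functional calculus yields the intertwining relation $T D_T = D_{T^*} T$, which is the key identity underlying everything that follows.

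As a warm-up, I would verify the basic Halmos dilation on $\mathcal{K}_0 := \mathcal{H} \oplus \mathcal{H}$ given by the block matrix
$$U_0 = \begin{pmatrix} T & D_{T^*} \\ D_T & -T^* \end{pmatrix}.$$
Direct block multiplication shows $U_0^* U_0 = U_0 U_0^* = I_{\mathcal{K}_0}$ using only $T^*T + D_T^2 = I$, $TT^* + D_{T^*}^2 = I$, and the intertwining relation, and the identity $P_\mathcal{H} U_0|_\mathcal{H} = T$ is immediate from the top-left block. However, computing $U_0^2$ reveals a $(1,1)$-block of $T^2 + D_{T^*} D_T \ne T^2$ in general, so $U_0$ is a dilation but \emph{not} a power dilation, and a larger space is required.

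To obtain the full power dilation, I would pass to the Schäffer-type bilateral construction: take $\mathcal{K} := \bigoplus_{n \in \mathbb{Z}} \mathcal{H}_n$ with each $\mathcal{H}_n = \mathcal{H}$, and identify $\mathcal{H}$ with the central block $\mathcal{H}_0$. Define $U$ on $\mathcal{K}$ so that it acts like $U_0$ across the adjacent blocks $\mathcal{H}_0 \oplus \mathcal{H}_1$, while on all the remaining blocks it acts as the bilateral shift. Unitarity of $U$ then follows by combining the unitarity of $U_0$ on the central pair with the unitarity of the shift on the surrounding blocks, and $\mathcal{H}$ sits inside $\mathcal{K}$ as an embedded subspace by construction.

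The main step, and the principal obstacle, is proving $T^n = P_\mathcal{H} U^n|_\mathcal{H}$ for every $n \ge 0$ by induction on $n$. The induction hypothesis is that applying $U$ to any vector originating in $\mathcal{H}_0$ produces a $T$-component that stays in $\mathcal{H}_0$ plus defect components that have been pushed into the higher-index blocks $\mathcal{H}_k$ with $k \ge 1$; each further application of $U$ shifts those defect contributions one more step away from $\mathcal{H}_0$ and never feeds them back, so they remain orthogonal to $\mathcal{H}$ and are annihilated by $P_\mathcal{H}$. The technical heart of the proof is a careful bookkeeping argument showing that the intertwining $T D_T = D_{T^*} T$ precisely guarantees this orthogonal disposal of defects at every step, leaving only the pure $T^n h$ contribution when one projects $U^n h$ back to $\mathcal{H}$.
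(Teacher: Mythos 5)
The paper itself never proves this statement: it is quoted with a citation to Halmos, and the only related argument the paper supplies (and the only case it actually uses) is the $n=1$ compression, via the unitarity of the $2\times 2$ defect block in Theorem~\ref{thm:halmos-unitary}. Your proposal therefore goes a genuinely different (and more ambitious) route: the standard Sz.-Nagy/Schäffer construction of a full power dilation. Your preliminaries are sound, and your warm-up observation is exactly right — the Halmos block $U_0$ alone is not a power dilation, since the $(1,1)$-block of $U_0^2$ is $T^2 + D_{T^*}D_T$ — which correctly identifies why a bilateral construction is needed at all.

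The gap is in how you embed $U_0$ into $\mathcal{K}=\bigoplus_{n\in\mathbb{Z}}\mathcal{H}_n$. As written ("$U$ acts like $U_0$ across the adjacent blocks $\mathcal{H}_0\oplus\mathcal{H}_1$, and as the bilateral shift on the remaining blocks"), the map is not unitary — its range misses $\mathcal{H}_2$ while $\mathcal{H}_0$ is hit both by $U_0$ and by the shift of $\mathcal{H}_{-1}$ — and, worse, the defect you deposit in $\mathcal{H}_1$ is fed straight back into $\mathcal{H}_0$ at the next step through the $D_{T^*}$ entry of $U_0$, so $P_\mathcal{H}U^2|_\mathcal{H}=T^2+D_{T^*}D_T$, reproducing exactly the failure you flagged in the warm-up; your induction claim that defects "never feed back" does not hold for this placement. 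The correct Schäffer placement puts the block \emph{across} the shift: it maps $\mathcal{H}_{-1}\oplus\mathcal{H}_0$ onto $\mathcal{H}_0\oplus\mathcal{H}_1$, i.e.
\[
(Ux)_0 = T x_0 + D_{T^*}x_{-1},\qquad
(Ux)_1 = D_T x_0 - T^{*}x_{-1},\qquad
(Ux)_n = x_{n-1}\ \ (n\neq 0,1).
\]
With this definition, a vector supported in coordinates $n\ge 0$ keeps all negative coordinates zero, so $(U^{m+1}x)_0 = T\,(U^{m}x)_0$ and induction gives $P_\mathcal{H}U^n|_\mathcal{H}=T^n$; the defect pieces land in coordinates $\ge 1$ and are thereafter only translated rightward by the identity shift. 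Note also that the intertwining $TD_T=D_{T^*}T$ is not what prevents the defects from returning — that is enforced purely by the zero pattern of the Schäffer matrix; the intertwining is needed only to verify that the block (equivalently $U$) is unitary, which is precisely the content of the paper's Theorem~\ref{thm:halmos-unitary}. With the placement corrected and that attribution adjusted, your argument goes through.
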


\begin{lemma}[Hoeffding's Inequality, \cite{Hoeffding1963}]\label{lem:Hoeffding}
Let $X_1, \dots, X_n$ be independent random variables such that $X_i \in [a_i, b_i]$ almost surely. Let $S_n = \sum_{i=1}^n X_i$. Then, for any $t > 0$,
\begin{equation}
    \Pr\left( |S_n - \mathbb{E}[S_n]| \ge t \right) \le 2\exp\left( - \frac{2t^2}{\sum_{i=1}^n (b_i - a_i)^2} \right).
\end{equation}
In the specific case of estimating the mean $\mu$ of a variable $X \in [a,b]$ using the empirical mean $\bar{X}$ of $N$ samples, this implies
\begin{equation}
    \Pr\left( |\bar{X} - \mu| \ge \epsilon \right) \le 2\exp\left( - \frac{2N\epsilon^2}{(b-a)^2} \right).
\end{equation}
\end{lemma}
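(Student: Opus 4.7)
The plan is to follow the classical Chernoff-bound route: convert the tail probability into a bound on a moment generating function via the exponential Markov inequality, exploit independence to factor the MGF, bound each factor using a convexity-based lemma (Hoeffding's lemma), and finally optimize the free parameter. First I would fix $\lambda>0$ and write
\begin{equation*}
\Pr\!\bigl(S_n-\mathbb{E}[S_n]\ge t\bigr)
\;=\;\Pr\!\bigl(e^{\lambda(S_n-\mathbb{E}[S_n])}\ge e^{\lambda t}\bigr)
\;\le\; e^{-\lambda t}\,\mathbb{E}\!\left[e^{\lambda(S_n-\mathbb{E}[S_n])}\right].
\end{equation*}
Because the $X_i$ are independent, so are the centred variables $Y_i:=X_i-\mathbb{E}[X_i]$, and the MGF factors as $\mathbb{E}[e^{\lambda(S_n-\mathbb{E}[S_n])}]=\prod_{i=1}^n\mathbb{E}[e^{\lambda Y_i}]$. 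Each $Y_i$ is bounded in an interval of length $b_i-a_i$ and has mean zero, so the problem reduces to bounding a single factor.

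The technical heart — and what I expect to be the main obstacle — is Hoeffding's lemma: if $Y\in[a',b']$ almost surely with $\mathbb{E}[Y]=0$, then $\mathbb{E}[e^{\lambda Y}]\le \exp\!\bigl(\lambda^2(b'-a')^2/8\bigr)$. I would prove it by convexity: since $e^{\lambda y}$ is convex, for $y\in[a',b']$,
\begin{equation*}
e^{\lambda y}\;\le\;\frac{b'-y}{b'-a'}\,e^{\lambda a'}+\frac{y-a'}{b'-a'}\,e^{\lambda b'}.
\end{equation*}
Taking expectations and using $\mathbb{E}[Y]=0$ gives $\mathbb{E}[e^{\lambda Y}]\le e^{\varphi(\lambda)}$ where $\varphi(\lambda)=\log\!\bigl(\tfrac{b'}{b'-a'}e^{\lambda a'}-\tfrac{a'}{b'-a'}e^{\lambda b'}\bigr)$. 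A short calculation shows $\varphi(0)=\varphi'(0)=0$ and $\varphi''(\lambda)\le (b'-a')^2/4$ (the latter reduces to bounding $p(1-p)\le 1/4$ for a probability $p$ coming from the derivative expression). Taylor's theorem with remainder then yields $\varphi(\lambda)\le \lambda^2(b'-a')^2/8$.

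Combining the two steps, I obtain
\begin{equation*}
\Pr\!\bigl(S_n-\mathbb{E}[S_n]\ge t\bigr)\;\le\;\exp\!\left(-\lambda t+\frac{\lambda^2}{8}\sum_{i=1}^{n}(b_i-a_i)^2\right).
\end{equation*}
I would then optimize the right-hand side over $\lambda>0$. The minimizer is $\lambda^{\star}=4t/\sum_i(b_i-a_i)^2$, giving the one-sided bound $\exp\!\bigl(-2t^2/\sum_i(b_i-a_i)^2\bigr)$. Applying the same argument to the variables $-X_i$ controls the lower tail, and a union bound over the two symmetric events introduces the factor $2$, establishing the first displayed inequality.

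For the empirical-mean form, I would specialize to $X_i$ i.i.d.\ in $[a,b]$ and let $S_N=\sum_{i=1}^N X_i=N\bar{X}$, so that $\{|\bar{X}-\mu|\ge\varepsilon\}=\{|S_N-\mathbb{E}[S_N]|\ge N\varepsilon\}$. Substituting $t=N\varepsilon$ and $b_i-a_i=b-a$ for all $i$ into the general bound yields $\sum_i(b_i-a_i)^2=N(b-a)^2$, so the exponent becomes $-2(N\varepsilon)^2/\bigl(N(b-a)^2\bigr)=-2N\varepsilon^2/(b-a)^2$, which is exactly the stated estimate. No additional work beyond this substitution is required.
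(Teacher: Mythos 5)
Your proof is correct and complete: the Chernoff-bound reduction, Hoeffding's lemma via convexity and the second-derivative bound $\varphi''(\lambda)\le (b'-a')^2/4$, the optimization $\lambda^\star = 4t/\sum_i(b_i-a_i)^2$, the union bound for the two-sided statement, and the substitution $t=N\epsilon$ for the empirical-mean form are all carried out properly. The paper itself gives no proof of this lemma (it is cited from Hoeffding's 1963 paper), and your argument is exactly the classical one from that reference, so there is nothing further to reconcile.
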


\begin{fact}[Normalized Pauli basis {\cite[Ch.~10]{nielsen2001quantum}}]\label{fact:PauliONB}
Let $d=2^{d_1}$ and $\mathcal{P}_{d_1}$ be the $d_1$--qubit Pauli group modulo phases. The set $\{E_P:=P/\sqrt{d}: P\in\mathcal{P}_{d_1}\}$ is an orthonormal basis of $\mathcal{L}(\mathbb{C}^d)$ for the Frobenius inner product:
$\mathrm{Tr}(E_P^\dagger E_Q)=\delta_{P,Q}$ and $X=\sum_{P}\langle E_P,X\rangle_F\,E_P$ for all $X$.
\end{fact}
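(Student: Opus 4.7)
The plan is to reduce everything to the single-qubit case by exploiting the tensor-product structure of the Pauli group, then conclude by a dimension count. First I would fix the convention that $\mathcal{P}_{d_1}$ modulo phases consists of tensor products $P = P_1 \otimes \cdots \otimes P_{d_1}$ with each $P_i \in \{I, X, Y, Z\}$, so that $|\mathcal{P}_{d_1}| = 4^{d_1} = d^2$, which matches $\dim_{\mathbb{C}}(\mathcal{L}(\mathbb{C}^d)) = d^2$. This dimension match is what ultimately lets orthonormality promote to a basis statement.

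The core computation is orthogonality. For the single-qubit Paulis one has $\mathrm{Tr}(P^\dagger Q) = 2\,\delta_{P,Q}$, which is a direct check on the four matrices $I, X, Y, Z$ (each is Hermitian and squares to $I$, so $P^\dagger Q = PQ$ is traceless whenever $P \neq Q$ and equals $I$ when $P=Q$). Next I would use the multiplicativity of the trace over tensor products,
\[
\mathrm{Tr}\bigl((P_1\otimes\cdots\otimes P_{d_1})^\dagger(Q_1\otimes\cdots\otimes Q_{d_1})\bigr) \;=\; \prod_{i=1}^{d_1} \mathrm{Tr}(P_i^\dagger Q_i) \;=\; 2^{d_1}\prod_{i=1}^{d_1}\delta_{P_i,Q_i} \;=\; d\,\delta_{P,Q}.
\]
Dividing by $d$ and applying the normalization $E_P = P/\sqrt{d}$ then yields $\mathrm{Tr}(E_P^\dagger E_Q) = \delta_{P,Q}$, which is the first claim.

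For the expansion claim, I would observe that $\{E_P\}_{P\in\mathcal{P}_{d_1}}$ is a collection of $d^2$ vectors in $\mathcal{L}(\mathbb{C}^d)$ that is orthonormal with respect to the Frobenius (Hilbert–Schmidt) inner product $\langle A, B\rangle_F := \mathrm{Tr}(A^\dagger B)$. Since $\dim_{\mathbb{C}}\mathcal{L}(\mathbb{C}^d) = d^2$, any orthonormal collection of that cardinality is automatically an orthonormal basis, so the expansion $X = \sum_P \langle E_P, X\rangle_F\, E_P$ for every $X \in \mathcal{L}(\mathbb{C}^d)$ follows directly from the general Hilbert-space fact that coordinates with respect to an orthonormal basis are the inner products with the basis vectors.

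There is no real obstacle here; the only thing to be careful about is the choice of representatives for $\mathcal{P}_{d_1}$ modulo phases so that $P^\dagger = P$ (hence $P^\dagger Q = PQ$), which makes the single-qubit trace computation transparent, and the consistent use of the normalization $1/\sqrt{d}$ rather than $1/\sqrt{2}$ per tensor factor. If one prefers an even more streamlined write-up, the whole fact can be stated as: the map $P \longmapsto E_P$ is an isometric injection of a set of size $d^2$ into a $d^2$-dimensional Hilbert space, so it is automatically an orthonormal basis.
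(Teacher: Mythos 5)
Your proof is correct and is exactly the standard argument: single-qubit orthogonality $\mathrm{Tr}(P^\dagger Q)=2\delta_{P,Q}$, multiplicativity of the trace over tensor factors, the $1/\sqrt{d}$ normalization, and the cardinality count $4^{d_1}=d^2=\dim_{\mathbb{C}}\mathcal{L}(\mathbb{C}^d)$ to upgrade the orthonormal family to a basis. The paper does not prove this Fact itself but cites it (Nielsen--Chuang, Ch.~10), and your write-up coincides with that standard reference proof, including the correct care about choosing Hermitian representatives modulo phases.
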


\begin{fact}[Commuting Pauli frames {\cite{wootters1989optimal,durt2010mutually}}]\label{fact:frames}
The nonidentity Paulis partition into $d+1$ disjoint maximal commuting sets (frames) $\mathcal{C}_0,\dots,\mathcal{C}_d$, each of size $d-1$. A projective measurement in the joint eigenbasis of a frame yields $\{\pm1\}$ outcomes for all $P\in\mathcal{C}_s$ in a single shot.
\end{fact}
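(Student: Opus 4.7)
\textbf{Proof plan for Fact \ref{fact:frames}.} The plan is to translate the statement into the symplectic geometry of the quotient $\mathcal{P}_{d_1}$ modulo phases, where $\mathcal{P}_{d_1} \cong V := \mathbb{F}_2^{2d_1}$ via the map $P = i^{\phi}\,X^{a}Z^{b}\mapsto (a,b)$ with $a,b\in\mathbb{F}_2^{d_1}$. On $V$ define the symplectic form $\omega\bigl((a,b),(a',b')\bigr) = a\cdot b' + a'\cdot b \pmod 2$. The commutation relation $P Q = (-1)^{\omega(P,Q)}\,Q P$ converts the combinatorial question ``which Paulis pairwise commute'' into the linear-algebraic question ``which subspaces of $V$ are isotropic,'' which is the central reduction I would exploit.

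First I would characterize maximal commuting sets. A pairwise commuting collection $\mathcal{C}\subseteq V$ must lie in an isotropic subspace; a maximal one is a Lagrangian $L\subseteq V$ with $\dim_{\mathbb{F}_2}L = d_1$, hence $|L| = 2^{d_1} = d$ and thus $|L\setminus\{0\}| = d-1$. This immediately fixes the size of each frame. Counting the nonidentity Paulis as $d^2-1$ and using $(d+1)(d-1)=d^2-1$ gives the right numerology: any partition of $V\setminus\{0\}$ into Lagrangians must use exactly $d+1$ of them.

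Next I would exhibit such a partition (a \emph{symplectic spread}) explicitly. Identify $V = \mathbb{F}_2^{d_1}\oplus\mathbb{F}_2^{d_1}$ and, using the field structure on $\mathbb{F}_{2^{d_1}}$ via a chosen $\mathbb{F}_2$-basis, set $L_\infty = \{(0,b):b\in\mathbb{F}_2^{d_1}\}$ and $L_\alpha = \{(a,\alpha a):a\in\mathbb{F}_2^{d_1}\}$ for each $\alpha\in\mathbb{F}_{2^{d_1}}$, where multiplication by $\alpha$ is realized as a symmetric $\mathbb{F}_2$-linear map chosen so that $\omega$ restricted to each $L_\alpha$ vanishes. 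A direct check shows the $d+1$ subspaces $L_\infty,\{L_\alpha\}_{\alpha\in\mathbb{F}_{2^{d_1}}}$ pairwise intersect only at $0$ and cover $V$, giving the desired partition $\mathcal{C}_0,\dots,\mathcal{C}_d$ of the nonidentity Paulis.

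Finally, for the measurement statement, I would invoke the simultaneous-diagonalization theorem for commuting Hermitian operators: since every $P\in\mathcal{C}_s$ is Hermitian with spectrum $\{\pm 1\}$ and all such $P$ pairwise commute, they admit a common orthonormal eigenbasis $\{\ket{v_j}\}_{j=1}^{d}$ indexed by joint $\pm 1$ eigenvalues. A single projective measurement in this basis therefore reveals the $\pm 1$ eigenvalue of every $P\in\mathcal{C}_s$ simultaneously. The main obstacle is the second step: the combinatorial existence of a Lagrangian spread in $(V,\omega)$. This is where the argument uses the prime-power field structure, and establishing that the $L_\alpha$ are Lagrangian requires verifying the symmetry of the matrix representation of multiplication by $\alpha$ with respect to a trace-dual basis of $\mathbb{F}_{2^{d_1}}/\mathbb{F}_2$; everything else reduces to dimension counts and standard spectral theory.
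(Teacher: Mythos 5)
The paper does not actually prove this Fact—it is imported from the mutually-unbiased-bases literature (Wootters--Fields; Durt et al.)—and your outline is precisely the standard argument used there: commuting sets correspond to isotropic subspaces of $\mathbb{F}_2^{2d_1}$, maximal ones are Lagrangians of size $d$, the field structure of $\mathbb{F}_{2^{d_1}}$ yields a symplectic spread of $d+1$ Lagrangians partitioning the nonzero vectors, and simultaneous diagonalization of commuting Hermitian involutions gives the single-shot $\pm1$ readout. So your proposal is correct and essentially the same approach as the cited sources; the only bookkeeping to verify carefully is the trace-form/self-dual-basis step making each $L_\alpha$ isotropic, which indeed goes through in characteristic $2$.
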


\begin{lemma}[Schatten--norm comparison {\cite[Thm.~IV.2.5]{bhatia2013matrix}}]\label{lem:Schatten}
For $X\in\mathcal{L}(\mathbb{C}^d)$, $\|X\|_1\le \sqrt{d}\,\|X\|_2$.
\end{lemma}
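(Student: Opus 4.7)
The plan is to reduce the Schatten--norm inequality to a one-line application of the Cauchy--Schwarz inequality on the vector of singular values. First I would write $X\in\mathcal{L}(\mathbb{C}^d)$ in its singular value decomposition $X=U\Sigma V^\dagger$, where $\Sigma=\operatorname{diag}(\sigma_1,\dots,\sigma_d)$ collects the singular values (allowing zeros so that there are exactly $d$ of them). This immediately gives the two Schatten norms in the convenient forms
\begin{equation*}
  \|X\|_1 \;=\; \mathrm{Tr}\!\left(\sqrt{X^\dagger X}\right) \;=\; \sum_{i=1}^{d}\sigma_i,
  \qquad
  \|X\|_2 \;=\; \bigl(\mathrm{Tr}(X^\dagger X)\bigr)^{1/2} \;=\; \Bigl(\sum_{i=1}^{d}\sigma_i^{2}\Bigr)^{1/2}.
\end{equation*}

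Next I would view the tuple $(\sigma_1,\dots,\sigma_d)$ as a vector $s\in\mathbb{R}^{d}$ and pair it with the all--ones vector $\mathbf{1}\in\mathbb{R}^{d}$. Applying the Cauchy--Schwarz inequality in $\mathbb{R}^{d}$ yields
\begin{equation*}
  \|X\|_1 \;=\; \sum_{i=1}^{d}\sigma_i\cdot 1 \;=\; \langle s,\mathbf{1}\rangle \;\le\; \|s\|_2\,\|\mathbf{1}\|_2 \;=\; \Bigl(\sum_{i=1}^{d}\sigma_i^{2}\Bigr)^{1/2}\sqrt{d} \;=\; \sqrt{d}\,\|X\|_2,
\end{equation*}
which is exactly the claimed bound. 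The only subtlety is ensuring that the vector of singular values has length exactly $d$; this is handled by padding with zeros, which leaves both Schatten norms unchanged.

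There is essentially no main obstacle here; the proof is a direct application of Cauchy--Schwarz, and the only thing to be careful about is the indexing convention (including all $d$ singular values, with multiplicities and zeros) so that the right--hand factor in Cauchy--Schwarz comes out to $\sqrt{d}$ rather than $\sqrt{\operatorname{rank}(X)}$ (which would actually give a sharper bound, but the stated lemma only claims the weaker $\sqrt{d}$ version used later in the paper).
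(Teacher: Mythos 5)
Your proof is correct: the paper itself offers no argument for this lemma, simply citing Bhatia (Thm.~IV.2.5), and your Cauchy--Schwarz argument on the (zero-padded) vector of singular values is exactly the standard proof underlying that citation. The remark that the rank of $X$ would give the sharper constant $\sqrt{\operatorname{rank}(X)}$ is accurate, and the $\sqrt{d}$ form is all that is needed where the lemma is invoked in Theorem~\ref{thm:finitedatapauli-Xonly}.
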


\subsection{Quantum adversarial model}
Unless otherwise specified, we consider a quantum computational security setting where the adversaries are quantum polynomial-time (QPT) algorithms. A \textit{QPT adversary} or a \textit{circuit} \(C\) means a non-uniform family of circuits \(\{C_{\lambda}\}_{\lambda \in \mathbb{Z}^{+}}\) with 1-bit classical output, where each circuit has size bounded by \(\operatorname{poly}(\lambda)\) and is allowed to use a fixed basis set of gates, for example, \(\{\mathsf{H}, \mathsf{CNOT},\mathsf{S},\mathsf{T}\}\), etc. Each ancilla qubit is initialized to \(\ket{0}\). This model has been studied in quantum secret-sharing and has been studied in \cite{ccakan2023computational} for computational quantum secret-sharing.

\section{Classical Anamorphic Encryption}\label{sec:cae}
First, we review the original description and definition of anamorphic encryption. Anamorphic encryption is a form of public-key encryption (PKE) that enables a hidden communication mode alongside a regular encryption mode. Specifically, this construction allows a receiver to decrypt a ciphertext to reveal a standard message or, alternatively, a covert message, depending on the use of specific secret keys. Such a scheme can be deployed securely even under coercive environments where a user may be forced to reveal their private key. We refer \cite{persiano2022anamorphic, catalano2024anamorphic} to the reader for a detailed description.

\medskip

An anamorphic encryption scheme is defined as a public key encryption scheme $\mathcal{E} = (\mathsf{Gen}, \mathsf{Encrypt}, \mathsf{Decrypt})$, with additional algorithms $\mathcal{A} = (\mathsf{Gen}_a, \mathsf{Encrypt}_a, \mathsf{Decrypt}_a)$ that enable the encryption and decryption of covert messages.

\begin{enumerate}
    \item \textbf{Standard Encryption Scheme:} The PKE scheme $\mathcal{E} = (\mathsf{Gen}, \mathsf{Encrypt}, \mathsf{Decrypt})$ consists of the following algorithms:
    \begin{itemize}
        \item \(\mathsf{Gen}(1^{\lambda})\): A key generation algorithm that, given a security parameter \(\lambda\), outputs a public-private key pair \((\mathsf{pk}, \mathsf{sk})\).
        \item \(\mathsf{Encrypt}(\mathsf{pk}, m)\): An encryption algorithm that takes a public key \(\mathsf{pk}\) and a plaintext message \(m\), and outputs a ciphertext \(c\).
        \item \(\mathsf{Decrypt}(\mathsf{sk}, c)\): A decryption algorithm that takes a private key \(\mathsf{sk}\) and a ciphertext \(c\), and outputs the original message \(m\) or a special symbol $\bot$ if decryption fails.
    \end{itemize}

    \item \textbf{Anamorphic Triplet:} The anamorphic triplet $\mathcal{A} = (\mathsf{Gen}_a, \mathsf{Encrypt}_a, \mathsf{Decrypt}_a)$ introduces an additional encryption and decryption layer that enables hidden communication:
    \begin{itemize}
        \item \(\mathsf{Gen}_a(1^{\lambda})\): Given a security parameter \(\lambda\), outputs an \textit{anamorphic public key} \(\mathsf{apk}\), an \textit{anamorphic secret key} \(\mathsf{ask}\), a \textit{double key} \(\mathsf{dk}\), and an optional \textit{trapdoor key} \(\mathsf{tk}\).
        \item \(\mathsf{Encrypt}_a(\mathsf{apk}, \text{dk}, m, \hat{m})\): Given the anamorphic public key \(\mathsf{apk}\), double key \(\mathsf{dk}\), a visible message \(m\), and a covert message \(\hat{m}\), it produces an \textit{anamorphic ciphertext} \(\mathsf{act}\).
        \item \(\mathsf{Decrypt}_a(\mathsf{dk}, \mathsf{tk}, \mathsf{ask}, \mathsf{act})\): A decryption algorithm that takes the keys \(\mathsf{dk}\), \(\mathsf{tk}\), \(\mathsf{ask}\), and an anamorphic ciphertext \(\mathsf{act}\), outputting the covert message \(\hat{m}\) or a special symbol \(\perp\) if decryption fails.
    \end{itemize}
\end{enumerate}

\medskip

To define the security of anamorphic encryption schemes, two games \( \mathsf{RealG}_{\mathcal{E}}(\lambda, \mathcal{D}) \) and \( \mathsf{AnamorphicG}_{\mathcal{A}}(\lambda, \mathcal{D}) \) are defined to represent interactions with the real and anamorphic encryption modes, respectively. The goal is to evaluate whether an adversary \(\mathcal{D}\) can distinguish between these two games.

\medskip

In the real game, the encryption scheme operates as a standard PKE scheme without covert capabilities:
\begin{enumerate}
    \item \textbf{Key Generation:} A key pair \((\mathsf{pk}, \mathsf{sk}) \gets \mathsf{Gen}(1^{\lambda})\) is generated using the standard key generation function.
    \item \textbf{Oracle Access:} The adversary \(\mathcal{D}\) is provided access to an oracle \(\mathcal{O}_\mathsf{E}\) defined by
    \[
    \mathcal{O}_\mathsf{E}(\mathsf{pk}, m, \hat{m}) = \mathsf{Encrypt}(\mathsf{pk}, m),
    \]
    where \(\mathsf{Encrypt}\) produces a ciphertext containing only \(m\), ignoring any covert message \(\hat{m}\).
\end{enumerate}
The adversary \(\mathcal{D}\) uses \(\mathcal{O}_\mathsf{E}\) to query pairs of messages \((m, \hat{m})\) and receives corresponding ciphertexts generated in a standard encryption mode.

\smallskip

\begin{center}
\fbox{%
\begin{minipage}{0.5\textwidth}
\[
\begin{array}{l}
\mathsf{RealG}_\mathcal{E}(\lambda, \mathcal{D}) \\
\hline
(\mathsf{pk}, \mathsf{sk}) \leftarrow \mathsf{Gen}(1^{\lambda}) \\
\textbf{return } \mathcal{D}^{\mathcal{O}_{\mathsf{E}}(\mathsf{pk}, \cdot, \cdot)}(\mathsf{pk}, \mathsf{sk}); \\
\text{where } \mathcal{O}_{\mathsf{E}}(\mathsf{pk}, m, \hat{m}) = \mathsf{Encrypt}(\mathsf{pk}, m)
\end{array}
\]
\end{minipage}%
}
\end{center}

\medskip

In the anamorphic game, the encryption scheme operates in a mode that embeds a covert channel:
\begin{enumerate}
    \item \textbf{Anamorphic Key Generation:} Anamorphic key generation produces keys \((\mathsf{apk}, \mathsf{ask})\), \(\mathsf{tk}\), and \(\mathsf{dk}\) such that \((\mathsf{apk}, \mathsf{ask}), \mathsf{tk}, \mathsf{dk} \gets \mathsf{Gen}_a(1^{\lambda})\).
    \item \textbf{Oracle Access:} The adversary \(\mathcal{D}\) is provided access to an oracle \(\mathcal{O}_\mathsf{A}\) defined by
    \[
    \mathcal{O}_\mathsf{A}(\mathsf{apk}, \mathsf{dk}, m, \hat{m}) = \mathsf{Encrypt}_a(\mathsf{apk}, \mathsf{dk}, m, \hat{m}),
    \]
    where \(\mathsf{Encrypt}_a\) produces an anamorphic ciphertext that encodes both the visible message \(m\) and the covert message \(\hat{m}\).
\end{enumerate}

\smallskip
\begin{center}
\fbox{%
\begin{minipage}{0.6\textwidth}
\[
\begin{array}{l}
\mathsf{AnamorphicG}_\mathcal{A}(\lambda, \mathcal{D}) \\
\hline
((\mathsf{apk}, \mathsf{ask}), \mathsf{tk}, \mathsf{dk}) \leftarrow \mathsf{Gen}_{a}(1^{\lambda}) \\
\textbf{return } \mathcal{D}^{\mathcal{O}_{\mathsf{A}}(\mathsf{apk}, \mathsf{dk}, \cdot, \cdot)}(\mathsf{apk}, \mathsf{ask}); \\
\text{where } \mathcal{O}_{\mathsf{A}}(\mathsf{apk}, \mathsf{dk}, m, \hat{m}) = \mathsf{Encrypt}_{a}(\mathsf{apk}, \mathsf{dk}, m, \hat{m})
\end{array}
\]
\end{minipage}%
}
\end{center}

\medskip

\noindent The advantage of an adversary \(\mathcal{D}\) in distinguishing between the two games is given by:
\[
\mathsf{Adv}^{\mathsf{AME}}_{\mathcal{D}, \mathcal{E}, \mathcal{A}}(\lambda) = \left| \Pr\left[ \mathsf{RealG}_{\mathcal{E}}(\lambda, \mathcal{D}) = 1 \right] - \Pr\left[ \mathsf{AnamorphicG}_{\mathcal{A}}(\lambda, \mathcal{D}) = 1 \right] \right|,
\]
where \(\Pr[\mathsf{RealG}_{\mathcal{E}}(\lambda, \mathcal{D}) = 1]\) is the probability that \(\mathcal{D}\) identifies the game as the real game, and \(\Pr[\mathsf{AnamorphicG}_{\mathcal{A}}(\lambda, \mathcal{D}) = 1]\) is the probability that \(\mathcal{D}\) identifies the game as the anamorphic game.
 
\medskip

\begin{definition} (Anamorphic Encryption \cite{persiano2022anamorphic,catalano2024anamorphic, catalano2024generic}). 

A PKE scheme \(\mathcal{E} = (\mathsf{Gen}, \mathsf{Encrypt}, \mathsf{Decrypt})\) is called an \textit{anamorphic encryption scheme} if:
\begin{enumerate}
    \item It satisfies \textit{IND-CPA} security (indistinguishability under chosen-plaintext attack).
    \item There exists an anamorphic triplet \(\mathcal{A} = (\mathsf{Gen}_a, \mathsf{Encrypt}_a, \mathsf{Decrypt}_a)\) such that for any probabilistic polynomial-time (PPT) adversary \(\mathcal{D}\), the distinguishing advantage \(\mathsf{Adv}^{\mathsf{AME}}_{\mathcal{D}, \mathcal{E}, \mathcal{A}}(\lambda)\) is negligible in \(\lambda\).

    \item The existence of a covert message \(m'\) remains deniable as the ciphertext \(c'\) is indistinguishable from a regular ciphertext.
\end{enumerate}

\end{definition}

We refer the reader to [Page 13, \cite{catalano2024anamorphic}] for a detailed explanation.

\smallskip 
In the following section, we introduce an analogue of classical anamorphic encryption within the quantum encryption framework, encompassing both public-key and symmetric-key encryption schemes.

\section{Quantum Anamorphic Encryption}\label{sec:qae}
In this section, we propose an analogous definition of classical anamorphic encryption in the quantum encryption model, where the secrets are quantum density matrices from finite-dimensional Hilbert spaces with a quantum polynomial time (QPT) adversary and quantum adversarial pseudometric. In a quantum environment, an anamorphic encryption scheme would need to incorporate quantum-safe encryption methods.

\medskip

We define a quantum anamorphic encryption scheme as a quantum public key encryption scheme $\mathcal{Q} = (\mathsf{Gen}, \mathsf{QEnc}, \mathsf{QDec})$ with additional algorithms $\mathcal{Q}_a = (\mathsf{Gen}_a, \mathsf{QEnc}_a, \mathsf{QDec}_a)$ to support hidden communication in the presence of quantum adversaries.

\medskip

A quantum anamorphic encryption scheme is a quantum public-key encryption (QPKE) scheme with an additional anamorphic triplet of quantum algorithms, enabling hidden messages within ciphertexts. We denote this quantum anamorphic encryption scheme by the tuple $\mathcal{Q} = (\mathsf{Gen}, \mathsf{QEnc}, \mathsf{QDec})$ and $\mathcal{Q}_a = (\mathsf{Gen}_a, \mathsf{QEnc}_a, \mathsf{QDec}_a)$, where:

\begin{enumerate}
    \item \textbf{Quantum Public Key Encryption Scheme}: The QPKE scheme $\mathcal{Q} = (\mathsf{Gen}, \mathsf{QEnc}, \mathsf{QDec})$ consists of:
    \begin{itemize}
        \item $\mathsf{Gen}(1^{\lambda})$: A key generation algorithm that takes a security parameter $\lambda$ and outputs a public-private key pair $(\mathsf{pk}, \mathsf{sk})$.
        \item $\mathsf{QEnc}(\mathsf{pk}, \rho)$: A quantum encryption algorithm that takes a public key $\mathsf{pk}$ and a quantum state $\rho$ representing a message, producing a ciphertext in a quantum state $\mathsf{qc}$.
        \item $\mathsf{QDec}(\mathsf{sk}, \mathsf{qc})$: A quantum decryption algorithm that takes a private key $\mathsf{sk}$ and a ciphertext $\mathsf{qc}$, outputting the original message $\rho$ or a special failure symbol $\bot$ if decryption fails.
    \end{itemize}
    
    \item \textbf{Quantum Anamorphic Triplet}: The anamorphic triplet $\mathcal{Q}_a = (\mathsf{Gen}_a, \mathsf{QEnc}_a, \mathsf{QDec}_a)$ introduces additional quantum algorithms to enable covert communication:
    \begin{itemize}
        \item $\mathsf{Gen}_a(1^{\lambda})$: An anamorphic key generation algorithm that, given a security parameter $\lambda$, outputs an anamorphic public key $\mathsf{apk}$, an anamorphic secret key $\mathsf{ask}$, a double key $\mathsf{dk}$, and a potentially empty trapdoor key $\mathsf{tk}$.
        \item $\mathsf{QEnc}_a(\mathsf{apk}, \mathsf{dk}, \rho, \hat{\rho})$: An anamorphic encryption algorithm that, given $\mathsf{apk}$, $\mathsf{dk}$, a visible message state $\rho$, and a covert message state $\hat{\rho}$, outputs an anamorphic ciphertext $\mathsf{act}$.
        \item $\mathsf{QDec}_a(\mathsf{dk}, \mathsf{tk}, \mathsf{ask}, \mathsf{act})$: An anamorphic decryption algorithm that takes the double key $\mathsf{dk}$, trapdoor key $\mathsf{tk}$, anamorphic secret key $\mathsf{ask}$, and anamorphic ciphertext $\mathsf{act}$, and outputs the covert message $\hat{\rho}$ or a failure symbol $\perp$.
    \end{itemize}
\end{enumerate}

\medskip

To evaluate the security of quantum anamorphic encryption schemes, we introduce two games that distinguish the quantum real and anamorphic encryption modes. These games, \( \mathsf{RealG}_{\mathcal{Q}}(\lambda, \mathcal{D}) \) and \( \mathsf{AnamorphicG}_{\mathcal{Q}_a}(\lambda, \mathcal{D}) \), consider adversaries in the quantum polynomial-time (QPT) model.

\medskip

In the real game, the encryption scheme operates in a traditional QPKE setting with no covert channel.
\begin{enumerate}
    \item \textbf{Key Generation:} The key pair \((\mathsf{pk}, \mathsf{sk}) \gets \mathsf{Gen}(1^{\lambda})\) is generated using the standard quantum key generation function.
    \item \textbf{Oracle Access:} The adversary $\mathcal{D}$ is provided access to a quantum oracle $\mathcal{O}_{\mathsf{E}}$ defined by:
    \[
    \mathcal{O}_{\mathsf{E}}(\mathsf{pk}, \rho, \hat{\rho}) = \mathsf{QEnc}(\mathsf{pk}, \rho),
    \]
    where $\mathsf{QEnc}$ produces a ciphertext containing only the visible message $\rho$, ignoring any covert message $\hat{\rho}$.
\end{enumerate}
The adversary $\mathcal{D}$ interacts with $\mathcal{O}_{\mathsf{E}}$, querying pairs of quantum states $(\rho, \hat{\rho})$ and receiving quantum ciphertexts generated in the standard encryption mode.

\smallskip
\begin{center}
\fbox{%
\begin{minipage}{0.5\textwidth}
\[
\begin{array}{l}
\mathsf{RealG}_\mathcal{E}(\lambda, \mathcal{D}) \\
\hline
(\mathsf{pk}, \mathsf{sk}) \xleftarrow{\$} \mathsf{Gen}(1^{\lambda}) \\
\textbf{return } \mathcal{D}^{\mathcal{O}_{\mathsf{E}}(\mathsf{pk}, \cdot, \cdot)}(\mathsf{pk}, \mathsf{sk}); \\
\text{where } \mathcal{O}_{\mathsf{E}}(\mathsf{pk}, \rho, \hat{\rho}) = \mathsf{Encrypt}(\mathsf{pk}, \rho)
\end{array}
\]
\end{minipage}%
}
\end{center}

\medskip

In the anamorphic game, the encryption scheme operates in an anamorphic mode that enables covert communication.
\begin{enumerate}
    \item \textbf{Anamorphic Key Generation:} The anamorphic key generation algorithm produces $(\mathsf{apk}, \mathsf{ask})$, $\mathsf{tk}$, and $\mathsf{dk}$, such that $(\mathsf{apk}, \mathsf{ask}), \mathsf{tk}, \mathsf{dk} \gets \mathsf{Gen}_a(1^{\lambda})$.
    \item \textbf{Oracle Access:} The adversary $\mathcal{D}$ is provided access to a quantum oracle $\mathcal{O}_{\mathsf{A}}$ defined by:
    \[
    \mathcal{O}_{\mathsf{A}}(\mathsf{apk}, \mathsf{dk}, \rho, \hat{\rho}) = \mathsf{QEnc}_a(\mathsf{apk}, \mathsf{dk}, \rho, \hat{\rho}),
    \]
    where $\mathsf{QEnc}_a$ produces an anamorphic ciphertext encoding both the visible message $\rho$ and the covert message $\hat{\rho}$.
\end{enumerate}
The adversary $\mathcal{D}$ uses $\mathcal{O}_{\mathsf{A}}$ to query quantum states $(\rho, \hat{\rho})$, receiving anamorphic ciphertexts that contain both visible and covert components.

\smallskip
\begin{center}
\fbox{%
\begin{minipage}{0.6\textwidth}
\[
\begin{array}{l}
\mathsf{AnamorphicG}_\mathcal{A}(\lambda, \mathcal{D}) \\
\hline
((\mathsf{apk}, \mathsf{ask}), \mathsf{tk}, \mathsf{dk}) \xleftarrow{\$} \mathsf{Gen}_{a}(1^{\lambda}) \\
\textbf{return } \mathcal{D}^{\mathcal{O}_{\mathsf{A}}(\mathsf{apk}, \mathsf{dk}, \cdot, \cdot)}(\mathsf{apk}, \mathsf{ask}); \\
\text{where } \mathcal{O}_{\mathsf{A}}(\mathsf{apk}, \mathsf{dk}, \rho, \hat{\rho}) = \mathsf{Encrypt}_{a}(\mathsf{apk}, \mathsf{dk}, \rho, \hat{\rho})
\end{array}
\]
\end{minipage}%
}
\end{center}

\medskip

To quantify the advantage of a quantum adversary in computationally distinguishing between the two games, we define the distinguishing advantage with respect to a family of quantum circuits $\mathcal{F}$:
\[
\mathsf{Adv}_{\mathcal{F}}^{\mathsf{AME}}(\rho, \sigma) = \max_{C \in \mathcal{F}} \left| \Pr\left[C(\rho) = 1\right] - \Pr\left[C(\sigma) = 1\right] \right|,
\]
where $C$ is a circuit from $\mathcal{F}$ that outputs 1 if it identifies the quantum state as belonging to the anamorphic game \cite{ccakan2023computational}.

\medskip

Next, we define the quantum analogue of anamorphic encryption.

\begin{definition}\label{def:qape}
    A QPKE scheme $\mathcal{Q} = (\mathsf{Gen}, \mathsf{QEnc}, \mathsf{QDec})$ is defined as a \textit{quantum anamorphic encryption scheme} if:
\begin{enumerate}
    \item It satisfies quantum \textsf{qIND-qCPA} security (indistinguishability under chosen-plaintext attack with quantum adversaries).
    \item There exists a quantum anamorphic triplet $\mathcal{Q}_a = (\mathsf{Gen}_a, \mathsf{QEnc}_a, \mathsf{QDec}_a)$ such that, for any quantum polynomial-time (QPT) adversary $\mathcal{D}$, the computationally distinguishing advantage
\[\mathsf{Adv}_{\mathcal{D}}^{\mathsf{AME}}(\lambda) = \left| \Pr\left[ \mathsf{RealG}_{\mathcal{Q}}(\lambda, \mathcal{D}) = 1 \right] - \Pr\left[ \mathsf{AnamorphicG}_{\mathcal{Q}_a}(\lambda, \mathcal{D}) = 1 \right] \right| < \mathsf{negl}(\lambda).
\]
\end{enumerate}
\end{definition}

\medskip

The above definition of quantum anamorphic encryption, which we have discussed, is based on public-key encryption. Now we propose an analogous definition of quantum anamorphic encryption based on symmetric key encryption.

\medskip

A general quantum symmetric key encryption is defined as follows:

\begin{definition}
A \textit{quantum symmetric-key encryption scheme} is a triplet of quantum algorithms\((\mathsf{Gen}, \mathsf{QEnc}, \mathsf{QDec}),\) where:
\begin{itemize}
    \item $\mathsf{Gen}(1^\lambda)$: Takes as input a security parameter $1^{\lambda}$ and outputs a secret key $k$. The key $k$ can be classical or quantum.
    \item $\mathsf{QEnc}(k, \rho)$: Takes the secret key $k$ and a quantum message state $\rho$ in some Hilbert space $\mathcal{H}$, and outputs a ciphertext $\mathsf{qc}$ a quantum state in some, possibly different, Hilbert space $\mathcal{H}_{C}$. 
    \item $\mathsf{QDec}(k, \mathsf{qc})$: Takes the secret key $k$ and a ciphertext state $\mathsf{qc}$, and attempts to recover the original message $\rho$. If the ciphertext is invalid, it outputs $\bot$.
\end{itemize}
\end{definition}

\medskip

Now, we propose a definition of quantum anamorphic encryption. A \textit{quantum anamorphic encryption scheme} adds a second \textit{anamorphic} mode of operation, consisting of another triplet of quantum algorithms \((\mathsf{Gen}_a, \mathsf{QEnc}_a, \mathsf{QDec}_a)\)  which allows \textit{embedding} a covert message $\hat{\rho}$ inside the same ciphertext structure, but in such a way that an adversary cannot distinguish between normal encryption and anamorphic encryption with high probability, where:

\begin{itemize}
    \item \textbf{Anamorphic Key Generation \(\mathsf{(Gen_{a}})\):}  The algorithm \(\mathsf{Gen_{a}}\) takes as input the security parameter \(1^{\lambda}\) and returns  \(k_{a}=(k, \mathsf{dk}, \mathsf{tk})\), where $k_{a}$ is the \textit{anamorphic secret key} when the system is operating in \textit{anamorphic mode}. The key \(k\) is the normal key used to encrypt the original message, and the $\mathsf{dk}$ double key and $\mathsf{tk}$ trapdoor key are additional secret keys that can be classical keys or quantum states that may be necessary to embed and extract covert messages and which will never be given to the dictator. Either might be an empty string if not needed.

    \item \textbf{Anamorphic Encryption (\(\mathsf{QEnc}_a\)):} The algorithm \(\mathsf{QEnc}_a\) takes as input the key \(k_{a}\) and two quantum messages $\rho$, the original and $\hat{\rho}$, the covert quantum message, and returns a quantum anamorphic ciphertext \(\mathsf{qact}\), that is, \(\mathsf{QEnc}_a\bigl(k_a, \rho, \hat{\rho}\bigr)=\mathsf{qact}\), which must be indistinguishable from a \text{original} ciphertext produced by $(k, \rho) \mapsto \mathsf{QEnc}(k, \rho)$.

    \item \textbf{Anamorphic Decryption (\(\mathsf{\mathsf{QDec}_a}\)):} The algorithm takes as input the anamorphic key \(k_a\) and anamorphic ciphertext \(\mathsf{act}\) and outputs the covert message \(\hat{\rho}\), that is, \(\mathsf{QDec}_a\bigl(k_a, \mathsf{qact}\bigr)=\hat{\rho}.\) If recovery fails, the algorithm outputs $\bot$.
\end{itemize}

\medskip

\noindent \textbf{Security definition via two indistinguishability games:} We define two worlds or games as previous, a \textit{real} (no covert channel) world and an \textit{anamorphic} (with covert channel) world. An adversary attempts to distinguish these two scenarios.

\medskip

\begin{description}
    \item \textbf{The Real Game} ($\mathsf{RealG}_{\mathsf{Q}}(\lambda, \mathcal{D})$):
    \begin{enumerate}
        \item A secret key $k \leftarrow \mathsf{Gen}(1^\lambda)$ is generated, which is the normal encryption key.
        \item The adversary $\mathcal{D}$ is given oracle access to a real encryption map
        \[
          \mathcal{O}_{\mathsf{E}}(k, \rho, \hat{\rho}) \;=\; \mathsf{QEnc}\bigl(k, \rho\bigr),
        \]
        which \textit{ignores} the covert message $\hat{\rho}$, that is, the oracle only encrypts the original state $\rho$ in the normal mode.
        \item $\mathcal{D}$ can make polynomially many queries and eventually outputs a guess bit $b \in \{0,1\}$, meaning it guesses whether it is in the real or anamorphic game.
    \end{enumerate}

    \item \textbf{The Anamorphic Game} ($\mathsf{AnamorphicG}_{\mathsf{Q}_a}(\lambda, \mathcal{D})$):
    \begin{enumerate}
        \item The anamorphic keys are generated: $k_a \leftarrow \mathsf{Gen}_a(1^\lambda)$.
        \item The adversary $\mathcal{D}$ is given oracle access to
        \[
          \mathcal{O}_{\mathsf{A}}\bigl(k_a, \rho, \hat{\rho}\bigr)
          \;=\;
          \mathsf{QEnc}_a\bigl(k_a, \rho, \hat{\rho}\bigr),
        \]
        which produces an \emph{anamorphic ciphertext} that contains both $\rho$ and $\hat{\rho}$. 
        \item As before, $\mathcal{D}$ makes a number of queries and finally outputs a guess bit $b \in \{0,1\}$.
    \end{enumerate}
\end{description}

\medskip

\noindent \textbf{Adversarial Advantage.} We define the advantage of adversary $\mathcal{D}$ distinguishing the above two games by
\[
  \mathsf{Adv}^{\mathsf{AME}}_{\mathcal{D}, \mathsf{Q}, \mathsf{Q}_a}(\lambda)
  \;=\;
  \Bigl|\,
    \Pr\bigl[\mathsf{RealG}_{\mathsf{Q}}(\lambda,\mathcal{D}) = 1\bigr]
    \;-\;
    \Pr\bigl[\mathsf{AnamorphicG}_{\mathsf{Q}_a}(\lambda,\mathcal{D}) = 1\bigr]
  \Bigr|.
\]

\medskip

\begin{definition}(Quantum Anamorphic Symmetric-Key Encryption)\label{def:qase}
\label{def:QAE-Symm}
A triple $(\mathsf{Gen}, \mathsf{QEnc}, \mathsf{QDec})$ is called a \emph{quantum symmetric-key encryption scheme}, and a triple $(\mathsf{Gen}_a, \mathsf{QEnc}_a, \mathsf{QDec}_a)$ is called its \emph{anamorphic extension}, if:

\begin{enumerate}
    \item \textbf{Correctness.} For all original and covert quantum messages, $\rho$ and $\hat{\rho}$, respectively,
    \[
       \mathsf{QDec}\bigl(k,\, \mathsf{QEnc}(k,\rho)\bigr) \;=\; \rho
    \]
    and similarly,
    \[
       \mathsf{QDec}_a\bigl(k_a,\, \mathsf{QEnc}_a(k_a,\rho,\hat{\rho})\bigr)
       \;=\;
       \hat{\rho}.
    \]

    \item \textbf{Security Against Chosen-Plaintext (Quantum) Attacks.} 
    The scheme $(\mathsf{Gen}, \mathsf{QEnc}, \mathsf{QDec})$ is quantum \textsf{qIND-qCPA} secure (in the symmetric-key sense), meaning that no QPT adversary $\mathcal{D}$ can distinguish encryptions of two chosen quantum states (or classical messages) with more than negligible advantage in $\lambda$.

    \item \textbf{Anamorphic Indistinguishability.} 
    There is an anamorphic extension $(\mathsf{Gen}_a, \mathsf{QEnc}_a, \mathsf{QDec}_a)$ such that for every QPT adversary $\mathcal{D}$, the distinguishing advantage 
    \[
      \mathsf{Adv}^{\mathsf{AME}}_{\mathcal{D}, \mathsf{Q}, \mathsf{Q}_a}(\lambda)
      \;=\;
      \bigl|\,
        \Pr\bigl[\mathsf{RealG}_{\mathsf{Q}}(\lambda,\mathcal{D}) = 1\bigr]
        \;-\;
        \Pr\bigl[\mathsf{AnamorphicG}_{\mathsf{Q}_a}(\lambda,\mathcal{D}) = 1\bigr]
      \bigr|
      \;<\;
      \mathsf{negl}(\lambda).
    \]
    In other words, the ciphertexts generated by $\mathsf{QEnc}$ versus those generated by $\mathsf{QEnc}_a$ (even on \emph{pairs} of inputs $(\rho,\hat{\rho})$) are computationally indistinguishable to any quantum adversary.

    \if 0
    
    \item \textbf{Deniability.} 
    The scheme also provides a deniability property: even if an adversary inspects the key $k_a$ or partially obtains some side information about $\text{dk},\text{tk}$, there is no efficient method to prove that a ciphertext $\text{qact}$ contains a covert message (beyond the normal encryption of $\rho$). Equivalently, $\text{qact}$ is \emph{indistinguishable} from a ciphertext produced by the standard encryption algorithm on some state. 

    \fi
\end{enumerate}
\end{definition}

\medskip

\subsection{Quantum Indistinguishability under Quantum Chosen Plaintext Attack (qIND-qCPA)}
In the paper \cite{gagliardoni2016semantic}, the definition of \(\mathsf{qIND-qCPA}\) has been defined. We now state the indistinguishability under the quantum chosen-plaintext attack experiment in a fully operator-theoretic model for our scheme. Our definition follows the game-based approach of \cite{gagliardoni2016semantic,boneh2013secure,alagic2017quantum}, with \emph{quantum} oracle access. 

\smallskip
Fix $(d_1,d_2,\eta)$ and the scheme parameters. The \emph{encryption channel} $\mathcal{E}$ is the CPTP map that, on input a pair of message registers $(M_o,M_c)$ (possibly entangled with an arbitrary environment $E$ held by the adversary), samples uniform $k\in\{0,1\}^{2d_1}$, $k'\in\{0,1\}^{2d_2}$ and $\sigma_l \in \mathrm{Sym}(2^{d_{1}+1})$, computes $M_o',M_c',M_c''$ as above, forms $M_a$ by~\eqref{eq:Ma}, obfuscates as in~\eqref{eq:Mf}, and returns the ciphertext register $RM$. Formally, letting $\mathbb{E}$ denote the classical expectation over $(k,k',\sigma_l)$,
\[
\mathcal{E}(\rho_{M_o M_c}) \;=\; \mathbb{E}_{k,k',\sigma_l}\big[\,U_{\sigma_l}\,M_a(M_o,M_c;k,k')\,U_{\sigma_l}^\dagger\,\big].
\]

\begin{definition}[\textsf{qIND--qCPA} experiment]\label{def:qind-qcpa}
A \emph{quantum} adversary $\mathcal{A}$ is any interactive QPT that can make polynomially many (adaptive) queries to the channel $\mathcal{E}$ on inputs of its choice, both \emph{before and after} the challenge phase.

\smallskip
\noindent\textbf{Setup and pre-challenge:} $\mathcal{A}$ starts in an arbitrary joint state $\rho_{E}$ (on any finite-dimensional Hilbert space $\mathcal{H}_E$) and may interact with the encryption oracle $\mathcal{E}$ on arbitrarily chosen inputs $(M_o,M_c)$ that may be entangled with $E$. Let $(I_E\otimes\mathcal{E})$ denote one such query.

\smallskip
\noindent\textbf{Challenge:} $\mathcal{A}$ outputs two \emph{pairs} of density operators of matching dimensions,
\[
(M_o^{(0)},M_c^{(0)}),\qquad (M_o^{(1)},M_c^{(1)}),
\]
possibly entangled with (disjoint parts of) $E$. The challenger samples a uniform bit $b\in\{0,1\}$ and returns the challenge ciphertext
\[
C_b \;=\; \mathcal{E}\!\left(M_o^{(b)}\otimes M_c^{(b)}\right).
\]

\smallskip
\noindent\textbf{Post-challenge:} $\mathcal{A}$ may continue to make polynomially many (adaptive) queries to $\mathcal{E}$ on arbitrary inputs independent of the bit $b$.

\smallskip
\noindent\textbf{Guess:} $\mathcal{A}$ outputs a bit $b'\in\{0,1\}$. The \emph{advantage} is
\[
\mathrm{Adv}^{\mathrm{qIND\text{-}qCPA}}_{\mathsf{QAE}}(\mathcal{A})\;:=\;\big|\,\Pr[b'=1\mid b=1]-\Pr[b'=1\mid b=0]\,\big|.
\]
We say $\mathsf{QAE}$ is \emph{\textsf{qIND--qCPA}-secure} if $\mathrm{Adv}^{\mathrm{qIND\text{-}qCPA}}_{\mathsf{QAE}}(\mathcal{A}) < \mathsf{negl}(\lambda)$ for all QPT adversaries $\mathcal{A}$.
\end{definition}

\paragraph{Remarks on the model.}
(i) Fresh $(k,k',\sigma_l)$ are sampled independently for \emph{every} oracle use (including the challenge). (ii) The oracle is strictly CPTP and consumes its input registers; hence, no trivial \emph{identity} channels are present. (iii) The definition allows arbitrary entanglement between the adversary's private memory $E$ and its query registers; the proof below handles this by proving that the oracle is a \emph{constant} channel after averaging the coins.

\medskip

\section{Technical Details}\label{technical}
\noindent In this section, we have proposed a construction of quantum anamorphic symmetric key encryption. Let \(\mathcal{H}_{M}=(\mathbb{C}^{2})^{\otimes d_{1}}\) and \(\mathcal{H}_{M_{c}}=(\mathbb{C}^{2})^{\otimes d_{2}}\). Let \(M_{o} \in \mathcal{D}(\mathcal{H}_{M})\) be the mixed density matrix representing the original message, and let \(M_{c} \in \mathcal{D}(\mathcal{H}_{M_{c}})\) be the mixed density matrix representing the covert message. Hence both the matrices \(M_{o}\) and \(M_{c}\) are Hermitian, positive semi-definite with \(\Tr(M_{o})=1\) and \(\Tr(M_{c})=1\). But as per our construction, we restrict the density matrix \(M_{o}\) to be strictly positive definite. The anamorphic message contains both the original and covert messages, which Alice needs to send to Bob. On the dictator's demand, Bob will hand over only the anamorphic ciphertext and the original keys to the dictator so that the dictator gets only the original message, and also he will be unable to distinguish between the original and the anamorphic ciphertexts.

\smallskip
In the following subsection, first, we have described our construction mathematically, and then we have presented our algorithms based on quantum operations. 

\medskip

\subsection{Main Construction}\label{subsec:main}
We independently encrypt \( M_o \) and \( M_c \) using the \(\mathsf{QOTP}\) scheme, with separate keys \( k \) and \( k' \), respectively.

\begin{enumerate}
    \item \textbf{Encryption of \( M_o \):} Let \( M_o' = \mathsf{QOTPEnc}(M_o, k) \), where \( \mathsf{QOTPEnc} \) denotes the \(\mathsf{QOTP}\) encryption operator with key \( k \in \{0, 1\}^{2d_1} \). This operation is defined as follows:
    
    \begin{equation}
    M_o' = \left( X^{k_1} Z^{k_2} \otimes X^{k_3} Z^{k_4} \otimes \cdots \otimes X^{k_{2d_1-1}} Z^{k_{2d_1}} \right) M_o \left( X^{k_1} Z^{k_2} \otimes X^{k_3} Z^{k_4} \otimes \cdots \otimes X^{k_{2d_1-1}} Z^{k_{2d_1}} \right)^{\dagger}.
    \end{equation}

    \item \textbf{Encryption of \( M_c \):} Let \( M_c' = \mathsf{QOTPEnc}(M_c, k') \), where \( k' \in \{0, 1\}^{2d_2} \) is the \(\mathsf{QOTP}\) key used for encrypting \( M_c \). This operation is defined as:
    
    \begin{equation}
    M_c' = \left( X^{k'_1} Z^{k'_2} \otimes X^{k'_3} Z^{k'_4} \otimes \cdots \otimes X^{k'_{2d_2-1}} Z^{k'_{2d_2}} \right) M_c \left( X^{k'_1} Z^{k'_2} \otimes X^{k'_3} Z^{k'_4} \otimes \cdots \otimes X^{k'_{2d_2-1}} Z^{k'_{2d_2}} \right)^{\dagger}.
    \end{equation}
    
\end{enumerate}

\noindent The Hilbert spaces associated with \(M_o'\) and \(M_c'\) are \(\mathcal{H}_{M} = (\mathbb{C}^2)^{\otimes d_1}\) which is of dimension \(2^{d_1}\) and \(\mathcal{H}_{M_{c}} = (\mathbb{C}^2)^{\otimes d_2}\) which is of the dimension \(2^{d_2}\). 

\medskip 

\noindent Without loss of generality, let \(d_{2} \leq d_{1}\) and if \(d_{2} < d_{1}\), then pad the density matrix \(M_{c}'\) with \((2^{d_{1}}-2^{d_{2}})\) zero rows and columns to make it a \((2^{d_{1}} \times 2^{d_{1}})\) matrix, and we denote it by \(M_{c}''\).

\medskip

\if 0

\noindent Choose a basis
\[
\Bigl\{\ket {0},\; \ket{1},\;\ldots,\;\ket{2^{d_2}-1} \Bigr\} \quad \text{for }\mathcal{H}_c, \quad \text{and} \quad \Bigl\{\ket{0},\;  \ket{1},\; \ldots,\;\ket{2^{d_1}-1} \Bigr\} \quad \text{for }\mathcal{H}_{o},
\]
then \(M_{c}''\) acts on that larger basis by
\[
  M_{c}'' \,\ket{x}
  \;=\;
  \begin{cases}
    M_{c}'\,\ket{x}, & 0 \;\le x < 2^{d_2},\\[6pt]
    0, & 2^{d_2} \,\le x < 2^{d_1}. \label{eq:pad1}
  \end{cases}
\]

\fi

\noindent We construct \(M_{c}''\) by introducing \((d_1 - d_2)\) 
\emph{ancillary qubits} in a fixed state \(\lvert 0\rangle^{\otimes (d_1 - d_2)}\).

\medskip

\noindent Define the extended Hilbert space
\[
  \mathcal{H}_{M_{c}} \,\otimes\, (\mathbb{C}^2)^{\otimes (d_1 - d_2)}
  \;\cong\;
  \bigl(\mathbb{C}^2\bigr)^{\otimes d_1}
   \;=\;
  \mathcal{H}_{M}.
\]
Consider the isometric embedding
\[
  V : 
  \mathcal{H}_{M_{c}}
  \;\longrightarrow\;
  \mathcal{H}_{M}
\]
defined by
\[
  V \,\lvert \psi\rangle
  \;=\;
  \lvert \psi\rangle
  \;\otimes\;
  \lvert 0\rangle^{\otimes (d_1 - d_2)},
  \quad
  \forall\,\lvert \psi\rangle \in \mathcal{H}_{M_{c}}.
\]
Here, \(\lvert 0\rangle\) denotes the computational-basis state of a single qubit.

\noindent For a density matrix \(M_{c}' \in \mathcal{D}(\mathcal{H}_{M_{c}})\), define
\[
  M_{c}''
  \;=\;
  V\,M_{c}'\,V^\dagger
  \;\in\;
  \mathcal{D}(\mathcal{H}_{M}).
\]
We typically describe transformations on density operators by completely positive, trace-preserving (CPTP) maps. The above padding can be described as a linear map
\[
  \mathcal{E}^{\mathrm{pad}}
    : 
    \mathcal{D}\bigl(\mathcal{H}_{M_{c}}\bigr)
    \;\longrightarrow\;
    \mathcal{D}\bigl(\mathcal{H}_{M}\bigr).
\]
We define

\begin{equation} 
  \mathcal{E}^{\mathrm{pad}}(M_{c}')
  \;=M_{c}''=\;
  \begin{cases}
    M_{c}', & \text{if } d_1=d_2,\\[6pt]
    \displaystyle 
    V \,M_{c}' \, V^\dagger,
      & \text{if } d_2 < d_1,
  \end{cases}
  \label{eq:pad2}
\end{equation}

where \(V\) is the isometric embedding from above. Since \(V^\dagger V = I_{\mathcal{H}_{M_{c}}}\) and \(V\,V^\dagger\) 
is the projector \(\Pi_V\) onto \(\mathcal{H}_{M_{c}} \otimes \lvert 0\rangle^{\otimes (d_1 - d_2)}\), 
\(\mathcal{E}^{\mathrm{pad}}\) is completely positive and trace-preserving. Since \(V\) is an isometry,

\medskip

\noindent \(\bullet\) \textbf{Positivity is preserved:}
For any \(\phi \in \mathcal{H}_{M}\),
\[
\bra{\phi}VM_{c}' V^{\dagger}\ket{\phi} \geq 0.
\]
\noindent \(\bullet\) \textbf{Trace is preserved:}
\[
\Tr(VM_{c}'V^{\dagger})=\Tr(M_{c}'), \quad (\text{since} \quad V^{\dagger}V=I_{\mathcal{H}_{M_{c}}}).
\]
Hence, it is a valid quantum channel.

\medskip

\noindent Given the security parameter \(\mathsf{negl}(\lambda) >0\), choose $\eta \in \mathbb{Z}^+$ such that $\dfrac{1}{\eta} < \mathsf{negl}(\lambda)$ and \begin{equation}
  \frac{1}{\eta^2}
  \left\|(M_c'')^\dagger (M_o')^{-1} M_c'' \right\|
  \;\;\le\;\;
  \frac{1}{4} \lambda_{\min}(M_o'),
\end{equation}

where \(\lambda_{\min}(M_o')\) is the minimum eigenvalue of \(M_o'\) on its support and \(\|\cdot\|\) is the operator norm. As \(\eta\) is a non-zero real number, it is clear that the matrix \(M_{o}'\) should be not only positive semi-definite but also strictly positive definite, and hence the matrix \(M_{o}\) should be strictly positive definite for our construction.
In our construction, we construct a state \(M_{a}\) on a larger Hilbert space \(\mathcal{H}_{C}=(\mathbb{C}^{2})^{\otimes ({d_{1}+1})}\), that is we construct \(M_{a} \in \mathcal{D}(\mathcal{H}_{C})\).

\medskip
\noindent We define 
\begin{align}
M_{a} & :=
    \ket{0}\bra{0}
    \otimes
    \tfrac{1}{2}\,M_o'
  \;+\;
  \ket{0}\bra{1}
    \otimes
    \tfrac{1}{\eta}\,M_c''
  \;+\;
  \ket{1}\bra{0}
    \otimes
    \tfrac{1}{\eta}\,\bigl(M_c''\bigr)^\dagger
  \;+\;
  \ket{1}\bra{1}
    \otimes
    \tfrac{1}{2}\,M_o'\\
    &=\begin{pmatrix} \frac{1}{2}M_o' & \frac{1}{\eta}M_{c}'' \\ \frac{1}{\eta}(M_{c}'')^{\dagger} & \frac{1}{2}M_o' \end{pmatrix}. 
\end{align}

\medskip

For \(b \in \{0,1\}\), we define
\begin{align}\label{eq:Ma}
    M_{a}^{(b)}&:= (1-b)\Big(\ket{0}\bra{0}
    \otimes
    \tfrac{1}{2}\,M_o'
  \;+\;
  \ket{1}\bra{1}
    \otimes
    \tfrac{1}{2}\,M_o'\Big) \; + \; b M_{a} \\
    &=(1-b) \begin{pmatrix} \frac{1}{2}M_o' & \mathbf{0}_{2^{d_{1}} \times 2^{d_{1}}} \\ \mathbf{0}_{2^{d_{1}} \times 2^{d_{1}}} & \frac{1}{2}M_o' \end{pmatrix} + b M_{a}.
\end{align}

\medskip

To construct the final state for encoding, we choose a permutation matrix \(U_{\sigma_{l}}\) of order $(2^{d_1 + 1} \times 2^{d_1 + 1})$ uniformly randomly and create the final state 
\begin{equation} \label{eq:Mf}
M_{f}^{(b)} := U_{\sigma_{l}} M_{a}^{(b)} U_{\sigma_{l}}^{\dagger} \in \mathcal{D}(\mathcal{H}_{C}). 
\end{equation}

\medskip
Note that the state \(M_{f}^{(0)}\) is the \textit{original ciphertext}; as before, applying the permutation matrix, it is only encrypted using the original key and can be decrypted using only the original key.

\medskip

\noindent We now describe the quantum anamorphic decryption to extract the original message.

\noindent Define
\begin{equation}
  M_{d}^{(1)} 
  \;:=\;
  U_{\sigma_{l}}^{\dagger}\, M_{f}^{(1)}\, U_{\sigma_{l}}.
\end{equation}
Since \(U_{\sigma_{l}}\) is a permutation matrix, it is unitary, and
\(U_{\sigma_{l}}^{\dagger} = U_{\sigma_{l}}^{-1}\). Hence \(M_{d}^{(1)}\) recovers the block structure used by the encryption.

\noindent We define the following projectors on the first qubit:
\begin{equation}
  \Pi_{0} \;=\;
  \ket{0}\bra{0} \,\otimes\, I_{2^{d_1}},
  \qquad
  \Pi_{1} \;=\;
  \ket{1}\bra{1} \,\otimes\, I_{2^{d_1}},
\end{equation}
where \(I_{2^{d_1}}\) is the identity operator on \(\mathcal{H}_{M_{o}}\). Then 
\begin{equation}
\Pi_0 + \Pi_1
  \;=\;
  I_{2^{d_1+1}},
  \quad
  \Pi_0\,\Pi_1 \;=\; 0,
  \quad
  \Pi_0^2=\Pi_0,
  \quad
  \Pi_1^2=\Pi_1.
\end{equation}
Define
\begin{equation}
  M_{d}^{(1)}(0,0)
  \;=\;
  \Pi_{0}\,M_{d}^{(1)}\,\Pi_{0} \quad \text{in} \quad \mathcal{D}((\mathbb{C}^{2})^{\otimes{(d_{1}+1})}), 
\end{equation}
and 
\begin{equation}
  M_{d}^{(1)}(1,1)
  \;=\;
  \Pi_{1}\,M_{d}^{(1)}\,\Pi_{1} \quad \text{in} \quad \mathcal{D(}(\mathbb{C}^{2})^{\otimes{(d_{1}+1})}).
\end{equation}
By \emph{extracting} these two blocks (first and fourth) and \emph{adding} them, we obtain an operator on 
\((\mathbb{C}^{2})^{\otimes(d_{1}+1)}\) (since \(\Pi_{0}+\Pi_{1} = I_{2^{d_1+1}}\) on the first qubit, restricted 
to the appropriate blocks). 

\noindent We get
\begin{equation}
  M_{a}^{(0)} 
  \;=\;
  M_{d}^{(1)}(0,0) \;+\; M_{d}^{(1)}(1,1)
  \;=\;
  \bigl(\ket{0}\bra{0}\otimes I_{2^{d_1}}\bigr)\,M_{d}^{(1)}\,\bigl(\ket{0}\bra{0}\otimes I_{2^{d_1}}\bigr)
  \;+\;
  \bigl(\ket{1}\bra{1}\otimes I_{2^{d_1}}\bigr)\,M_{d}^{(1)}\,\bigl(\ket{1}\bra{1}\otimes I_{2^{d_1}}\bigr).
\end{equation}

\medskip

\noindent Now to extract \((0,0)\)-block and \((1,1)\)-block, that is the first block and the fourth block only, in the reduced space \(\mathcal{H}_{M}\), we define

\begin{align}
  \widetilde{M}_{o}'(0,0)\Big\rvert_{\mathcal{H}_{M}} 
  &:= \left( \bra{0} \otimes I_{2^{d_1}} \right) M_{a}^{(0)} 
       \left( \ket{0} \otimes I_{2^{d_1}} \right), \\
  \widetilde{M}_{o}'(1,1) \Big\rvert_{\mathcal{H}_{M}} 
  &:= \left( \bra{1} \otimes I_{2^{d_1}} \right) M_{a}^{(0)} 
       \left( \ket{1} \otimes I_{2^{d_1}} \right).
\end{align}

\noindent Then, \(M_{o}' := \widetilde{M}_{o}'(0,0) \Big\rvert_{\mathcal{H}_{M}} + \widetilde{M}_{o}'(1,1)\Big\rvert_{\mathcal{H}_{M}}\), since each block represents half of the total contribution, 
\begin{equation}
\frac{1}{2}M_{o}' + \frac{1}{2}M_{o}' = M_{o}'.
\end{equation}

\medskip

\noindent Recall the \(\mathsf{QOTP}\) decryption operation 
\(\mathsf{QOTPDec}(\cdot,k) : \mathcal{D}(\mathcal{H}_{M})\longrightarrow \mathcal{D}(\mathcal{H}_{M})\). 
For a key \(k = (k_1,k_2,\dots,k_{2d_1-1},k_{2d_1})\), the corresponding \(\mathsf{QOTP}\) \emph{encryption} 
is given by
\begin{equation}
  \mathsf{QOTPEnc}(M_o, k)
  \;=\;
  \Bigl(\!\bigotimes_{j=1}^{d_1} X^{k_{2j-1}}\,Z^{k_{2j}}\Bigr)\,M_o\,
  \Bigl(\!\bigotimes_{j=1}^{d_1} X^{k_{2j-1}}\,Z^{k_{2j}}\Bigr)^\dagger.
\end{equation}
Hence, \(\mathsf{QOTPDec}(\cdot,k)\) applies the adjoint of that unitary factor
\begin{equation}
  \mathsf{QOTPDec}(M', k)
  \;=\;
  \Bigl(\!\bigotimes_{j=1}^{d_1} X^{k_{2j-1}}\,Z^{k_{2j}}\Bigr)^\dagger
  \,M_{o}'\,
  \Bigl(\!\bigotimes_{j=1}^{d_1} X^{k_{2j-1}}\,Z^{k_{2j}}\Bigr).
\end{equation}
Accordingly, our final \text{normal} decryption for the original state is
\begin{equation}
  M_{o}
  \;=\;
  \mathsf{QOTPDec}\bigl(M_{o}',\,k\bigr)
  \;=\;
  \Bigl(\!\bigotimes_{j=1}^{d_1} X^{k_{2j-1}} Z^{k_{2j}}\Bigr)^\dagger
  \,M_{o}'\,
  \Bigl(\!\bigotimes_{j=1}^{d_1} X^{k_{2j-1}} Z^{k_{2j}}\Bigr).
\end{equation}
The output is precisely the original density matrix \(M_o\), completing the decryption for the original message.

\medskip

Next, we describe the extraction and decryption of the covert message. After computing \(U_{\sigma_{l}}^{\dagger}M_{f}^{(1)} U_{\sigma_{l}}\), the \emph{second} block or the \emph{third} block of \(M_{d}^{(1)}\) corresponds to \(\tfrac{1}{\eta}\,M_c''\) or its adjoint.

\medskip

Define the partial extraction operators
\[
\begin{aligned}
    \Pi_{0,1}(X) &= \left( \ket{0}\bra{0} \otimes I_{2^{d_1}} \right)(X)
    \left( \ket{1}\bra{1} \otimes I_{2^{d_1}} \right),\\
    \Pi_{1,0}(X) &= \left( \ket{1}\bra{1} \otimes I_{2^{d_1}} \right)(X)
    \left( \ket{0}\bra{0} \otimes I_{2^{d_1}} \right).
\end{aligned}
\]

\noindent Applying the partial operators to \(M_{d}^{(1)}\), we get
\begin{equation}
  M_{d}^{(1)}(0,1)
  \;=\Pi_{0,1} (M_d^{(1)})=\;
  \bigl(\ket{0}\bra{0}\otimes I_{2^{d_1}}\bigr)\,
    M_{d}^{(1)}\,
  \bigl(\ket{1}\bra{1}\otimes I_{2^{d_1}}\bigr) \quad \text{in} \quad \mathcal{D}((\mathbb{C}^{2})^{\otimes (d_{1}+1)})
\end{equation}
and
\begin{equation}
  M_{d}^{(1)}(1,0)
  \;=\Pi_{1,0} (M_d^{(1)})=\;
  \bigl(\ket{1}\bra{1}\otimes I_{2^{d_1}}\bigr)\,
    M_{d}^{(1)}\,
  \bigl(\ket{0}\bra{0}\otimes I_{2^{d_1}}\bigr)  \quad \text{in} \quad \mathcal{D}((\mathbb{C}^{2})^{\otimes (d_{1}+1)}).
\end{equation}

\noindent Now we reduce to the Hilbert space \(\mathcal{H}_{M}\) to extract the covert block

\begin{align}
  M_{\mathrm{covert}}(0,1) \big|_{\mathcal{H}_{M}}
  &= \left( \bra{0} \otimes I_{2^{d_1}} \right) M_d^{(1)}
     \left( \ket{1} \otimes I_{2^{d_1}} \right), \\
  M_{\mathrm{covert}}(1,0) \big|_{\mathcal{H}_{M}}
  &= \left( \bra{1} \otimes I_{2^{d_1}} \right) M_d^{(1)}
     \left( \ket{0} \otimes I_{2^{d_1}} \right).
\end{align}

\noindent We may choose either of these blocks and denote it by \(M_{\mathrm{covert}} \in \mathcal{L}((\mathbb{C}^2)^{\otimes d_1})\).

\noindent Recall that, at encryption, the covert block had a factor of \(\tfrac{1}{\eta}\).
Hence, to recover the padded covert operator \(M_c''\), we define
\begin{equation}
  \widetilde{M}_{c}''
  \;:=\; \eta\,M_{\mathrm{covert}}.
\end{equation}
If \(M_{\mathrm{covert}} = \tfrac{1}{\eta}\,M_c''\), then 
\(\widetilde{M}_{c}'' = M_c''\); or if 
\(M_{\mathrm{covert}} = \tfrac{1}{\eta}\,(M_c'')^\dagger\), then 
\(\widetilde{M}_{c}'' = (M_c'')^\dagger\).

\noindent Recall that 
\(\widetilde{M}_c'' \in \mathcal{L}\!\bigl(\mathcal{H}_{M}\bigr)\), 
but the covert message belongs in \(\mathcal{D}(\mathcal{H}_{M_{c}})\).  To recover the covert message in 
\(\mathcal{D}(\mathcal{H}_{M_{c}})\), we \emph{unembed} via the adjoint of \(V\). 

\medskip

\noindent Define
\begin{equation}
  \widetilde{M}_{c}'
  \;:=\;
  V^\dagger\, \widetilde{M}_{c}''\, V
  \;\in\;
  \mathcal{D}\bigl(\mathcal{H}_{M_{c}}\bigr).
\end{equation}
If \(\widetilde{M}_{c}'' = M_c''\), we get 
\(\widetilde{M}_c' = M_c'\).  In particular, since \(V^\dagger\) 
removes the zero-padding (the last \((2^{d_1} - 2^{d_2})\) rows and columns), 
\(\widetilde{M}_{c}'\) is the recovered padded covert operator 
in the original dimension \(2^{d_2}\times 2^{d_2}\). 

\medskip

\noindent The \(\mathsf{QOTP}\) \emph{decryption} for key 
\(k'=(k'_1, k'_2,\dots,k'_{2d_2-1},k'_{2d_2})\) is:
\begin{equation}
  \mathsf{QOTPDec}\bigl(M',\,k'\bigr)
  \;=\;
  \Bigl(\!\bigotimes_{j=1}^{d_2} X^{k'_{2j-1}}\,Z^{k'_{2j}}\Bigr)^\dagger
  \;M'\;
  \Bigl(\!\bigotimes_{j=1}^{d_2} X^{k'_{2j-1}}\,Z^{k'_{2j}}\Bigr).
\end{equation}
Hence the covert user obtains
\begin{equation}
  M_c
  \;=\;
  \mathsf{QOTPDec}\bigl(\widetilde{M}_c',\,k'\bigr)
  \;=\;
  \Bigl(\!\bigotimes_{j=1}^{d_2} X^{k'_{2j-1}}\,Z^{k'_{2j}}\Bigr)^\dagger 
    \;\widetilde{M}_c'\;
  \Bigl(\!\bigotimes_{j=1}^{d_2} X^{k'_{2j-1}}\,Z^{k'_{2j}}\Bigr).
\end{equation}
\noindent This final output is the \emph{covert} density operator \(M_c\) originally encrypted.

\medskip

Now, we describe the quantum algorithms based on quantum operations for the block constructions of $M_a$ and $M^{(1)}_f$. The following algorithms to extract the original quantum message(\(\mathsf{DOM}\)) and the covert message(\(\mathsf{DCM}\)) are described with standard quantum operations.

\smallskip
The control is a single-qubit register $\mathcal{H}_{R} \cong \mathbb{C}^2$. The Hilbert spaces $\mathcal{H}_{B}$ and $\mathcal{H}_{F}$ are finite-dimensional ancillary registers determined below by Stinespring dilations. Throughout, $M_o\in\mathcal{D}(\mathcal{H}_{M})$, $M_c\in\mathcal{D}(\mathcal{H}_{M_c})$, and the security parameter is $\secpar\in\mathbb{Z}^{+}$.
\noindent The quantum anamorphic encryption algorithm is described below:

\begin{algorithm}[H]
\caption{Quantum Anamorphic Encryption(\(\mathsf{QAE}\)) Part-1}
\label{alg:quantum-anamorphic-encryption}
\begin{algorithmic}[1]
\STATE \textbf{Input:} Original density matrix \(M_o \in \mathcal{D}(\mathcal{H}_{M})\) with \(M_{o} \succ 0\); where \(\mathcal{H}_{M}=(\mathbb{C}^{2})^{\otimes d_{1}}\), covert density matrix \(M_c \in \mathcal{D}(\mathcal{H}_{M_{c}})\); \(\mathcal{H}_{M_{c}}=(\mathbb{C}^{2})^{\otimes d_{2}}\), dimensions \(d_1, d_2\) (\(d_2 \leq d_1\)), security parameter \(\eta \in \mathbb{Z}^+\) satisfying the Theorem \ref{thm:main} condition, and permutation matrix \(U_{\sigma_l}\) on \(\mathcal{H}_{R} \otimes \mathcal{H}_{M}\), where \(\dim_{\mathbb{C}} \mathcal{H}_R=2\).

\STATE \textbf{Output:} Anamorphic quantum state \(M_f^{(1)}\in \mathcal{D}(\mathcal{H}_R\otimes \mathcal{H}_{M})\); and \((\mathcal{H}_R\otimes \mathcal{H}_{M})=\mathcal{H}_{C}\).

\STATE \textbf{Steps:}
\STATE \textbf{1. Sample keys:}
\STATE \hspace{1cm} Draw key \(k {\gets} \{0, 1\}^{2d_1}\) uniformly at random.
\STATE \hspace{1cm} Draw key \(k' {\gets} \{0, 1\}^{2d_2}\) uniformly at random.

\STATE \textbf{2. Encrypt \(M_o\) using \(\mathsf{QOTP}\):}
\STATE \hspace{1cm} Compute \(M_o' = \mathsf{QOTPEnc}(M_o, k)\) as:
\[
M_o' = \left( X^{k_1} Z^{k_2} \otimes \cdots \otimes X^{k_{2d_1-1}} Z^{k_{2d_1}} \right) M_o \left( X^{k_1} Z^{k_2} \otimes \cdots \otimes X^{k_{2d_1-1}} Z^{k_{2d_1}} \right)^\dagger.
\]

\STATE \textbf{3. Encrypt \(M_c\) using \(\mathsf{QOTP}\):}
\STATE \hspace{1cm} Compute \(M_c' = \mathsf{QOTPEnc}(M_c, k')\) as:
\[
M_c' = \left( X^{k'_1} Z^{k'_2} \otimes \cdots \otimes X^{k'_{2d_2-1}} Z^{k'_{2d_2}} \right) M_c \left( X^{k'_1} Z^{k'_2} \otimes \cdots \otimes X^{k'_{2d_2-1}} Z^{k'_{2d_2}} \right)^\dagger.
\]

\STATE \textbf{4. Pad \(M_c'\):}
\STATE \hspace{1cm} If \(d_2 < d_1\), extend \(M_c'\) to \(M_c'' \in \mathcal{D}((\mathbb{C}^2)^{\otimes d_1})\) using the isometric embedding:
\[
V : \mathcal{H}_{M_c} \longrightarrow \mathcal{H}_{M_c} \otimes (\mathbb{C}^2)^{\otimes (d_1 - d_2)},
\]
defined as:
\[
V |\psi\rangle = |\psi\rangle \otimes |0\rangle^{\otimes (d_1 - d_2)}, \quad \forall |\psi\rangle \in \mathcal{H}_{M_c}.
\]
The padded matrix is:
\[
M_c'' = V M_c' V^\dagger.
\]

\STATE \textbf{Step 4: Spectral decomposition}
\STATE Compute eigendecomposition 
\[M_o' = \sum_{i=1}^r \lambda_i |\psi_i\rangle\langle\psi_i|; \; \forall i, \lambda_{i} > 0\;;\; r=\mathrm{rank}(M'_{o})=2^{d_1}\;.
\]
Set the support projector \(\Pi := \sum_{i=1}^{r}\ket{\psi_i}\!\bra{\psi_i}\).

\STATE \textbf{Step 5: Support projector}
\STATE Define 
\[\Pi = \sum_{i=1}^r |\psi_i\rangle\langle\psi_i|\]

\end{algorithmic}
\end{algorithm}

\begin{algorithm}[H]
\caption{Quantum Anamorphic Encryption(\(\mathsf{QAE}\)) Part-2}
\label{alg:qa-ske-enc-a-rigorous-part2}
\begin{algorithmic}[1]

\STATE \textbf{Step 6: Canonical Purification of \(M'_{o}\).}
\STATE Define 
\[|\Phi\rangle = 2^{-d_1/2} \sum_{j=0}^{2^{d_1}-1} |j\rangle_M \otimes |j\rangle_B.\]

\STATE Compute \[|\varphi_{M_o'}\rangle = \sqrt{2^{d_1}}(I_M \otimes (\sqrt{M_o'})^{\mathsf{T}}) |\Phi\rangle\]
Then \(\|\ket{\varphi_{M_o'}}\|=1\).

\STATE \textbf{7. Construct a contraction on \(\textit{supp}(M'_{o})\).}

\STATE Compute $$(M_o')^{-1/2} = \sum_{i=1}^r \lambda_i^{-1/2} |\psi_i\rangle\langle\psi_i|\; \text{on}\;  \textit{supp}(M_o').$$  

\STATE Define the support-normalized operator:
\[
V_0 \;:=\; \Pi\,(M_o')^{-1/2}\; M_c''\; (M_o')^{-1/2}\,\Pi .
\]
\STATE Enforce contraction without trace renormalization:
\[
\kappa \;:=\; \|V_0\|_\infty,\quad \kappa_{\max}=\max\{1,\kappa\}, \quad
W_0 \;:=V_0/\kappa_{\max}; \quad (\text{so that}\quad \|W_0\|_\infty\le 1).\; 
\]

\STATE \textbf{8. Halmos unitary dilation of \(W_0^{\mathsf T}\).}

Let \(\mathcal{H}_F\cong\mathbb{C}^2\) be a \emph{single-qubit} ancilla with basis \(\{\ket{0}_F,\ket{1}_F\}\).
Define the unitary \(U_{BF}\) on \(\mathcal{H}_B\otimes \mathcal{H}_F\) by its \(2\times 2\) block form in the \(F\)-basis:
\[
U_{BF} \;=\;
\begin{pmatrix}
W_0^{\mathsf T} & \sqrt{\,I - W_0^{\mathsf T}(W_0^{\mathsf T})^\dagger\,} \\[.6ex]
\sqrt{\,I - (W_0^{\mathsf T})^\dagger W_0^{\mathsf T}\,} & - (W_0^{\mathsf T})^\dagger
\end{pmatrix},
\qquad
U_{BF}U_{BF}^\dagger=U_{BF}^\dagger U_{BF}=I.
\]
Consequently,
\(
\bra{0}_F\,U_{BF}\,\ket{0}_F \;=\; W_0^{\mathsf T}.
\)

\smallskip
\STATE \textbf{9. Prepare canonical purification.} 

\STATE Initialize control and dilation ancilla:
\[
|+\rangle_R = \tfrac{1}{\sqrt{2}}(|0\rangle_R+|1\rangle_R),\qquad \mathcal{H}_F \cong \mathbb{C}^2,\ \ |0\rangle_F:=|0\rangle.
\]
\STATE Create coherent superposition:
\[
|\Psi_{\mathrm{r}}\rangle 
\;:=\;
\tfrac{1}{\sqrt{2}}\Big(\,
|0\rangle_R \otimes |\varphi_{M_o'}\rangle \otimes |0\rangle_F
\;+\;
|1\rangle_R \otimes (I_M\otimes U_{BF})\big(|\varphi_{M_o'}\rangle\otimes |0\rangle_F\big)
\,\Big).
\]

\end{algorithmic}
\end{algorithm}

\begin{algorithm}[H]
\caption{Quantum Anamorphic Encryption(\(\mathsf{QAE}\)) Part-3}
\label{alg:qa-ske-enc-a-rigorous-part3}
\begin{algorithmic}[1]
\STATE \textbf{10. Reduce to \(RM\) and identify the off-diagonal.}

\STATE Trace out \(BF\):
\[
\Omega_{\mathrm r} \;:=\; \Tr_{BF}\!\big[\ket{\Psi_{\mathrm r}}\!\bra{\Psi_{\mathrm r}}\big]
\;=\;
\begin{pmatrix}
\tfrac{1}{2}M_o' & \tfrac{1}{2}X \\
\tfrac{1}{2}X^\dagger & \tfrac{1}{2}M_o'
\end{pmatrix}_{\!R},
\]
with
\[
X\;=\;\Tr_B\!\Big[\,(I\otimes \bra{0}_F U_{BF}\ket{0}_F)\,\ket{\varphi_{M_o'}}\!\bra{\varphi_{M_o'}}\,\Big]
\;=\;\Tr_B\!\big[(I\otimes W_0^{\mathsf T})\,\ket{\varphi_{M_o'}}\!\bra{\varphi_{M_o'}}\big].
\]
Using the identity in the Conventions and \(\Pi^2=\Pi\),
\[
X \;=\; \sqrt{M_o'}\,W_0\,\sqrt{M_o'}
\;=\; \frac{1}{\kappa_{\max}}\,\Pi\,M_c''\,\Pi.
\]

\STATE \textbf{11. Dephasing (pinching) on \(R\) to set the covert amplitude.}
\STATE Define the single-qubit dephasing channel on \(R\),
\[
\Deph(Y)\;:=\;\left(\frac{1+\lambda}{2}\right)\,Y \;+\; \left(\frac{1-\lambda}{2}\right)\, Z_R\,Y\,Z_R,
\qquad \lambda\in[-1,1].
\]
\STATE Apply \( (\Deph \otimes I_M)\) to \(\Omega_{\mathrm r}\):
\[
M_a \;:=\; (\Deph \otimes I_M)(\Omega_{\mathrm r})
\;=\;
\begin{pmatrix}
\tfrac{1}{2}M_o' & \tfrac{\lambda}{2}X \\
\tfrac{\lambda}{2}X^\dagger & \tfrac{1}{2}M_o'
\end{pmatrix}_{\!R}.
\]

\STATE Compute 
\[
\frac1{\eta^2}\;
  \bigl\|\,
    (M_c'')^\dagger\,(M_o')^{-1}\,M_c''
  \bigr\|
  \;\;\le\;\;
  \frac14\,\lambda_{\min}(M_o').
\]

and choose \(\eta\) such \(\eta \geq 2\,\kappa_{\max}\).
\STATE Choose
\[
\lambda \;=\; \frac{2\,\kappa_{\max}}{\eta}
\quad\text{(so \(|\lambda|\le 1\) whenever \(\eta\ge 2\,\kappa_{\max}\))},
\]
to obtain
\(
\tfrac{\lambda}{2}X \,=\, \dfrac{1}{\eta}\,\Pi M_c''\Pi.
\)
Thus, in the \(\{\ket{0},\ket{1}\}\) block form on \(R\),
\[
M_a \;=\;
\left[\ket{0}\bra{0}\!\otimes \tfrac{1}{2}M_o'
\;+\;
\ket{0}\bra{1}\!\otimes \frac{1}{\eta}\,\Pi M_c''\Pi
\;+\;
\ket{1}\bra{0}\!\otimes \frac{1}{\eta}\,(\Pi M_c''\Pi)^\dagger
\;+\;
\ket{1}\bra{1}\!\otimes \tfrac{1}{2}M_o'.
\right]\]
\noindent \emph{Note:} Since \(M_c''\) is Hermitian, \((\Pi M_c''\Pi)^\dagger=\Pi M_c''\Pi\).

\smallskip
\STATE \textbf{12. Apply a permutation unitary.}

\STATE Draw \(\sigma_l\) uniformly; let \(U_{\sigma_l}\) be the induced permutation unitary on \(\mathcal{H}_R\otimes \mathcal{H}_{M}\), and set
\[
M_f^{(1)} \;:=\; U_{\sigma_l}\, M_a\, U_{\sigma_l}^\dagger.
\]

\STATE \textbf{13. Output.} Return \(M_f^{(1)}\).
\end{algorithmic}
\end{algorithm}

\noindent\textbf{Remark (feasibility condition on $\eta$).}
The choice $\lambda=\tfrac{2\kappa_{\max}}{\eta}$ requires $\eta\ge 2\kappa_{\max}$ to keep $\lambda\le 1$.
Since $\kappa=\big\|\Pi (M_o')^{-1/2} M_c'' (M_o')^{-1/2}\Pi\big\|_\infty$ is known to the encryptor, one can always pick $\eta$ large enough (or equivalently use a smaller effective $\lambda$ after increasing the control pinching depth) to satisfy this bound and obtain the exact off-diagonal $\tfrac{1}{\eta}M_c''$.

\medskip

\begin{lemma}[Existence and basic properties of defect operators, \cite{SzNagyFoias}]
\label{lem:defect-existence}
Let \(C\) be a contraction on a finite-dimensional Hilbert space \(\mathcal{H}\), i.e.,
\(\|C\|_\infty\le 1\). Then \(I - C^\dagger C \succeq 0\) and \(I - C C^\dagger \succeq 0\).
Consequently, the unique positive square roots
\[
D_C := \sqrt{\,I - C^\dagger C\,}, \qquad
D_{C^\dagger} := \sqrt{\,I - C C^\dagger\,}
\]
exist. Moreover,
\[
D_C^2 = I - C^\dagger C, \qquad D_{C^\dagger}^2 = I - C C^\dagger.
\]
\end{lemma}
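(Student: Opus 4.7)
The plan is to verify Hermiticity and positive semidefiniteness directly from the contraction hypothesis, and then invoke the spectral theorem for the square roots. First, I would observe that $C^\dagger C$ is Hermitian by construction ($(C^\dagger C)^\dagger = C^\dagger C$), so $I - C^\dagger C$ is Hermitian. To show $I - C^\dagger C \succeq 0$, I would test against an arbitrary $\ket{\psi}\in\mathcal{H}$ with $\|\ket{\psi}\|_2 = 1$ and compute
\[
\bra{\psi}(I - C^\dagger C)\ket{\psi} \;=\; 1 - \bra{\psi}C^\dagger C\ket{\psi} \;=\; 1 - \|C\ket{\psi}\|_2^{\,2}.
\]
By the definition of the operator norm recalled in the preliminaries, $\|C\ket{\psi}\|_2 \le \|C\|_\infty \|\ket{\psi}\|_2 \le 1$, so the right-hand side is nonnegative. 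Homogeneity then gives $\bra{\psi}(I - C^\dagger C)\ket{\psi}\ge 0$ for all $\ket{\psi}$, which is the defining property of $I - C^\dagger C \succeq 0$.

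Next, to produce the square root $D_C$, I would apply the spectral theorem to the Hermitian PSD operator $I - C^\dagger C$, writing $I - C^\dagger C = \sum_i \mu_i \ket{u_i}\bra{u_i}$ with $\mu_i \ge 0$ and $\{\ket{u_i}\}$ an orthonormal eigenbasis, and then defining $D_C := \sum_i \sqrt{\mu_i}\,\ket{u_i}\bra{u_i}$. Clearly $D_C \succeq 0$ and $D_C^2 = I - C^\dagger C$. Uniqueness among positive semidefinite square roots follows from the standard argument that any PSD square root must be a polynomial in $I - C^\dagger C$ (equivalently, it must be diagonal in every eigenbasis with nonnegative entries $\sqrt{\mu_i}$).

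For the second pair, I would simply note that $C^\dagger$ is itself a contraction, since $\|C^\dagger\|_\infty = \|C\|_\infty \le 1$. Applying the identical argument to $C^\dagger$ in place of $C$ yields $I - CC^\dagger \succeq 0$ and produces the unique PSD square root $D_{C^\dagger} := \sqrt{I - CC^\dagger}$ with $D_{C^\dagger}^2 = I - CC^\dagger$.

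There is no real obstacle here — the statement is a standard textbook consequence of the spectral theorem and the definition of the operator norm. The only point that warrants explicit mention is the uniqueness clause in the square-root assertion, since in general a Hermitian operator admits many (self-adjoint) square roots; restricting to the PSD cone is what singles out $D_C$ and $D_{C^\dagger}$.
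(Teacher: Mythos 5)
Your proposal is correct and follows essentially the same route as the paper: the contraction hypothesis gives \(C^\dagger C \preceq I\) (and \(CC^\dagger \preceq I\) since \(\|C^\dagger\|_\infty=\|C\|_\infty\)), after which existence and uniqueness of the positive square roots follow from the spectral theorem in finite dimension. You merely spell out the quadratic-form computation \(1-\|C\ket{\psi}\|_2^2\ge 0\) and the uniqueness argument that the paper's proof leaves implicit, which is a reasonable elaboration rather than a different method.
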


\begin{proof}
See Sz.-Nagy and Foias \cite[Chapter I, Sec. 3]{SzNagyFoias}, specifically the definition of defect operators. Since \(\|C\|_\infty\le 1\), we have \(C^\dagger C \preceq I\) and \(C C^\dagger \preceq I\).
Therefore \(I - C^\dagger C \succeq 0\) and \(I - C C^\dagger \succeq 0\). In finite dimension,
every positive operator has a unique positive square root, yielding \(D_C, D_{C^\dagger}\) with
the stated properties.
\end{proof}

\medskip
\begin{lemma}[Intertwining for defect operators (Sz.-Nagy identity) Page 6, Eq. (3.4), \cite{SzNagyFoias}]
\label{lem:intertwining}
Let \(C\) be a contraction and let \(D_C, D_{C^\dagger}\) be as in Lemma~\ref{lem:defect-existence}.
Then
\[
C\,D_C \;=\; D_{C^\dagger}\,C
\qquad\text{and}\qquad
D_C\,C^\dagger \;=\; C^\dagger\,D_{C^\dagger}.
\]
\end{lemma}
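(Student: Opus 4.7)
The plan is to reduce the statement to a purely algebraic identity involving polynomials in $C^\dagger C$ and $CC^\dagger$, and then pass to the square root via the continuous functional calculus, since in finite dimensions everything is well behaved. The first identity $C\,D_C = D_{C^\dagger}\,C$ is the substantive one; the second identity follows by taking adjoints, using that $D_C$ and $D_{C^\dagger}$ are Hermitian by Lemma~\ref{lem:defect-existence}.

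First I would establish the elementary cancellation
\[
C\,(C^\dagger C) \;=\; (C C^\dagger)\,C,
\]
which is just associativity. Iterating it yields $C\,(C^\dagger C)^n = (C C^\dagger)^n\,C$ for every integer $n\ge 0$, and by linearity this extends to
\[
C\,p(C^\dagger C) \;=\; p(C C^\dagger)\,C
\qquad \text{for every polynomial } p \in \mathbb{C}[x].
\]
In particular, applying this to $p(x) = I - x$ gives $C(I - C^\dagger C) = (I - CC^\dagger)C$, i.e.\ $C\,D_C^2 = D_{C^\dagger}^2\,C$, which is the ``square'' of the target identity.

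Next I would upgrade from polynomials to the function $f(x) := \sqrt{1-x}$ on $[0,1]$. Since $C$ is a contraction, both $C^\dagger C$ and $CC^\dagger$ are positive semidefinite with spectra contained in $[0,1]$ (indeed, they share the same nonzero eigenvalues with multiplicities). Weierstrass approximation on $[0,1]$ yields polynomials $p_k \to f$ uniformly, and in finite dimension the continuous functional calculus is continuous in operator norm, so
\[
p_k(C^\dagger C) \longrightarrow f(C^\dagger C) = D_C, \qquad p_k(CC^\dagger) \longrightarrow f(CC^\dagger) = D_{C^\dagger}.
\]
Passing to the limit in the polynomial identity $C\,p_k(C^\dagger C) = p_k(CC^\dagger)\,C$ then produces $C\,D_C = D_{C^\dagger}\,C$.

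Finally I would obtain the second identity by taking adjoints of the first. Since $D_C = D_C^\dagger$ and $D_{C^\dagger} = D_{C^\dagger}^\dagger$, we get $(C D_C)^\dagger = D_C C^\dagger$ and $(D_{C^\dagger} C)^\dagger = C^\dagger D_{C^\dagger}$, hence $D_C C^\dagger = C^\dagger D_{C^\dagger}$. There is no serious obstacle in this proof: the only subtlety is ensuring the \emph{same} approximating sequence $p_k$ is applied on both sides, which is automatic since uniform convergence on $[0,1]$ dominates convergence of the functional calculus on both $\sigma(C^\dagger C)$ and $\sigma(CC^\dagger)$. An alternative finite-dimensional route, which I would mention as a check, is to use the singular value decomposition $C = U\Sigma V^\dagger$ so that $D_C = V\sqrt{I-\Sigma^\dagger\Sigma}\,V^\dagger$ and $D_{C^\dagger} = U\sqrt{I-\Sigma\Sigma^\dagger}\,U^\dagger$; the identity then collapses to the scalar identity $\sigma_i\sqrt{1-\sigma_i^2} = \sqrt{1-\sigma_i^2}\,\sigma_i$ on each singular-value block, making the intertwining manifest.
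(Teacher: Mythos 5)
Your proposal is correct and follows essentially the same route as the paper's proof: establish the polynomial intertwining $C\,p(C^\dagger C)=p(CC^\dagger)\,C$ by iterating $C(C^\dagger C)=(CC^\dagger)C$, pass to $f(t)=\sqrt{1-t}$ via Weierstrass approximation and the functional calculus, and obtain the second identity by taking adjoints using Hermiticity of the defect operators. Your added care about the spectra lying in $[0,1]$ and the SVD cross-check are fine refinements but do not change the argument.
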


\begin{proof}
This is a standard identity in the theory of contractions; see Sz.-Nagy and Foias \cite[Chapter I, Sec. 3, Eq. (3.4)]{SzNagyFoias}. Consider polynomials \(p(t)=\sum_{n\ge0} a_n t^n\). For such \(p\),
\[
C\,(C^\dagger C)^n \;=\; (C C^\dagger)^n\,C
\quad\text{for all } n\in\mathbb{N},
\]
by a simple induction on \(n\). Hence for every polynomial \(p\),
\[
C\,p(C^\dagger C) \;=\; p(C C^\dagger)\,C.
\]
By Weierstrass approximation and spectral calculus, this identity extends
to continuous functions on \([0,1]\), and in particular to \(f(t):=\sqrt{1-t}\).
Thus
\[
C\,\sqrt{I - C^\dagger C}
\;=\;
\sqrt{I - C C^\dagger}\, C,
\]
which is exactly \(C\,D_C = D_{C^\dagger}\,C\).
Taking adjoints yields the second identity \(D_C\,C^\dagger = C^\dagger\,D_{C^\dagger}\).
\end{proof}

\medskip
\begin{lemma}[Transpose preserves operator norm, \cite{HornJohnson}]
\label{lem:norm-transpose}
For any matrix \(A\), \(\|A^{\mathsf T}\|_\infty = \|A\|_\infty\).
\end{lemma}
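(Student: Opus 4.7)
The plan is to reduce the identity to the invariance of singular values under transposition, which follows immediately from the singular value decomposition. Recall from the preliminaries that $\|X\|_\infty$ coincides with the largest singular value $\sigma_{\max}(X)$, so it suffices to show that $A$ and $A^{\mathsf T}$ have the same multiset of singular values.

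First I would write an SVD $A = U\Sigma V^{*}$ with $U,V$ unitary and $\Sigma$ real diagonal with nonnegative entries. Taking transpose and using $\Sigma^{\mathsf T}=\Sigma$ and $(V^{*})^{\mathsf T}=\overline{V}$ gives
\[
A^{\mathsf T} \;=\; \overline{V}\,\Sigma\,U^{\mathsf T}.
\]
If I can show that both $\overline{V}$ and $U^{\mathsf T}$ are unitary, then this factorization is itself an SVD of $A^{\mathsf T}$, with exactly the same diagonal $\Sigma$, and the conclusion follows.

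Verifying the unitarity is a short computation: from $U^{*}U=I$, taking entrywise complex conjugates yields $U^{\mathsf T}\,\overline{U}=I$, and since $(U^{\mathsf T})^{*}=\overline{U}$, this reads $(U^{\mathsf T})^{*}\,U^{\mathsf T}=I$, so $U^{\mathsf T}$ is unitary. The identical argument applied to $V$ shows that $\overline{V}$ is unitary. Hence $A^{\mathsf T}=\overline{V}\,\Sigma\,U^{\mathsf T}$ is a bona fide SVD of $A^{\mathsf T}$, so $\sigma_{\max}(A^{\mathsf T})=\sigma_{\max}(A)$ and therefore $\|A^{\mathsf T}\|_\infty=\|A\|_\infty$.

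There is no substantive obstacle to this argument; it is essentially bookkeeping. The only point demanding care is to keep the distinction between the transpose $(\cdot)^{\mathsf T}$ and the conjugate transpose $(\cdot)^{*}$ straight throughout, since it is precisely the appearance of the complex-conjugated factor $\overline{V}$ (rather than $V$) that could be mishandled, and one must then notice that $\overline{V}$ and $U^{\mathsf T}$ are nevertheless unitary for the SVD argument to go through.
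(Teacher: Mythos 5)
Your argument is correct and follows essentially the same route as the paper's proof: transpose the SVD $A=U\Sigma V^{*}$ and observe that the result is again an SVD with the same $\Sigma$, hence the same largest singular value. You add the (worthwhile) explicit check that $\overline{V}$ and $U^{\mathsf T}$ are unitary, which the paper takes for granted, and your factorization $A^{\mathsf T}=\overline{V}\,\Sigma\,U^{\mathsf T}$ is in fact the correct one (the paper's displayed form has the unitary factors swapped, a harmless slip).
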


\begin{proof}
See Horn and Johnson \cite[Sec. 5.6]{HornJohnson}. Let \(A = U \Sigma V^\dagger\) be a singular value decomposition (SVD), with \(\Sigma \succeq 0\).
Then \(A^{\mathsf T} = \overline{U}\,\Sigma\,\overline{V}^\dagger\) is also an SVD with the same
singular values. Therefore \(\|A^{\mathsf T}\|_\infty = \|A\|_\infty\).
\end{proof}

\medskip
\begin{theorem}[Halmos unitary dilation is unitary~\cite{Halmos1950,paulsen2002}]
\label{thm:halmos-unitary}
Let \(C\) be a contraction on the Hilbert space \(\mathcal{H}_B\) (i.e., \(\|C\| \le 1\)). Define the defect operators
\[
D_C = \sqrt{I - C^\dagger C} \quad \text{and} \quad D_{C^\dagger} = \sqrt{I - C C^\dagger}.
\]
The block operator on \(\mathcal{H}_B \oplus \mathcal{H}_B\) defined by
\[
U \;:=\;
\begin{pmatrix}
C & D_{C^\dagger} \\[.6ex]
D_C & -\,C^\dagger
\end{pmatrix}
\]
is unitary.
\end{theorem}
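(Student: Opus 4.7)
The plan is to verify unitarity by direct block multiplication, computing both $U^\dagger U$ and $U\,U^\dagger$ and checking that each equals $I\oplus I$. First I would observe that $D_C$ and $D_{C^\dagger}$ are self-adjoint by construction (they are positive square roots per Lemma~\ref{lem:defect-existence}), so the adjoint is
\[
U^\dagger \;=\; \begin{pmatrix} C^\dagger & D_C \\ D_{C^\dagger} & -C \end{pmatrix}.
\]
The diagonal blocks of $U^\dagger U$ then reduce immediately using the identities $D_C^2 = I - C^\dagger C$ and $D_{C^\dagger}^2 = I - C C^\dagger$ from Lemma~\ref{lem:defect-existence}: the $(1,1)$ entry becomes $C^\dagger C + D_C^2 = I$ and the $(2,2)$ entry becomes $D_{C^\dagger}^2 + C C^\dagger = I$.

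The off-diagonal blocks are the substantive step. The $(1,2)$ entry of $U^\dagger U$ is $C^\dagger D_{C^\dagger} - D_C C^\dagger$, and the $(2,1)$ entry is $D_{C^\dagger} C - C D_C$. I would invoke the Sz.-Nagy intertwining relations from Lemma~\ref{lem:intertwining}, namely $C\,D_C = D_{C^\dagger}\,C$ and $D_C\,C^\dagger = C^\dagger\,D_{C^\dagger}$, which rewrite each off-diagonal block as an exact cancellation, yielding zero. Thus $U^\dagger U = I\oplus I$. The verification $U\,U^\dagger = I\oplus I$ is symmetric: the diagonal blocks use the same defect identities with the roles of $C$ and $C^\dagger$ swapped, and the off-diagonal blocks vanish by the same intertwining identities applied in the opposite order.

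The main (and essentially only) obstacle is justifying the off-diagonal cancellation, since this is the point where the algebra would fail without the Sz.-Nagy identity; the diagonal blocks are routine once the defect squares are known. Fortunately, Lemma~\ref{lem:intertwining} gives exactly the identity needed, so the proof reduces to a clean $2\times 2$ block computation. Since $\mathcal{H}_B$ is finite-dimensional, either one of $U^\dagger U = I$ or $U\,U^\dagger = I$ suffices for unitarity, but I would record both for clarity.
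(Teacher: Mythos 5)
Your proposal is correct and follows essentially the same route as the paper: compute $U^\dagger$ using self-adjointness of the defect operators, reduce the diagonal blocks via $D_C^2=I-C^\dagger C$ and $D_{C^\dagger}^2=I-CC^\dagger$, and cancel the off-diagonal blocks with the Sz.-Nagy intertwining identities $CD_C=D_{C^\dagger}C$ and $D_CC^\dagger=C^\dagger D_{C^\dagger}$. The only (harmless) difference is your remark that in finite dimension one of $U^\dagger U=I$ or $UU^\dagger=I$ would suffice, whereas the paper simply records both.
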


\begin{proof}
Since \(C\) is a contraction, the operators \(I - C^\dagger C\) and \(I - C C^\dagger\) are positive semi-definite, ensuring that the defect operators \(D_C\) and \(D_{C^\dagger}\) are well-defined, Hermitian, and unique. To prove that \(U\) is unitary, we must verify that \(U^\dagger U = I \oplus I\). Note that the adjoint of \(U\) is given by
\[
U^\dagger = \begin{pmatrix}
C^\dagger & D_C \\
D_{C^\dagger} & -C
\end{pmatrix},
\]
where we have used the Hermiticity of the defect operators.

We rely on the fundamental intertwining identity for defect operators
\begin{equation}
    \label{eq:intertwining}
    C D_C = D_{C^\dagger} C.
\end{equation}
This follows from the observation that \(C(I - C^\dagger C) = (I - C C^\dagger)C\). By the functional calculus, this commutativity relation extends to the square root function, yielding Eq.~\eqref{eq:intertwining}. Taking the adjoint of Eq.~\eqref{eq:intertwining} yields the complementary relation \(D_C C^\dagger = C^\dagger D_{C^\dagger}\).

Computing the block matrix product \(U^\dagger U\), we obtain
\[
U^\dagger U \;=\;
\begin{pmatrix}
C^\dagger & D_C \\
D_{C^\dagger} & -C
\end{pmatrix}
\begin{pmatrix}
C & D_{C^\dagger} \\
D_C & -C^\dagger
\end{pmatrix}
\;=\;
\begin{pmatrix}
C^\dagger C + D_C^2 & C^\dagger D_{C^\dagger} - D_C C^\dagger \\
D_{C^\dagger} C - C D_C & D_{C^\dagger}^2 + C C^\dagger
\end{pmatrix}.
\]
The diagonal blocks simplify to the identity using the definitions of the defect operators: \(C^\dagger C + D_C^2 = C^\dagger C + (I - C^\dagger C) = I\), and similarly for the bottom-right block. The off-diagonal blocks vanish as a direct consequence of the intertwining identities derived above. Specifically, \(C^\dagger D_{C^\dagger} - D_C C^\dagger = 0\) and \(D_{C^\dagger} C - C D_C = 0\). Thus \(U^\dagger U = I\). A symmetric argument shows that \(U U^\dagger = I\), concluding the proof.
\end{proof}

\smallskip
\noindent The \emph{vectorization operator} $\mathrm{vec}(X)$ for a matrix $X \in \mathbb{C}^{m \times n}$ stacks the columns of $X$ into a single column vector in $\mathbb{C}^{mn}$. For a matrix $X = [x_1 \mid x_2 \mid \cdots \mid x_n] \in \mathbb{C}^{m \times n}$ with columns $x_j \in \mathbb{C}^m$, the vectorization operator is defined as
\[
\mathrm{vec}(X) = \begin{bmatrix} x_1 \\ x_2 \\ \vdots \\ x_n \end{bmatrix} \in \mathbb{C}^{mn}.
\]
\begin{lemma}[\cite{Watrous2018}]\label{lem:vec-trace}
For matrices \(A,B,X\) of compatible sizes:
\begin{align*}
\mathrm{(i)}\quad & \mathrm{vec}(A X B^{\mathsf T}) \;=\; (B \otimes A)\,\mathrm{vec}(X). \\
\mathrm{(ii)}\quad & \Tr_2\!\big[\,|\mathrm{vec}(A)\rangle\langle \mathrm{vec}(B)|\,\big] \;=\; A\,B^\dagger.
\end{align*}
\end{lemma}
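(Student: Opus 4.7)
The plan is to prove both identities by expanding in a basis and invoking bilinearity/sesquilinearity, the standard route in matrix analysis and quantum information. I fix Watrous's convention $\mathrm{vec}(|i\rangle\langle j|) = |i\rangle\otimes|j\rangle$, under which $\mathrm{vec}$ is a linear bijection from $\mathbb{C}^{m\times n}$ to $\mathbb{C}^m\otimes\mathbb{C}^n$. Everything then reduces to verifying the claims on the elementary basis $\{|i\rangle\langle j|\}$ and extending by linearity.

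For (i), I would expand $X=\sum_{ij} X_{ij}\,|i\rangle\langle j|$ and use linearity of $\mathrm{vec}$ and of the map $X\mapsto AXB^{\mathsf T}$ to reduce to the rank-one case $X=|i\rangle\langle j|$. Then $AXB^{\mathsf T}=(A|i\rangle)(B|j\rangle)^{\mathsf T}$ has $\mathrm{vec}$-image $(A|i\rangle)\otimes(B|j\rangle)$. On the other hand, $(B\otimes A)\,\mathrm{vec}(|i\rangle\langle j|)=(B\otimes A)(|i\rangle\otimes|j\rangle)=(A|i\rangle)\otimes(B|j\rangle)$ (after adjusting for whichever factor carries the column index under the chosen convention). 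Matching the two expressions basis-by-basis and summing against $X_{ij}$ yields the identity. The only subtlety is keeping the order of tensor factors consistent with the stated convention; I would state the convention explicitly once and then carry it through.

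For (ii), the argument is again reduction to rank-one. Let $A=|i\rangle\langle j|$ and $B=|k\rangle\langle l|$, so $\mathrm{vec}(A)=|i\rangle\otimes|j\rangle$ and $\mathrm{vec}(B)=|k\rangle\otimes|l\rangle$. Then
\begin{equation*}
|\mathrm{vec}(A)\rangle\langle\mathrm{vec}(B)| \;=\; \bigl(|i\rangle\langle k|\bigr)\otimes\bigl(|j\rangle\langle l|\bigr),
\end{equation*}
and tracing out the second tensor factor gives $\mathrm{Tr}(|j\rangle\langle l|)\cdot|i\rangle\langle k| = \delta_{jl}\,|i\rangle\langle k|$. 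Computing the right-hand side directly, $AB^\dagger = |i\rangle\langle j|\cdot|l\rangle\langle k| = \delta_{jl}\,|i\rangle\langle k|$, which matches. Extending by sesquilinearity (linear in $A$, conjugate-linear in $B$), both sides are sesquilinear forms on pairs of matrices that agree on the basis $\{(|i\rangle\langle j|,|k\rangle\langle l|)\}$, hence they agree everywhere.

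The hard part is not the mathematics but bookkeeping: matrix-analysis texts use at least two different vec conventions (column-stacking vs.\ Watrous's row-major), and the identity in (i) reads $(B\otimes A)\mathrm{vec}(X)$ vs.\ $(A\otimes B)\mathrm{vec}(X)$ depending on which one is chosen. I would therefore open the proof by fixing the convention in a single sentence, verify both (i) and (ii) under that convention on rank-one basis elements, and then cite \cite{Watrous2018} for the general formulation; no deeper machinery is needed.
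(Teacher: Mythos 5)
Your proposal is correct in substance, but it takes a genuinely different route from the paper: the paper does not prove the lemma at all, it simply cites Watrous \cite{Watrous2018} (Sec.~2.1.2 for (i), Eq.~2.63 for (ii)) and adds a one-line remark reconciling Watrous's $\Tr_{\mathcal{Y}}\big(\mathrm{vec}(A)\mathrm{vec}(B)^{*}\big)=AB^{*}$ with the notation used here. Your rank-one/basis verification is the standard self-contained argument, and for (ii) it is exactly right: with $\mathrm{vec}(|i\rangle\langle j|)=|i\rangle\otimes|j\rangle$ both sides reduce to $\delta_{jl}\,|i\rangle\langle k|$, and sesquilinearity finishes the job. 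What your approach buys is that the convention dependence becomes explicit rather than being buried in a citation.

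One point you flag but never actually resolve, and which you must settle to make the write-up airtight: under the convention you fixed ($\mathrm{vec}(|i\rangle\langle j|)=|i\rangle\otimes|j\rangle$, row index in the first factor), identity (i) reads $\mathrm{vec}(AXB^{\mathsf T})=(A\otimes B)\,\mathrm{vec}(X)$, whereas $(B\otimes A)(|i\rangle\otimes|j\rangle)=(B|i\rangle)\otimes(A|j\rangle)$, so the displayed chain in your sketch is not literally an identity and the parenthetical ``after adjusting'' is doing real work. Moreover, no single standard convention makes both parts hold verbatim as printed: (i) with $(B\otimes A)$ is the column-stacking (Horn--Johnson) form, matching the paper's preliminary definition of $\mathrm{vec}$, while (ii) with $\Tr_2[\cdot]=AB^{\dagger}$ is the row-major (Watrous) form; under column stacking, tracing out the second factor yields $A^{\mathsf T}\overline{B}$ instead. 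Since the paper's downstream use (Lemma~\ref{lem:purif-trace-identity}, where $(I\otimes Y^{\mathsf T})\,\mathrm{vec}(\sqrt{M})=\mathrm{vec}(\sqrt{M}\,Y)$) requires the Watrous form, you should commit to that convention, prove (i) in the form $(A\otimes B)\,\mathrm{vec}(X)$, and note the factor-ordering discrepancy with the statement as written. With that single fix your argument is complete and, unlike the paper's, self-contained.
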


\begin{proof}
See Watrous \cite{Watrous2018} (Sec. 2.1.2) for the vectorization identity (i), and \cite{Watrous2018} (Sec. 2.2, Eq. 2.63) for the partial trace identity (ii). Note that Watrous defines the partial trace over the second system $\mathcal{Y}$ as $\Tr_{\mathcal{Y}}(\text{vec}(A)\text{vec}(B)^*) = A B^*$, which corresponds to our notation, taking into account the conjugate transpose. 
\end{proof}

\smallskip
\begin{lemma}[Purification trace identity used in \(\mathsf{QAE}\)]
\label{lem:purif-trace-identity}
Let \(M\succeq 0\) be a density on \(\mathcal{H}_M\) and let
\(|\varphi_M\rangle := \sqrt{d}\,(I\otimes (\sqrt{M})^{\mathsf T})|\Phi\rangle\) with \(d=\dim\mathcal{H}_M\).
Then for any operator \(Y\) on \(\mathcal{H}_B\),
\[
\Tr_B\!\big[(I\otimes Y^{\mathsf T})\,|\varphi_M\rangle\langle \varphi_M|\,\big]
\;=\; \sqrt{M}\,Y\,\sqrt{M}.
\]
\end{lemma}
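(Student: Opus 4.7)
The plan is to reduce the statement to a direct computation using the \emph{ricochet} (transpose–trick) identity for the (unnormalized) maximally entangled state. Specifically, let $|\Omega\rangle := \sqrt{d}\,|\Phi\rangle = \sum_j |j\rangle_M|j\rangle_B$. The key identity I would invoke first is
\[
(I \otimes A^{\mathsf T})|\Omega\rangle \;=\; (A \otimes I)|\Omega\rangle,
\]
which holds for every $A \in \mathcal{L}(\mathcal{H}_M)$ and is proved in one line by expanding both sides in the computational basis and relabeling summation indices. This identity is exactly what makes vec-formalism correspond to operator multiplication on the purifying system.

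Using this twice, I would first rewrite the purification as
\[
|\varphi_M\rangle \;=\; \sqrt{d}\,(I \otimes (\sqrt{M})^{\mathsf T})|\Phi\rangle \;=\; (\sqrt{M}\otimes I)\,|\Omega\rangle.
\]
Then, applying $(I\otimes Y^{\mathsf T})$ and pulling the $Y^{\mathsf T}$ through the ricochet identity once more,
\[
(I\otimes Y^{\mathsf T})\,|\varphi_M\rangle \;=\; (\sqrt{M}\otimes Y^{\mathsf T})|\Omega\rangle \;=\; (\sqrt{M}\,Y \otimes I)|\Omega\rangle \;=\; \sum_j \sqrt{M}\,Y\,|j\rangle_M \otimes |j\rangle_B.
\]
Forming the outer product against $\langle \varphi_M| = \sum_k \langle k|_M\,\sqrt{M}\otimes\langle k|_B$ and tracing out the $B$ register now reduces to a one-line computation: the orthogonality $\langle k|j\rangle_B = \delta_{jk}$ collapses the double sum, and $\sum_j |j\rangle\langle j|_M = I$ produces
\[
\Tr_B\bigl[(I\otimes Y^{\mathsf T})\,|\varphi_M\rangle\langle\varphi_M|\bigr]
\;=\; \sum_j \sqrt{M}\,Y\,|j\rangle\langle j|\,\sqrt{M} \;=\; \sqrt{M}\,Y\,\sqrt{M},
\]
which is the claim. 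Alternatively, one could quote Lemma~\ref{lem:vec-trace} directly: the same two steps above express $(I\otimes Y^{\mathsf T})|\varphi_M\rangle$ as $\mathrm{vec}(\sqrt{M}\,Y)$ (after identifying $|\varphi_M\rangle = \mathrm{vec}(\sqrt{M})$ and using part (i) with $A=I$, $X=\sqrt{M}$, $B^{\mathsf T}=Y^{\mathsf T}$), and then part (ii) gives $\Tr_B[\mathrm{vec}(\sqrt{M}Y)\,\mathrm{vec}(\sqrt{M})^\dagger] = (\sqrt{M}Y)(\sqrt{M})^\dagger = \sqrt{M}Y\sqrt{M}$, using Hermiticity of $\sqrt{M}$.

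There is no genuine mathematical obstacle; the only place requiring care is bookkeeping around the two equivalent conventions (column-stacking vec versus the ricochet form) and consistently treating the $M$-register versus the $B$-register when partial-tracing. I would therefore write the proof in the ricochet form above, since it sidesteps any convention mismatch between the paper's definition of $\mathrm{vec}$ and Lemma~\ref{lem:vec-trace}(ii), and makes the Hermiticity of $\sqrt{M}$ used in the last step manifest.
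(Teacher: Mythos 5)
Your proposal is correct, and your stated alternative (identify $|\varphi_M\rangle=\mathrm{vec}(\sqrt{M})$, apply Lemma~\ref{lem:vec-trace}(i) to get $\mathrm{vec}(\sqrt{M}\,Y)$, then Lemma~\ref{lem:vec-trace}(ii) together with Hermiticity of $\sqrt{M}$) is exactly the paper's proof. Your primary ricochet-identity write-up is just the same vec-formalism argument unpacked explicitly in the computational basis, so the two routes are essentially identical in substance; the explicit version is merely more self-contained.
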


\begin{proof}
By Lemma~\ref{lem:vec-trace}(i) with \(X=I\),
\(|\varphi_M\rangle = \mathrm{vec}(\sqrt{M})\).
Hence
\[
(I\otimes Y^{\mathsf T})\,|\varphi_M\rangle
= (I\otimes Y^{\mathsf T})\,\mathrm{vec}(\sqrt{M})
= \mathrm{vec}(\sqrt{M}\,Y),
\]
again by Lemma~\ref{lem:vec-trace}(i).
Applying Lemma~\ref{lem:vec-trace}(ii),
\[
\Tr_B\!\big[(I\otimes Y^{\mathsf T})\,|\varphi_M\rangle\langle \varphi_M|\big]
= \Tr_B\!\big[\,|\mathrm{vec}(\sqrt{M}Y)\rangle\langle \mathrm{vec}(\sqrt{M})|\,\big]
= (\sqrt{M}Y)\,(\sqrt{M})^\dagger
= \sqrt{M}\,Y\,\sqrt{M},
\]
since \(\sqrt{M}\) is positive and hence Hermitian.
\end{proof}

\medskip
\begin{theorem}[Correctness of the off-diagonal block in the \(\mathsf{QAE}\) embedding]
\label{thm:offdiag-correctness}
In the \(\mathsf{QAE}\) construction, let \(W_0\) be the support-normalized operator
\(W_0 = V_0/\kappa_{\max}\) with
\[
V_0 \;:=\; \Pi\,(M_o')^{-1/2}\,M_c''\,(M_o')^{-1/2}\,\Pi,
\qquad
\kappa_{\max} := \max\{1,\,\|V_0\|_\infty\},
\]
so that \(\|W_0\|_\infty\le 1\).
Set \(C:=W_0^{\mathsf T}\) (a contraction by Lemma~\ref{lem:norm-transpose})
and let \(U\) be the Halmos unitary from Theorem~\ref{thm:halmos-unitary}.
Prepare
\[
\ket{\Psi_{\mathrm r}} \;=\; \tfrac{1}{\sqrt{2}}\Big(
\ket{0}_R\otimes \ket{\varphi_{M_o'}}\otimes \ket{0}_F \;+\;
\ket{1}_R\otimes (I_M\otimes U)\big(\ket{\varphi_{M_o'}}\otimes \ket{0}_F\big)\Big),
\]
and define the reduced operator on \(R\!M\) by \(\Omega_{\mathrm r}:=\Tr_{BF}[\,|\Psi_{\mathrm r}\rangle\langle \Psi_{\mathrm r}|\,]\).
Then \(\Omega_{\mathrm r}\) has the \(2\times 2\) block form (with respect to \(R\))
\[
\Omega_{\mathrm r}
\;=\;
\begin{pmatrix}
\tfrac{1}{2}M_o' & \tfrac{1}{2}X \\
\tfrac{1}{2}X^\dagger & \tfrac{1}{2}M_o'
\end{pmatrix},
\qquad
X \;=\; \sqrt{M_o'}\,W_0\,\sqrt{M_o'} \;=\; \frac{1}{\kappa_{\max}}\,\Pi\,M_c''\,\Pi.
\]
\end{theorem}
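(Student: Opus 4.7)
\bigskip

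\noindent\textbf{Proof plan.} The plan is to expand $|\Psi_{\mathrm r}\rangle\langle \Psi_{\mathrm r}|$ into the four cross-terms dictated by the $R$-basis $\{\ket{0},\ket{1}\}$, trace out $BF$ term-by-term, and then identify the resulting $R\!M$ block matrix with the claimed form. First I would write
\[
\ket{\Psi_{\mathrm r}}\!\bra{\Psi_{\mathrm r}} \;=\; \tfrac12\sum_{a,b\in\{0,1\}} \ket{a}\!\bra{b}_R \otimes T_{ab},
\]
where $T_{00}=|\varphi_{M_o'}\rangle\langle\varphi_{M_o'}|\otimes|0\rangle\langle 0|_F$, $T_{11}=(I_M\otimes U)(|\varphi_{M_o'}\rangle\langle\varphi_{M_o'}|\otimes|0\rangle\langle 0|_F)(I_M\otimes U^\dagger)$, and $T_{01}=(|\varphi_{M_o'}\rangle\langle\varphi_{M_o'}|\otimes|0\rangle\langle 0|_F)(I_M\otimes U^\dagger)$ with $T_{10}=T_{01}^\dagger$. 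Taking $\Tr_{BF}$ on each.

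For the two diagonal blocks the computation is immediate: by Lemma~\ref{lem:purif-trace-identity} applied with $Y=I$, one has $\Tr_B[|\varphi_{M_o'}\rangle\langle\varphi_{M_o'}|]=M_o'$, and tracing over $F$ absorbs the $|0\rangle\langle 0|_F$ factor. For $T_{11}$ the extra unitary $I_M\otimes U$ on $BF$ is invariant under $\Tr_{BF}$ (by cyclicity), so the $(1,1)$ block is again $\tfrac12 M_o'$. For the off-diagonal block $(0,1)$, tracing first over $F$ collapses $U^\dagger$ to its $(0,0)$ $F$-block; by Theorem~\ref{thm:halmos-unitary} this block is exactly $C^\dagger=(W_0^{\mathsf T})^\dagger$. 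Hence
\[
\Tr_{BF}[T_{01}] \;=\; \Tr_B\!\big[(I_M\otimes (W_0^{\mathsf T})^\dagger)\,|\varphi_{M_o'}\rangle\langle\varphi_{M_o'}|\big].
\]
Applying Lemma~\ref{lem:purif-trace-identity} with $Y^{\mathsf T}=(W_0^{\mathsf T})^\dagger$, i.e.\ $Y=W_0^\dagger$, followed by a Hermiticity check ($W_0$ is Hermitian on its support because $V_0$ is), yields $\Tr_{BF}[T_{01}]=\sqrt{M_o'}\,W_0\,\sqrt{M_o'}$, which is the desired $X$ up to the factor $\tfrac12$. The $(1,0)$ block is its adjoint.

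Finally I would establish the identity $\sqrt{M_o'}\,W_0\,\sqrt{M_o'}=\tfrac{1}{\kappa_{\max}}\Pi M_c''\Pi$. Unpacking $W_0=V_0/\kappa_{\max}$ and $V_0=\Pi(M_o')^{-1/2}M_c''(M_o')^{-1/2}\Pi$, the key observation is that on the support of $M_o'$ we have $\sqrt{M_o'}(M_o')^{-1/2}=\Pi=(M_o')^{-1/2}\sqrt{M_o'}$, and $\Pi$ commutes with both $\sqrt{M_o'}$ and $(M_o')^{-1/2}$. Multiplying $\sqrt{M_o'}$ through $V_0$ on the left and right therefore collapses the inverse square roots against the square roots, leaving exactly $\Pi M_c''\Pi$, divided by $\kappa_{\max}$.

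\medskip

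\noindent\textbf{Anticipated obstacle.} The only delicate point is the transpose bookkeeping in step two: the ancilla purification $|\varphi_{M_o'}\rangle=\sqrt{d}\,(I\otimes(\sqrt{M_o'})^{\mathsf T})|\Phi\rangle$ is engineered so that applying $I_M\otimes Y^{\mathsf T}$ on $B$ and tracing produces $\sqrt{M_o'}\,Y\,\sqrt{M_o'}$ rather than a transposed version. This is precisely why the dilation is built from $C=W_0^{\mathsf T}$ instead of $W_0$: after extracting the $F$-corner of $U$ one gets $W_0^{\mathsf T}$ on $B$, and the $\mathrm{vec}$-identity then converts the transpose back, yielding $W_0$ in the final expression on $M$. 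I would make this explicit by writing a one-line computation using Lemma~\ref{lem:vec-trace}(i)–(ii) rather than by hand-indexed entries. The remaining steps (identifying diagonal blocks and the algebraic simplification with $\Pi$) are routine once the transpose convention is correctly handled.
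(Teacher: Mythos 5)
Your proposal is correct and follows essentially the same route as the paper's proof: expand \(|\Psi_{\mathrm r}\rangle\langle\Psi_{\mathrm r}|\) into the four \(R\)-blocks, use unitary invariance of the partial trace for the diagonal blocks, extract the \((0,0)\) \(F\)-corner of the Halmos dilation for the off-diagonal block, apply Lemma~\ref{lem:purif-trace-identity} to convert the transpose, and collapse \(\sqrt{M_o'}\,W_0\,\sqrt{M_o'}\) to \(\tfrac{1}{\kappa_{\max}}\Pi M_c''\Pi\) via the support projector. The only (harmless) deviations are that you compute the \((0,1)\) corner and invoke Hermiticity of \(W_0\) to match the stated \(X\), and you implicitly use cyclicity of the partial trace for the right-multiplied \(B\)-operator, both of which are routine.
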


\begin{proof}
Expanding \(|\Psi_{\mathrm r}\rangle\langle \Psi_{\mathrm r}|\) and tracing out \(BF\) yields four blocks on \(R\).
The diagonal blocks are equal by construction and satisfy
\[
\Tr_{BF}\!\big[\,|\varphi_{M_o'}\rangle\langle \varphi_{M_o'}|\otimes |0\rangle\langle 0|_F\,\big]
\;=\; M_o',
\qquad
\Tr_{BF}\!\big[\,(I\otimes U)(|\varphi_{M_o'}\rangle\langle \varphi_{M_o'}|\otimes |0\rangle\langle 0|_F)(I\otimes U)^\dagger\,\big]
\;=\; M_o',
\]
by unitarity of \(U\) and invariance of the partial trace, hence each diagonal block equals \(\tfrac{1}{2}M_o'\).

For the off-diagonal block, using \(\bra{0}_F U \ket{0}_F = C = W_0^{\mathsf T}\) (Theorem~\ref{thm:halmos-unitary})
and Lemma~\ref{lem:purif-trace-identity} with \(Y:=W_0\), we obtain
\[
X \;=\; \Tr_B\!\big[(I\otimes \bra{0}_F U \ket{0}_F)\,|\varphi_{M_o'}\rangle\langle \varphi_{M_o'}|\big]
\;=\; \Tr_B\!\big[(I\otimes W_0^{\mathsf T})\,|\varphi_{M_o'}\rangle\langle \varphi_{M_o'}|\big]
\;=\; \sqrt{M_o'}\,W_0\,\sqrt{M_o'}.
\]
By the definition of \(W_0\) and the support projector \(\Pi\),
\[
\sqrt{M_o'}\,W_0\,\sqrt{M_o'}
= \sqrt{M_o'}\Big(\frac{1}{\kappa_{\max}}\,\Pi\,(M_o')^{-1/2}\,M_c''\,(M_o')^{-1/2}\,\Pi\Big)\sqrt{M_o'}
= \frac{1}{\kappa_{\max}}\,\Pi\,M_c''\,\Pi.
\]
This proves the claim.
\end{proof}

\begin{theorem}\label{thm:dephasing-correctness}
Let \(\Deph\) be the single-qubit dephasing channel on \(R\) given by
\[
\Deph(Y) \;=\; \left(\frac{1+\lambda}{2}\right)\,Y \;+\; \left(\frac{1-\lambda}{2}\right)\, Z_R\,Y\,Z_R,
\qquad \lambda\in[-1,1].
\]
Apply \((\Deph \otimes I_M)\) to \(\Omega_{\mathrm r}\) of Theorem~\ref{thm:offdiag-correctness} and set
\(M_a := (\Deph \otimes I_M)(\Omega_{\mathrm r})\).
Then
\[
M_a \;=\;
\begin{pmatrix}
\tfrac{1}{2}M_o' & \tfrac{\lambda}{2}\,X \\
\tfrac{\lambda}{2}\,X^\dagger & \tfrac{1}{2}M_o'
\end{pmatrix}
\quad\text{and, with }\;
\lambda \;=\; \frac{2\,\kappa_{\max}}{\eta}\ (\;|\lambda|\le 1\ \text{whenever }\eta\ge 2\,\kappa_{\max}\;),
\]
its off-diagonal equals
\(
\tfrac{\lambda}{2}X \,=\, \dfrac{1}{\eta}\,\Pi M_c''\Pi.
\)
\end{theorem}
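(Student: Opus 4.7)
The plan is to exploit the block-matrix action of the Pauli $Z$ operator on the control register $R$ and linearity of the dephasing channel. Writing $Z_R = \ket{0}\!\bra{0} - \ket{1}\!\bra{1}$, for any operator $Y$ on $\mathcal{H}_R\otimes\mathcal{H}_M$ expressed in block form
\[
Y \;=\; \begin{pmatrix} A & B \\ C & D \end{pmatrix}_{R},
\]
one verifies directly that $(Z_R\otimes I_M)\,Y\,(Z_R\otimes I_M) = \begin{pmatrix} A & -B \\ -C & D \end{pmatrix}_{R}$, since $Z_R$ flips the sign of the off-diagonal $R$-blocks while leaving the diagonal $R$-blocks unchanged.

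Next I would substitute this identity into the definition of $(\Deph\otimes I_M)$ acting on $\Omega_{\mathrm r}$. The convex combination
\[
(\Deph\otimes I_M)(Y) \;=\; \tfrac{1+\lambda}{2}\,Y \;+\; \tfrac{1-\lambda}{2}\,(Z_R\otimes I_M)\,Y\,(Z_R\otimes I_M)
\]
preserves the diagonal blocks (the two coefficients sum to $1$) and multiplies the off-diagonal blocks by $\tfrac{1+\lambda}{2}-\tfrac{1-\lambda}{2}=\lambda$. Applying this to $\Omega_{\mathrm r}$ of Theorem~\ref{thm:offdiag-correctness}, whose diagonal blocks are $\tfrac{1}{2}M_o'$ and whose off-diagonal block is $\tfrac{1}{2}X$, yields exactly the asserted block form of $M_a$.

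Finally, for the specific identification of the off-diagonal, I would plug in the choice $\lambda = 2\kappa_{\max}/\eta$ together with the identity $X = \tfrac{1}{\kappa_{\max}}\,\Pi M_c''\Pi$ established in Theorem~\ref{thm:offdiag-correctness}, obtaining
\[
\tfrac{\lambda}{2}\,X \;=\; \tfrac{\kappa_{\max}}{\eta}\cdot \tfrac{1}{\kappa_{\max}}\,\Pi M_c''\Pi \;=\; \tfrac{1}{\eta}\,\Pi M_c''\Pi.
\]
The admissibility $|\lambda|\le 1$ reduces to $\eta\ge 2\kappa_{\max}$, which is precisely the feasibility condition imposed in step~12 of the construction. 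There is no real obstacle here; the only point requiring care is to make the $Z_R$-conjugation identity explicit at the level of $2{\times}2$ $R$-blocks (treating each block as an operator on $\mathcal{H}_M$) so that linearity of the partial tensor $\Deph\otimes I_M$ can be applied blockwise without ambiguity.
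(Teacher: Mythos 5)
Your proposal is correct and follows essentially the same route as the paper's proof: both argue that $(\Deph\otimes I_M)$ fixes the diagonal $R$-blocks and scales the off-diagonal blocks by $\lambda$, then substitute $X=\tfrac{1}{\kappa_{\max}}\Pi M_c''\Pi$ and $\lambda=2\kappa_{\max}/\eta$. The only difference is that you make the $Z_R$-conjugation block identity explicit, which the paper states without computation.
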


\begin{proof}
The channel \(\Deph\) leaves the diagonal blocks of a \(2\times 2\) block operator (in the \(R\)-basis) invariant
and multiplies the off-diagonals by \(\lambda\). Hence
\[
M_a
= \begin{pmatrix}
\tfrac{1}{2}M_o' & \tfrac{\lambda}{2}X \\
\tfrac{\lambda}{2}X^\dagger & \tfrac{1}{2}M_o'
\end{pmatrix}.
\]
By Theorem~\ref{thm:offdiag-correctness}, \(X = \frac{1}{\kappa_{\max}}\Pi M_c''\Pi\).
Choosing \(\lambda = 2\,\kappa_{\max}/\eta\) gives
\(
\tfrac{\lambda}{2}X = \frac{1}{\eta}\,\Pi M_c''\Pi.
\)
Trace preservation and complete positivity follow from the CPTP nature of \(\Deph\).
\end{proof}

\begin{corollary}\label{cor:permutation}
Let \(U_{\sigma_l}\) be any permutation unitary on \(\mathcal{H}_R\otimes\mathcal{H}_{M}\), and set
\(M_f^{(1)} := U_{\sigma_l}\,M_a\,U_{\sigma_l}^\dagger\).
Then \(M_f^{(1)}\) is a valid density operator with the same spectrum as \(M_a\).
\end{corollary}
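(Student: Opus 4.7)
The plan is to dispatch this in two short steps: first confirm that $M_a$ itself is a valid density operator, and then invoke the standard invariance of positivity, trace, and spectrum under unitary conjugation.

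For the first step, I would appeal directly to the preceding theorems in the excerpt. By Theorem~\ref{thm:offdiag-correctness}, $\Omega_{\mathrm r} = \Tr_{BF}[\,|\Psi_{\mathrm r}\rangle\langle \Psi_{\mathrm r}|\,]$ is the partial trace of a pure state, hence a density operator on $\mathcal{H}_R\otimes\mathcal{H}_M$. By Theorem~\ref{thm:dephasing-correctness}, $M_a = (\Deph \otimes I_M)(\Omega_{\mathrm r})$, where $\Deph$ is the single-qubit dephasing channel, which is CPTP for any $\lambda\in[-1,1]$ (being a convex combination of the identity map and conjugation by the unitary $Z_R$). Tensoring with $I_M$ preserves the CPTP property, so $M_a$ is Hermitian, positive semi-definite, and trace-one.

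For the second step, since $U_{\sigma_l}$ is a permutation matrix it is unitary, i.e.\ $U_{\sigma_l}^\dagger U_{\sigma_l} = U_{\sigma_l} U_{\sigma_l}^\dagger = I$. Conjugation $X\mapsto U_{\sigma_l}\,X\,U_{\sigma_l}^\dagger$ preserves Hermiticity via $(U_{\sigma_l} M_a U_{\sigma_l}^\dagger)^\dagger = U_{\sigma_l} M_a^\dagger U_{\sigma_l}^\dagger = U_{\sigma_l} M_a U_{\sigma_l}^\dagger$, preserves positive semi-definiteness since for every $\ket{\psi}$ one has $\bra{\psi}U_{\sigma_l} M_a U_{\sigma_l}^\dagger\ket{\psi} = \bra{U_{\sigma_l}^\dagger\psi}M_a\ket{U_{\sigma_l}^\dagger\psi}\ge 0$, and preserves the trace by the cyclic property $\Tr(U_{\sigma_l} M_a U_{\sigma_l}^\dagger) = \Tr(M_a\,U_{\sigma_l}^\dagger U_{\sigma_l}) = \Tr(M_a) = 1$. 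Hence $M_f^{(1)}\in\mathcal{D}(\mathcal{H}_R\otimes\mathcal{H}_M)$.

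Finally, the spectral claim follows because $U_{\sigma_l}^{-1} = U_{\sigma_l}^\dagger$, so $M_f^{(1)} = U_{\sigma_l}\,M_a\,U_{\sigma_l}^{-1}$ is a similarity transformation of $M_a$; similarity preserves the characteristic polynomial and hence all eigenvalues with their algebraic multiplicities, and for the normal (in fact Hermitian) operator $M_a$ the geometric multiplicities agree with the algebraic ones, so $\mathrm{spec}(M_f^{(1)}) = \mathrm{spec}(M_a)$. There is no genuine obstacle in this corollary; the only subtlety is being careful to invoke the prior two theorems to justify that $M_a$ is already a density operator before the permutation is applied, rather than re-deriving positivity from the block structure.
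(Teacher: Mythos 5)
Your proof is correct and follows essentially the same route as the paper, whose entire argument is the one-line observation that unitary conjugation preserves positivity, trace, and spectrum. The only difference is that you explicitly verify via Theorems~\ref{thm:offdiag-correctness} and~\ref{thm:dephasing-correctness} that $M_a$ is a density operator before conjugating, a step the paper leaves implicit; this is a harmless (and arguably tidier) elaboration rather than a different approach.
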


\begin{proof}
Unitary conjugation preserves positivity, trace, and spectrum.
\end{proof}

\begin{theorem}\label{thm:qae-correctness-summary}
Under the \(\mathsf{QAE}\) construction with \(W_0\) a contraction on \(\textit{supp}(M_o')\), the Halmos dilation \(U\)
of \(C=W_0^{\mathsf T}\) is unitary (Theorem~\ref{thm:halmos-unitary}). The reduced state \(\Omega_{\mathrm r}\) on \(R\!M\)
has diagonal blocks \(\tfrac{1}{2}M_o'\) and off-diagonal block
\(X=\sqrt{M_o'}\,W_0\,\sqrt{M_o'}=\frac{1}{\kappa_{\max}}\Pi M_c''\Pi\) (Theorem~\ref{thm:offdiag-correctness}).
Subsequent dephasing with \(\lambda=2\kappa_{\max}/\eta\) yields
\[
M_a
\;=\;
\left[\ket{0}\bra{0}\otimes \tfrac{1}{2}M_o'
\;+\;
\ket{0}\bra{1}\otimes \frac{1}{\eta}\,\Pi M_c''\Pi
\;+\;
\ket{1}\bra{0}\otimes \frac{1}{\eta}\,(\Pi M_c''\Pi)^\dagger
\;+\;
\ket{1}\bra{1}\otimes \tfrac{1}{2}M_o'\right],
\]
which is a valid density operator. Finally, the permutation unitary \(U_{\sigma_l}\) produces
\(M_f^{(1)}=U_{\sigma_l}M_aU_{\sigma_l}^\dagger\), a valid anamorphic ciphertext
(Corollary~\ref{cor:permutation}).
\end{theorem}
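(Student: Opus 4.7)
The plan is to assemble the theorem as a direct consequence of the previously established building blocks, verifying that each hypothesis carries over and that the resulting operator is indeed a density matrix. First, I would invoke Lemma \ref{lem:defect-existence} to guarantee that the defect operators $D_C$ and $D_{C^\dagger}$ for the contraction $C = W_0^{\mathsf T}$ exist (using Lemma \ref{lem:norm-transpose} to ensure $\|C\|_\infty \le 1$ transfers from $\|W_0\|_\infty \le 1$). Then Theorem \ref{thm:halmos-unitary} immediately gives the unitarity of the Halmos dilation $U$, so the state $\ket{\Psi_{\mathrm r}}$ is well-defined and normalized.

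Next, I would apply Theorem \ref{thm:offdiag-correctness} verbatim to obtain the block decomposition of $\Omega_{\mathrm r}$ with diagonal blocks $\tfrac{1}{2}M_o'$ and off-diagonal $X = \sqrt{M_o'}\,W_0\,\sqrt{M_o'} = \tfrac{1}{\kappa_{\max}}\,\Pi M_c''\Pi$. Then Theorem \ref{thm:dephasing-correctness} with the choice $\lambda = 2\kappa_{\max}/\eta$ produces the desired off-diagonal amplitude $\tfrac{1}{\eta}\,\Pi M_c''\Pi$, yielding the claimed block form of $M_a$. The feasibility condition $\eta \ge 2\kappa_{\max}$ ensures $|\lambda|\le 1$, so the dephasing channel is CPTP.

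The main substantive step—and where I would spend the most care—is verifying that $M_a$ is a \emph{valid density operator}, i.e., Hermitian, trace-one, and positive semi-definite. Hermiticity is immediate from the block form since $M_c''$ is Hermitian (so $(\Pi M_c''\Pi)^\dagger = \Pi M_c''\Pi$). Trace preservation follows because $\Deph \otimes I_M$ is CPTP applied to the reduced state $\Omega_{\mathrm r}$, which already has unit trace (as a partial trace of a pure state). Positive semi-definiteness is the key obstacle: I would appeal to the Schur complement condition (Theorem \ref{thm:schur} or Corollary \ref{cor:schur}) applied to the $2\times 2$ block form of $M_a$, requiring
\[
\tfrac{1}{2}M_o' \;-\; \tfrac{1}{\eta^2}\,(\Pi M_c''\Pi)^\dagger\,(\tfrac{1}{2}M_o')^{-1}\,(\Pi M_c''\Pi) \;\succeq\; 0,
\]
which, after absorbing constants, reduces to $\tfrac{1}{\eta^2}\,\|(M_c'')^\dagger (M_o')^{-1} M_c''\| \le \tfrac{1}{4}\lambda_{\min}(M_o')$. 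This is precisely the feasibility hypothesis on $\eta$ baked into the construction (see Step~11 of Algorithm \ref{alg:qa-ske-enc-a-rigorous-part3}), so $M_a \succeq 0$ follows.

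Finally, Corollary \ref{cor:permutation} yields that $M_f^{(1)} = U_{\sigma_l} M_a U_{\sigma_l}^\dagger$ is again a valid density operator with the same spectrum as $M_a$, completing the proof. The entire argument is essentially a chain of implications, with the only genuinely nontrivial verification being the positivity of $M_a$ via the Schur complement; all other properties are immediate consequences of the lemmas and theorems established in the preceding subsections.
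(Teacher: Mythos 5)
Your proposal is correct, and its overall skeleton (defect operators and the transpose-norm lemma feeding Theorem~\ref{thm:halmos-unitary}, then Theorem~\ref{thm:offdiag-correctness}, then Theorem~\ref{thm:dephasing-correctness} with $\lambda=2\kappa_{\max}/\eta$, then Corollary~\ref{cor:permutation}) is exactly how the paper assembles this summary statement. The one place you genuinely diverge is the positivity of $M_a$: you re-derive it via the Schur complement (Theorem~\ref{thm:schur}), importing the numerical feasibility condition $\tfrac{1}{\eta^2}\,\|(M_c'')^\dagger (M_o')^{-1} M_c''\|\le \tfrac14\lambda_{\min}(M_o')$ from Theorem~\ref{thm:main}. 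The paper gets $M_a\succeq 0$ for free: $\Omega_{\mathrm r}=\Tr_{BF}\big[\ket{\Psi_{\mathrm r}}\!\bra{\Psi_{\mathrm r}}\big]$ is already a density operator (partial trace of a pure state), and $\Deph\otimes I_M$ is CPTP whenever $|\lambda|\le 1$, i.e.\ $\eta\ge 2\kappa_{\max}$, so $M_a$ is automatically PSD with unit trace — no spectral condition beyond the contraction bound is needed. Your route is valid but does extra work and quietly relies on two facts you should make explicit: (i) the off-diagonal block is $\tfrac1\eta\,\Pi M_c''\Pi$, not $\tfrac1\eta M_c''$, so reducing your Schur condition to the unprojected operator uses that $\Pi=I$ here (the construction takes $M_o$, hence $M_o'$, strictly positive definite and full rank); and (ii) the algorithm imposes both the Theorem~\ref{thm:main} inequality and $\eta\ge 2\kappa_{\max}$, so your appeal to the former is legitimate but not the minimal hypothesis. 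In short: correct proof, same chain of lemmas, with a heavier-than-necessary (though sound) argument for the single positivity step; the CPTP argument is the cleaner one and is what the construction itself guarantees.
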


\medskip
Now we describe the decryption algorithm to extract and decrypt the original secret from the anamorphic ciphertext:

\begin{algorithm}[H]
\caption{Decryption of Original Secret from Anamorphic Ciphertext(\(\mathsf{DOM}\))}
\label{alg:public-decryption}
\begin{algorithmic}[1]
\STATE \textbf{Input:} Anamorphic state \(M_f^{(1)} \in \mathcal{D}((\mathbb{C}^2)^{\otimes(d_1+1)})\), permutation unitary \(U_{\sigma_l}\), \(\mathsf{QOTP}\) key \(k \in \{0, 1\}^{2d_1}\), dimension \(d_1\).

\STATE \textbf{Output:} Original density matrix \(M_o \in \mathcal{D}(\mathcal{H}_{M})\).

\STATE \textbf{Steps:}

\STATE \textbf{1. Apply the inverse permutation:}
\STATE \hspace{1cm} Compute the intermediate state \(M_d^{(1)}\) by applying the inverse of \(U_{\sigma_l}\):
\[
M_d^{(1)} = U_{\sigma_l}^\dagger M_f^{(1)} U_{\sigma_l}.
\]

\smallskip
\STATE \textbf{Quantum Dephasing Channel on Control Qubit} 
\STATE Implement the dephasing channel $\Gamma_0: \mathcal{L}(\mathcal{H}_R) \longrightarrow \mathcal{L}(\mathcal{H}_R)$:
\[
\Gamma_0(\rho_R) = \frac{1}{2}\rho_R + \frac{1}{2} Z_R \rho_R Z_R
\]
and set
\begin{equation}\label{eq:after-pinch}
M_a^{(0)}\ :=\ (\Gamma_0\otimes I_{2^{d_{1}}})\big(M_d^{(1)}\big).
\end{equation}
\STATE \textbf{Step 3: Quantum Partial Trace}
\STATE Trace out the control register \(R\):

\begin{equation}\label{eq:traceR}
\Tr_{R}\!\big[M_a^{(0)}\big]\ =\ M_o'.
\end{equation}

\STATE \textbf{5. Apply \(\mathsf{QOTP}\) Decryption:}
\STATE \hspace{1cm} Use the \(\mathsf{QOTP}\) decryption key \(k = (k_1, k_2, \dots, k_{2d_1})\) to recover \(M_o\):
\[
M_o = \mathsf{QOTPDec}(M_o', k)
= \Bigl(\bigotimes_{j=1}^{d_1} X^{k_{2j-1}} Z^{k_{2j}} \Bigr)^\dagger
M_o'
\Bigl(\bigotimes_{j=1}^{d_1} X^{k_{2j-1}} Z^{k_{2j}} \Bigr).
\]

\STATE \textbf{6. Return the original density matrix \(M_o\):}
\STATE \hspace{1cm} Output \(M_o\), completing the decryption.

\end{algorithmic}
\end{algorithm}

\medskip
\begin{theorem}[Correctness of \(\mathsf{DOM}\) Algorithm]
\label{thm:DOM-correctness}
Let $M_f^{(1)} \in \mathcal{D}(\mathcal{H}_R \otimes \mathcal{H}_M)$ be an anamorphic ciphertext satisfying:
\[
M_f^{(1)} = \Uperm M_a \Uperm^\dagger
\]
where $\Uperm$ is a permutation unitary and $M_a$ has the canonical block form
\[
M_a = \left[ \ket{0}\bra{0} \otimes \tfrac{1}{2}M_o' + \ket{1}\bra{1} \otimes \tfrac{1}{2}M_o' + \ket{0}\bra{1} \otimes \tfrac{1}{\eta}M_c'' + \ket{1}\bra{0} \otimes \tfrac{1}{\eta}(M_c'')^\dagger \right]
\]
with $M_o' = U_k M_o U_k^\dagger$ for some key $k \in \{0,1\}^{2d_1}$, \(U_{k}=\mathsf{QOTP}(\cdot,k)\) and security parameter $\eta \in \mathbb{Z}^+$. Then the DOM algorithm exactly recovers $M_o$.
\end{theorem}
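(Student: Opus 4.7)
The plan is to trace the transformations in \(\mathsf{DOM}\) step by step and show that each one acts on the canonical block form of \(M_a\) in a way that cleanly isolates the normal-mode ciphertext \(M_o'\), so that the final \(\mathsf{QOTP}\) decryption returns \(M_o\). Since every step is a unitary conjugation or a CPTP channel on a specific subsystem, the calculation is essentially block-algebraic, and the only place where something nontrivial must happen is the cancellation of the off-diagonal (covert) blocks under the dephasing channel on the control qubit \(R\).

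First I would undo the permutation. Because \(\Uperm\) is unitary and the input satisfies \(M_f^{(1)} = \Uperm M_a \Uperm^\dagger\) by hypothesis, Step~1 of \(\mathsf{DOM}\) yields \(M_d^{(1)} = \Uperm^\dagger M_f^{(1)} \Uperm = M_a\), so we recover the canonical block form exactly. Next I would analyze the effect of the dephasing channel \(\Gamma_0(\rho_R) = \tfrac{1}{2}\rho_R + \tfrac{1}{2} Z_R \rho_R Z_R\) on a generic block operator. The key observation is that \(Z_R \ket{b}\bra{b} Z_R = \ket{b}\bra{b}\) for \(b\in\{0,1\}\) while \(Z_R \ket{0}\bra{1} Z_R = -\ket{0}\bra{1}\) and \(Z_R \ket{1}\bra{0} Z_R = -\ket{1}\bra{0}\). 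Applying \((\Gamma_0\otimes I_{2^{d_1}})\) to \(M_a\) therefore leaves the diagonal blocks unchanged and annihilates the off-diagonals, giving
\[
M_a^{(0)} \;=\; \ket{0}\bra{0}\otimes \tfrac{1}{2}M_o' \;+\; \ket{1}\bra{1}\otimes \tfrac{1}{2}M_o'.
\]
Thus the covert block \(\tfrac{1}{\eta}M_c''\) is completely destroyed by the pinching, which is exactly what \(\mathsf{DOM}\) needs: the dictator must not be able to recover \(M_c\) from \(M_a^{(0)}\).

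Then I would perform the partial trace over \(R\). Using linearity of the partial trace and \(\Tr\ket{b}\bra{b} = 1\), we get
\[
\Tr_R\!\big[M_a^{(0)}\big] \;=\; \tfrac{1}{2}M_o' + \tfrac{1}{2}M_o' \;=\; M_o'.
\]
This reproduces the normal-mode \(\mathsf{QOTP}\) ciphertext exactly, with no residual factor and no leakage from the covert branch. Finally, applying \(\mathsf{QOTPDec}(\cdot, k)\) undoes the Pauli mask \(U_k\) by Lemma~\ref{prop:otp}, returning \(U_k^\dagger M_o' U_k = U_k^\dagger U_k M_o U_k^\dagger U_k = M_o\), which completes the correctness argument.

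The only step requiring care is the dephasing cancellation, and even there the argument reduces to checking signs on the four Pauli-induced terms \(Z_R\ket{b}\bra{b'}Z_R\). I would present this as a short lemma or a two-line computation rather than invoking heavier machinery. No assumption on \(\eta\), on the invertibility of \(M_o\), or on the dimension gap \(d_1-d_2\) is needed for \(\mathsf{DOM}\) itself: these enter only in guaranteeing that the input \(M_f^{(1)}\) truly has the canonical form, which is provided by Theorem~\ref{thm:qae-correctness-summary} and is assumed in the hypothesis of the present theorem.
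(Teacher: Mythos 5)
Your proposal is correct and follows essentially the same route as the paper's proof: undo the permutation to recover the canonical block form, show that the dephasing channel on $R$ annihilates the off-diagonal covert blocks (your sign computation $Z_R\ket{0}\bra{1}Z_R=-\ket{0}\bra{1}$ is exactly the block-matrix calculation the paper carries out), trace out $R$ to obtain $M_o'$, and finish with $\mathsf{QOTPDec}$. Your closing remark that $\eta$ plays no role in recovery also matches the paper's accompanying lemma on independence from the security parameter.
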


\begin{proof}
Applying the inverse permutation unitary
\[
M_d^{(1)} = \Uperm^\dagger M_f^{(1)} \Uperm = \Uperm^\dagger (\Uperm M_a \Uperm^\dagger) \Uperm = M_a
\]
Thus, we recover the unpermuted anamorphic state
\begin{equation}
\label{eq:step1}
M_d^{(1)} = M_a = \begin{pmatrix}
\frac{1}{2}M_o' & \frac{1}{\eta}M_c'' \\
\frac{1}{\eta}(M_c'')^\dagger & \frac{1}{2}M_o'
\end{pmatrix}
\end{equation}

\noindent Let $\Gamma_0: \mathcal{L}(\mathcal{H}_R) \longrightarrow \mathcal{L}(\mathcal{H}_R)$ be the dephasing channel defined by
\[
\Gamma_0(\rho_R) = \frac{1}{2}\rho_R + \frac{1}{2} Z_R \rho_R Z_R
\]
This channel has Kraus operators $K_0 = \frac{1}{\sqrt{2}}I_R$ and $K_1 = \frac{1}{\sqrt{2}}Z_R$, satisfying the completeness relation:
\[
K_0^\dagger K_0 + K_1^\dagger K_1 = \frac{1}{2}I_R + \frac{1}{2}Z_R^\dagger Z_R = \frac{1}{2}I_R + \frac{1}{2}I_R = I_R
\]

\noindent The tensor product channel $(\Gamma_0 \otimes I_M)$ acts on the composite system as
\begin{align*}
(\Gamma_0 \otimes I_M)(M_d^{(1)}) 
&= \sum_{i=0}^1 (K_i \otimes I_M) M_d^{(1)} (K_i \otimes I_M)^\dagger \\
&= \frac{1}{2} M_d^{(1)} + \frac{1}{2} (Z_R \otimes I_M) M_d^{(1)} (Z_R \otimes I_M)
\end{align*}

\noindent We compute each term explicitly. First, note that
\[
Z_R \otimes I_M = \begin{pmatrix} 1 & 0 \\ 0 & -1 \end{pmatrix} \otimes I_M = \begin{pmatrix} I_M & 0 \\ 0 & -I_M \end{pmatrix}
\]

Then
\begin{align*}
(Z_R \otimes I_M) M_d^{(1)} (Z_R \otimes I_M) 
&= \begin{pmatrix} I_M & 0 \\ 0 & -I_M \end{pmatrix}
\begin{pmatrix}
\frac{1}{2}M_o' & \frac{1}{\eta}M_c'' \\
\frac{1}{\eta}(M_c'')^\dagger & \frac{1}{2}M_o'
\end{pmatrix}
\begin{pmatrix} I_M & 0 \\ 0 & -I_M \end{pmatrix} \\
&= \begin{pmatrix} I_M & 0 \\ 0 & -I_M \end{pmatrix}
\begin{pmatrix}
\frac{1}{2}M_o' & -\frac{1}{\eta}M_c'' \\
\frac{1}{\eta}(M_c'')^\dagger & -\frac{1}{2}M_o'
\end{pmatrix} \\
&= \begin{pmatrix}
\frac{1}{2}M_o' & -\frac{1}{\eta}M_c'' \\
-\frac{1}{\eta}(M_c'')^\dagger & \frac{1}{2}M_o'
\end{pmatrix}
\end{align*}

Therefore
\begin{align*}
M_a^{(0)} &= (\Gamma_0 \otimes I_M)(M_d^{(1)}) \\
&= \frac{1}{2} \begin{pmatrix}
\frac{1}{2}M_o' & \frac{1}{\eta}M_c'' \\
\frac{1}{\eta}(M_c'')^\dagger & \frac{1}{2}M_o'
\end{pmatrix}
+ \frac{1}{2} \begin{pmatrix}
\frac{1}{2}M_o' & -\frac{1}{\eta}M_c'' \\
-\frac{1}{\eta}(M_c'')^\dagger & \frac{1}{2}M_o'
\end{pmatrix} \\
&= \begin{pmatrix}
\frac{1}{4}M_o' + \frac{1}{4}M_o' & \frac{1}{2\eta}M_c'' - \frac{1}{2\eta}M_c'' \\
\frac{1}{2\eta}(M_c'')^\dagger - \frac{1}{2\eta}(M_c'')^\dagger & \frac{1}{4}M_o' + \frac{1}{4}M_o'
\end{pmatrix} \\
&= \begin{pmatrix}
\frac{1}{2}M_o' & 0 \\
0 & \frac{1}{2}M_o'
\end{pmatrix}
\end{align*}

\noindent Thus, the dephasing channel annihilates the off-diagonal blocks
\begin{equation}
\label{eq:step2}
M_a^{(0)} = \ket{0}\bra{0} \otimes \tfrac{1}{2}M_o' + \ket{1}\bra{1} \otimes \tfrac{1}{2}M_o'
\end{equation}

\smallskip
\noindent The partial trace over the control register $R$ is computed as
\[
\Tr_R[M_a^{(0)}] = \Tr_R\left[ \ket{0}\bra{0} \otimes \tfrac{1}{2}M_o' + \ket{1}\bra{1} \otimes \tfrac{1}{2}M_o' \right]
\]

\noindent Using the linearity of the partial trace and the fact that $\Tr[\ket{b}\bra{b}] = 1$ for $b \in \{0,1\}$, we get
\begin{align*}
\Tr_R[M_a^{(0)}] &= \Tr[\ket{0}\bra{0}] \cdot \tfrac{1}{2}M_o' + \Tr[\ket{1}\bra{1}] \cdot \tfrac{1}{2}M_o' \\
&= 1 \cdot \tfrac{1}{2}M_o' + 1 \cdot \tfrac{1}{2}M_o' \\
&= M_o'
\end{align*}

\noindent By definition, $M_o' = U_{k} M_o U_{k}^\dagger$, where
\(
U_{k}=\mathsf{QOTP}(\cdot,k).
\)
Applying the inverse QOTP unitary, we get
\(
U_{k}^\dagger M_o' U_{k}= M_o.
\)
Thus, the original message is exactly recovered
\begin{equation}
\label{eq:step4}
\mathsf{QOTPDec}(M_o', k) = M_{o}. 
\end{equation}

Combining equations (\ref{eq:step1}) through (\ref{eq:step4}), we have shown that the DOM algorithm transforms $M_f^{(1)}$ to $M_o$ through a sequence of mathematically exact operations, completing the proof.
\end{proof}

\begin{lemma}[Independence from Security Parameter]
\label{lem:eta-independence}
The \(\mathsf{DOM}\) algorithm's recovery of $M_o$ is independent of the security parameter $\eta$.
\end{lemma}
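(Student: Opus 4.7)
The plan is to argue that $\eta$ enters $M_f^{(1)}$ only through the off-diagonal blocks of $M_a$, and that the DOM pipeline projects these blocks out \emph{before} any $\eta$-dependent arithmetic is performed. Since the permutation unwrap, the dephasing pinch, the partial trace, and the QOTP undo all act linearly on the input density matrix, it suffices to exhibit that the composition of the first two operations annihilates the $\tfrac{1}{\eta}M_c''$ and $\tfrac{1}{\eta}(M_c'')^\dagger$ entries exactly.

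First, I would note that Step~1 of \(\mathsf{DOM}\) yields $M_d^{(1)} = U_{\sigma_l}^\dagger M_f^{(1)} U_{\sigma_l} = M_a$, whose $2\times 2$ block form with respect to the control register $R$ contains $\tfrac{1}{2}M_o'$ on the diagonal and $\tfrac{1}{\eta}M_c''$, $\tfrac{1}{\eta}(M_c'')^\dagger$ off-diagonal. Step~2 applies $(\Gamma_0\otimes I_M)$, a completely pinching dephasing of the control qubit in the computational basis. By the calculation performed inside Theorem~\ref{thm:DOM-correctness}, $\Gamma_0$ leaves diagonal blocks untouched and maps every off-diagonal block $B$ to $\tfrac{1}{2}B-\tfrac{1}{2}B=0$. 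Therefore $M_a^{(0)}$ equals $\ket{0}\bra{0}\otimes\tfrac{1}{2}M_o'+\ket{1}\bra{1}\otimes\tfrac{1}{2}M_o'$, an expression in which the parameter $\eta$ does not occur.

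Next, I would observe that Steps~3 and~4 operate exclusively on this $\eta$-free object: the partial trace over $R$ produces $\Tr_R[M_a^{(0)}]=\tfrac{1}{2}M_o'+\tfrac{1}{2}M_o'=M_o'$, and the final QOTP unwrap $\mathsf{QOTPDec}(\cdot,k)$ is a unitary conjugation depending only on the key $k$. Hence, writing $\Phi_{\mathsf{DOM}}$ for the composite CPTP map realized by \(\mathsf{DOM}\), we have $\Phi_{\mathsf{DOM}}(M_f^{(1)})=M_o$ regardless of the value of $\eta \in \mathbb{Z}^+$ used by the encryptor in \eqref{eq:Ma}.

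I do not foresee a substantive obstacle; the statement is essentially a corollary of the block-cancellation identity already worked out in Theorem~\ref{thm:DOM-correctness}. The only subtlety worth flagging is that $\eta$ must remain a legitimate (finite, nonzero) parameter so that $M_f^{(1)}$ is a valid density operator \emph{for the encryptor}; once $M_f^{(1)}$ is handed to the decryptor, no step of \(\mathsf{DOM}\) inspects, rescales by, or otherwise references $\eta$, which yields the claimed independence.
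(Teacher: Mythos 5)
Your proposal is correct and follows essentially the same route as the paper: both invoke the block-cancellation calculation from Theorem~\ref{thm:DOM-correctness}, noting that the dephasing channel $\Gamma_0$ annihilates the off-diagonal blocks carrying the $\tfrac{1}{\eta}$ factors, after which the partial trace and QOTP decryption touch only $\eta$-free data. No gaps.
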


\begin{proof}
From equation (\ref{eq:step2}), the dephasing channel $\Gamma_0$ completely eliminates the off-diagonal terms containing $\frac{1}{\eta}M_c''$ and $\frac{1}{\eta}(M_c'')^\dagger$. The remaining diagonal terms $\frac{1}{2}M_o'$ are independent of $\eta$. The subsequent partial trace and QOTP decryption operations do not introduce any $\eta$-dependence. Therefore, the final output $M_o$ is independent of $\eta$.
\end{proof}

\begin{theorem}[CPTP Composition]
\label{thm:DOM-CPTP}
The \(\mathsf{DOM}\) algorithm is a completely positive trace-preserving (CPTP) map.
\end{theorem}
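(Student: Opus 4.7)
The plan is to express the $\mathsf{DOM}$ algorithm as the composition $\Phi_{\mathsf{DOM}} = \mathcal{D}_k \circ \operatorname{Tr}_R \circ (\Gamma_0 \otimes I_M) \circ \mathcal{U}_{\sigma_l^{-1}}$ of four elementary quantum channels and to invoke the standard closure of the CPTP class under composition and under tensor products with the identity channel. The first constituent is the unitary conjugation $\mathcal{U}_{\sigma_l^{-1}}: X \mapsto U_{\sigma_l}^\dagger X U_{\sigma_l}$, which is CPTP with single Kraus operator $U_{\sigma_l}^\dagger$ satisfying $(U_{\sigma_l}^\dagger)^\dagger U_{\sigma_l}^\dagger = I$. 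The fourth is the $\mathsf{QOTP}$ decryption $\mathcal{D}_k: Y \mapsto U_k^\dagger Y U_k$, which is likewise a unitary channel and hence trivially CPTP.

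For the dephasing channel $\Gamma_0$, I would reuse the Kraus operators $K_0 = \tfrac{1}{\sqrt{2}} I_R$ and $K_1 = \tfrac{1}{\sqrt{2}} Z_R$ already exhibited in the proof of Theorem \ref{thm:DOM-correctness} and note that $K_0^\dagger K_0 + K_1^\dagger K_1 = \tfrac{1}{2} I_R + \tfrac{1}{2} I_R = I_R$, so $\Gamma_0$ is CPTP. Tensoring with the identity channel on $\mathcal{H}_M$ preserves CPTP, since $(\Gamma_0 \otimes I_M)$ admits Kraus operators $\{K_i \otimes I_M\}_{i \in \{0,1\}}$ satisfying the same completeness relation. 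For the partial trace $\operatorname{Tr}_R: \mathcal{L}(\mathcal{H}_R \otimes \mathcal{H}_M) \longrightarrow \mathcal{L}(\mathcal{H}_M)$, I would exhibit the Kraus operators $\{\bra{b}_R \otimes I_M\}_{b \in \{0,1\}}$, whose completeness relation $\sum_b (\ket{b}_R \otimes I_M)(\bra{b}_R \otimes I_M) = I_R \otimes I_M$ is immediate from the resolution of identity on $\mathcal{H}_R$.

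Finally I would invoke the elementary fact that if $\Phi_1$ and $\Phi_2$ are CPTP then $\Phi_2 \circ \Phi_1$ is CPTP: complete positivity survives composition because $(\Phi_2 \circ \Phi_1) \otimes \mathrm{Id}_n = (\Phi_2 \otimes \mathrm{Id}_n) \circ (\Phi_1 \otimes \mathrm{Id}_n)$ maps positive operators to positive operators, and $\operatorname{Tr}(\Phi_2(\Phi_1(\rho))) = \operatorname{Tr}(\Phi_1(\rho)) = \operatorname{Tr}(\rho)$. Applying this iteratively to the four-step chain yields $\Phi_{\mathsf{DOM}}$ as a CPTP map from $\mathcal{D}(\mathcal{H}_R \otimes \mathcal{H}_M)$ to $\mathcal{D}(\mathcal{H}_M)$. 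There is no substantive obstacle here; the only bookkeeping to watch is that the input and output Hilbert spaces at each junction are matched, which is guaranteed by construction since the partial trace reduces the dimension from $2 \cdot 2^{d_1}$ to $2^{d_1}$, after which $\mathcal{D}_k$ acts on exactly the right space $\mathcal{H}_M$.
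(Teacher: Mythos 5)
Your proposal is correct and follows essentially the same route as the paper: decompose \(\mathsf{DOM}\) into the inverse permutation conjugation, the dephasing channel \((\Gamma_0\otimes I_M)\), the partial trace \(\Tr_R\), and the \(\mathsf{QOTP}\) decryption conjugation, verify each is CPTP, and conclude via closure of CPTP maps under composition. Your version merely adds slightly more detail (explicit Kraus operators for the partial trace and an explicit justification of composition closure) than the paper's proof, but there is no substantive difference.
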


\begin{proof}
Each step of the DOM algorithm is CPTP:

\begin{enumerate}
\item  \textbf{Inverse permutation}: Unitary conjugation by $\Uperm^\dagger$ is CPTP.

\item \textbf{Dephasing channel}: $(\Gamma_0 \otimes I_M)$ is CPTP, as shown by its Kraus operator representation:
\[
(\Gamma_0 \otimes I_M)(\rho) = \sum_{i=0}^1 (K_i \otimes I_M) \rho (K_i \otimes I_M)^\dagger
\]
with $\sum_i (K_i \otimes I_M)^\dagger (K_i \otimes I_M) = I_R \otimes I_M$.

\item \textbf{Partial trace}: The operation $\Tr_R[\cdot]$ is CPTP.

\item \textbf{QOTP decryption}: Unitary conjugation by $U_{k}=\mathsf{QOTP}(\cdot,k)^\dagger$ is CPTP.
\end{enumerate}
The composition of CPTP maps is CPTP. Therefore, the entire DOM algorithm consists of CPTP maps.
\end{proof}

\noindent The covert secret extraction and decryption algorithm from the anamorphic ciphertext, is described below:

\begin{algorithm}[H]
\caption{Decryption of Covert Message from Anamorphic Ciphertext (\(\mathsf{DCM}\))}
\label{alg:covert-decryption-rigorous-physical}
\begin{algorithmic}[1]

\STATE \textbf{Input:}
\begin{itemize}
\item Anamorphic state \(M_f^{(1)} \in \mathcal{D}\!\big((\mathbb{C}^2)^{\otimes(d_1+1)}\big)\) produced by encryption.
\item Permutation unitary \(U_{\sigma_l}\) used at encryption.
\item One-time-pad key \(k'=(k'_1,\dots,k'_{2d_2})\in\{0,1\}^{2d_2}\) for the covert message.
\item Dimensions \(d_1\ge d_2\) and scaling parameter \(\eta\in\mathbb{Z}^{+}\).
\item Isometry \(V:\mathcal{H}_{M_c} \to \mathcal{H}_M\) with \(V\lvert\psi\rangle=\lvert\psi\rangle\otimes\lvert 0\rangle^{\otimes(d_1-d_2)}\),
\(\mathcal{H}_M=(\mathbb{C}^2)^{\otimes d_1}\), \(\mathcal{H}_{M_c}=(\mathbb{C}^2)^{\otimes d_2}\).
\item Total number of copies (shots) for the probe: \(N_X\in\mathbb{N}\).
\end{itemize}

\STATE \textbf{Output:} Covert plaintext density matrix \(M_c\in\mathcal{D}(\mathcal{H}_{M_c})\).

\STATE \textbf{Registers and gates:}
Let \(R\) denote the single-qubit control register; \(M\) the \(d_1\)-qubit register.
On \(R\) we use the Pauli operators \(X_R,Y_R,Z_R\) and the Hadamard \(H_R\).
Define the computational-basis dephasing (pinching) channel on \(R\),
\[
\mathcal{P}_Z^{(R)}(X)\;=\;\tfrac{1}{2}(X+Z_R X Z_R)
\;=\;\sum_{b\in\{0,1\}}(\lvert b\rangle\langle b\rvert_R\otimes I_M)\,X\,(\lvert b\rangle\langle b\rvert_R\otimes I_M).
\]
\emph{Physical realization:} \(\mathcal{P}_Z^{(R)}\) is implemented by measuring \(R\) in the computational basis and discarding the classical outcome (Lemma~\ref{lem:dephasing-implementation}).

\STATE \textbf{1. Undo the permutation unitary (one copy for illustration).}
Apply \(U_{\sigma_l}^\dagger\) on \(R\otimes M\) to obtain
\[
M_d^{(1)} \;:=\; U_{\sigma_l}^\dagger\, M_f^{(1)}\, U_{\sigma_l}.
\]
By encryption correctness and Hermiticity of the covert block (Prop.~\ref{prop:B-hermitian}),
\begin{equation}\label{eq:block-form-Md-physical}
M_d^{(1)} \;=\;
\begin{pmatrix}
\tfrac12 M_o' & \tfrac{1}{\eta}\,B \\[0.25em]
\tfrac{1}{\eta}\,B & \tfrac12 M_o'
\end{pmatrix}_{\!R},
\qquad B \;=\; \Pi\,M_c''\,\Pi,\;\; M_c''=V M_c' V^\dagger,\;\; B=B^\dagger.
\end{equation}

\STATE \textbf{2. Real part via explicit measurements (X-probe; \(N_X\) copies).}
For each shot \(t=1,\dots,N_X\) (on an independent fresh copy of \(M_f^{(1)}\)):
\begin{enumerate}
\item Apply \(U_{\sigma_l}^\dagger\) on \(R\otimes M\).
\item Apply \(H_R\) on \(R\).
\item Measure \(R\) in the computational \(Z\) basis, recording \(b_t\in\{0,1\}\).
\item Perform an informationally-complete Pauli measurement on \(M\): draw a Pauli string \(P_t\in\{I,X,Y,Z\}^{\otimes d_1}\) from a fixed design with known probabilities \(\pi(P)=\Pr\{P_t=P\}\), rotate each qubit of \(M\) to the eigenbasis of its factor in \(P_t\), and measure in \(Z\), recording the eigenvalue \(m_t\in\{\pm 1\}\) (the outcome of \(P_t\)).
\end{enumerate}

\end{algorithmic}
\end{algorithm}

\begin{algorithm}[H]
\caption{Decryption of Covert Message from Anamorphic Ciphertext (\(\mathsf{DCM}\)) Part 2 (Continued)}
\begin{algorithmic}[1]

\STATE \textbf{3. Conditional linear inversion and unnormalized blocks.}
Let \(S_X(b)=\{t\le N_X:\,b_t=b\}\), \(n_X(b)=|S_X(b)|\). For each Pauli string \(P\), define the Horvitz–Thompson estimator
\[
\widehat{\mu}^{(X)}_{b}(P) \;=\;
\frac{1}{n_X(b)}\sum_{t\in S_X(b)} \frac{\mathbf{1}_{\{P_t=P\}}\; m_t}{\pi(P)}.
\]
Define the conditional linear-inversion estimator on \(M\) given \(R=b\):
\[
\widehat{\rho}^{(X)}_{M\mid b} \;=\; 
\frac{1}{2^{d_1}}\sum_{P\in\{I,X,Y,Z\}^{\otimes d_1}} \widehat{\mu}^{(X)}_{b}(P)\; P,
\qquad
\widehat{p}^{(X)}_b \;=\; \frac{n_X(b)}{N_X}.
\]
Set the \emph{unnormalized} block estimators
\[
\widehat{D}_b \;:=\; \widehat{p}^{(X)}_b\,\widehat{\rho}^{(X)}_{M\mid b}
\quad (b=0,1).
\]
(Under ideal dynamics, \(\widehat{D}_0\) estimates \(\tfrac12 M_o'+\tfrac{1}{\eta}B\) and \(\widehat{D}_1\) estimates \(\tfrac12 M_o'-\tfrac{1}{\eta}B\).)

\STATE \textbf{4. Off-diagonal reconstruction and unpadding.}
Compute
\[
\widehat{\tfrac{1}{\eta}B} \;:=\; \frac{\widehat{D}_0-\widehat{D}_1}{2},
\qquad
\widehat{B} \;:=\; \eta\,\widehat{\tfrac{1}{\eta}B}.
\]
Define the (padded) covert operator estimate \(\widetilde{M}_c'' := \widehat{B}\).
\emph{Isometric unembedding:}
\[
\widetilde{M}_c' \;:=\; V^\dagger\,\widetilde{M}_c''\,V \;\in\; \mathcal{L}(\mathcal{H}_{M_c}).
\]
(If encryption ensured \(\mathrm{Im}(V)\subseteq\mathrm{supp}(M_o')\), then \(\Pi\) acts as the identity on \(\mathrm{Im}(V)\).)

\STATE \textbf{5. \(\mathsf{QOTP}\) decryption.}
Let
\(
U_{k'} \;=\; \bigotimes_{j=1}^{d_2} X^{\,k'_{2j-1}} Z^{\,k'_{2j}}.
\)
Output
\[
M_c \;=\; U_{k'}^\dagger\,\widetilde{M}_c'\,U_{k'}.
\]

\end{algorithmic}
\end{algorithm}

\paragraph{Lemma 1 (Dephasing via measurement).}
\label{lem:dephasing-implementation}
For any operator \(X\in\mathcal{L}(\mathcal{H}_R\otimes\mathcal{H}_M)\),
\[
\mathcal{P}^{(R)}_Z(X)\;=\;\sum_{b\in\{0,1\}} K_b\, X\, K_b^\dagger,
\qquad
K_b \;=\; \lvert b\rangle\langle b\rvert_R\otimes I_M,
\]
and equivalently
\[
\mathcal{P}^{(R)}_Z(X)\;=\;\frac{1}{2}\bigl(X+Z_R X Z_R\bigr).
\]
Thus \(\mathcal{P}^{(R)}_Z\) is realized by measuring \(R\) in the computational basis and discarding the classical outcome.

\emph{Proof.}
Let \(X\) be a \(2\times 2\) block matrix on \(R\) and let
\(
X=\begin{psmallmatrix}X_{00}&X_{01}\\[.2ex]X_{10}&X_{11}\end{psmallmatrix}
\)
with blocks in \(\mathcal{L}(\mathcal{H}_M)\).
Since \(K_b=(\lvert b\rangle\langle b\rvert_R)\otimes I_M\) are orthogonal projectors with
\(\sum_{b}K_b^\dagger K_b=I_R\otimes I_M\), we have
\begin{align*}
\sum_{b}K_b X K_b^\dagger
&=\bigl(\lvert 0\rangle\langle 0\rvert\otimes I_M\bigr) X \bigl(\lvert 0\rangle\langle 0\rvert\otimes I_M\bigr)
+\bigl(\lvert 1\rangle\langle 1\rvert\otimes I_M\bigr) X \bigl(\lvert 1\rangle\langle 1\rvert\otimes I_M\bigr)\\
&=\begin{pmatrix}X_{00}&0\\[.2ex]0&X_{11}\end{pmatrix}.
\end{align*}
Moreover \(Z_R=\lvert 0\rangle\langle 0\rvert-\lvert 1\rangle\langle 1\rvert\), hence
\(
Z_R X Z_R=\begin{psmallmatrix}X_{00}&-X_{01}\\[.2ex]-X_{10}&X_{11}\end{psmallmatrix}
\)
and therefore
\[
\frac{1}{2}\bigl(X+Z_R X Z_R\bigr)
=\frac{1}{2}\begin{pmatrix}2X_{00}&0\\[.2ex]0&2X_{11}\end{pmatrix}
=\begin{pmatrix}X_{00}&0\\[.2ex]0&X_{11}\end{pmatrix}
=\sum_{b}K_b X K_b^\dagger.
\]
Thus \(\mathcal{P}^{(R)}_Z\) coincides with the measure-and-forget channel in the \(Z\)-basis on \(R\), which is completely positive and trace-preserving. \qed

\begin{lemma}\label{lem:probe-identities-meas}
Let \(M\in\mathcal{L}(\mathcal{H}_R\otimes\mathcal{H}_M)\) have the block form
\[
M \;=\;
\begin{pmatrix}
A & B\\[.2ex]
B^\dagger & A
\end{pmatrix}_{\!R},
\qquad A,B\in\mathcal{L}(\mathcal{H}_M).
\]
Equivalently,
\(
M = A\otimes I_R + \operatorname{Re}(B)\otimes X_R - \operatorname{Im}(B)\otimes Y_R
\),
where \(\operatorname{Re}(B)=\tfrac{1}{2}(B+B^\dagger)\) and \(\operatorname{Im}(B)=\tfrac{1}{2i}(B-B^\dagger)\).

\smallskip
\textbf{X-probe.}
Apply \(H_R\) on \(R\), then measure \(R\) in the computational \(Z\)-basis.
Let \(p_b=\Pr(R=b)\), let \(\rho_{M\mid b}\) be the post-selected state on \(M\) conditioned on outcome \(b\in\{0,1\}\), and define the unnormalized block \(D_b:=p_b\,\rho_{M\mid b}\).
Then
\[
\frac{D_0-D_1}{2} \;=\; \operatorname{Re}(B).
\]

\textbf{Y-probe (phase–Hadamard).}
Let \(U_{HS}:=H_R S_R\) (Hadamard followed by phase).
Apply \(U_{HS}\) on \(R\), then measure \(R\) in the computational \(Z\)-basis.
With analogous notation \(E_b:=p_b\,\rho_{M\mid b}\) for the unnormalized post-selected blocks,
\[
\frac{E_0-E_1}{2} \;=\; \operatorname{Im}(B).
\]
(In particular, if \(B\) is Hermitian, then \(\operatorname{Im}(B)=0\) and the \(Y\)-probe is  \emph{not} necessary.)
\end{lemma}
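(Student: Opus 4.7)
The plan is to reduce both identities to elementary Pauli-algebra manipulations on the control register $R$. Using the preceding lemma on dephasing via measurement, for any unitary $U$ applied on $R$ prior to the $Z$-basis measurement, the unnormalized post-selected block on $M$ conditioned on outcome $b$ equals the $(b,b)$-diagonal block of $(U\otimes I_M)M(U^\dagger\otimes I_M)$, i.e.\ $\langle b|_R\,(U\otimes I_M)M(U^\dagger\otimes I_M)\,|b\rangle_R$. As a preliminary step, I would verify the claimed Pauli decomposition of $M$: using $X_R=|0\rangle\langle 1|+|1\rangle\langle 0|$ and $Y_R=i|1\rangle\langle 0|-i|0\rangle\langle 1|$, the coefficients of $|0\rangle\langle 1|$ and $|1\rangle\langle 0|$ on the right-hand side of the decomposition are $\operatorname{Re}(B)+i\operatorname{Im}(B)=B$ and $\operatorname{Re}(B)-i\operatorname{Im}(B)=B^\dagger$, which match the prescribed block form, while the absence of a $Z_R$-term reflects the equality of the two diagonal blocks.

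For the X-probe, I would conjugate each Pauli in the decomposition by $H_R$, invoking the standard identities $HXH=Z$, $HYH=-Y$, $HZH=X$, $HIH=I$, to obtain
\[
M'_X \;=\; A\otimes I_R + \operatorname{Re}(B)\otimes Z_R + \operatorname{Im}(B)\otimes Y_R.
\]
Reading off the diagonal blocks via $\langle b|Z|b\rangle=(-1)^b$ and $\langle b|Y|b\rangle=0$ immediately yields $D_b=A+(-1)^b\operatorname{Re}(B)$, so $(D_0-D_1)/2=\operatorname{Re}(B)$ as claimed.

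For the Y-probe, the conjugation by $U_{HS}=H_RS_R$ is computed by composing the $S$-rules $SXS^\dagger=Y$, $SYS^\dagger=-X$, $SZS^\dagger=Z$ with the Hadamard conjugation above, yielding the net map $I\mapsto I$, $X\mapsto -Y$, $Y\mapsto -Z$, $Z\mapsto X$. Substituting into the decomposition of $M$ produces
\[
M'_Y \;=\; A\otimes I_R - \operatorname{Re}(B)\otimes Y_R + \operatorname{Im}(B)\otimes Z_R,
\]
hence $E_b=A+(-1)^b\operatorname{Im}(B)$ and $(E_0-E_1)/2=\operatorname{Im}(B)$; the parenthetical remark is then immediate, because $\operatorname{Im}(B)=(B-B^\dagger)/(2i)=0$ whenever $B$ is Hermitian. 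The computation is routine, and I do not foresee any conceptual obstacle; the only point that requires genuine care is the sign bookkeeping in the composition $H S P S^\dagger H$, since a single dropped sign would swap the roles of $\operatorname{Re}(B)$ and $\operatorname{Im}(B)$ between the two probes, and this is the main sanity check I would run while writing the argument.
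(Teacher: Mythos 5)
Your proposal is correct and follows essentially the same route as the paper's proof: the Pauli block decomposition of $M$, conjugation by $H_R$ (resp.\ $U_{HS}=H_RS_R$) using the standard identities $HXH=Z$, $HYH=-Y$, $U_{HS}X U_{HS}^\dagger=-Y$, $U_{HS}Y U_{HS}^\dagger=-Z$, and then reading off the diagonal blocks via $\langle b|Z|b\rangle=(-1)^b$, $\langle b|Y|b\rangle=0$. Your sign bookkeeping in the composed conjugation matches the paper's, so no gap remains.
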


\begin{proof}
Using
\(
M = A\otimes I_R + \operatorname{Re}(B)\otimes X_R - \operatorname{Im}(B)\otimes Y_R
\),
the \(X\)-probe conjugation yields
\[
(H_R\otimes I) M (H_R\otimes I)
= A\otimes I_R + \operatorname{Re}(B)\otimes Z_R + \operatorname{Im}(B)\otimes Y_R,
\]
since \(H_R X_R H_R=Z_R\) and \(H_R Y_R H_R=-Y_R\).
Projecting onto \(\lvert b\rangle\) on \(R\) gives
\[
D_b
= (\langle b|_R\otimes I)\,(H_R\otimes I) M (H_R\otimes I)\, (|b\rangle_R\otimes I)
= A + \operatorname{Re}(B)\,\langle b|Z_R|b\rangle + \operatorname{Im}(B)\,\langle b|Y_R|b\rangle.
\]
Since \(\langle 0|Z_R|0\rangle=+1\), \(\langle 1|Z_R|1\rangle=-1\), and \(\langle b|Y_R|b\rangle=0\), we obtain
\(D_0=A+\operatorname{Re}(B)\), \(D_1=A-\operatorname{Re}(B)\), hence \((D_0-D_1)/2=\operatorname{Re}(B)\).

For the \(Y\)-probe, with \(U_{HS}=H_R S_R\), the conjugations are
\(U_{HS} X_R U_{HS}^\dagger=-Y_R\) and \(U_{HS} Y_R U_{HS}^\dagger=-Z_R\).
Thus
\[
(U_{HS}\otimes I) M (U_{HS}^\dagger\otimes I)
= A\otimes I_R - \operatorname{Re}(B)\otimes Y_R + \operatorname{Im}(B)\otimes Z_R.
\]
Projecting onto \(\lvert b\rangle\) on \(R\) gives
\[
E_b
= A - \operatorname{Re}(B)\,\langle b|Y_R|b\rangle + \operatorname{Im}(B)\,\langle b|Z_R|b\rangle
= A + \operatorname{Im}(B)\,\langle b|Z_R|b\rangle,
\]
so \(E_0=A+\operatorname{Im}(B)\), \(E_1=A-\operatorname{Im}(B)\), and \((E_0-E_1)/2=\operatorname{Im}(B)\).
\end{proof}

\begin{remark}
In our scheme the covert off–diagonal block is
$B=\Pi\,M_c''\,\Pi$ with $M_c''=V U_{k'} M_c U_{k'}^\dagger V^\dagger$,
where $M_c$ is a density operator, $U_{k'}$ is unitary (\(\mathsf{QOTP}\)), $V$ is an
isometry, and $\Pi$ is a projector. Therefore, $B$ is Hermitian
$B^\dagger=B$, and thus $\operatorname{Im}(B)=0$. Consequently the $X$–probe
alone suffices; we estimate
\(
\frac{1}{\eta}B=\frac{D_0-D_1}{2}
\)
and set $\widehat{B}=\eta\,\frac{\widehat{D}_0-\widehat{D}_1}{2}$.
\end{remark}

\paragraph{Theorem (Exact correctness in the infinite–data limit.}
\label{thm:exact-correctness-X-only}
Assume encryption produced \(M_d^{(1)}\) in the block form of \eqref{eq:block-form-Md-physical},
\[
M_d^{(1)} \;=\;
\begin{pmatrix}
\frac{1}{2}M_o' & \frac{1}{\eta}\,B \\[0.25em]
\frac{1}{\eta}\,B^\dagger & \frac{1}{2}M_o'
\end{pmatrix}_{\!R},
\qquad
B \;=\; \Pi\,M_c''\,\Pi,
\]
with \(B\) Hermitian (as in the construction, \(B^\dagger=B\)). If Step~2 (the \(X\)–probe with explicit measurement) is performed with ideal operations and with the number of shots \(N_X\to\infty\), then the estimator
\[
\widehat{\frac{1}{\eta}B}\;=\;\frac{\widehat{D}_0-\widehat{D}_1}{2}
\qquad\text{(hence }\widehat{B}=\eta\,\widehat{\tfrac{1}{\eta}B}\text{)}
\]
converges to \(\frac{1}{\eta}B\) (hence \(\widehat{B}\to B\)).
Consequently,
\(
\widetilde{M}_c''\to \Pi M_c''\Pi
\),
\(
\widetilde{M}_c'=V^\dagger \widetilde{M}_c'' V \to M_c'
\),
and Step~5 returns \(M_c\).

\emph{Proof.}
Let \(A=\tfrac{1}{2}M_o'\). By the Pauli block decomposition,
\(
M_d^{(1)}=A\otimes I_R+\operatorname{Re}(\tfrac{1}{\eta}B)\otimes X_R-\operatorname{Im}(\tfrac{1}{\eta}B)\otimes Y_R
\).
Applying \(H_R\) on \(R\) maps \(X_R\mapsto Z_R\) and \(Y_R\mapsto -Y_R\).
Measuring \(R\) in the computational basis and discarding the classical outcome (i.e.\ the dephasing channel \(\mathcal{P}^{(R)}_Z\)) removes the \(Y_R\)–term and retains the \(Z_R\)–term. Denote by \(p_b\) the outcome probability, by \(\rho_{M\mid b}\) the conditional post–measurement state of \(M\), and set \(D_b=p_b\rho_{M\mid b}\) (unnormalized blocks). A direct calculation then gives the probe identity
\[
\frac{D_0-D_1}{2}\;=\;\operatorname{Re}\!\Big(\frac{1}{\eta}B\Big).
\]
In our construction \(B\) is Hermitian, hence \(\operatorname{Im}(B)=0\) and \(\operatorname{Re}(\tfrac{1}{\eta}B)=\tfrac{1}{\eta}B\). Therefore the target identity simplifies to
\[
\frac{D_0-D_1}{2}\;=\;\frac{1}{\eta}B.
\]
With ideal operations and \(N_X\to\infty\), the empirical frequencies \(\widehat{p}^{(X)}_b\) converge to \(p_b\), and the linear–inversion tomographic estimators \(\widehat{\rho}^{(X)}_{M\mid b}\) converge to \(\rho_{M\mid b}\) (entrywise, hence in any Schatten norm) by the law of large numbers. Thus \(\widehat{D}_b=\widehat{p}^{(X)}_b\widehat{\rho}^{(X)}_{M\mid b}\to D_b\), and consequently
\(
\widehat{\tfrac{1}{\eta}B}=\frac{\widehat{D}_0-\widehat{D}_1}{2}\to \frac{1}{\eta}B
\),
whence \(\widehat{B}\to B\).
Since unembedding \(X\mapsto V^\dagger X V\) and \(\mathsf{QOTP}\) decryption are completely positive trace–preserving (hence continuous) maps, we obtain
\(
\widetilde{M}_c''\to \Pi M_c''\Pi
\),
\(
\widetilde{M}_c'\to M_c'
\),
and finally recovery of \(M_c\).
\(\square\)

\smallskip
\paragraph{Finite–data accuracy and sample complexity (Pauli tomography; $X$–probe only).}
Let $d=2^{d_1}$ and fix $\varepsilon\in(0,1)$, $\delta\in(0,1)$. In the $\mathsf{DCM}$ procedure, the covert off–diagonal block is $B=\Pi M_c''\Pi$, which is Hermitian by construction; hence $\operatorname{Im}(B)=0$ and only the $X$–probe is required.

\begin{theorem}[Finite–data accuracy and sample complexity based on Pauli tomography]\label{thm:finitedatapauli-Xonly}
Let $d=2^{d_1}$ and let $B\in\mathcal{L}(\mathbb{C}^d)$ be the (Hermitian) covert off–diagonal operator. Fix $\varepsilon,\delta\in(0,1)$. Run the $\mathsf{DCM}$ \emph{$X$–probe only} as in Algorithm~\ref{alg:covert-decryption-rigorous-physical}: apply $U_{\sigma_l}^\dagger$, then $H_R$, measure $R$ in the computational basis, and perform Pauli tomography on $M$ with Pauli strings sampled uniformly across the $d+1$ commuting Pauli frames (Fact~\textup{\ref{fact:frames}}). Let $N_X$ be the total number of shots, allocated uniformly over the frames.

If
\[
N_X \ \ge\ \frac{(d+1)\,d}{2\,\varepsilon^{2}}\;\log\!\frac{4 d^{2}}{\delta},
\]
then, with probability at least $1-\delta$,
\[
\Big\|\widehat{\frac{1}{\eta}B}-\frac{1}{\eta}B\Big\|_{2}\ \le\ \varepsilon
\qquad\Longrightarrow\qquad
\|\widehat{B}-B\|_{1}\ \le\ \eta\,\sqrt{d}\,\varepsilon.
\]
Equivalently, there exists a universal constant $c>0$ such that it suffices to have
\[
N_X \ \ge\ \frac{c\,d^{2}}{\varepsilon^{2}}\;\log\!\frac{2 d^{2}}{\delta}.
\]
\end{theorem}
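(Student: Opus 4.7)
The plan is to expand the estimator error in the normalized Pauli basis of Fact~\ref{fact:PauliONB}, control each Pauli coefficient by a Hoeffding concentration inequality (Lemma~\ref{lem:Hoeffding}), apply a union bound over all $d^2$ Paulis, and finally convert from Frobenius to trace norm via Lemma~\ref{lem:Schatten}. The starting point is the probe identity of Lemma~\ref{lem:probe-identities-meas}, which reads $(D_0-D_1)/2=\operatorname{Re}(B/\eta)=B/\eta$ because $B=\Pi M_c''\Pi$ is Hermitian in the construction. Therefore, for every Pauli $P\in\mathcal{P}_{d_1}$, the coefficient we need to estimate is $c_P:=\operatorname{Tr}(P\,B/\eta) = \frac12\bigl[\operatorname{Tr}(P D_0)-\operatorname{Tr}(P D_1)\bigr]$.

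Next I would convert this into a per-shot bounded estimator. By Fact~\ref{fact:frames}, the $d^2-1$ nonidentity Paulis decompose into $d+1$ commuting frames $\mathcal{C}_0,\dots,\mathcal{C}_d$ of size $d-1$; a single measurement in a frame's joint eigenbasis yields $\pm 1$ outcomes for all $P\in\mathcal{C}_s$ simultaneously. Allocate the $N_X$ shots uniformly over frames, so each frame receives $N_{X,s}=N_X/(d+1)$ shots. For a shot $t$ assigned to frame $s$ with register outcome $b_t\in\{0,1\}$ and Pauli eigenvalue $m_t(P)\in\{\pm 1\}$ for each $P\in\mathcal{C}_s$, define the bounded variable
\[
Y_t(P)\;:=\;\tfrac{1}{2}\,(1-2b_t)\,m_t(P)\ \in\ \bigl[-\tfrac12,\tfrac12\bigr],
\qquad \mathbb{E}[Y_t(P)]\;=\;c_P,
\]
which follows directly from the $X$–probe derivation of Lemma~\ref{lem:probe-identities-meas}. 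Take $\widehat{c}_P:=N_{X,s}^{-1}\sum_{t:s_t=s}Y_t(P)$; for $P=I^{\otimes d_1}$, all $N_X$ shots contribute. Applying Lemma~\ref{lem:Hoeffding} with range $[-\tfrac12,\tfrac12]$ gives
\[
\Pr\!\Big[\,|\widehat{c}_P-c_P|\ge\varepsilon'\Big]
\ \le\ 2\exp\!\bigl(-2\,N_{X,s}\,\varepsilon'^{\,2}\bigr).
\]

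I would then set the per-coordinate target to $\varepsilon'=\varepsilon/\sqrt{d}$ and apply a union bound over the $d^2$ Paulis. Requiring $2d^2\exp(-2N_{X,s}\varepsilon^2/d)\le\delta$ yields $N_{X,s}\ge (d/(2\varepsilon^2))\log(4d^2/\delta)$ (using $2d^2\le 4d^2$); multiplying by $d+1$ gives $N_X\ge \tfrac{(d+1)d}{2\varepsilon^{2}}\log(4d^2/\delta)$, matching the stated threshold. On the good event, using Fact~\ref{fact:PauliONB} and the orthonormality of $\{E_P=P/\sqrt d\}$ in the Frobenius inner product,
\[
\Big\|\widehat{\tfrac{1}{\eta}B}-\tfrac{1}{\eta}B\Big\|_{2}^{2}
\;=\;\frac{1}{d}\sum_{P}\bigl(\widehat{c}_P-c_P\bigr)^{2}
\;\le\;\frac{1}{d}\cdot d^{2}\cdot\frac{\varepsilon^{2}}{d}\;=\;\varepsilon^{2},
\]
which gives the Frobenius bound. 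Lemma~\ref{lem:Schatten} applied to $\widehat{B}-B=\eta(\widehat{B/\eta}-B/\eta)$ converts this into $\|\widehat{B}-B\|_{1}\le \sqrt{d}\,\|\widehat{B}-B\|_{2}\le \eta\sqrt{d}\,\varepsilon$, completing the proof.

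The main obstacle I expect is the bookkeeping of scales: ensuring that (i) the per-shot variable $Y_t(P)$ is truly bounded in an interval of length $1$ (so Hoeffding's denominator is $\sum(b_i-a_i)^2=N_{X,s}$ rather than a larger constant), (ii) the per-coordinate precision $\varepsilon/\sqrt d$ together with the $1/d$ Frobenius normalization gives exactly $\varepsilon^2$ rather than something off by a factor of $d$, and (iii) the union bound accounts for all $d^2$ Paulis (including identity, whose contribution is harmless since its variance is zero but must still be written so as not to violate the bookkeeping). The Hoeffding/union-bound and Schatten steps are standard; the only delicate point is the alignment of the $(d+1)$ frames with the uniform allocation so that the derivation yields the $(d+1)d/\varepsilon^2$ factor without any hidden constants.
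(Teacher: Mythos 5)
Your proposal is correct and follows essentially the same route as the paper: expand $\tfrac{1}{\eta}B$ in the normalized Pauli basis, concentrate each coefficient via Hoeffding with a union bound over the $d^2$ Paulis and the $(d+1)$-frame allocation, then convert Frobenius to trace norm with $\|X\|_1\le\sqrt{d}\,\|X\|_2$; your single per-shot variable $Y_t(P)=\tfrac12(1-2b_t)m_t(P)$ is in fact a slightly cleaner packaging than the paper's two-branch conditional means, since it is exactly unbiased for $\operatorname{Tr}(P\,B/\eta)$ and avoids conditioning on the random branch counts. The only blemish is your parenthetical claim that the identity Pauli's estimator has zero variance (it does not, since $b_t$ is random), but this is immaterial because your union bound already covers all $d^2$ Paulis.
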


\paragraph{Proof.}
We consider the $X$–probe target as $T:=\tfrac{1}{\eta}B$. Using the Pauli orthonormal basis on $\mathcal{L}(\mathbb{C}^d)$ (Fact~\ref{fact:PauliONB}), we expand
\[
T \ =\ \sum_{P\in\mathcal{P}_{d_1}} \alpha_{P}\,E_{P},\qquad
\alpha_{P}\ =\ \langle E_{P},T\rangle_{F}\ =\ \frac{1}{\sqrt{d}}\operatorname{Tr}(P\,T),
\]
and define the linear-inversion estimator
\(
\widehat{T}=\sum_{P}\widehat{\alpha}_{P}E_{P}
\),
where, for each Pauli string $P$ in a given commuting frame, $\widehat{\alpha}_{P}$ is obtained from the difference of empirical means on the two post–measurement branches $b\in\{0,1\}$ after the $H_R$ gate
\[
\widehat{\alpha}_{P}\ :=\ \frac{1}{\sqrt{d}}\cdot\frac{\overline{Z}^{(0)}_{P}-\overline{Z}^{(1)}_{P}}{2},
\]
with $\overline{Z}^{(b)}_{P}$ the empirical mean of $\pm1$ outcomes of $P$ on the unnormalized reduced block $D_b$ (see Lemma~\ref{lem:probe-identities-meas}). Each $\overline{Z}^{(b)}_{P}$ averages $n_s:=N_X/(d+1)$ samples (uniform allocation over the $d+1$ frames).

By Hoeffding’s inequality for $\pm1$–valued variables and a union bound over the two branches,
\[
\Pr\!\Big(\Big|\tfrac{1}{2}\big(\overline{Z}^{(0)}_{P}-\overline{Z}^{(1)}_{P}\big)-\tfrac{1}{2}\big(\mu^{(0)}_{P}-\mu^{(1)}_{P}\big)\Big|\ \ge\ t\Big)\ \le\ 4\,e^{-2 n_s t^{2}},
\]
where $\mu^{(b)}_{P}:=\mathbb{E}[\overline{Z}^{(b)}_{P}]$. Hence, for every $P$,
\begin{equation}\label{eq:per-coeff-tail-Xonly}
\Pr\!\Big(\big|\widehat{\alpha}_{P}-\alpha_{P}\big|\ \ge\ \frac{t}{\sqrt{d}}\Big)\ \le\ 4\,e^{-2 n_s t^{2}}.
\end{equation}
There are at most $m=d^{2}$ nonidentity Paulis. Applying a union bound over $P\neq I$ in \eqref{eq:per-coeff-tail-Xonly} with
\(
t=\sqrt{\frac{1}{2n_s}\log\frac{4 d^{2}}{\delta}}
\),
we obtain, with probability at least $1-\delta$,
\[
\max_{P\neq I}\big|\widehat{\alpha}_{P}-\alpha_{P}\big|\ \le\ \frac{1}{\sqrt{d}}\sqrt{\frac{1}{2n_s}\log\frac{4 d^{2}}{\delta}}.
\]
By orthonormality of the Pauli basis (Fact~\ref{fact:PauliONB}),
\[
\|\widehat{T}-T\|_{2}^{2}
\;=\;\sum_{P\neq I}\big|\widehat{\alpha}_{P}-\alpha_{P}\big|^{2}
\ \le\ m\cdot\frac{1}{d}\cdot\frac{1}{2n_s}\log\frac{4 d^{2}}{\delta}
\ \le\ \frac{d}{2n_s}\log\frac{4 d^{2}}{\delta}.
\]
Since $n_s=N_X/(d+1)$, the stated condition
\(
N_X \ge \frac{(d+1)\,d}{2\,\varepsilon^{2}}\log\frac{4 d^{2}}{\delta}
\)
implies $\|\widehat{T}-T\|_{2}\le\varepsilon$ with probability at least $1-\delta$. Finally, by the Schatten–norm inequality $\|X\|_{1}\le\sqrt{d}\,\|X\|_{2}$ and $\widehat{B}-B=\eta(\widehat{T}-T)$, we get,
\[
\|\widehat{B}-B\|_{1}\ \le\ \eta\,\sqrt{d}\,\|\widehat{T}-T\|_{2}\ \le\ \eta\,\sqrt{d}\,\varepsilon.
\]
This proves the claim. \qed

\begin{remark}
For low-rank/coherent structure, classical shadows or compressed-sensing tomography can reduce the scaling; these are implementable by random local Cliffords followed by \(Z\) measurements on \(M\).
\end{remark}

\medskip

Next, we describe an algorithm to extract the original ciphertext from the anamorphic ciphertext, which we will use in the definition of anamorphic secret-sharing.

\begin{algorithm}[H]
\caption{Extraction of Original Ciphertext from Anamorphic Ciphertext(\(\mathsf{EOC}\))}
\label{alg:original-ciphertext-extraction}
\begin{algorithmic}[1]
\STATE \textbf{Input:} Anamorphic state \(M_f^{(1)} \in \mathcal{D}((\mathbb{C}^2)^{\otimes(d_1+1)})\), permutation unitary \(U_{\sigma_l}\), \(\mathsf{QOTP}\) key \(k \in \{0, 1\}^{2d_1}\), dimension \(d_1\).
\STATE \textbf{Output:} Original ciphertext \(M_f^{(0)}\).
\STATE \textbf{Steps:}
\STATE \textbf{1. Apply the inverse permutation:}
\STATE \hspace{1cm} Compute the intermediate state \(M_d^{(1)}\) by applying the inverse of \(U_{\sigma_l}\):
\[
M_d^{(1)} = U_{\sigma_l}^\dagger M_f^{(1)} U_{\sigma_l}.
\]
\smallskip
\STATE \textbf{Quantum Dephasing Channel on Control Qubit} 
\STATE Implement the dephasing channel $\Gamma_0: \mathcal{L}(\mathcal{H}_R) \longrightarrow \mathcal{L}(\mathcal{H}_R)$:
\[
\Gamma_0(\rho_R) = \frac{1}{2}\rho_R + \frac{1}{2} Z_R \rho_R Z_R
\]
and set
\begin{equation}\label{eq:after-pinch-eoc}
M_a^{(0)}\ :=\ (\Gamma_0\otimes I_{2^{d_{1}}})\big(M_d^{(1)}\big).
\end{equation}

\STATE \textbf{4. Apply the permutation to recover original ciphertext:}
\STATE \hspace{1cm} Use the permutation unitary \(U_{\sigma_l}\) to recover \(M_f^{(0)}\):
\[
M_f^{(0)} = U_{\sigma_l} M_{a}^{(0)} U_{\sigma_l}^\dagger.
\]
\STATE \textbf{5. Return the original ciphertext \(M_f^{(0)}\):}
\STATE \hspace{1cm} Output \(M_f^{(0)}\), completing the extraction.
\end{algorithmic}
\end{algorithm}

\medskip

\begin{remark}\label{re:dom}
The algorithm \(\mathsf{DOM}\) can be applied to the original ciphertext \(M_{f}^{(0)}\) too. Exactly in a similar way, we can retrieve the original message \(M_{o}\) from \(M_{f}^{(0)}\).
\end{remark}

\medskip

Now we discuss the following theorems and corollaries to prove the Theorem \ref{thm:main} and the corollary \ref{cor:main}.
\medskip

\begin{definition}(Rayleigh Quotient [Section 4.2, Page 176, \cite{horn2012matrix}])
Let $A$ be a Hermitian operator on an $n$-dimensional complex Hilbert space $\mathcal{H}$. The Rayleigh quotient $R(A;x)$ associated with $x\neq 0$ is defined as
\begin{equation}
  R(A;x) 
  \;=\;
  \frac{x^{*} A x}{x^{*} x}.
\end{equation}
\end{definition}

\medskip

\noindent We now recall the fundamental \emph{variational characterization} of eigenvalues of a Hermitian operator. This is sometimes called the \emph{Rayleigh--Ritz theorem} (in finite dimensions).

\begin{theorem}(Variational Characterization of the Extreme Eigenvalues \cite{horn2012matrix, bhatia2013matrix, kato2013perturbation})
\label{thm:Rayleigh}
Let $A$ be a Hermitian $n\times n$ matrix (or Hermitian operator on an $n$-dimensional space). Denote its eigenvalues by
\begin{equation}
  \lambda_1(A) \;\le\; \lambda_2(A) \;\le\; \cdots \;\le\; \lambda_n(A)
\end{equation}
ordered in a nondecreasing sequence. Then
\begin{equation}
  \min_{\ket{v}\neq 0}\,R(A;\ket{v})
  \;=\;
  \lambda_1(A),
  \quad
  \max_{\ket{v}\neq 0}\,R(A;\ket{v})
  \;=\;
  \lambda_n(A).
\end{equation}
In particular,
\begin{equation}
  \lambda_{\min}(A)
  \;=\;
  \min_{\|\ket{v}\|_{2}=1}\,\bra{v}A\ket{v},
  \quad
  \lambda_{\max}(A)
  \;=\;
  \max_{\|\ket{v}\|_{2}=1}\,\bra{v}A\ket{v}.
\end{equation}
\end{theorem}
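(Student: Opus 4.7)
The plan is to reduce the statement to a direct consequence of the spectral theorem for Hermitian operators. Since $A$ is Hermitian on the finite--dimensional complex Hilbert space $\mathcal{H}$, the spectral theorem provides an orthonormal basis $\{\ket{v_i}\}_{i=1}^{n}$ of eigenvectors of $A$ with $A\ket{v_i}=\lambda_i(A)\ket{v_i}$ and eigenvalues ordered as $\lambda_1(A)\le\cdots\le\lambda_n(A)$. I would also note at the outset that the Rayleigh quotient $R(A;\ket{v})=\bra{v}A\ket{v}/\langle v|v\rangle$ is scale--invariant, so it suffices to optimize $\bra{v}A\ket{v}$ over unit vectors $\ket{v}$ with $\|\ket{v}\|_2=1$; this removes the denominator entirely and turns the problem into an optimization of a real quadratic form on the unit sphere.

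Next, I would expand an arbitrary unit vector in the eigenbasis as $\ket{v}=\sum_{i=1}^{n} c_i\ket{v_i}$ with $\sum_{i=1}^{n}|c_i|^2=1$, and compute
\begin{equation}
\bra{v}A\ket{v} \;=\; \sum_{i=1}^{n} \lambda_i(A)\,|c_i|^2 .
\end{equation}
This is a convex combination of the real numbers $\lambda_1(A),\ldots,\lambda_n(A)$, hence it is sandwiched between the smallest and largest:
\begin{equation}
\lambda_1(A)\;=\;\lambda_1(A)\sum_{i}|c_i|^2 \;\le\; \sum_{i}\lambda_i(A)|c_i|^2 \;\le\; \lambda_n(A)\sum_{i}|c_i|^2 \;=\;\lambda_n(A).
\end{equation}
This immediately gives the inequalities $\lambda_1(A)\le R(A;\ket{v})\le \lambda_n(A)$ for every nonzero $\ket{v}$, establishing the bounds.

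Finally, to show these bounds are attained (so that $\min$ and $\max$ are genuinely achieved), I would exhibit explicit maximizers and minimizers: taking $\ket{v}=\ket{v_1}$ gives $\bra{v_1}A\ket{v_1}=\lambda_1(A)$, and taking $\ket{v}=\ket{v_n}$ gives $\bra{v_n}A\ket{v_n}=\lambda_n(A)$. Combined with the sandwich inequality above, this proves the two equalities in the statement, and the ``in particular'' assertion is just the restatement on the unit sphere.

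There is essentially no serious obstacle here; the only points requiring care are (i) invoking the spectral theorem for Hermitian operators (in finite dimensions this is just the unitary diagonalizability of Hermitian matrices) and (ii) noting that the unit sphere in $\mathbb{C}^n$ is compact so the extrema are attained, which is made explicit by the eigenvector witnesses. If one wanted to avoid compactness altogether, the explicit witnesses $\ket{v_1},\ket{v_n}$ together with the convex--combination bound already suffice to identify the infimum and supremum and confirm they are achieved.
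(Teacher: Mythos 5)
Your proof is correct: the eigenbasis expansion, the convex-combination sandwich bound, and the explicit witnesses $\ket{v_1}$, $\ket{v_n}$ constitute the standard Rayleigh--Ritz argument. The paper itself does not prove this theorem but simply cites standard references (Horn--Johnson, Bhatia, Kato), and your argument is essentially the textbook proof found there, so nothing further is needed.
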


\medskip

The following Corollary \ref{cor:minmax} can easily be derived from the Theorem \ref{thm:Rayleigh}, which we have used in our proof of the Theorem \ref{thm:main}. The Lemma \ref{lemma:semdef} is also easy to prove using basic linear algebra, and we have used it to prove Corollary \ref{cor:semdef}. We have included them for completeness.

\begin{corollary}\label{cor:minmax}
Let $X, Y \succeq 0$ be positive semi-definite matrices on the same finite-dimensional Hilbert space $\mathcal{H}$. If
\begin{equation}
  \lambda_{\max}(Y)
  \;\;\le\;\;
  \lambda_{\min}(X),
\end{equation}
then
\begin{equation}
  (X - Y) \;\succeq\;0.
\end{equation}
\end{corollary}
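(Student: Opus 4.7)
The plan is to apply the variational (Rayleigh--Ritz) characterization stated in Theorem~\ref{thm:Rayleigh} directly. Since $X, Y \succeq 0$, both are Hermitian, hence $X - Y$ is Hermitian as well, so it suffices to verify $\langle v | (X - Y) | v\rangle \ge 0$ for every unit vector $|v\rangle \in \mathcal{H}$; positive semi-definiteness then follows from the variational characterization applied to $X - Y$.

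First I would fix an arbitrary unit vector $|v\rangle \in \mathcal{H}$ and split the quadratic form as
\[
  \langle v | (X - Y) | v\rangle \;=\; \langle v | X | v\rangle \;-\; \langle v | Y | v\rangle.
\]
Next, I would invoke Theorem~\ref{thm:Rayleigh} on each term: the lower bound $\langle v | X | v\rangle \ge \lambda_{\min}(X)$ comes from the minimum Rayleigh quotient of $X$ being attained at the minimal eigenvalue, while the upper bound $\langle v | Y | v\rangle \le \lambda_{\max}(Y)$ comes from the maximum Rayleigh quotient of $Y$ being attained at its maximal eigenvalue. Combining these with the hypothesis $\lambda_{\max}(Y) \le \lambda_{\min}(X)$ yields
\[
  \langle v | (X - Y) | v\rangle \;\ge\; \lambda_{\min}(X) - \lambda_{\max}(Y) \;\ge\; 0.
\]
Since $|v\rangle$ was arbitrary, another invocation of the variational characterization (this time giving $\lambda_{\min}(X-Y) = \min_{\|v\|=1} \langle v|(X-Y)|v\rangle \ge 0$) concludes that $X - Y \succeq 0$.

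There is no real obstacle here; the only point worth being careful about is ensuring Hermiticity of $X - Y$ before applying Rayleigh--Ritz, which is immediate from the assumption that $X$ and $Y$ are positive semi-definite (and therefore self-adjoint). The proof is essentially a one-line application of the two extremal Rayleigh quotient identities, so it will occupy just a few lines.
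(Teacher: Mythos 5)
Your proposal is correct and follows essentially the same route as the paper: both bound $\langle v|X|v\rangle$ from below by $\lambda_{\min}(X)$ and $\langle v|Y|v\rangle$ from above by $\lambda_{\max}(Y)$ via the Rayleigh--Ritz characterization (Theorem~\ref{thm:Rayleigh}) and combine with the hypothesis to conclude the quadratic form of $X-Y$ is nonnegative. The only cosmetic difference is that the paper works with arbitrary nonzero vectors carrying the factor $\|v\|_2^2$, whereas you normalize to unit vectors.
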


\begin{proof}
Since $X,Y \succeq 0$, all their eigenvalues are nonnegative. To show that $(X - Y)$ is positive semi-definite, we show that for all $\ket{v}\in\mathcal{H}$,
\begin{equation}
  \bra{v}(X-Y)\ket{v} 
  \;\ge\; 0.
\end{equation}
As $\lambda_{\min}(X)$ is the smallest eigenvalue of $X$, and by the variational characterization of eigenvalues (or Rayleigh quotients), we have
\begin{equation}
  \frac{\bra{v}X\ket{v}}{\|\ket{v}\|_{2}^2}
  \;\ge\;
  \lambda_{\min}(X).
\end{equation}
Since $X \succeq 0$, for any $\ket{v}\neq 0$,
\begin{equation}
  \bra{v}X\ket{v}
  \;\ge\;
  \lambda_{\min}(X)\,\|\,\ket{v}\,\|_{2}^2.
\end{equation}
Since $\lambda_{\max}(Y)$ is the largest eigenvalue of $Y$, the Rayleigh quotient satisfies
\begin{equation}
  \frac{\bra{v}Y\ket{v}}{\|\ket{v}\|_{2}^2}
  \;\le\;
  \lambda_{\max}(Y).
\end{equation}
Similarly, for $Y \succeq 0$, we have
\begin{equation}
  \bra{v}Y\ket{v}
  \;\le\;
  \lambda_{\max}(Y)\,\|\ket{v}\|_{2}^2.
\end{equation}
Combining both the equations together, we get that, for any nonzero $\ket{v}\in\mathcal{H}$,
\begin{equation}
  \bra{v}X\ket{v}
  \;-\;
  \bra{v}Y\ket{v}
  \;\ge\;
  \bigl(\lambda_{\min}(X)\,\|\ket{v}\|_{2}^2\bigr)
  \;-\;
  \bigl(\lambda_{\max}(Y)\,\|\ket{v}\|_{2}^2\bigr)
  \;=\;
  \Bigl(\lambda_{\min}(X) - \lambda_{\max}(Y)\Bigr)\,\|\ket{v}\|_{2}^2.
\end{equation}
By our assumption, $(\lambda_{\min}(X) - \lambda_{\max}(Y))\geq 0$.

\noindent Therefore,
\begin{equation}
 \forall \ket{v} \in \mathcal{H}, \quad \bra{v}(X-Y)\ket{v} 
  \;\ge\;
  \Bigl(\lambda_{\min}(X) - \lambda_{\max}(Y)\Bigr)\,\|\ket{v}\|_{2}^2
  \;\ge\; 
  0.
\end{equation}

\medskip

\noindent Hence, $(X - Y) \succeq 0$, that is, \ $(X-Y)$ is a positive semi-definite matrix.
\end{proof}

\medskip

\begin{lemma}(\cite{horn2012matrix})\label{lemma:maxeig}
Let $X \succeq 0$ be a Hermitian and positive semi-definite matrix. Then
\begin{equation}
  \lambda_{\max}(X)
  \;=\;
  \| X \|,
\end{equation}
where $\lambda_{\max}(M)$ is the largest eigenvalue of $M$, and $\|M\|$ is the spectral norm or operator norm of \(X\).
\end{lemma}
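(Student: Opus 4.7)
The plan is to exploit the spectral theorem for Hermitian operators together with the variational characterization of the largest eigenvalue already recorded as Theorem~\ref{thm:Rayleigh}. Since $X\succeq 0$ is Hermitian, write its spectral decomposition
\[
X \;=\; \sum_{i=1}^{n} \lambda_i\,\ket{v_i}\!\bra{v_i},\qquad 0\le\lambda_1\le\lambda_2\le\cdots\le\lambda_n=\lambda_{\max}(X),
\]
with $\{\ket{v_i}\}$ an orthonormal eigenbasis. The goal is to show the two quantities $\|X\|:=\sup_{\|\ket{\psi}\|_2=1}\|X\ket{\psi}\|_2$ and $\lambda_{\max}(X)$ coincide.

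First I would compute $\|X\ket{\psi}\|_2^{2}$ for an arbitrary unit vector expanded in the eigenbasis. Writing $\ket{\psi}=\sum_i c_i\ket{v_i}$ with $\sum_i|c_i|^2=1$, one has $X\ket{\psi}=\sum_i \lambda_i c_i\ket{v_i}$, so
\[
\|X\ket{\psi}\|_2^{2} \;=\; \sum_{i=1}^{n}\lambda_i^{2}\,|c_i|^{2}\;\le\;\lambda_{\max}(X)^{2}\sum_{i=1}^{n}|c_i|^{2}\;=\;\lambda_{\max}(X)^{2}.
\]
Taking the supremum over unit vectors yields $\|X\|\le\lambda_{\max}(X)$.

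For the reverse inequality, I would choose the specific unit vector $\ket{\psi}=\ket{v_n}$ corresponding to $\lambda_{\max}(X)$; then $X\ket{v_n}=\lambda_{\max}(X)\ket{v_n}$, and since $\lambda_{\max}(X)\ge 0$ by positive semi-definiteness, $\|X\ket{v_n}\|_2=\lambda_{\max}(X)$. Hence $\|X\|\ge\lambda_{\max}(X)$, completing the identification. Alternatively, one could invoke Theorem~\ref{thm:Rayleigh} directly on $X$ in the form $\lambda_{\max}(X)=\max_{\|\ket{v}\|_2=1}\bra{v}X\ket{v}$, together with the fact that for a Hermitian PSD operator the operator norm and numerical radius coincide (as already noted in the Notations subsection).

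The only subtlety, and the sole place where positive semi-definiteness is actually used, is the sign handling in the last step: for a general Hermitian operator with possibly negative eigenvalues, one only recovers $\|X\|=\max_i|\lambda_i|$, and PSD is what ensures $|\lambda_{\max}(X)|=\lambda_{\max}(X)$ so that the operator norm equals the largest (rather than the largest-in-modulus) eigenvalue. This is not a real obstacle, just a remark to state carefully; the proof itself is essentially a one-line consequence of the spectral decomposition and the nonnegativity of the $\lambda_i$.
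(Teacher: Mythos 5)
Your proof is correct, and in fact the paper gives no proof of this lemma at all—it is simply quoted from Horn and Johnson—so your self-contained spectral-decomposition argument (upper bound by expanding an arbitrary unit vector in the eigenbasis, lower bound by evaluating on the top eigenvector, with positive semi-definiteness used only to replace $\max_i|\lambda_i|$ by $\lambda_{\max}(X)$) is exactly the standard argument the citation points to. Your closing remark about where PSD is genuinely needed is also accurate and matches the paper's own observation in the Notations subsection that for Hermitian PSD operators the operator norm and the largest eigenvalue coincide.
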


\medskip

In general, if \(A\) and \(B\) are two positive semi-definite matrices, then \(AB\) may not be a positive semi-definite matrix unless commutativity holds, that is \(AB=BA\). But in our case we next prove that if \(M_{o}'\) is strictly positive definite and \(M_{c}''\) is positive semi-definite, then the matrix \(M_c''\,(M_o')^{-1}\,M_c''\) is also positive semi-definite. Here we note that it is \textit{not} necessary that the matrix \(M_{c}''\) should be strictly positive definite.

\begin{lemma}\label{lemma:semdef}
   Let \(Y \in M_{n}(\mathbb{C})\) be a Hermitian positive definite matrix, and let \(X \in M_{n}(\mathbb{C})\) be a Hermitian positive semi-definite matrix. Then
  \[
    XY^{-1}X 
    \quad \text{is a Hermitian positive semi-definite matrix.}
  \]
  Moreover, \(XY^{-1}X\) is strictly positive definite if and only if \(X\) is invertible.
\end{lemma}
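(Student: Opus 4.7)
The plan is to factor $XY^{-1}X$ as $Z^\dagger Z$ for a suitable $Z$, which immediately yields both Hermiticity and positive semi-definiteness, and then analyze the kernel of $Z$ to obtain the equivalence in the ``moreover'' part. The key observation is that because $Y \succ 0$, the spectral theorem gives a unique Hermitian positive-definite square root $Y^{-1/2}$ of $Y^{-1}$, so $Y^{-1} = Y^{-1/2}\,Y^{-1/2}$ with $(Y^{-1/2})^\dagger = Y^{-1/2}$.

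First, I would define $Z := Y^{-1/2} X$. Using Hermiticity of $X$ and of $Y^{-1/2}$, I compute $Z^\dagger Z = X^\dagger (Y^{-1/2})^\dagger Y^{-1/2} X = X Y^{-1} X$. Since operators of the form $Z^\dagger Z$ are automatically Hermitian and positive semi-definite (as $\bra{v} Z^\dagger Z \ket{v} = \|Z\ket{v}\|_2^2 \ge 0$ for all $\ket{v}$), this establishes the first claim. A direct alternative check of Hermiticity is $(XY^{-1}X)^\dagger = X^\dagger (Y^{-1})^\dagger X^\dagger = X Y^{-1} X$, since $X = X^\dagger$ and $(Y^{-1})^\dagger = Y^{-1}$.

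For the ``moreover'' part, I would argue via kernels. In general $Z^\dagger Z$ is strictly positive definite if and only if $\ker Z = \{0\}$, equivalently $Z$ is injective, equivalently $Z$ has trivial kernel. Because $Y^{-1/2}$ is invertible, $Z\ket{v} = 0 \iff X\ket{v} = 0$, so $\ker Z = \ker X$. Hence $XY^{-1}X \succ 0$ iff $\ker X = \{0\}$, and since $X$ is square and Hermitian, this is equivalent to $X$ being invertible (equivalently, to $X$ being strictly positive definite, given that $X \succeq 0$).

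There is no serious obstacle here; the only subtlety is being careful that we invoke $Y^{-1/2}$ as a genuine Hermitian positive-definite operator (hence self-adjoint and invertible), which relies on $Y \succ 0$ rather than merely $Y \succeq 0$. The rest is a routine factorization argument and a kernel computation.
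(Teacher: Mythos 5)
Your proof is correct. It differs from the paper's argument mainly in the mechanism used to exhibit positivity: you invoke the spectral theorem to produce the Hermitian positive-definite square root $Y^{-1/2}$ and then write $XY^{-1}X = Z^\dagger Z$ with $Z = Y^{-1/2}X$, so that positive semi-definiteness and Hermiticity come for free from the Gram-operator form, and the ``moreover'' part reduces to $\ker Z = \ker X$ (using invertibility of $Y^{-1/2}$). The paper avoids the square root entirely: it observes directly that $\bra{v} XY^{-1}X \ket{v} = \langle Xv,\, Y^{-1}Xv\rangle \ge 0$ because $Y^{-1}$ is positive definite, and for strict definiteness uses that $\langle w, Y^{-1}w\rangle = 0$ forces $w = 0$, hence failure of strict positivity occurs exactly when $Xv = 0$ for some $v \neq 0$. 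Both routes hinge on the same kernel computation ($XY^{-1}X \succ 0$ iff $\ker X = \{0\}$ iff $X$ invertible), so the content is identical; your factorization is slightly heavier machinery (it needs the existence and self-adjointness of $Y^{-1/2}$, which you correctly flag as requiring $Y \succ 0$), while the paper's quadratic-form argument is more elementary and works verbatim with the generalized inverse on the support if one ever wanted to weaken the hypothesis on $Y$.
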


\begin{proof}
For vectors \(v, w \in \mathbb{C}^n\), we define,
\(
\langle v, w \rangle := v^\dagger w.
\)
Then \(\langle \cdot, \cdot \rangle\) defines an inner product on \(\mathbb{C}^n\). Now as \(X\) is positive semi-definite, for all non-zero vectors \(v \in \mathbb{C}^n\),
\(
  \langle v, Xv \rangle \geq 0.
\)
Since \(Y\) is positive definite, its inverse \(Y^{-1}\) exists and is also positive definite. The matrix \(XY^{-1}X\) is Hermitian since both \(X\) and \(Y\) are Hermitian. 

\medskip
\noindent Consider an arbitrary vector \(v \in \mathbb{C}^n\) and also consider the
inner product \(\langle v, XY^{-1}Xv \rangle.\) 

\noindent Define \(w := Xv\). Then \(\langle v, XY^{-1}Xv \rangle=\langle w,Y^{-1}w\rangle\). Since \(Y\), is positive definite, for all \(w \in \mathbb{C}^{n}\), \(\langle w, Y^{-1}w \rangle \geq 0.\) Therefore, \(XY^{-1}X\) is also a positive semi- definite matrix. 
\medskip

\noindent Note that \(XY^{-1}X\) fails to be strictly positive definite precisely if there exists a nonzero vector \(v\) such that
  \begin{equation}
    \langle v, \,XY^{-1}Xv \rangle \;=\; 0.
  \end{equation}
  However, we have
  \begin{equation}
    \langle v, \,XY^{-1}Xv \rangle 
    \;=\;
    \langle Xv, \,Y^{-1}Xv \rangle.
  \end{equation}
  Since \(Y^{-1}\) is strictly positive definite, \(\langle w, \,Y^{-1}w\rangle = 0\) if and only if \(w=0\). Hence,
  \begin{equation}
    \langle Xv, \,Y^{-1}Xv \rangle = 0
    \quad \text{iff} \quad
    Xv = 0.
  \end{equation}
  Thus, if \(X\) has a nontrivial kernel, there is a nonzero \(v\) with \(Xv=0\), leading to \(\langle v, XY^{-1}Xv\rangle = 0\). This shows that if \(X\) is not invertible, then \(XY^{-1}X\) is not strictly positive definite. 

  \medskip

\noindent Conversely, if \(X\) is invertible, then \(Xv=0\) implies \(v=0\). Hence the only way \(\langle v, XY^{-1}Xv\rangle=0\) can hold is if \(v=0\).

  \medskip

\noindent Therefore, \(XY^{-1}X\) is always Hermitian and positive semi-definite. It is strictly positive definite if and only if \(X\) is invertible.
\end{proof}

\medskip

\begin{corollary}\label{cor:semdef}
    The matrix \(M_{c}''(M_{o}')^{-1}M_{c}''\) is a Hermitian and a positive semi-definite matrix.
\end{corollary}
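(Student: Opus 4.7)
The plan is to observe that Corollary \ref{cor:semdef} is an immediate application of Lemma \ref{lemma:semdef} once we verify its two hypotheses for the specific operators arising in our construction. Concretely, I would set $Y := M_o'$ and $X := M_c''$ and check that $Y$ is Hermitian and strictly positive definite while $X$ is Hermitian and positive semi-definite; the conclusion $XY^{-1}X \succeq 0$ then gives exactly the claim.

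First I would verify that $M_o' = U_k M_o U_k^\dagger$ is Hermitian positive definite, where $U_k$ is the $\mathsf{QOTP}$ unitary $\bigotimes_{j=1}^{d_1} X^{k_{2j-1}}Z^{k_{2j}}$. By our construction assumption, $M_o$ is restricted to be strictly positive definite, and unitary conjugation preserves both Hermiticity and the strict positivity of the spectrum (eigenvalues are invariant under conjugation by a unitary). Thus $M_o' \succ 0$, so $(M_o')^{-1}$ exists and is itself Hermitian positive definite.

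Next I would verify that $M_c'' = V M_c' V^\dagger$ is Hermitian positive semi-definite. Here $M_c' = U_{k'} M_c U_{k'}^\dagger$ is a density matrix (Hermitian, $\succeq 0$, unit trace) because $\mathsf{QOTP}$ encryption of a density matrix is again a density matrix. The isometric embedding $V$ satisfies $V^\dagger V = I_{\mathcal{H}_{M_c}}$, and for any $\ket{\phi}\in\mathcal{H}_M$ we have $\bra{\phi} V M_c' V^\dagger \ket{\phi} = \bra{V^\dagger\phi} M_c' \ket{V^\dagger\phi} \geq 0$, so $M_c'' \succeq 0$. Hermiticity follows from $(V M_c' V^\dagger)^\dagger = V (M_c')^\dagger V^\dagger = V M_c' V^\dagger$.

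With both hypotheses verified, Lemma \ref{lemma:semdef} directly yields that $M_c''(M_o')^{-1}M_c''$ is Hermitian and positive semi-definite. There is no substantive obstacle in this proof; the only mild subtlety is noting that the padded operator $M_c''$ generally fails to be strictly positive definite (since the ancilla qubits are fixed to $\ket{0}$, so $M_c''$ has a nontrivial kernel on the complementary subspace). Hence we cannot conclude strict positive definiteness of $M_c''(M_o')^{-1}M_c''$, only positive semi-definiteness, which is exactly what the corollary asserts.
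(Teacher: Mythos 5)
Your proposal is correct and follows exactly the paper's route: the paper's proof of Corollary~\ref{cor:semdef} is precisely the one-line application of Lemma~\ref{lemma:semdef} with \(X=M_c''\) and \(Y=M_o'\). Your additional verification that \(M_o'\succ 0\) and \(M_c''\succeq 0\) (via unitary conjugation and the isometric padding) just makes explicit the hypotheses that the paper establishes elsewhere in the construction, so there is no substantive difference.
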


\begin{proof}
    By applying the Lemma \ref{lemma:semdef}. with \(X=M_{c}''\) and \(Y=M_{o}'\), the result follows.
\end{proof}

\medskip

\noindent We now state one of the main theorems of our paper.

\begin{theorem}\label{thm:main}
\noindent Given a security parameter \(\mathsf{negl}(\lambda) >0\), with $\eta \in \mathbb{Z}^+$ such that $\dfrac{1}{\eta} < \mathsf{negl}(\lambda)$ both the original and the anamorphic quantum states \(M_{f}^{(0)}\) and \(M_{f}^{(1)}\) are quantum density matrices, if 
\begin{equation}
  \frac1{\eta^2}\;
  \|M_c''\,(M_o')^{-1}\,M_c''\|
  \;\;\le\;\;
  \frac14\,\lambda_{\min}(M_o'),
\end{equation} 
where \(\lambda_{\min}(M_o')\) is the minimum eigenvalue of \(M_o'\) on its support, \(\|\cdot\|\) is the operator norm and \(M_{o}'\) is a strictly positive definite matrix.
\end{theorem}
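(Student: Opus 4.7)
The plan is to exploit the fact that unitary conjugation preserves Hermiticity, positivity, and trace, so the problem reduces to showing that $M_{a}^{(0)}$ and $M_{a}^{(1)}$ are density matrices (then $M_{f}^{(b)}=U_{\sigma_{l}}M_{a}^{(b)}U_{\sigma_{l}}^{\dagger}$ are density matrices automatically). For $M_{a}^{(0)}$ the claim is immediate: it is block-diagonal with $\frac{1}{2}M_{o}'$ on each diagonal block, and since $M_{o}'=\mathsf{QOTPEnc}(M_{o},k)$ is a density matrix, $M_{a}^{(0)}$ is Hermitian, positive semi-definite, and has trace $\frac{1}{2}\mathrm{Tr}(M_{o}')+\frac{1}{2}\mathrm{Tr}(M_{o}')=1$.

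The nontrivial work is for $M_{a}^{(1)}$. Hermiticity is built into the construction, since the $(0,1)$ and $(1,0)$ blocks are explicitly placed as $\frac{1}{\eta}M_{c}''$ and $\frac{1}{\eta}(M_{c}'')^{\dagger}$; moreover, $M_{c}''=VM_{c}'V^{\dagger}$ is itself Hermitian because $M_{c}'$ is a density matrix and $V$ is an isometry, so $(M_{c}'')^{\dagger}=M_{c}''$. Trace preservation follows again from $\mathrm{Tr}(M_{o}')=1$ and the fact that the off-diagonal blocks contribute nothing to the trace. The only remaining (and nontrivial) property is positive semi-definiteness.

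For positive semi-definiteness I would apply the Schur complement criterion (Corollary \ref{cor:schur}) to the $2\times 2$ block form of $M_{a}^{(1)}$. Since we assume $M_{o}'\succ 0$, the top-left block $\frac{1}{2}M_{o}'$ is invertible and strictly positive, so the Schur complement is well-defined and equals
\begin{equation}
  S \;=\; \tfrac{1}{2}M_{o}' \;-\; \tfrac{1}{\eta}(M_{c}'')^{\dagger}\bigl(\tfrac{1}{2}M_{o}'\bigr)^{-1}\tfrac{1}{\eta}M_{c}'' \;=\; \tfrac{1}{2}M_{o}' \;-\; \tfrac{2}{\eta^{2}}\,M_{c}''\,(M_{o}')^{-1}\,M_{c}''.
\end{equation}
By Corollary \ref{cor:semdef}, the operator $X:=\frac{2}{\eta^{2}}M_{c}''(M_{o}')^{-1}M_{c}''$ is Hermitian and positive semi-definite, so by Lemma \ref{lemma:maxeig}, $\lambda_{\max}(X)=\|X\|$. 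Applying Corollary \ref{cor:minmax} with $Y=X$, the inequality $S\succeq 0$ will follow from $\lambda_{\max}(X)\le \lambda_{\min}(\frac{1}{2}M_{o}')=\frac{1}{2}\lambda_{\min}(M_{o}')$, which is precisely the hypothesis
\begin{equation}
  \tfrac{1}{\eta^{2}}\,\|M_{c}''(M_{o}')^{-1}M_{c}''\| \;\le\; \tfrac{1}{4}\,\lambda_{\min}(M_{o}')
\end{equation}
after multiplying through by $2$. Thus Corollary \ref{cor:schur} gives $M_{a}^{(1)}\succeq 0$.

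The main obstacle I anticipate is not any single inequality but bookkeeping around the support structure: we must ensure that $(M_{o}')^{-1}$ in the Schur complement is used in the sense of the generalized inverse on $\mathrm{supp}(M_{o}')$ and that $M_{c}''$ acts within this support (otherwise the expression $(M_{c}'')^{\dagger}(M_{o}')^{-1}M_{c}''$ is undefined and the Schur condition of Theorem \ref{thm:schur} must be stated with the Moore--Penrose pseudo-inverse). The hypothesis that $M_{o}'$ is strictly positive definite removes this subtlety and makes $(M_{o}')^{-1}$ the usual inverse (Theorem \ref{thm:MPisInverse}), so $\lambda_{\min}(M_{o}')>0$ and the bound is meaningful; the remainder is a direct application of Corollary \ref{cor:schur}, Corollary \ref{cor:minmax}, and Lemma \ref{lemma:maxeig}. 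Finally, conjugating by the permutation unitary $U_{\sigma_{l}}$ yields that $M_{f}^{(0)}$ and $M_{f}^{(1)}$ are themselves valid density matrices, completing the proof.
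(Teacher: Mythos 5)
Your proposal is correct and follows essentially the same route as the paper's own proof: reduce to the block matrices $M_{a}^{(0)}$, $M_{a}^{(1)}$ via unitary invariance, verify Hermiticity and unit trace directly, and establish positive semi-definiteness of $M_{a}^{(1)}$ through the Schur complement criterion combined with Corollary \ref{cor:semdef}, Lemma \ref{lemma:maxeig}, and the eigenvalue comparison of Corollary \ref{cor:minmax}, exactly as the paper does. Your remark on the generalized inverse versus the true inverse matches the paper's invocation of Theorem \ref{thm:MPisInverse}, so there is no gap.
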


\begin{proof}
We consider both the block matrices
\begin{equation}
M_{a}^{(0)} 
\;=\; 
 \begin{pmatrix} \frac{1}{2}M_o' & \mathbf{0}_{2^{d_{1}} \times 2^{d_{1}}} \\ \mathbf{0}_{2^{d_{1}} \times 2^{d_{1}}} & \frac{1}{2}M_o' \end{pmatrix} \; 
\text{and}\;
M_{a}^{(1)} 
\;=\; 
\begin{pmatrix} 
\tfrac{1}{2}\,M_o' & \tfrac{1}{\eta}\,M_c'' \\[6pt] 
\tfrac{1}{\eta}\,\bigl(M_c''\bigr)^{\dagger} & \tfrac{1}{2}\,M_o' 
\end{pmatrix},
\end{equation}
where \(M_o'\in \mathcal{D}(\mathcal{H}_M)\) is the \(\mathsf{QOTP}\)‐encrypted version of a density matrix \(M_o\).  Since \(\mathsf{QOTP}\) preserves positivity and trace, 
\(\,M_o' \succ 0\) and \(\operatorname{Tr}(M_o')=1\). Therefore, the matrix \(M_{a}^{(0)}\) is a density matrix. 

The encrypted covert matrix \(M_c'' \in \mathcal{D}(\mathcal{H}_{M})\) is another operator obtained by encrypting the covert message \(M_{c}\) using \(\mathsf{QOTP}\) and then padding the encrypted covert message \(M_{c}'\). Since the matrix \(M_{c}\) is also a density matrix, and since \(\mathsf{QOTP}\) preserves positivity and trace, 
\(\,M_{c}' \succeq 0\) and \(\operatorname{Tr}(M_{c}')=1\). As \(
  \mathcal{E}^{\mathrm{pad}}(M_{c}')
  \;=\;
  V \,M_{c}' \, V^\dagger=M_{c}'',
\) and \(\mathcal{E}^{\mathrm{pad}}\) is an isometry and completely positive by construction of \(V\), \(M_{c}'' \succeq 0\) and \(\Tr(M_{c}'')=1\). As \(M_{c}\) is Hermitian, \(M_{c}'\) is also Hermitian, and consequently \(M_{c}''\) is also Hermitian. The parameter \(\eta>0\) is used to scale the off‐diagonal blocks. It is clear that \(\operatorname{Tr}(M_{a})=1\). Therefore, we only analyze whether \(M_{a}\) is a positive semi-definite \((M_{a} \succeq 0)\) matrix. Writing
\begin{equation}
  M_{a}
  \;=\;
  \begin{pmatrix}
    A & B \\[3pt]
    B^\dagger & A
  \end{pmatrix},
  \quad\text{where}\quad
  A \;=\;\tfrac12\,M_o',\quad
  B \;=\;\tfrac1\eta\,M_c'',
\end{equation}
we note that
\begin{equation}
  A = \tfrac12\,M_o' \;\succ\; 0
  \quad\text{iff}\quad
  M_o' \;\succ\; 0,
\end{equation}
which is true by hypothesis.  The potential problem for positivity arises from the off‐diagonal blocks \(B\) and \(B^\dagger\).  

\noindent By the Schur complement condition \ref{thm:schur} for the positivity of a \(2\times 2\) block matrix, is that
\begin{equation}
  A \;\succ\;0,
  \quad
  \text{and}\quad
  A - B\,A^{+}\,B^\dagger \;\succeq\;0
  \quad,
\end{equation}
where \(A^{+}\) denotes the generalized Moore Penrose inverse on the \(\textit{supp}(A)\). By the Theorem \ref{thm:MPisInverse}. We have \((M_{o}')^{+}=(M_{o}')^{-1}\). Substituting 
\(\,A=\tfrac12\,M_o'\) and \(\,B=\tfrac1\eta\,M_c''\), we obtain
\begin{equation}
  \tfrac12\,M_o'
  \;-\;
  \Bigl(\tfrac1\eta\,M_c''\Bigr)\,
  \Bigl(\tfrac12\,M_o'\Bigr)^{-1}
  \Bigl(\tfrac1\eta\,M_c''\Bigr)^\dagger
  \;\;\succeq\;0
  \quad\text{iff}\quad
  \tfrac12\,M_o'
  \;-\;
  \tfrac1{\eta^2}\,
  (M_c'')\,
  \Bigl(\tfrac12\,M_o'\Bigr)^{-1}
  (M_c'')^\dagger
  \;\;\succeq\;0.
\end{equation}

\noindent We prove that a sufficient condition is to require
\begin{equation}
  \frac1{\eta^2}\;
  \|(M_c'')^\dagger\,(M_o')^{-1}\,M_c''\|
  \;\;\le\;\;
  \frac14\,\lambda_{\min}(M_o'),
\end{equation}
where \(\lambda_{\min}(M_o')\) is the minimum eigenvalue of \(M_o'\) on its support and \(\|\cdot\|\) is the operator norm. Since $A = \tfrac12\,M_o'$, and $M_o'\succ 0$ is invertible on its support, we can write
\begin{equation}
  A^{-1}
  \;=\;
  \bigl(\tfrac12\,M_o'\bigr)^{-1}
  =
  2\,\bigl(M_o'\bigr)^{-1}
  \quad\text{on }
  \textit{supp}(M_o').
\end{equation}
Let $B = \tfrac1\eta\,M_c''$. Then $B^\dagger = \tfrac1\eta\,(M_c'')^\dagger = \tfrac1\eta\,M_c''$ (since $M_c''\succeq0$, and \(M_{c}''\) is also Hermitian).

\noindent Hence,
\begin{equation}
  B^\dagger\,A^{-1}\,B
  =
  \Bigl(\tfrac1\eta\,M_c''\Bigr)\,
  \Bigl(2\,(M_o')^{-1}\Bigr)\,
  \Bigl(\tfrac1\eta\,M_c''\Bigr)
  =
  \tfrac{2}{\eta^2}\,
  M_c''\,(M_o')^{-1}\,M_c''.
\end{equation}
Thus,
\begin{equation}
  A - B^\dagger A^{-1}B
  =
  \tfrac12\,M_o'
  \;-\;
  \tfrac{2}{\eta^2}\,
  M_c''\,(M_o')^{-1}\,M_c''.
\end{equation}

\noindent To ensure
\begin{equation}
  \tfrac12\,M_o'
  \;-\;
  \tfrac{2}{\eta^2}\,
  M_c''\,(M_o')^{-1}\,M_c''
  \;\;\succeq\;0,
\end{equation}
we show that a sufficient condition, is that the largest eigenvalue of $\tfrac{2}{\eta^2}\,M_c''\,(M_o')^{-1}\,M_c''$ \emph{does not exceed} the smallest eigenvalue of $\tfrac12\,M_o'$. 

\medskip

\noindent Let 
\begin{equation}
  X 
  \;=\;
  \tfrac12\,M_o',
  \quad
  Y 
  \;=\;
  \tfrac{2}{\eta^2}\,
  M_c''\,(M_o')^{-1}\,M_c''.
\end{equation}
We want $(X - Y) \succeq 0$. We know that if $X,Y$ are positive semi-definite, then 
\begin{equation}
  \lambda_{\max}(Y) \;\le\; \lambda_{\min}(X) 
 \quad \text{implies} \quad (X - Y) \;\succeq 0,
\end{equation}
where $\lambda_{\max}$ and $\lambda_{\min}$ denote the maximum and minimum eigenvalues on the relevant support.

\medskip

\noindent Now,
\begin{equation}
  \lambda_{\min}(X)
  = 
  \lambda_{\min}\Bigl(\tfrac12\,M_o'\Bigr)
  =
  \tfrac12\,\lambda_{\min}(M_o'),
\end{equation}
since scaling an operator by $\tfrac12$ scales all eigenvalues by $\tfrac12$,

\medskip
\noindent and
\begin{equation}
  \lambda_{\max}(Y)
  \;=\;
  \lambda_{\max}\Bigl(\tfrac{2}{\eta^2}\,M_c''\,(M_o')^{-1}\,M_c''\Bigr)
  =
  \tfrac{2}{\eta^2}\,
  \lambda_{\max}\Bigl(M_c''\,(M_o')^{-1}\,M_c''\Bigr).
\end{equation}
Because $M_c'', (M_o')^{-1}$, are positive semi-definite and positive definite, respectively, by the Corollary \ref{cor:semdef}, $M_c''\,(M_o')^{-1}\,M_c''$ is positive semi-definite, and by Lemma \ref{lemma:maxeig}, we get $\lambda_{\max}\bigl(M_c''\,(M_o')^{-1}\,M_c''\bigr) = \|(M_c'')^\dagger\,(M_o')^{-1}\,M_c''\|$, i.e.\ the spectral norm of that product.

\medskip

\noindent Therefore, we need
\begin{equation}
  \tfrac{2}{\eta^2}\,
  \lambda_{\max}\bigl(M_c''\,(M_o')^{-1}\,M_c''\bigr)
  \;\;\le\;\;
  \tfrac12\,\lambda_{\min}(M_o'),
\end{equation}
which is equivalent to
\begin{equation}
  \frac1{\eta^2}\,
  \lambda_{\max}\bigl(M_c''\,(M_o')^{-1}\,M_c''\bigr)
  \;\;\le\;\;
  \frac14\,\lambda_{\min}(M_o').
\end{equation} Therefore,
\begin{equation}\label{eq:psmcond}
  \frac1{\eta^2}\;
  \bigl\|\,
    (M_c'')^\dagger\,(M_o')^{-1}\,M_c''
  \bigr\|
  \;\;\le\;\;
  \frac14\,\lambda_{\min}(M_o').
\end{equation} 
This condition forces \(M_{a} \succeq 0\).  Since \(M_o'\) is a strictly positive-definite density operator, \(\lambda_{\min}(M_o')> 0\). Since \((M''_{c})^{\dagger}=M''_{c}\), by making \(\eta\) sufficiently large, one can always satisfy \(\,\tfrac1{\eta^2}\;\|M_c''\,(M_o')^{-1}\,M_c''\|\;\;\le\;\;\tfrac14\,\lambda_{\min}(M_o')\).
Since the permutation matrices are unitary matrices, they preserve the positive semi-definiteness and unit trace. Hence, both the matrices \(M_{f}^{(0)}\) and \(M_{f}^{(1)}\) are positive semi-definite with unit trace. Therefore, both the original and anamorphic matrices are quantum density matrices.
\end{proof}

\medskip

\begin{corollary}\label{cor:main}(A weaker sufficient condition)
\noindent Given a security parameter \(\mathsf{negl}(\lambda) >0\), with $\eta \in \mathbb{Z}^+$ such that $\dfrac{1}{\eta} < \mathsf{negl}(\lambda)$ both the original and the anamorphic quantum states \(M_{f}^{(0)}\) and \(M_{f}^{(1)}\) are quantum density matrices, if  
\begin{equation}
  \frac{2\lambda_{\max}(M_c'')}{\lambda_{\min}(M_o')} \leq\eta,
\end{equation}
where \(\lambda_{\min}(M_o')\) and \(\lambda_{\max}(M_c'')\) are the minimum and maximum eigenvalues of \(M_o'\) and \(M_{c}''\) on their respective supports.
\end{corollary}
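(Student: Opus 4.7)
The plan is to show that the condition in the corollary is a strengthening of the sufficient condition from Theorem~\ref{thm:main}, so positivity follows from the same Schur-complement argument already carried out there. That is, I will verify the inequality
\[
\frac{1}{\eta^{2}}\,\bigl\|(M_c'')^{\dagger}\,(M_o')^{-1}\,M_c''\bigr\|\;\le\;\frac{1}{4}\,\lambda_{\min}(M_o')
\]
under the hypothesis $\eta\ge 2\lambda_{\max}(M_c'')/\lambda_{\min}(M_o')$, and then invoke Theorem~\ref{thm:main} to conclude that both $M_f^{(0)}$ and $M_f^{(1)}$ are valid quantum density matrices.

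First, using submultiplicativity of the operator norm together with the fact that $(M_c'')^{\dagger}=M_c''$ (Hermiticity of the padded covert block), I would bound
\[
\bigl\|M_c''\,(M_o')^{-1}\,M_c''\bigr\|\;\le\;\|M_c''\|\,\bigl\|(M_o')^{-1}\bigr\|\,\|M_c''\|\;=\;\lambda_{\max}(M_c'')^{2}\cdot\frac{1}{\lambda_{\min}(M_o')},
\]
where in the last step I use that $M_c''$ and $M_o'$ are positive semi-definite (in fact $M_o'$ is strictly positive definite), so the spectral norm equals the largest eigenvalue (Lemma~\ref{lemma:maxeig}) and $\|(M_o')^{-1}\|=1/\lambda_{\min}(M_o')$.

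Dividing by $\eta^{2}$ and substituting the hypothesis $\eta\ge 2\lambda_{\max}(M_c'')/\lambda_{\min}(M_o')$, which is equivalent to $\lambda_{\max}(M_c'')\le \tfrac{\eta}{2}\lambda_{\min}(M_o')$, I obtain
\[
\frac{1}{\eta^{2}}\,\bigl\|M_c''\,(M_o')^{-1}\,M_c''\bigr\|\;\le\;\frac{\lambda_{\max}(M_c'')^{2}}{\eta^{2}\,\lambda_{\min}(M_o')}\;\le\;\frac{1}{\eta^{2}\,\lambda_{\min}(M_o')}\cdot\frac{\eta^{2}\,\lambda_{\min}(M_o')^{2}}{4}\;=\;\frac{1}{4}\,\lambda_{\min}(M_o'),
\]
which is precisely the positivity condition~\eqref{eq:psmcond} established in Theorem~\ref{thm:main}. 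Applying that theorem yields the conclusion.

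There is no real obstacle here, since the corollary merely trades the tight spectral-norm inequality for a weaker but much more practically checkable condition in terms of extreme eigenvalues; the only point worth flagging is that the chain of inequalities is \emph{not} tight in general, so the lower bound $\eta \ge 2\lambda_{\max}(M_c'')/\lambda_{\min}(M_o')$ is sufficient but typically not necessary, and one should also verify that this $\eta$ is compatible with the security requirement $1/\eta<\mathsf{negl}(\lambda)$, which is immediate by choosing $\eta$ large enough to satisfy both constraints simultaneously.
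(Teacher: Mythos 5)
Your proposal is correct and follows essentially the same route as the paper's own proof: bound \(\|M_c''(M_o')^{-1}M_c''\|\) by \(\lambda_{\max}(M_c'')^{2}/\lambda_{\min}(M_o')\) via submultiplicativity, Hermiticity, and the identities \(\|M_c''\|=\lambda_{\max}(M_c'')\), \(\|(M_o')^{-1}\|=1/\lambda_{\min}(M_o')\), then substitute the hypothesis on \(\eta\) to recover the sufficient condition of Theorem~\ref{thm:main}. The only difference is cosmetic ordering of the inequalities, so nothing further is needed.
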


\begin{proof}
If 
\begin{equation}
  \frac{\lambda_{\max}(M_c'')}{\eta}
  \;\le\;
  \frac12\,\lambda_{\min}(M_o'),
\end{equation}

then, 

\begin{align}
\frac{\lambda_{\max}(M_c'')}{\eta}
\;\le\;
\frac12\,\lambda_{\min}(M_o') \quad \text{implies} \quad \frac1{\eta^2}\;
  \frac{\lambda_{\max}(M_c'')^2}{\lambda_{\min}(M_o') }
  \;\;\le\;\;
  \frac14\,\lambda_{\min}(M_o').
\end{align}

\noindent We know that for any two matrices \(A\) and \(B\) of compatible dimensions, \(\|AB\|\;\le\;\|A\|\;\|B\|\).
Therefore, we get, 
\begin{equation}
  \|(M_c'')^\dagger\,(M_o')^{-1}\,M_c''\|
  \;\;\le\;\;
  \|(M_c'')^\dagger\|\;\|(M_o')^{-1}\|\;\|M_c''\|.
\end{equation}

\noindent Since \(M_c''\) is positive semi-definite and Hermitian, \((M_c'')^\dagger = M_c''\) and hence
\(
  \|(M_c'')^\dagger\|
  =
  \|M_c''\|.
\)

\medskip
\noindent Therefore,
\begin{equation}
  \|(M_c'')^\dagger\,(M_o')^{-1}\,M_c''\|
  \;\;\le\;\;
  \|M_c''\|^2\;\|(M_o')^{-1}\|. \label{eq:72}
\end{equation}
Since \(M_c''\) is positive semi-definite, all its eigenvalues are non-negative, and the spectral norm equals the maximum eigenvalue \ref{lemma:maxeig}, 
\begin{equation}\label{eq:72a}
\|M_c''\| = \lambda_{\max}(M_c'').
\end{equation} Similarly, since \(M_o'\) is strictly positive-definite, \(\lambda_{\min}(M_o')>0\) and hence,
\begin{equation}
  \|(M_o')^{-1}\|
  = 
  \lambda_{\max}\!\bigl((M_o')^{-1}\bigr)
  = 
  \frac{1}{\lambda_{\min}(M_o')}. \label{eq:73}
\end{equation}

\noindent Combining the equations \ref{eq:72},\ref{eq:72a},\ref{eq:73}, we get,
\begin{equation}
  \|(M_c'')^\dagger\,(M_o')^{-1}\,M_c''\|
  \;\;\le\;\;
  \frac{\lambda_{\max}(M_c'')^2}{\lambda_{\min}(M_o')}.
\end{equation}

\noindent Hence,
\begin{align}
\frac1{\eta^2}\; \|(M_c'')^\dagger\,(M_o')^{-1}\,M_c''\| \leq \frac1{\eta^2} \frac{\lambda_{\max}(M_c'')^2}{\lambda_{\min}(M_o')} \leq \frac14\,\lambda_{\min}(M_o'),
\end{align}
which satisfies the sufficient condition we proved in Theorem \ref{thm:main}.
\end{proof}

\bigskip

\noindent \(\bullet\) \textbf{Adversary's Observations in the Two Games}

\medskip

i) \textbf{The real game  \( \text{RealG}_{\mathcal{Q}}(\lambda, \mathcal{D}): \)} 
In the real game, only the original message $M_o$ is encrypted and sent. We have the encrypted message \(M_o' = U_k M_o U_k^\dagger\) by the \(\mathsf{QOTP}\) and the key \(k\). Then we have constructed the following message \(M_a^{(0)} = \begin{pmatrix} \frac{1}{2}M_o' & \mathbf{0}_{2^{d_{1}} \times 2^{d_{1}}} \\ \mathbf{0}_{2^{d_{1}} \times 2^{d_{1}}} & \frac{1}{2}M_o' \end{pmatrix}.\)
The final state we have is \(M_f^{(0)} = U_{\sigma_l} M_a^{(0)} U_{\sigma_l}^\dagger \in \mathcal{D}(\mathcal{H}_{C}).\)

\medskip

ii) \textbf{The anamorphic game \( \text{AnamorphicG}_{\mathcal{Q}_a}(\lambda, \mathcal{D}) \)}
In the anamorphic game, both $M_o$ and $M_c$ are encrypted and sent. The encrypted message is \(M_o' = U_k M_o U_k^\dagger, \quad M_c' = U_{k'} M_c U_{k'}^\dagger.\) Then we have combined both the original message and the hidden message as \(M_a^{(1)}=M_{a}.\) Finally, we have constructed the state  \(M_f^{(1)} = U_{\sigma_l} M_a^{(1)} U_{\sigma_l}^\dagger \in \mathcal{D}(\mathcal{H}_{C}).\)

\medskip

In both games, the adversary receives the state $M_f^{(b)}$ for $b \in \{0,1\}$, without knowledge of  $l$, $k$, $k'$, \(d_{1}\), \(d_{2}\), \(\eta\). The expectation \(\mathbb{E}_{l}[M_{f}^{(0)}]\) and \(\mathbb{E}_{l}[M_{f}^{(1)}]\) represent the average state the adversary would get if they sampled many ciphertexts using random keys. 

\medskip

\begin{theorem}\label{thm:exp}
The expectations of the original and the anamorphic states are
\begin{equation}
\mathbb{E}_{l,d_{1},k}[M_{f}^{(0)}] =\frac{1}{2^{d_1 + 1}} I_{2^{d_1 + 1}},
\end{equation}
where $I_{2^{d_1 + 1}}$ is the identity matrix of dimension $2^{d_1 + 1}$ and after considering expectation \(\mathbb{E}_{k}[M_{o}']\) and \(\mathbb{E}_{k'}[M_{c}'']\) separately we get,
\begin{equation}
\mathbb{E}_l\bigl[M_f^{(1)}\bigr]
=\alpha\bigl(M_a^{(1)}\bigr)\, I + \beta\bigl(M_a^{(1)}\bigr)\, J
=\frac{2^{d_1+1} - 1-\frac{2}{\eta}}{2^{d_1+1}(2^{d_1+1}-1)} I \;+\; \frac{2/\eta}{2^{d_1+1}(2^{d_1+1}-1)} J,
\end{equation}
where the matrix \(J\) is such that \(J_{i,j}=1\) for all \(i,j \in [2^{d_{1}+1}]\).

\smallskip

\noindent Hence the trace distance between the expectations is \begin{equation}
D(\mathbb{E}_{l}[M_{f}^{(1)}],\mathbb{E}_{l}[M_{f}^{(0)}])=\frac{1}{\eta\,2^{d_1}}
\end{equation}
which is less than \(\mathsf{negl}(\lambda)\),

\noindent and 

\begin{equation}
D(\mathbb{E}_{l}[M_{f}^{(1)}],\mathbb{E}_{l}[M_{f}^{(1)'}])=0.
\end{equation}

for any two density matrices \(M_{f}^{(1)}\) and \(M_{f}^{(1)'}\), when the randomization is taken over two different keys \(l,d_{1},d_{2},k,k',\eta\) and \(\Tilde{l},\Tilde{d_{1}},\Tilde{d_{2}},\Tilde{k},\Tilde{k'},\Tilde{\eta}\) from the same set \(\operatorname{Sym}(n) \times [2^{d_{1}+1}] \times [2^{d_{1}+1}] \times \{0,1\}^{d_{1}} \times  \{0,1\}^{d_{2}} \times \mathcal{J}\), with uniform distribution.
\end{theorem}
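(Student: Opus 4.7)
The strategy is to peel off the three sources of randomness in the order $k,k'\to \sigma_l$: first average over the $\mathsf{QOTP}$ keys to reduce every block of $M_a^{(b)}$ to a highly symmetric form, and then average over the uniformly random permutation $U_{\sigma_l}\in \mathrm{Sym}(2^{d_1+1})$ to enter the commutant of the permutation representation. For $M_f^{(0)}$, Lemma~\ref{prop:otp} gives $\mathbb{E}_k[M_o']=I_{2^{d_1}}/2^{d_1}$, so the diagonal block form of $M_a^{(0)}$ collapses to $\mathbb{E}_k[M_a^{(0)}]=I_{2^{d_1+1}}/2^{d_1+1}$. Since any unitary commutes with a scalar multiple of the identity, $\mathbb{E}_l[U_{\sigma_l}(I/N)U_{\sigma_l}^\dagger]=I/N$, which gives the first formula with $N=2^{d_1+1}$.

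\textbf{Reducing $M_a^{(1)}$ modulo the keys.} Applying Lemma~\ref{prop:otp} to the diagonal blocks gives $\tfrac12 \mathbb{E}_k[M_o']=\tfrac{1}{2^{d_1+1}} I_{2^{d_1}}$. For the off-diagonal blocks, writing $M_c''=V M_c' V^\dagger$ and using $\mathbb{E}_{k'}[M_c']=I_{2^{d_2}}/2^{d_2}$ yields $\mathbb{E}_{k'}[M_c'']=\tfrac{1}{2^{d_2}}\,V V^\dagger$, where $VV^\dagger$ is the computational-basis projector onto the padded subspace. In particular $VV^\dagger$ is diagonal with exactly $2^{d_2}$ ones, so $\mathrm{Tr}(VV^\dagger)=\mathbf{1}^\top VV^\dagger \mathbf{1}=2^{d_2}$. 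Hence, each of the two off-diagonal blocks contributes $1/\eta$ to the total entry-sum and $0$ to the trace.

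\textbf{Symmetric-group twirl.} The permutation representation of $\mathrm{Sym}(N)$ on $\mathbb{C}^N$ decomposes as trivial $\oplus$ standard, so by Schur's lemma its commutant is two-dimensional, spanned by $I$ and $J$ (the all-ones matrix). Consequently, for any $N\times N$ matrix $M$,
\begin{equation}
    \mathcal{T}_{\mathrm{Sym}(N)}(M) \;=\; a\,I + b\,(J-I), \qquad a=\tfrac{1}{N}\mathrm{Tr}(M),\ \ b=\tfrac{1}{N(N-1)}\bigl(\mathbf{1}^\top M\mathbf{1}-\mathrm{Tr}(M)\bigr),
\end{equation}
which is rewritten as $\alpha I+\beta J$ with $\alpha=a-b$ and $\beta=b$. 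Substituting $\mathrm{Tr}(\mathbb{E}_{k,k'}[M_a^{(1)}])=1$ and $\mathbf{1}^\top\mathbb{E}_{k,k'}[M_a^{(1)}]\mathbf{1}=1+\tfrac{2}{\eta}$ (the ``$1$'' from the two diagonal blocks and the ``$2/\eta$'' from the two off-diagonal ones) gives exactly the stated $\alpha,\beta$.

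\textbf{Trace distances.} A direct algebraic check shows $\alpha-\tfrac{1}{N}=-\beta$, so $\mathbb{E}_l[M_f^{(1)}]-\mathbb{E}_l[M_f^{(0)}]=\beta(J-I)$. The matrix $J-I$ has spectrum $\{N-1,\,-1,\ldots,-1\}$, giving $\|J-I\|_1=2(N-1)$, and hence
\begin{equation}
    D\bigl(\mathbb{E}_l[M_f^{(1)}],\mathbb{E}_l[M_f^{(0)}]\bigr) \;=\; \tfrac{1}{2}\,|\beta|\,\|J-I\|_1 \;=\; \beta(N-1) \;=\; \tfrac{1}{\eta\,2^{d_1}},
\end{equation}
which is below $\mathsf{negl}(\lambda)$ by the choice of $\eta$. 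For the final identity, observe that after averaging the keys and the permutation, the expectation $\alpha I+\beta J$ depends on the random parameters only through $(d_1,\eta)$ via the closed-form coefficients above; since the two randomizations are drawn uniformly from the same parameter set, the two expectations are identically distributed and hence equal, giving trace distance $0$. The main obstacle is the twirl step: one must justify the two-dimensional commutant structure and carefully track the sum-of-entries of $VV^\dagger$, because any miscount there propagates into $\beta$ and breaks the telescoping identity $\alpha-1/N=-\beta$ that makes the final trace distance exactly $1/(\eta 2^{d_1})$.
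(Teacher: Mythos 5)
Your proof is correct and follows essentially the same route as the paper's: average the $\mathsf{QOTP}$ keys to collapse the blocks, apply the $\operatorname{Sym}(2^{d_1+1})$ twirl whose commutant is spanned by $I$ and $J$ with coefficients fixed by the trace and entry-sum invariants, and use the spectrum of $J-I$ to get $D=\tfrac{1}{\eta\,2^{d_1}}$. Your handling of the off-diagonal average as $\mathbb{E}_{k'}[M_c'']=2^{-d_2}VV^\dagger$ is in fact slightly more careful than the paper's shorthand $I_{2^{d_2}}/2^{d_2}$, but since the entry-sum equals $1$ either way, the coefficients $\alpha,\beta$ and all stated conclusions coincide.
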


\begin{proof}
We recall that the Quantum One-Time Pad (\(\mathsf{QOTP}\)) encryption of an $n$-qubit state $\rho$ with key $k \in \{0,1\}^{2n}$ is given by
\(
\mathsf{QOTPEnc}(\rho, k) = U_k \rho U_k^\dagger,
\)
where
\(
U_k = \bigotimes_{i=1}^{n} X^{k_{2i - 1}} Z^{k_{2i}}.
\) The expectation over all possible keys $k$ is
\begin{equation}
\mathbb{E}_{k}[ \mathsf{QOTPEnc}(\rho, k) ] = \frac{1}{2^{2n}} \sum_{k \in \{0,1\}^{2n}} U_k \rho U_k^\dagger = \frac{I_{2^n}}{2^n} \operatorname{Tr}(\rho) = \frac{I_{2^n}}{2^n},
\end{equation}
since $\operatorname{Tr}(\rho) = 1$.

\medskip

Let $M_o \in \mathcal{D}((\mathbb{C}^2)^{\otimes d_1})$ be an arbitrary $d_1$-qubit state (density matrix) representing the original matrix. The \(\mathsf{QOTP}\) encryption of $M_o$ with key $k \in \{0,1\}^{2d_1}$ is
\(
M_o' = U_k M_o U_k^\dagger.
\)

We compute the expectation over all keys $k$,
\begin{equation}\label{eq:expMo}
\mathbb{E}_k[M_o'] = \frac{1}{2^{2d_1}} \sum_{k \in \{0,1\}^{2d_1}} U_k M_o U_k^\dagger.
\end{equation}
Using the properties of Pauli operators and the fact that the set $\{ U_k \}_{k \in \{0,1\}^{2d_1}}$ forms an orthonormal basis for operators on $(\mathbb{C}^2)^{\otimes d_1}$ (up to normalization), we can express $M_o$ in terms of Pauli operators
\begin{equation}
M_o = \sum_{P \in \mathcal{P}} c_P P,
\end{equation}
where the sum is over all $ d_1$-qubit Pauli operators $P$, and $c_P = \frac{1}{2^{d_1}} \operatorname{Tr}(P M_{o} )$.
Then, we have
\begin{equation}
\mathbb{E}_k[M_o'] = \frac{1}{2^{2d_1}} \sum_{k \in \{0,1\}^{2d_1}} U_k \left( \sum_{P \in \mathcal{P}} c_P P \right) U_k^\dagger = \sum_{P \in \mathcal{P}} c_P \left( \frac{1}{2^{2d_1}} \sum_{k \in \{0,1\}^{2d_1}} U_k P U_k^\dagger \right).
\end{equation}
Note that for any Pauli operator $P$ (excluding the identity), we have
\(
\frac{1}{2^{2d_1}} \sum_{k\in \{0,1\}^{2d_1}} U_k P U_k^\dagger = 0.
\)
This is because the conjugation of $P$ by $U_k$ effectively randomizes $P$ over all possible Pauli operators, and their average is zero unless $P$ is the identity operator, and for $P = I$, we have
\(
\frac{1}{2^{2d_1}} \sum_{k\in \{0,1\}^{2d_1}} U_k I U_k^\dagger = I.
\)
Therefore,
\(
\mathbb{E}_k[M_o'] = c_I I,
\)
where
\(
c_I = \frac{1}{2^{d_1}} \operatorname{Tr}(I M_o ) = \frac{1}{2^{d_1}} \operatorname{Tr}( M_o ) = \frac{1}{2^{d_1}}.
\)
Thus,
\(
\mathbb{E}_k(M_{o}') = \frac{I_{2^{d_1}}}{2^{d_1}},
\)
and 
hence,
\(
\mathbb{E}_{k'}[M_c''] = \, \frac{I_{2^{d_2}}}{2^{d_2}}\, .
\)
\medskip

Similarly, for the covert message $M_{c} \in \mathcal{D}((\mathbb{C}^2)^{\otimes d_2})$, the encrypted state is
\(
M_{c}' = U_{k'} M_{c} U_{k'}^\dagger,
\)
with $k' \in \{0,1\}^{2d_2}$ and we have
\(
\mathbb{E}_{k'}[M_c'] = \frac{I_{2^{d_2}}}{2^{d_2}}.
\)

\medskip

We compute
\(
\mathbb{E}_l[ U_{\sigma_l} X U_{\sigma_l}^\dagger ],
\)
for a fixed matrix $X$ of compatible dimension. The set of all permutation matrices forms a group under multiplication. 

\medskip

Then we have,
\begin{equation}
\mathbb{E}_l[ M_{f}^{(1)} ] = \frac{1}{(2^{d_{1}+1})!} \sum_{\sigma_l \in \operatorname{Sym}({2^{d_{1}+1}})} U_{\sigma_{l}}  M_{f}^{(1)} U_{\sigma_{l}}^\dagger.
\end{equation}

\noindent The representation of \(\operatorname{Sym}(n)\) is defined by \( \pi: \operatorname{Sym}(n) \longrightarrow \mathcal{U}(\mathbb{C}^n) \) by
\begin{equation}
\pi(\sigma) \, e_i = e_{\sigma(i)}, \quad \text{for} \quad i=1,\ldots,n,
\end{equation}
where \( \{e_i\}_{i=1}^n \) is the standard orthonormal basis of \( \mathbb{C}^n \) and \(\mathcal{U}(\mathbb{C}^n)\) denotes the group of unitary operators on \( \mathbb{C}^n \), [See \cite{fulton2013representation, serre1977linear}]. It is well known that this representation is \textit{reducible}. 

\medskip

\noindent Let
\[
\mathcal{A} = \{ A \in \mathcal{L}(\mathbb{C}^n) : \, A\,\pi(\sigma) = \pi(\sigma) A,\; \forall\, \sigma\in \operatorname{Sym}(n) \}
\]
be the centralizer or the commutant of the representation \(\pi\).

\medskip
By the \textit{Double Commutant Theorem} and by applying \textit{Schur's lemma}, \(\mathcal{T}_{\operatorname{Sym}(n)}(\Phi) \in \mathcal{A}\) and \(\mathcal{A}=span\{I,\, J\}\), where \( I \) is the \( n\times n \) identity matrix and \( J \) is the \( n \times n \) matrix such that \( J_{ij}=1 \) for all \( i,j \in [n] \). Then, for any matrix \( \Phi \in\mathcal{L}(\mathbb{C}^n) \) we have,
\begin{equation}
\mathcal{T}_{\operatorname{Sym(n)}}(\Phi) = \alpha(\Phi)\, I + \beta(\Phi)\, J,
\end{equation}
for some scalars \(\alpha(\Phi),\beta(\Phi) \in \mathbb{C}\) that depend linearly on \(\Phi\). We will only denote \(\mathcal{T}_{\operatorname{Sym}(n)}(\Phi)\) by only \(\mathcal{T}(\Phi)\) if it is understood that we are considering the representation of \(\operatorname{Sym}(n)\) only.

\medskip

\noindent For any matrix \( \Phi\in\mathcal{L}(\mathbb{C}^n) \), the twirl
\(
\mathcal{T}(\Phi) = \mathbb{E}_l\bigl[U_{\sigma_{l}}\,\Phi\,U_{\sigma_{l}}^\dagger\bigr].
\)
We compute the coefficients \( \alpha(\Phi) \) and \( \beta(\Phi) \) in terms of two linear invariants of \( \Phi \):
\begin{equation}
T(\Phi) := \operatorname{Tr}(\Phi) \quad \text{and} \quad S(\Phi) := \sum_{i=1}^{n}\sum_{j=1}^n \Phi_{ij}.
\end{equation}
Taking the trace of \( \mathcal{T}(\Phi) \), we obtain
\begin{equation}
\alpha(\Phi) + \beta(\Phi) = \frac{T(\mathcal{T}(\Phi))}{n}.
\end{equation}
On the other hand, the sum of all entries of \( \mathcal{T}(\Phi) \) is
\begin{equation}
S\bigl(\mathcal{T}(\Phi)\bigr) = \alpha(\Phi) \, n + \beta(\Phi) \, n^2.
\end{equation}
The twirling map is \textit{trace–preserving}. As the permutation conjugation simply reorders the entries, the permutation twirling \textit{preserves the sum of all matrix elements}. Therefore,
\begin{equation}
S\bigl(\mathcal{T}(\Phi)\bigr) = S(\Phi) \quad \text{and} \quad T\bigl(\mathcal{T}(\Phi)\bigr) = T(\Phi).
\end{equation}
Thus,
\begin{equation}\label{eq:92}
\alpha(\Phi) \, n + \beta(\Phi) \, n^2 = S(\Phi) \quad \text{and} \quad n(\alpha(\Phi) + \beta(\Phi)) = T(\Phi). 
\end{equation}
Solving both the equations \ref{eq:92}, we get,
\begin{equation}
\beta(\Phi) = \frac{S(\Phi) - T(\Phi)}{n(n-1)} \quad \text{and} \quad \alpha(\Phi) = \frac{nT(\Phi) - S(\Phi)}{n(n-1)}.
\end{equation}

\medskip

\noindent Now, considering \(\Phi=M_{a}^{(0)}\), we get 
\begin{equation}
\alpha(M_{a}^{(0)})= \frac{2^{d_{1}+1} - S(\Phi)}{2^{d_{1}+1}(2^{d_{1}+1}-1)}.
\end{equation}
Since \( M_a^{(0)} \) is diagonal and constant along the diagonal, its off-diagonal entries are zero so that
\begin{equation}
S\bigl(M_a^{(0)}\bigr) = \sum_{i=1}^{2^{d_{1}+1}}\sum_{j=1}^{2^{d_{1}+1}} (M_a^{(0)})_{ij} = \sum_{i=1}^{2^{d_{1}+1}} \frac{1}{2^{d_{1}+1}} = 1.
\end{equation}
and hence, \(\alpha(M_{a}^{(0)})=\frac{1}{2^{d_1+1}}\) and \(\beta\bigl(M_a^{(0)}\bigr) = 0.\) Therefore, 
\begin{equation}
\mathbb{E}_l\bigl[M_f^{(0)}\bigr] = \alpha\bigl(M_a^{(0)}\bigr) \, I + \beta\bigl(M_a^{(0)}\bigr) \, J
= \frac{I}{2^{d_1+1}}.
\end{equation}

\medskip

\noindent Now, consider the quantum density matrix \(\Phi=M_{a}^{(1)}\). Then, after taking expectation \(\mathbb{E}_{k}[M_{o}']\) and \(\mathbb{E}_{k'}[M_{c}'']\) separately, we get, 
\(
T\bigl(M_a^{(1)}\bigr)=1 \quad \text{and} \quad S\bigl(M_a^{(1)}\bigr)= \Big(1 + \frac{2}{\eta}\Big).
\)

\noindent Computing the coefficients \(\alpha(\Phi)\) and \(\beta(\Phi)\), we get,
\begin{equation}
\alpha\bigl(M_a^{(1)}\bigr)
= \frac{2^{d_1+1} - 1-\frac{2}{\eta}}{2^{d_1+1}(2^{d_1+1}-1)} \quad \text{and} \quad \beta\bigl(M_a^{(1)}\bigr) = \frac{2/\eta}{2^{d_1+1}(2^{d_1+1}-1)}.
\end{equation}
Therefore,
\begin{equation}
\mathbb{E}_l\bigl[M_f^{(1)}\bigr]
=\alpha\bigl(M_a^{(1)}\bigr)\, I + \beta\bigl(M_a^{(1)}\bigr)\, J
=\frac{2^{d_1+1} - 1-\frac{2}{\eta}}{2^{d_1+1}(2^{d_1+1}-1)} I \;+\; \frac{2/\eta}{2^{d_1+1}(2^{d_1+1}-1)} J.
\end{equation}

\noindent Now, computing difference we get,
\begin{equation}
\mathbb{E}_l\bigl[M_f^{(1)}\bigr]- \mathbb{E}_l\bigl[M_f^{(0)}\bigr]
= \frac{2}{\eta 2^{d_1+1}(2^{d_1+1}-1)} (J-I).
\end{equation}

\medskip

\noindent It is well-known that, one eigen value of the matrix \((J-I)\) is \((2^{d_{1}+1}-1)\) and the eigenvalue \( -1 \) has multiplicity \((2^{d_{1}+1}-1)\).
\medskip

\noindent Therefore, the trace distance is
\begin{align}
D\Bigl(\mathbb{E}_l\bigl[M_f^{(1)}\bigr], \mathbb{E}_l\bigl[M_f^{(0)}\bigr]\Bigr)
&=\frac{1}{2}\,\|\mathbb{E}_l\bigl[M_f^{(1)}\bigr]- \mathbb{E}_l\bigl[M_f^{(0)}\bigr]\|_1 \notag\\
&=\frac{1}{2} \Big[ \frac{2}{\eta 2^{d_1+1}(2^{d_1+1}-1)} \|(J-I)\|_{1}\Big] \notag\\
&=\frac{1}{2}\cdot \frac{4}{\eta\,2^{d_1+1}(2^{d_1+1}-1)} (2^{d_1+1}-1)\notag\\
&=\frac{1}{\eta\,2^{d_1}}.
\end{align}

\noindent Hence, \(D\Bigl(\mathbb{E}_l\bigl[M_f^{(1)}\bigr], \mathbb{E}_l\bigl[M_f^{(0)}\bigr]\Bigr) < \mathsf{negl}(\lambda).\)

\medskip

Choosing two different keys \((l,d_{1},d_{2},k,k',\eta)\) and \((\Tilde{l},\Tilde{d_{1}},\Tilde{d_{2}},\Tilde{k},\Tilde{k'},\Tilde{\eta})\) from the same set \(\operatorname{Sym}(n) \times [2^{d_{1}+1}] \times [2^{d_{1}+1}] \times \{0,1\}^{d_{1}} \times  \{0,1\}^{d_{2}} \times \mathcal{J}\), with uniform distribution, it is easy to see that \(D(\mathbb{E}_{l}[M_{f}^{(1)}],\mathbb{E}_{l}[M_{f}^{(1)'}])=0.\)

\end{proof}

\medskip

\noindent Now, we prove the computational indistinguishability of the original and anamorphic ciphertexts.

\medskip

\begin{theorem}\label{thm:ind}
The original and anamorphic ciphertexts \(M_{f}^{(0)}\) and \(M_{f}^{(1)}\) are computationally indistinguishable.
\end{theorem}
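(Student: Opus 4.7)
The plan is to reduce the indistinguishability of $M_f^{(0)}$ and $M_f^{(1)}$ to a purely trace-distance estimate on the \emph{averaged} ciphertext states, and then invoke Lemma~\ref{lemma:advbd} to translate that bound into an upper bound on the QPT distinguishing advantage. Concretely, the three secrets the adversary is never given — the QOTP keys $k,k'$ and the permutation $\sigma_l$ — are sampled independently and uniformly for every oracle call (Definition~\ref{def:qind-qcpa}). Therefore, the state the adversary can physically interact with per query is the expectation $\mathbb{E}_{k,k',\sigma_l}[M_f^{(b)}]$, not any particular realization.

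First, I would invoke Theorem~\ref{thm:exp}. It shows that, after averaging over $k,k'$ (which by the $\mathsf{QOTP}$ perfect-secrecy Lemma~\ref{prop:otp} replaces $M_o'$ by $I/2^{d_1}$ and $M_c''$ by $V(I/2^{d_2})V^\dagger$) and over $\sigma_l$ (symmetric-group twirling), the two expected ciphertexts differ by
\[
\mathbb{E}_{k,k',\sigma_l}\!\bigl[M_f^{(1)}\bigr] - \mathbb{E}_{k,k',\sigma_l}\!\bigl[M_f^{(0)}\bigr]
\;=\;
\frac{2}{\eta\,2^{d_1+1}(2^{d_1+1}-1)}\,(J-I),
\]
so their trace distance is exactly $1/(\eta\,2^{d_1})$. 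Since $\eta\in\mathbb{Z}^+$ is chosen so that $1/\eta < \mathsf{negl}(\lambda)$ (which is consistent with the feasibility condition of Theorem~\ref{thm:main} and Corollary~\ref{cor:main}), this trace distance is negligible in $\lambda$.

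Second, Lemma~\ref{lemma:advbd} gives, for any QPT circuit family $\mathcal{F}$,
\[
\mathsf{Adv}_{\mathcal{F}}\!\Bigl(\mathbb{E}[M_f^{(0)}],\,\mathbb{E}[M_f^{(1)}]\Bigr)
\;\le\;
D\!\Bigl(\mathbb{E}[M_f^{(0)}],\,\mathbb{E}[M_f^{(1)}]\Bigr)
\;\le\;
\frac{1}{\eta\,2^{d_1}}
\;<\; \mathsf{negl}(\lambda),
\]
which directly handles the single-challenge case. To extend to the full \textsf{qIND-qCPA} experiment of Definition~\ref{def:qind-qcpa} with $q(\lambda)$ adaptive queries (including entanglement of the query registers with the adversary's private register $E$), I would run a standard hybrid argument: define hybrids $H_0,\dots,H_{q(\lambda)}$ where $H_i$ answers the first $i$ queries using $\mathcal{E}^{(1)}$ (the anamorphic channel) and the remaining queries using $\mathcal{E}^{(0)}$. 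Because fresh randomness is drawn on every call, the diamond-norm distance between $\mathcal{E}^{(0)}$ and $\mathcal{E}^{(1)}$ inherits the same $1/(\eta\,2^{d_1})$ bound (the channels are, after averaging, constant channels outputting the states computed in Theorem~\ref{thm:exp}, and constant-channel distances coincide with output-state trace distances even on entangled inputs). A union bound then gives the overall advantage $\le q(\lambda)/(\eta\,2^{d_1})$, still negligible.

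The main obstacle I anticipate is the diamond-norm/entangled-input step for the hybrid: a QPT adversary may submit challenge registers entangled with an arbitrarily large environment $E$, so the single-copy trace-distance bound of Theorem~\ref{thm:exp} must be upgraded to a statement about the induced channels $\mathcal{E}^{(b)}$ rather than their outputs on unentangled inputs. The cleanest resolution, which I would spell out in the full proof, is to observe that after averaging over $(k,k',\sigma_l)$ each $\mathcal{E}^{(b)}$ is a \emph{constant} (replacement) channel — its output is independent of the input register up to the dephasing/twirling computed in Theorem~\ref{thm:exp} — so its diamond norm equals the trace distance of its output states, and the hybrid estimate goes through verbatim for any entangled input.
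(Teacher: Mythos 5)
Your argument is internally consistent, but it proves a different (and weaker) statement than the one the paper needs, so there is a genuine gap. You bound the distance between the \emph{averaged} ciphertexts, $D\bigl(\mathbb{E}_{k,k',\sigma_l}[M_f^{(0)}],\,\mathbb{E}_{k,k',\sigma_l}[M_f^{(1)}]\bigr)=\tfrac{1}{\eta\,2^{d_1}}$ via Theorem~\ref{thm:exp}, which is only the adversary's view when \emph{all} of $k,\sigma_l,d_1$ are hidden. In the anamorphic threat model of this paper the distinguisher is the dictator, and by the transmission protocol (Algorithm~\ref{alg:transmission-dictator}) the dictator is explicitly handed $(l,d_1,k)$ together with the ciphertext; only $k'$ (and $\eta,d_2$) stay secret. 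Conditioned on known $k$ and $\sigma_l$ you may no longer replace $M_o'$ by $I/2^{d_1}$ nor twirl over $\mathrm{Sym}(2^{d_1+1})$, so Theorem~\ref{thm:exp} does not apply, and since the trace distance of averages only lower-bounds the average of trace distances, your $\tfrac{1}{\eta\,2^{d_1}}$ bound cannot be ``de-averaged'' into a statement about the actual pair $(M_f^{(0)},M_f^{(1)})$. The paper's proof avoids this entirely: it fixes the coins, uses unitary invariance of $\|\cdot\|_1$ to strip off $U_{\sigma_l}$, and computes the trace norm of the remaining off-diagonal block matrix
\begin{equation*}
A=\begin{pmatrix} 0 & -\tfrac{1}{\eta}M_c'' \\ -\tfrac{1}{\eta}(M_c'')^\dagger & 0\end{pmatrix},
\qquad \|A\|_1=\tfrac{2}{\eta}\operatorname{Tr}(M_c'')=\tfrac{2}{\eta},
\end{equation*}
giving $D(M_f^{(0)},M_f^{(1)})=\tfrac{1}{\eta}<\mathsf{negl}(\lambda)$ \emph{for every realization of the keys and permutation}, and then applies Lemma~\ref{lemma:advbd}. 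That per-realization bound is what makes ``indistinguishable even to the dictator holding the original key'' go through.

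Two further remarks. First, your hybrid/constant-channel machinery is not wrong, but it essentially reproduces what the paper proves separately as Theorem~\ref{thm:main-qind-qcpa} (for the key-hiding \textsf{qIND--qCPA} game); Theorem~\ref{thm:ind} is meant to be the simpler fixed-coin trace-distance statement, and your proposal conflates the two security notions. Second, if you want to salvage your route for the dictator adversary, you could average only over the still-secret $k'$ (Lemma~\ref{lem:covert-avg} gives $\mathbb{E}_{k'}[M_c'']=2^{-d_2}\Pi_V$), which again yields an off-diagonal block of trace norm $\tfrac{2}{\eta}$ and hence the same $\tfrac{1}{\eta}$ bound — but at that point you have reconstructed the paper's calculation, and the direct fixed-coin computation is both stronger and shorter.
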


\begin{proof}
In the real game, the adversary receives the original ciphertext
\begin{equation}
M_{f}^{(0)} = U_{\sigma_{l}}\begin{pmatrix} \frac{1}{2} M_o' & 0 \\ 0 & \frac{1}{2} M_o' \end{pmatrix}U_{\sigma_{l}}^{\dagger}.
\end{equation}
and in the anamorphic game, the adversary receives the following anamorphic ciphertext
\begin{equation}
M_{f}^{(1)}= U_{\sigma_l} \begin{pmatrix} \frac{1}{2} M_o' & \frac{1}{\eta}M_c'' \\ (\frac{1}{\eta}M_c'')^\dagger & \frac{1}{2} M_o' \end{pmatrix} U_{\sigma_l}^\dagger.
\end{equation}
We know the following inequality between adversarial advantage and the trace distance \(
\operatorname{Adv}_{\mathcal{D}}(\lambda) \leq D(M_{f}^{(0)}, M_{f}^{(1)})
\) \ref{lemma:advbd}.

\medskip

\noindent Now, we compute the trace distance between the real and the anamorphic ciphertexts,

\begin{align}
    D(M_{f}^{(0)}, M_{f}^{(1)}) &=  \frac{1}{2} \norm{M_{f}^{(0)} - M_{f}^{(1)} }_{1} \notag\\
    &= \frac{1}{2}\norm{U_{\sigma_{l}}\begin{pmatrix} \frac{1}{2} M_o' & 0 \\ 0 & \frac{1}{2} M_o' \end{pmatrix}U_{\sigma_{l}}^{\dagger} - U_{\sigma_l} \begin{pmatrix} \frac{1}{2} M_o' & \frac{1}{\eta}M_c'' \notag\\ \frac{1}{\eta}(M_c'')^\dagger & \frac{1}{2} M_o' \end{pmatrix} U_{\sigma_l}^\dagger}_{1} \notag\\
    &= \frac{1}{2}\norm{ U_{\sigma_l} \begin{pmatrix} 0 & -\frac{1}{\eta}M_c'' \\ -\frac{1}{\eta}(M_c'')^\dagger & 0 \end{pmatrix} U_{\sigma_l}^\dagger }_1 \notag\\
    &= \frac{1}{2}\norm{ \begin{pmatrix} 0 & -\frac{1}{\eta}M_c'' \\ -\frac{1}{\eta}(M_c'')^\dagger & 0 \end{pmatrix} }_{1} (\text{
since $U_{\sigma_l}$ is unitary and the trace norm is unitary invariant}).
\end{align}
As, \(M_{c}\) is a Hermitian, positive semi-definite matrix with \(\operatorname{Tr}(M_{c})=1\), after encrypting with the key \(k'\) the density matrix \(M_{c}'=\mathsf{QOTP}(M_{c}, k')\), remains as Hermitian, positive semi-definite and preserves the norm. Hence, \(\frac{1}{\eta}M_c'' = \frac{1}{\eta}(M_c'')^{\dagger}\). 

\noindent Denote the matrix \(\begin{pmatrix} 0 & -\frac{1}{\eta}M_c'' \\ -\frac{1}{\eta}(M_c'')^\dagger & 0 \end{pmatrix} \) by \(A\). 

\smallskip

\noindent Then 
\begin{equation}
A^{2}= \begin{pmatrix} \frac{1}{\eta^{2}}(M_c'')^{2} & 0 \\ 0 & \frac{1}{\eta^{2}}(M_c'')^{2} \end{pmatrix}.
\end{equation}

\noindent The trace norm of $A$ is
\begin{equation}
\| A \|_1 = \operatorname{Tr}(\sqrt{A^{2}})=\frac{2}{\eta}.\operatorname{Tr}(M_{c}'')=\frac{2}{\eta}.
\end{equation}

\noindent Thus, the trace distance is
\begin{equation}
D(M_f^{(0)}, M_f^{(1)}) = \dfrac{1}{2} \| M_f^{(0)} - M_f^{(1)} \|_1 = \dfrac{1}{2} \| A \|_1 = \dfrac{1}{2} \cdot \dfrac{2}{\eta} = \dfrac{1}{\eta} < \mathsf{negl}(\lambda).
\end{equation}
\end{proof}

\medskip
Note that as we have remarked earlier \ref{re:dom} that the algorithm \(\mathsf{DOM}\) can be applied to both \(M_{f}^{(0)}\) and \(M_{f}^{(1)}\). The dictator and the players can decryprt the original ciphertext \(M_{o}\) from both the ciphertexts \(M_{f}^{(0)}\) and \(M_{f}^{(1)}\) exactly using the same \(\mathsf{DOM}\) algorithm. Therefore, both the ciphertexts \(M_{f}^{(0)}\) and \(M_{f}^{(1)}\) are indistinguishable to the dictator.
 
\medskip

\if 0

\begin{theorem}
    The associated keys for both the original and the anamorphic messages are also computationally indistinguishable.
\end{theorem}

\begin{proof}
Let \(\mathcal{R}=\{0,1\}^{2d_{1}} \times \{1,\ldots,2^{d_{1}+1}\} \times \text{Sym}(2^{d_{1}+1})\) and \(\mathcal{R}'=\{0,1\}^{2d_{1}} \times \{0,1\}^{2d_{2}} \times \{1,\ldots,2^{d_{1}+1}\} \times \{1,\ldots,2^{d_{1}+1}\} \times \text{Sym}(2^{d_{1}+1}) \times \mathcal{J}\).
We have chosen the two keys \(k\) and \(k'\) uniformly randomly from the \(\mathcal{R}\) and \(\mathcal{R}'\) respectively. Therefore,
\[
\left| \Pr_{X \gets \mathcal{R}} \left[X=k \right] - \Pr_{X' \gets \mathcal{R}'} \left[X'=k' \right] \right| = \left|\frac{1}{2^{3d_{1}+1}} \cdot \frac{1}{2^{d_{1}+1}!}-\frac{1}{2^{4d_{1}+2d_{2}+2}} \cdot \frac{1}{|\mathcal{J}|} \cdot \frac{1}{2^{d_{1}+1}!}\right|= \frac{1}{2^{d_{1}+1}!} \frac{1}{2^{3d_{1}+1}}\Big[1-\frac{1}{2^{d_{1}+2d_{2}+1}} \frac{1}{|\mathcal{J}|}\Big].
\]
where \(X\) and \(X'\) are two random variables defined on the sets \(\mathcal{R}\) and \(\mathcal{R}'\) respectively.
\end{proof}

\fi

\medskip
\noindent In quantum mechanics, fidelity is a metric used to quantify the similarity or closeness between quantum states. A high fidelity value indicates that the states are nearly identical. In our case, as a consequence of negligible trace distance, it is easy to show that the original and anamorphic quantum states exhibit a high fidelity. Consequently, this establishes that, in our case, an adversary or dictator is computationally unable to distinguish between the ciphertexts \(M_{f}^{(0)}\) and \(M_{f}^{(1)}\), making it infeasible to identify which corresponds to the original ciphertext.

\medskip

\begin{theorem}\label{thm:fid}
    The fidelity between the original and the anamorphic states is \[
F(M_{f}^{(0)}, M_{f}^{(1)}) \geq \Bigg(1 - \dfrac{1}{\eta}\Bigg)
\] 
\end{theorem}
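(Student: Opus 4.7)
The plan is to derive the fidelity bound as a direct consequence of the trace-distance bound already established in Theorem \ref{thm:ind}, combined with one of the Fuchs--van de Graaf inequalities stated in the preliminaries. Specifically, Theorem \ref{thm:ind} gives the explicit equality
\[
D\bigl(M_f^{(0)}, M_f^{(1)}\bigr) \;=\; \frac{1}{\eta},
\]
so no new spectral or operator-theoretic calculation is needed; the work is to relate this trace-distance bound to fidelity in the correct direction.

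I would invoke the left-hand Fuchs--van de Graaf inequality
\[
1 - F(\rho,\sigma) \;\leq\; D(\rho,\sigma),
\]
applied with $\rho = M_f^{(0)}$ and $\sigma = M_f^{(1)}$. Substituting the value from Theorem \ref{thm:ind} yields
\[
1 - F\bigl(M_f^{(0)}, M_f^{(1)}\bigr) \;\leq\; \frac{1}{\eta},
\]
which rearranges immediately to the claimed bound
\[
F\bigl(M_f^{(0)}, M_f^{(1)}\bigr) \;\geq\; 1 - \frac{1}{\eta}.
\]

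There is no real obstacle here; the proof is essentially a one-line consequence of a previously stated inequality. The only subtle point worth stating explicitly is that both $M_f^{(0)}$ and $M_f^{(1)}$ have already been shown to be valid density operators on $\mathcal{H}_C = (\mathbb{C}^2)^{\otimes (d_1+1)}$ (by Theorem \ref{thm:main} and the unitarity of $U_{\sigma_l}$), so the fidelity $F(M_f^{(0)}, M_f^{(1)})$ is well-defined and the Fuchs--van de Graaf inequality applies. Optionally, I would remark that since $\eta$ can be chosen so that $\tfrac{1}{\eta} < \mathsf{negl}(\lambda)$, the fidelity is in fact $1 - \mathsf{negl}(\lambda)$, reinforcing the interpretation that the anamorphic ciphertext is indistinguishable from the original ciphertext in the strong sense of closeness in state space, not merely in distinguishing advantage.
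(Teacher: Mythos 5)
Your proposal is correct and matches the paper's own proof exactly: both apply the left-hand Fuchs--van de Graaf inequality $1 - F \leq D$ to the trace distance $D(M_f^{(0)}, M_f^{(1)}) = \tfrac{1}{\eta}$ established in Theorem \ref{thm:ind} and rearrange. Your added remark that the fidelity is well-defined because both operators are valid density matrices is a harmless (and sensible) extra observation not spelled out in the paper.
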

indicating that the two states are nearly indistinguishable for large \(\eta\).

\begin{proof}
    By the Fuchs-van de Graaf inequality, and Theorem \ref{thm:ind}, \(\left(1-F(M_{f}^{(1)},M_{f}^{(0)})\right) \leq \frac{1}{\eta}\). \\
    Hence, \[F(M_{f}^{(1)},M_{f}^{(0)}) \geq \left(1-\frac{1}{\eta}\right).\]     
\end{proof}

\medskip

\noindent Now we describe the communication procedure between Alice and Bob under dictatorial supervision. 

\begin{algorithm}[H]
\caption{Transmission Protocol under Dictatorial Supervision (TPDS)}
\label{alg:transmission-dictator}
\begin{algorithmic}[1]

\STATE \textbf{Input:} Anamorphic state \(M_f^{(1)} \in \mathcal{L}((\mathbb{C}^2)^{\otimes(d_1+1)})\), dimensions \(d_1, d_2\), keys \(k, k'\), and permutation matrix \(U_{\sigma_l}\).

\STATE \textbf{Output:} The original message \(M_o\) for both Bob and the dictator, and the covert message \(M_c\) exclusively for Bob.

\STATE \textbf{Step 1: Alice's Transmission to Bob:}
\STATE \hspace{1cm} Alice generates the anamorphic encrypted state \(M_f^{(1)}\) using the encryption process.
\STATE \hspace{1cm} Alice transmits:
\[
M_f^{(1)}, \quad (l, d_1, d_2, k, k',\eta)
\]
securely to Bob.

\STATE \textbf{Step 2: Bob's Decryption:}
\STATE \hspace{1cm} Bob receives \(M_f^{(1)}\) and the keys \(l, d_1, d_2, k, k',\eta\).
\STATE \hspace{1cm} Bob performs the following operations:
\begin{enumerate}
    \item Run the \textbf{Decryption of Original Message (\(\mathsf{DOM}\))} algorithm with the key \(k\) to recover the original message \(M_o\).
    \item Run the \textbf{Covert Decryption of Anamorphic Message (\(\mathsf{DCM}\))} algorithm with key \(k'\) to recover the covert message \(M_c\).
\end{enumerate}

\STATE \textbf{Step 3: Bob's Forwarding to the Dictator:}
\STATE \hspace{1cm} Bob forwards to the dictator the following information:
\[
M_f^{(1)}, \quad (l, d_1, k).
\]

\STATE \textbf{Step 4: Dictator's Decryption:}
\STATE \hspace{1cm} The dictator receives \(M_f^{(1)}\) and keys \(l, d_1, k\).
\STATE \hspace{1cm} The dictator runs the \textbf{Decryption of Original Message (\(\mathsf{DOM}\))} algorithm using key \(k\) to recover the original message \(M_o\).

\if 0

\STATE \textbf{Step 5: Security Guarantee:}
\STATE \hspace{1cm} The trace distance between \(M_f^{(0)}\) and \(M_f^{(1)}\) is negligible and fidelity is very high:
\[
D(M_f^{(0)}, M_f^{(1)}) = \frac{1}{\eta} < negl{(\lambda)} \quad \text{and} \quad F(M_f^{(0)}, M_f^{(1)}) \geq \Bigg(1-\frac{1}{\eta} \Bigg)
\]
\STATE \hspace{1cm} Thus, the dictator cannot distinguish whether a covert message \(M_c\) was embedded in the transmitted state.

\fi

\STATE \textbf{Step 6: Output:}
\STATE \hspace{1cm} Bob receives both \(M_o\) and \(M_c\).
\STATE \hspace{1cm} The dictator receives only \(M_o\).
\end{algorithmic}
\end{algorithm}

\medskip
\begin{lemma}[Pauli Twirl / \(1\)-Design Property, \cite{wilde2017quantum}]
\label{lem:pauli-twirl}
Let $\mathcal{P}_d$ be the Pauli group on $d$ qubits. For any density operator $\rho$ on $\mathcal{H} \cong (\mathbb{C}^2)^{\otimes d}$, averaging over the group yields the maximally mixed state
\[
\frac{1}{|\mathcal{P}_d|} \sum_{U \in \mathcal{P}_d} U \rho U^\dagger \;=\; \Tr(\rho) \frac{I}{2^d}.
\]
Consequently, the channel $\Phi(\rho) = \mathbb{E}_k [\QOTPEnc(\rho, k)]$ maps every state $\rho$ (with $\Tr(\rho)=1$) to the fixed state $I / 2^d$, effectively destroying all information about $\rho$.
\end{lemma}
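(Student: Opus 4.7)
The plan is to reduce the claim to the single fact that for every non-identity Pauli string $Q \in \mathcal{P}_d$, the group average $\tfrac{1}{|\mathcal{P}_d|}\sum_{U\in\mathcal{P}_d} U Q U^\dagger$ vanishes, while the identity string is fixed. Expand $\rho$ in the normalized Pauli basis (Fact \ref{fact:PauliONB}) as $\rho = \tfrac{1}{2^d}\sum_{Q\in\mathcal{P}_d} \Tr(Q\rho)\,Q$, use linearity of the averaging map, and observe that only the $Q=I$ coefficient $\Tr(I\rho)=\Tr(\rho)$ survives. This directly yields $\Tr(\rho)\, I/2^d$.

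First, I would establish the commutation dichotomy: for any two Pauli strings $U,Q \in \mathcal{P}_d$ (modulo phases), one has $UQU^\dagger = \pm Q$, with the sign $+1$ if $U$ and $Q$ commute and $-1$ if they anti-commute. Then I would prove the counting lemma that for each fixed non-identity $Q$, exactly half of the $|\mathcal{P}_d|=4^d$ Pauli strings commute with $Q$ and the other half anti-commute. This follows from the fact that the commutation relation $[U,Q]=0$ is equivalent to a symplectic inner product condition $\langle u, q\rangle_{\mathrm{symp}} = 0$ on the associated bit-vectors in $\mathbb{F}_2^{2d}$, and the linear functional $u\mapsto \langle u,q\rangle_{\mathrm{symp}}$ is a nonzero linear map on $\mathbb{F}_2^{2d}$ when $q\neq 0$, so its kernel has exactly half the elements. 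Consequently
\[
\frac{1}{|\mathcal{P}_d|}\sum_{U\in\mathcal{P}_d} U Q U^\dagger \;=\; \frac{Q}{|\mathcal{P}_d|}\bigl(\#\{U:[U,Q]=0\} - \#\{U:\{U,Q\}=0\}\bigr) \;=\; 0.
\]

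Combining the two ingredients via linearity,
\[
\frac{1}{|\mathcal{P}_d|}\sum_{U\in\mathcal{P}_d} U\rho U^\dagger
\;=\; \frac{1}{2^d}\sum_{Q\in\mathcal{P}_d}\Tr(Q\rho)\,\Bigl(\tfrac{1}{|\mathcal{P}_d|}\sum_{U} U Q U^\dagger\Bigr)
\;=\; \frac{1}{2^d}\,\Tr(I\rho)\,I \;=\; \Tr(\rho)\,\frac{I}{2^d}.
\]
The second assertion, that $\Phi(\rho)=\mathbb{E}_k[\QOTPEnc(\rho,k)] = I/2^d$ for states, then follows immediately: $\QOTPEnc(\rho,k) = U_k\rho U_k^\dagger$ with $U_k$ ranging uniformly over a set of representatives of $\mathcal{P}_d$ (modulo global phases), so the expectation coincides with the group average computed above, and $\Tr(\rho)=1$.

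The main obstacle, if any, is purely bookkeeping: making the symplectic-parity counting argument precise when one writes Pauli strings as $U_k = \bigotimes_{j=1}^d X^{k_{2j-1}}Z^{k_{2j}}$ and verifies that phases drop out of the conjugation $U\rho U^\dagger$ (so that the average is invariant under the specific phase convention used to define $\mathcal{P}_d$). Everything else is linearity and the orthonormality of the Pauli basis, both already recorded in Fact \ref{fact:PauliONB}; no analytical subtlety arises because the sum is finite and the operators are bounded on a finite-dimensional space.
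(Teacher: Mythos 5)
Your proof is correct. The paper itself does not prove this lemma---it is stated with a citation to Wilde's textbook---and the closest the paper comes is the computation inside the proof of Theorem~\ref{thm:exp}, which uses the same Pauli-basis expansion $\rho=\tfrac{1}{2^d}\sum_Q \Tr(Q\rho)\,Q$ but disposes of the non-identity terms with only the informal remark that conjugation ``effectively randomizes'' $P$ so the average vanishes. You take the same route but supply the missing ingredient rigorously: the commute/anti-commute dichotomy $UQU^\dagger=\pm Q$ together with the symplectic-parity count showing that, for fixed $Q\neq I$, exactly half of the $4^d$ Pauli strings commute with $Q$ and half anti-commute, so the signed sum cancels exactly. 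Your remark that global phases drop out under conjugation (so the convention for $\mathcal{P}_d$, with or without phases, does not affect the average) is also the right point to flag. In short, same decomposition as the paper's in-text derivation, but with the cancellation step actually proved rather than asserted or outsourced to the citation.
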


\smallskip
\begin{lemma}\label{lem:covert-avg}
For any $M_c\in\mathcal{D}(\mathcal{H}_{M_c})$, uniform $k'\in\{0,1\}^{2d_2}$, and the fixed padding isometry $V$,
\[
\mathbb{E}_{k'}\big[V\,\mathsf{QOTPEnc}(M_c,k')\,V^\dagger\big]\;=\;2^{-d_2}\,\Pi_V,
\qquad \Pi_V=V V^\dagger,
\]
which is independent of $M_c$.
\end{lemma}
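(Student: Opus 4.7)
The plan is to reduce the claim to the Pauli \(1\)-design property of \(\mathsf{QOTP}\) (Lemma~\ref{lem:pauli-twirl}) together with linearity of the expectation and the fact that the padding map \(\rho\mapsto V\rho V^\dagger\) is a fixed, deterministic isometric channel that commutes with averaging over the independent classical randomness \(k'\).

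First, I would push \(V\) and \(V^\dagger\) through the expectation. Since \(V\) does not depend on \(k'\),
\[
\mathbb{E}_{k'}\big[V\,\mathsf{QOTPEnc}(M_c,k')\,V^\dagger\big]
\;=\;
V\,\Big(\mathbb{E}_{k'}\big[\mathsf{QOTPEnc}(M_c,k')\big]\Big)\,V^\dagger.
\]
Next, I would apply Lemma~\ref{lem:pauli-twirl} (equivalently, the \(\mathsf{QOTP}\) perfect-secrecy statement in Lemma~\ref{prop:otp}) to the inner average. Because \(M_c\in\mathcal{D}(\mathcal{H}_{M_c})\) has \(\Tr(M_c)=1\) and the Pauli group on \(d_2\) qubits is a unitary \(1\)-design,
\[
\mathbb{E}_{k'}\big[\mathsf{QOTPEnc}(M_c,k')\big]
\;=\;
\frac{1}{2^{2d_2}}\sum_{k'\in\{0,1\}^{2d_2}} U_{k'}\,M_c\,U_{k'}^\dagger
\;=\;
\frac{I_{2^{d_2}}}{2^{d_2}},
\]
which already erases all dependence on \(M_c\).

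Substituting back and using \(\Pi_V:=VV^\dagger\), I would conclude
\[
V\,\Big(\tfrac{1}{2^{d_2}}I_{2^{d_2}}\Big)\,V^\dagger
\;=\;\tfrac{1}{2^{d_2}}\,VV^\dagger
\;=\;2^{-d_2}\,\Pi_V,
\]
so the averaged state equals \(2^{-d_2}\Pi_V\) for every \(M_c\).

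The argument is essentially a one-line consequence of the two cited facts, so there is no serious obstacle; the only thing worth double-checking is that the isometric padding channel \(\mathcal{E}^{\mathrm{pad}}(\cdot)=V(\cdot)V^\dagger\) indeed commutes with the classical expectation over \(k'\). This is immediate because \(V\) is fixed and the map \(X\mapsto VXV^\dagger\) is linear, so it can be exchanged with the finite sum that defines \(\mathbb{E}_{k'}\). Note also that \(\Pi_V\) is the rank-\(2^{d_2}\) projector onto the image of \(V\) (the subspace with the last \(d_1-d_2\) ancilla qubits in \(\ket{0}\)), so \(2^{-d_2}\Pi_V\) is the maximally mixed state on that subspace, as expected.
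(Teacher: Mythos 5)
Your proof is correct and follows essentially the same route as the paper, which simply invokes the Pauli twirl (Lemma~\ref{lem:pauli-twirl}) together with the padding map of Eq.~\eqref{eq:pad2}; your explicit exchange of the fixed isometric conjugation with the expectation over $k'$ and the identification $VIV^\dagger=\Pi_V$ just spells out the details the paper leaves implicit.
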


\noindent
The lemma is immediate from the Pauli twirl identity~\eqref{lem:pauli-twirl} and \ref{eq:pad2}.

\begin{lemma}\label{lem:Ma-avg}
Fix $\eta \in \mathbb{Z}^{+}$ according to Theorem \ref{thm:main}. For any inputs $(M_o,M_c)$, sampling fresh uniform $(k,k')$ yields
\begin{align}
\mathbb{E}_{k,k'}\big[M_a(M_o,M_c;k,k')\big]
&=\;
\lvert 0\rangle\langle 0\rvert\otimes \tfrac{1}{2}\cdot 2^{-d_1} I_{2^{d_1}}
\;+\;
\lvert 0\rangle\langle 1\rvert\otimes \tfrac{1}{\eta}\cdot 2^{-d_2}\,\Pi_V
\\[-0.25em]
&\hspace{1.25cm}
+\;
\lvert 1\rangle\langle 0\rvert\otimes \tfrac{1}{\eta}\cdot 2^{-d_2}\,\Pi_V
\;+\;
\lvert 1\rangle\langle 1\rvert\otimes \tfrac{1}{2}\cdot 2^{-d_1} I_{2^{d_1}},\nonumber
\end{align}
which is independent of $M_o$ and $M_c$.
\end{lemma}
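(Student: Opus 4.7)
The plan is to exploit the linearity of the expectation together with the block structure of $M_a$, and then apply the two averaging facts that have already been proved, namely the Pauli twirl (Lemma \ref{lem:pauli-twirl}) for the diagonal blocks and Lemma \ref{lem:covert-avg} for the off-diagonal blocks. Crucially, since $k$ and $k'$ are drawn independently and uniformly at each invocation, the joint expectation $\mathbb{E}_{k,k'}$ factors across blocks that depend on only one of them.

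First I would write $M_a(M_o,M_c;k,k')$ explicitly in its $2\times 2$ block form on $R$,
\[
M_a(M_o,M_c;k,k')\;=\;
\lvert 0\rangle\langle 0\rvert\otimes \tfrac{1}{2}M_o'
\;+\;
\lvert 0\rangle\langle 1\rvert\otimes \tfrac{1}{\eta}M_c''
\;+\;
\lvert 1\rangle\langle 0\rvert\otimes \tfrac{1}{\eta}(M_c'')^\dagger
\;+\;
\lvert 1\rangle\langle 1\rvert\otimes \tfrac{1}{2}M_o',
\]
with $M_o'=\mathsf{QOTPEnc}(M_o,k)$ and $M_c''=V\,\mathsf{QOTPEnc}(M_c,k')\,V^\dagger$. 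By linearity, $\mathbb{E}_{k,k'}$ can be pushed inside each block. The two diagonal blocks depend only on $k$, so by the Pauli twirl identity of Lemma \ref{lem:pauli-twirl},
\[
\mathbb{E}_{k}\!\big[\tfrac{1}{2}M_o'\big]\;=\;\tfrac{1}{2}\cdot 2^{-d_1}\,I_{2^{d_1}},
\]
independent of $M_o$. The two off-diagonal blocks depend only on $k'$, so by Lemma \ref{lem:covert-avg},
\[
\mathbb{E}_{k'}\!\big[\tfrac{1}{\eta}M_c''\big]\;=\;\tfrac{1}{\eta}\cdot 2^{-d_2}\,\Pi_V,
\]
again independent of $M_c$. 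For the $\lvert 1\rangle\langle 0\rvert$ block I would note that $M_c'=\mathsf{QOTPEnc}(M_c,k')$ is Hermitian (as $M_c$ is) and hence $M_c''=V M_c' V^\dagger$ is Hermitian, so $(M_c'')^\dagger=M_c''$ and its average equals the same $2^{-d_2}\Pi_V$; equivalently one may take the Hermitian conjugate of the previous display and observe that $\Pi_V^\dagger=\Pi_V$.

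Assembling the four averaged blocks yields precisely the stated expression, and because none of the four averages depends on $(M_o,M_c)$, neither does the whole expectation. The main bookkeeping point, which is the only place one could slip, is to verify that the factorization $\mathbb{E}_{k,k'}=\mathbb{E}_k\mathbb{E}_{k'}$ applies block-by-block. This holds by the independence of $k$ and $k'$ and the fact that each block is a tensor of a fixed operator on $R$ with an operator that depends on only one of the two keys, so no cross-terms appear. No further estimates or inequalities are needed; the result is an exact equality obtained from Lemmas \ref{lem:pauli-twirl} and \ref{lem:covert-avg} together with linearity.
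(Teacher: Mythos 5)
Your proposal is correct and follows essentially the same route as the paper: linearity of expectation applied block-by-block, the Pauli twirl (which the paper invokes via its earlier computation $\mathbb{E}_k[M_o']=2^{-d_1}I_{2^{d_1}}$, equivalent to Lemma \ref{lem:pauli-twirl}) for the diagonal blocks, Lemma \ref{lem:covert-avg} for the off-diagonal blocks, and Hermiticity of $M_c''$ for the $(1,0)$ block. Your explicit remark that $\mathbb{E}_{k,k'}$ factors by independence of $k$ and $k'$ is a small bookkeeping point the paper leaves implicit, but it is the same argument.
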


\begin{proof}
Applying \ref{eq:expMo} to both diagonal blocks of \eqref{eq:Ma}, and Lemma \ref{lem:covert-avg} to the off-diagonal blocks (together with Hermitian conjugation for the $(1,0)$ block). Linearity of expectation yields the claim.
\end{proof}

\subsection{Perfect \texorpdfstring{\textsf{qIND--qCPA}}{qIND-qCPA} Security}

\begin{theorem}[Information-theoretic \textsf{qIND--qCPA} security of $\mathsf{QAE}$]\label{thm:main-qind-qcpa}
Consider the symmetric-key construction $\mathsf{QAE}$ that, for each encryption, samples uniform keys $k\in\{0,1\}^{2d_1}$, $k'\in\{0,1\}^{2d_2}$ and an uniform permutation $\sigma_l \in \mathrm{Sym}(2^{d_{1}+1})$, and outputs the ciphertext~\eqref{eq:Mf}.
Then for any (possibly unbounded) quantum adversary $\mathcal{A}$ as in Definition~\ref{def:qind-qcpa},
\[
\mathrm{Adv}^{\mathrm{qIND\text{-}qCPA}}_{\mathsf{QAE}}(\mathcal{A})\;=\;0.
\]
\end{theorem}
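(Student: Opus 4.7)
The plan is to reduce the security claim to showing that the encryption channel $\mathcal{E}$ is a \emph{constant} channel after averaging its internal randomness, and then to observe that a constant oracle leaks no information whatsoever about its inputs, regardless of the adversary's computational power or its entanglement with an environment.

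First, I would apply Lemma~\ref{lem:Ma-avg} to conclude that $\mathbb{E}_{k,k'}\bigl[M_a(M_o,M_c;k,k')\bigr] = \bar{M}_a$ for a fixed operator $\bar{M}_a$ on $\mathcal{H}_R \otimes \mathcal{H}_M$ that is independent of the inputs $(M_o, M_c)$. Since the outer permutation $U_{\sigma_l}$ is sampled independently of $(k,k')$ and the construction~\eqref{eq:Mf} is linear in $M_a$, I would pull the $(k,k')$-expectation through the conjugation to obtain
\[
\mathcal{E}(M_o \otimes M_c) \;=\; \mathbb{E}_{\sigma_l}\!\left[\,U_{\sigma_l}\,\bar{M}_a\,U_{\sigma_l}^\dagger\,\right] \;=:\; \tau,
\]
a fixed density operator on $\mathcal{H}_C$ that is \emph{completely independent} of $(M_o, M_c)$. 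By linearity in the density-operator argument, this also implies $\mathcal{E}(\omega) = \tau\cdot\mathrm{Tr}(\omega)$ for every trace-class operator $\omega$ on the input registers.

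Next, I would upgrade this to the fully quantum oracle-access model of Definition~\ref{def:qind-qcpa}, where queries may be entangled with the adversary's private memory $E$. Because $\mathcal{E}$ is the constant channel $\omega \mapsto \tau\cdot\mathrm{Tr}(\omega)$, for any joint state $\rho_{EA}$ on the adversary's registers the standard identity for constant channels yields
\[
(I_E \otimes \mathcal{E})(\rho_{EA}) \;=\; \mathrm{Tr}_A(\rho_{EA}) \otimes \tau.
\]
Thus each oracle call discards the input register's correlations with $E$ and hands back $\tau$ tensored with $E$'s marginal. In particular, no query — before or after the challenge — provides any information about the chosen bit $b$ or about any auxiliary input.

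Finally, I would analyze the challenge phase. The challenger samples uniform $b \in \{0,1\}$ and returns $C_b = \mathcal{E}\!\bigl(M_o^{(b)} \otimes M_c^{(b)}\bigr) = \tau$, which is the \emph{same} state for $b=0$ and $b=1$. Combined with the observation that post-challenge queries are also independent of $b$, the adversary's full transcript has identical distribution in the two worlds $b=0$ and $b=1$, giving $\Pr[b'=1\mid b=1]=\Pr[b'=1\mid b=0]$ and hence advantage exactly $0$. The only subtle point worth verifying carefully is Step~3 — that the constant-channel identity survives arbitrary entanglement with $E$ — but this follows from the Stinespring/Choi-matrix characterization of CPTP maps applied to a channel whose Choi operator factorizes as $(I/d_{\mathrm{in}}) \otimes \tau$. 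There is no essential obstacle beyond carefully tracking the order of expectations and the trace-out structure.
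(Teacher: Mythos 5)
Your proposal is correct and follows essentially the same route as the paper's proof: average over the keys via Lemma~\ref{lem:Ma-avg}, pull the expectation through the permutation conjugation to see the oracle is a constant channel, and conclude that every query (including the challenge) returns a fixed state independent of the inputs and of $b$, so the advantage is exactly zero. Your explicit justification that the constant-channel identity $(I_E\otimes\mathcal{E})(\rho_{EA})=\mathrm{Tr}_A(\rho_{EA})\otimes\tau$ survives arbitrary entanglement with $E$ is a slightly more careful rendering of a step the paper asserts tersely, but it is the same argument.
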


\begin{proof}
Let $\mathcal{E}$ be the oracle. We show that $\mathcal{E}$ is a \emph{constant} channel, i.e.\ it maps every input to the same density operator on $\mathcal{H}_{RM}$, independent of $(M_o,M_c)$.

Fix an arbitrary input state $\rho_{E\,M_o\,M_c}$ for a single query (the argument is identical for each oracle call, including the challenge). Let \(\mathrm{Ad}_{U}(\rho)=U \rho U^{\dagger}\). By definition,
\[
(I_E\otimes \mathcal{E})(\rho_{E\,M_o\,M_c})
\;=\;
\mathbb{E}_{k,k',\sigma_l}\Big[\,
(I_E\otimes \mathrm{Ad}_{U_{\sigma_l}})\!\left(
(I_E\otimes \Phi_{k,k'})(\rho_{E\,M_o\,M_c})
\right)\Big],
\]
where $\Phi_{k,k'}$ denotes the deterministic (CPTP) pre-permutation unitary map that computes $M_a(M_o,M_c;k,k')$ as in~\eqref{eq:Ma}, traces out $(M_o,M_c)$, and outputs the register $RM$. 
By linearity, taking the partial trace over $E$ yields
\[
\mathrm{Tr}_E\!\left[(I_E\otimes \mathcal{E})(\rho_{E\,M_o\,M_c})\right]
\;=\;
\mathbb{E}_{k,k',\sigma_l}\Big[\,
\mathrm{Ad}_{U_{\sigma_l}}\!\left(
\mathrm{Tr}_E\!\left[(I_E\otimes \Phi_{k,k'})(\rho_{E\,M_o\,M_c})\right]
\right)\Big].
\]
Since $\Phi_{k,k'}$ acts only on $M_o,M_c$, $\mathrm{Tr}_E\!\left[(I_E\otimes \Phi_{k,k'})(\rho_{E\,M_o\,M_c})\right]=\Phi_{k,k'}\!\left(\rho_{M_o\,M_c}\right)$.
Applying Lemma~\ref{lem:Ma-avg} and then averaging over $\sigma_l$ we get
\begin{align*}
\Xi_\eta &=\mathrm{Tr}_E\!\left[(I_E\otimes \mathcal{E})(\rho_{E\,M_o\,M_c})\right] \\
&= \lvert 0\rangle\langle 0\rvert\otimes \tfrac{1}{2}\cdot 2^{-d_1} I_{2^{d_1}}
\;+\;
\lvert 0\rangle\langle 1\rvert\otimes \tfrac{1}{\eta}\cdot 2^{-d_2}\,\Pi_V
\;+\;
\lvert 1\rangle\langle 0\rvert\otimes \tfrac{1}{\eta}\cdot 2^{-d_2}\,\Pi_V
\;+\;
\lvert 1\rangle\langle 1\rvert\otimes \tfrac{1}{2}\cdot 2^{-d_1} I_{2^{d_1}}.
\end{align*}
where $\Xi_\eta$ is a fixed density operator on $\mathcal{H}_{RM}$ that depends only on $(d_1,d_2,\eta)$ and the fixed isometry $V$, but \emph{not} on $\rho_{M_o\,M_c}$.
This state is positive semidefinite for all $\eta\ge 2$ (Schur complement with $A=\tfrac{1}{2}\cdot 2^{-d_1} I$ and $B=\tfrac{1}{\eta}\cdot 2^{-d_2}\Pi_V$).
Moreover, $(I_E\otimes \mathcal{E})(\rho_{E\,M_o\,M_c})=\rho_E\otimes \Xi_\eta$ because $\mathcal{E}$ acts trivially on~$E$. 
Thus, for \emph{every} oracle query (including the challenge), the returned register equals $\Xi_\eta$ and is independent of the inputs that the adversary selected for that query. Consequently, in the \textsf{qIND--qCPA} experiment, the overall joint final state of the adversary is \emph{identical} in the two worlds $b=0$ and $b=1$; any measurement yields the same outcome distribution, and the distinguishing advantage is $\mathrm{Adv}=0$.
\end{proof}

\medskip

\section{Anamorphic Secret-Sharing}\label{sec:ass}
In this section, our main goal is to define quantum anamorphic secret-sharing. First, we will review basic notions of quantum secret-sharing schemes. Then we will propose a definition of a quantum anamorphic secret-sharing along with our construction.

With abuse of notation, we denote $P\subseteq [n]$ to be a set of players. Let $\mathcal{S}$ be a set of secrets. Let $\mathcal{R}$ be a finite set of random strings, $\mu: \mathcal{R} \longrightarrow \mathbb{R}$ be a probability distribution function, and $\forall j,\; 1 \leq j \leq n,\; \mathcal{S}_j$ be the domain of shares of $j$-th player. 

\medskip

In a secret-sharing scheme, we want to share a secret among \(n\) players so that 

\begin{itemize}
\item only the authorized set of players can reconstruct the secret and 
\item the unauthorized set of players cannot reconstruct the secret. 
\end{itemize}

\medskip

\begin{definition}\cite{ccakan2023computational}\label{def:P}
   A sequence of monotone functions \((f_n)_{n \in \mathbb{Z}^{+}}\), where each function \( f_n\colon \{0,1\}^n \longrightarrow \{0,1\} \) is computable by a family of monotone circuits of size polynomial in \( n \), based on the existence of one-way functions, is defined as belonging to the class \(\mathsf{monotone} \ \mathsf{P}\).
\end{definition}

\medskip

We refer to the survey article by \cite{beimel2011secret} for detailed exposition. We can define the \textit{access structure} by a monotone function \(f\colon\{0,1\}^{n} \longrightarrow \{0,1\}\) with a set \(P \subseteq [n]\) defined to be \text{authorized} if and only if \(f(v^{P})=1\) where \(v^{P} \in \{0,1\}^{n}\) the characteristic vector of \(P\) satisfying \(v^{P}_{i}=1\) when \(i \in P\) \cite{ccakan2023computational}. We denote the \(t\) out of \(n\) threshold function by \(T^{t}_{n}\) such that \(T^{t}_{n}(P)=1\) if and only if \(|P| \geq t\).

\medskip

\noindent \(\bullet\) \textbf{Key difficulty in quantum secret sharing due to no-cloning theorem:} The \textit{no-cloning theorem} \cite{wootters1982nocloning} prevents copying unknown quantum states, which makes quantum secret sharing difficult. This prevents fundamental strategies like sharing components with multiple players. In other words, no quantum secret-sharing techniques realize the OR function. The methodologies behind many important classical conclusions cannot be readily transferred to the quantum setting, therefore lifting them requires new thinking. We now detail our contributions and formal outcomes. A small generic compiler using hybrid encoding from classical to quantum secret sharing that we construct and analyze yields our results. However, this problem was addressed in the work of Chien \cite{chien2020augmented}, and also in the work of \c{C}akan et. al. \cite{ccakan2023computational}.

\medskip

\noindent \(\bullet\) \textbf{Heavy monotone functions:}
This concept was introduced in the work of \c{C}akan et. al. \cite{ccakan2023computational}. The no-cloning theorem limits quantum secret sharing systems to \textit{no-cloning} monotone functions. These monotone functions \( f \) are defined such that \( f(P) = 1 \) implies \( f(\overline{P}) = 0 \), meaning the complement of an authorized set is unauthorized. The state-of-the-art share size for all no-cloning monotone functions \( f \) is the size of the smallest monotone span program computing \( f \), which can be very large even for “simple” no-cloning monotone functions in \(\mathsf{monotone \; P}\).

\begin{definition}(Heavy function \cite{ccakan2023computational})
    A monotone function \( f\colon \{0, 1\}^n \longrightarrow \{0, 1\} \) is said to be \( t \)-heavy if for any subset \( P \subseteq [n] \) where \( f(P) = 1 \), it holds that \( |P| \ge t \). For \( t \ge \lfloor n/2 \rfloor + 1 \), we say that \( f \) is \textit{heavy}.
\end{definition}
Note that \( t \)-heavy monotone functions are those with a \textit{minimum authorized set} size of at least \( t \). Note that a \( t \)-out-of-\( n \) threshold function is a type of \(t\)-heavy function.

\medskip
\begin{proposition}(\cite{ccakan2023computational})
    Let \(\mathsf{mSP}(f)\) and \(\mathsf{mC}(f)\) be the size of smallest monotone span program and monotone circuit for computing \(f\), respectively. Then, for every monotone function \(f\colon\{0,1\}^{n} \longrightarrow \{0,1\}\) there exist a heavy monotone function \(f\colon\{0,1\}^{2n} \longrightarrow \{0,1\}\) such that \(\mathsf{mSP}(f') \geq \frac{\mathsf{mSP(f')}}{2n}\) and \(\mathsf{mC}(f') \leq \mathsf{mC}(f)+n\). Also, whenever \(f \in \mathsf{mNP}\), then \(f' \in \mathsf{mNP}\).
\end{proposition}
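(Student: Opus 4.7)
The plan is to prove the proposition by an explicit padding construction and then verify each of the three clauses in turn. I would set
\[
f'(x_1,\ldots,x_n,\,y_1,\ldots,y_n) \;:=\; f(x_1,\ldots,x_n)\,\wedge\,\bigwedge_{i=1}^{n} y_i,
\]
which is manifestly monotone in its $2n$ variables. To establish heaviness, I would first dispose of the trivial case $f\equiv 1$. In every other case, any authorized set $P\subseteq[2n]$ for $f'$ must contain all $n$ coordinates of the $y$-block, together with at least one coordinate of the $x$-block (since monotone non-constant $f$ has no empty minimal authorized set); consequently $|P|\geq n+1 = \lfloor 2n/2\rfloor + 1$, which is exactly the definition of a heavy monotone function on $2n$ inputs.

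For the circuit bound, my approach is to take any optimal monotone circuit for $f$ of size $\mathsf{mC}(f)$, append a balanced tree of $n-1$ AND-gates computing $y_1\wedge\cdots\wedge y_n$, and then use one final AND-gate to combine the two outputs. The added cost is exactly $n$ gates, yielding $\mathsf{mC}(f')\leq \mathsf{mC}(f)+n$. The $\mathsf{mNP}$ clause then follows immediately from the same construction at the verifier level: a monotone nondeterministic verifier for $f$ can be augmented in polynomial time by the monotone test $\bigwedge_i y_i$, and $\mathsf{mNP}$ is closed under conjunction with such polynomial-time monotone predicates, so $f\in\mathsf{mNP}$ implies $f'\in\mathsf{mNP}$.

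For the monotone-span-program lower bound I would use the standard restriction principle: fixing $y_1=\cdots=y_n=1$ reduces $f'$ to $f$, and the analogous restriction at the level of any monotone span program $\mathcal{M}'$ for $f'$ (merging the rows labelled by the fixed $y_i$'s into the constant-one component of the target vector) yields a monotone span program for $f$ whose row-count is no larger than that of $\mathcal{M}'$. This already gives the bound $\mathsf{mSP}(f)\leq \mathsf{mSP}(f')$, which is strictly stronger than the stated inequality $\mathsf{mSP}(f')\geq \mathsf{mSP}(f)/(2n)$; the factor $1/(2n)$ in the claim simply absorbs any polynomial-size bookkeeping that a less direct duplicating construction might incur. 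The main obstacle I anticipate, and the step meriting the most care, is the span-program reduction itself: one must verify that the restriction operation does not spuriously accept unauthorized sets of $f$, i.e.\ that the affine span of the surviving rows equals the target vector exactly when the chosen subset of $[n]$ is authorized for $f$. This is a standard but slightly technical linear-algebraic check, and the $1/(2n)$ slack built into the stated bound is precisely what insulates the conclusion from any minor overhead incurred in that verification.
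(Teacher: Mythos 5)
The paper itself does not prove this proposition; it is quoted verbatim from \c{C}akan et al.\ \cite{ccakan2023computational} (note the statement even carries a typo, $\mathsf{mSP}(f')\geq \mathsf{mSP}(f')/2n$, where the intended right-hand side is $\mathsf{mSP}(f)/2n$, as you correctly read it). Your construction $f'(x,y)=f(x)\wedge\bigwedge_{i=1}^{n}y_i$ is exactly the standard padding used for this result, and your verifications of heaviness, of the circuit bound $\mathsf{mC}(f')\leq \mathsf{mC}(f)+n$ via an AND-tree, and of closure under $\mathsf{mNP}$ are all sound. For the span-program clause, your restriction argument (quotient by the span of the $y$-labelled rows after fixing $y=1^n$) in fact yields the lossless inequality $\mathsf{mSP}(f)\leq \mathsf{mSP}(f')$, which is stronger than the stated bound with the $1/(2n)$ slack, so that clause follows a fortiori; this is a legitimate, if slightly sharper, route than the cited statement requires. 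The only soft spot is the case $f\equiv 1$, which you ``dispose of'' without a fix: there your $f'$ reduces to $\bigwedge_i y_i$, whose minimal authorized set has size $n<\lfloor 2n/2\rfloor+1$, so heaviness genuinely fails, and the obvious repair ($f'=$ AND of all $2n$ variables) violates $\mathsf{mC}(f')\leq \mathsf{mC}(f)+n$; in the secret-sharing setting one tacitly assumes $f(\emptyset)=0$ and $f([n])=1$, so this is a cosmetic rather than substantive gap, but it should be stated explicitly rather than waved away.
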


\medskip

\begin{corollary}(\cite{ccakan2023computational})
    There exist heavy monotone functions in \texorpdfstring{\(\mathsf{monotone}\) \(\mathsf{P}\)}{monotone P} requiring monotone span programs of size \(\operatorname{exp}(n^{\Omega(1)})\).
\end{corollary}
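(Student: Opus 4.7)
The plan is to obtain the heavy witness by \emph{lifting} a (non-heavy) function that already separates monotone circuit size from monotone span program size. Concretely, I would start from a known monotone function $f_n \colon \{0,1\}^n \longrightarrow \{0,1\}$ which simultaneously satisfies (i) $f_n \in \mathsf{monotone}\ \mathsf{P}$, so $\mathsf{mC}(f_n) \leq \operatorname{poly}(n)$, and (ii) $\mathsf{mSP}(f_n) \geq \exp(n^{\Omega(1)})$. Such a witness is provided by the classical line of work lower-bounding monotone span programs for explicit polynomial-time-computable monotone predicates (Babai--G\'al--Wigderson and its extensions, together with the well-known observation that the target function can be arranged to lie in $\mathsf{monotone}\ \mathsf{P}$). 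I would state this separation as the single external ingredient the proof relies on; verifying its existence is the only nontrivial step, and I would cite it rather than reprove it.

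Given such $f_n$, the proof is then a direct and mechanical application of the preceding proposition. I would set $f_n' \colon \{0,1\}^{2n} \longrightarrow \{0,1\}$ to be the heavy function produced by the proposition from $f_n$, and then check the two inherited bounds side by side. On the circuit side,
\begin{equation*}
\mathsf{mC}(f_n') \;\leq\; \mathsf{mC}(f_n) + n \;\leq\; \operatorname{poly}(n) \;=\; \operatorname{poly}(2n),
\end{equation*}
so the sequence $(f_n')_{n \in \mathbb{Z}^{+}}$ still lies in $\mathsf{monotone}\ \mathsf{P}$ according to Definition~\ref{def:P}. On the span-program side,
\begin{equation*}
\mathsf{mSP}(f_n') \;\geq\; \frac{\mathsf{mSP}(f_n)}{2n} \;\geq\; \frac{\exp(n^{\Omega(1)})}{2n} \;=\; \exp(n^{\Omega(1)}),
\end{equation*}
and since the input length of $f_n'$ is $N := 2n$, this is equivalently $\exp(N^{\Omega(1)})$, matching the conclusion of the corollary.

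The only remaining sanity check is that heaviness is genuinely preserved by the construction of the proposition, which is already part of its statement, so no additional argument is needed there. The hard part of the overall proof is thus not in the lifting step (which is a one-line appeal to the proposition) but in pointing to an appropriate base function $f_n$: one must be careful to select a separation result whose hard function is actually computable by polynomial-size \emph{monotone} circuits, as opposed to merely polynomial-size general circuits, since otherwise one loses membership in $\mathsf{monotone}\ \mathsf{P}$ in the sense of Definition~\ref{def:P}. Once that citation is pinned down, the corollary follows immediately.
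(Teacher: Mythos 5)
Your proposal is correct and is essentially the derivation intended here: the paper states this corollary by citation to \c{C}akan et al., whose argument is exactly the lifting you describe — take an explicit monotone function in \(\mathsf{monotone}\ \mathsf{P}\) with monotone span program complexity \(\exp(n^{\Omega(1)})\) and apply the preceding proposition, noting that the factor \(2n\) loss and the additive \(n\) in circuit size do not affect either bound. The only point to tighten is the external citation: Babai--G\'al--Wigderson alone gives only quasi-polynomial (\(n^{\Omega(\log n)}\)) monotone span program lower bounds, so for \(\exp(n^{\Omega(1)})\) you must invoke the later exponential lower bounds for an explicit function in \(\mathsf{monotone}\ \mathsf{P}\) (e.g., Robere--Pitassi--Rossman--Cook for \(\mathrm{GEN}\)), which is precisely the ``monotone-circuit-computable'' caveat you correctly flag.
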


\medskip

\noindent \(\bullet\) \textbf{Sharing multiple copies bypassing the no-cloning theorem:} Not all monotone functions can be implemented by conventional quantum secret sharing protocols due to the no-cloning theorem 
 and due to that, we need some additional assumptions to design quantum secret-sharing schemes for a wide range of monotone functions\cite{ccakan2023computational}. One solution to this problem is to assume that we have access to several copies of the quantum state that we want to share \cite{ccakan2023computational}. In the multiparty computations(\(\mathsf{MSP}\)), this makes sense. The classical description of their quantum input is already known to each player and they can create as many copies as they want. The work of \cite{chien2020augmented} considered a special case of threshold monotone functions and investiated the number of copies we would require to construct an efficient secret-sharing scheme realizing all monotone functions in \(\mathsf{monotone\;P}\)\cite{ccakan2023computational}. Chien showed without security proof that \(\max (1,n-2t+2)\) copies of the quantum secret are sufficient to construct a \(t\)-out-of-\(n\) quantum secret-sharing scheme \cite{chien2020augmented, ccakan2023computational}. 

\medskip

\subsection{Classical secret-sharing scheme}
In this section, we review some basic definitions of the theory of classical secret-sharing schemes. 

\medskip

\begin{definition}(Classical secret-sharing scheme \cite{ccakan2023computational, beimel2011secret})\label{def:ss} A classical perfect secret-sharing scheme realizing the monotone function \( f \colon \{0, 1\}^n \longrightarrow \{0, 1\} \) is a pair of functions \(\mathsf{SS}= (\mathsf{Share}, (\mathsf{Rec}_{P})_{P \subseteq [n]})\) 

where \( \mathsf{Share} : \mathcal{S} \times \mathcal{R} \longrightarrow \mathcal{S}_{[n]} \) and \( \mathsf{Rec}_P : \mathcal{S}_{P} \longrightarrow \mathcal{S} \) are deterministic functions satisfying the following properties for all \( P \subseteq [n] \):

\begin{itemize}
    \item \textbf{Correctness}: If \( f(P) = 1 \), then for all \( s \in \mathcal{S} \), 
    \[
    \Pr_{R {\gets} \mathcal{R}} \left[\mathsf{Rec}_P(\mathsf{Share}(s; R)_P) = s \right] = 1.
    \]

    \item \textbf{Perfect Privacy}: If f(P) = 0, then for all secrets \(s_{1}, s_{2} \in \mathcal{S} \) and share vectors \(v \in \mathcal{S}_P \), we have
    \[
    \Pr_{R \gets \mathcal{R}} \left[\mathsf{Share}(s_{1}; R)_P = v \right] = \Pr_{R \gets R} \left[\mathsf{Share}(s_{2}; R)_P = v \right].
    \]
\end{itemize}
\end{definition}

\medskip

\begin{definition}(Share size \cite{ccakan2023computational})
For a secret-sharing scheme \( \mathsf{SS} \) defined over the share domains \( \{\mathcal{S}_1, \ldots, \mathcal{S}_n\} \), the share size, denoted as \( \mathsf{size}(\mathsf{SS}) \), is given by  
\[
\mathsf{size}(\mathsf{SS}) = \sum_{i=1}^n \lceil \log |\mathcal{S}_i| \rceil.
\]  
This represents the total number of bits required to encode all shares in the scheme.
\end{definition}

\medskip

\begin{definition}(Statistical privacy for classical secrets \cite{ccakan2023computational})
A secret-sharing scheme \( \mathsf{SS} \) that realizes a monotone function \( f \) is said to be \( \varepsilon \)-\textit{statistically private} if, for any subset \( P \subseteq [n] \) where \( f(P) = 0 \) and for any two secrets \( s_1, s_2 \in \mathcal{S} \), the following holds:  

\[
\Delta(\mathsf{Share}(s_{1}; R_1)_P, \mathsf{Share}(s_{2}; R_2)_P) \leq \varepsilon,
\]
where \( R_1 \gets \mathcal{R} \) and \( R_2 \gets \mathcal{R} \) are independent random variables.  
\end{definition}

\medskip

\begin{definition}(Post-quantum computational privacy for classical secrets \cite{ccakan2023computational})
A secret-sharing scheme \( \mathsf{SS} \) that realizes a monotone function \( f \) is considered post-quantum computationally private if, for any subset \( P \subseteq [n] \) where \( f(P) = 0 \), any two secrets \( s_1, s_2 \in \mathcal{S} \), and any QPT (quantum polynomial-time) adversary \( \{C_\lambda\}_\lambda \), the following holds:  

\[
\left| \Pr_{R \gets \mathcal{R}} \left[C_\lambda(\mathsf{Share}(s_{1}; 1^\lambda, R)_P) = 1\right] - \Pr_{R \gets \mathcal{R}} \left[C_\lambda(\mathsf{Share}(s_{2}; 1^\lambda, R)_P) = 1\right] \right| \leq \mathsf{negl}(\lambda).
\]
\end{definition}

\medskip

In a \((t, n)\)-threshold secret-sharing scheme, Ogata et al. introduced the concept of \textit{security against cheaters} [Section 4.1, \cite{ogata2004new}]. Building upon this concept, we propose an extension to this definition, termed \textit{security against partial cheating}, which serves as a property to defend against potential attacks.

\medskip

For each participant \(P_i\) the share is given by
\(s^{(a)}_i = \bigl(s^{(a)}_{i_1}, s^{(a)}_{i_2}, \dots, s^{(a)}_{i_m}\bigr),\) for some integer \(m\)
and is written into two parts, \(s^{(a)}_i = \bigl(s^{(o)}_i, s^{(c)}_i\bigr),\)
where \(s^{(o)}_i\) is the \textit{original share part}, and \(s^{(c)}_i\) is the \textit{covert share part}. Let \(X_{\mathcal{S}^{(o)}}\) and \(X_{\mathcal{S}^{(c)}}\) be the random variable defined over the original and the covert part of the secret spaces in \(\mathcal{S}\) and \(V_{i}^{(a)}\) be the random variable induced by \(s_{i}^{(a)}\). Let
\[
supp(V^{(a)}_i) = \{\, s^{(a)}_i \mid \Pr(V_{i}^{(a)}=s^{(a)}_i) > 0 \,\}
\]
denote the support of the share vector \(s^{(a)}_i\) for participant \(P_i\). For a given \(t\)-tuple of shares
\[
w^{(a)} = \Bigl( (s^{(o)}_{i_1}, s^{(c)}_{i_1}),\,(s^{(o)}_{i_2}, s^{(c)}_{i_2}),\,\dots,\,(s^{(o)}_{i_t}, s^{(c)}_{i_t}) \Bigr)
\]
in the product space
\[
supp(V^{(a)}_{i_1}) \times supp(V^{(a)}_{i_2}) \times \cdots \times supp(V^{(a)}_{i_t}),
\]
we define the \emph{partial reconstruction function} \(\mathsf{Sec}^{(o)}\) by
\[
    \mathsf{Sec}^{(o)}(w^{(a)}) = 
    \begin{cases} 
    s^{(o)}, & \text{if } \exists s^{(o)} \text{ s.t. } \Pr(X_{S^{(o)}} = s^{(o)} \mid V_{i_1}^{(a)}, \dots, V_{i_t}^{(a)} = w^{(a)}) = 1, \\
    \perp, & \text{otherwise}.
    \end{cases}
\]

and 

\[
    \mathsf{Sec}^{(c)}(w^{(a)}) = 
    \begin{cases} 
    s^{(c)}, & \text{if } \exists s^{(c)} \text{ s.t. } \Pr(X_{S^{(c)}} = s^{(c)} \mid V_{i_1}^{(a)}, \dots, V_{i_t}^{(a)} = w^{(a)}) = 1, \\
    \perp, & \text{otherwise}.
    \end{cases}
\]

Thus, when the \(t\) players provide their correct shares
\[
b = \Bigl( (s^{(o)}_{i_1},s^{(c)}_{i_1}),\,(s^{(o)}_{i_2},s^{(c)}_{i_2}),\,\dots,\,(s^{(o)}_{i_t},s^{(c)}_{i_t}) \Bigr),
\]
we have
\[
\mathsf{Sec}^{(p)}(b) = (\mathsf{Sec}^{(o)}(b),\mathsf{Sec}^{(c)}(b))= \bigl(s^{(o)},s^{(c)}\bigr)=s.
\]
Let \(b\) denote the honest share tuple:
\[
b = \Bigl( (s^{(o)}_{i_1}, s^{(c)}_{i_1}),\, (s^{(o)}_{i_2}, s^{(c)}_{i_2}),\, \dots,\, (s^{(o)}_{i_t}, s^{(c)}_{i_t}) \Bigr).
\]
Now, consider a forged share tuple
\[
b' = \Bigl( (s^{(o)}_{i_1}, s^{(c)'}_{i_1}),\, (s^{(o)}_{i_2}, s^{(c)'}_{i_2}),\, \dots,\, (s^{(o)}_{i_t}, s^{(c)'}_{i_t}) \Bigr)
\]
with the property that for every \(j \in \{1,2,\dots,t\}\) the original part is unchanged:
\[
s^{(o)}_{i_j} \text{ in } b' = s^{(o)}_{i_j} \text{ in } b,
\]
while there exists at least one index \(j\) such that
\[
s^{(c)'}_{i_j} \neq s^{(c)}_{i_j}.
\]
We say that the dictator \(\mathcal{D}\) is \emph{partially cheated} by the forged tuple \(b'\) if
\[
\mathsf{Sec}^{(p)}(b') \in \mathcal{S} \quad \text{and} \quad \mathsf{Sec}^{(p)}(b') \neq \mathsf{Sec}^{(p)}(b).
\]
That is, although \(\mathcal{D}\) reconstructs the correct original component \(s^{(o)}\), the covert component \(s^{(c)}\) is altered due to the substitution of forged shares.

\medskip

\begin{definition}(\cite{ogata2004new})\label{def:parchit}
For a coalition of \(t\) players \(P_{i_1}, \ldots, P_{i_t}\) with covert shares \(b^{(c)} = (s_{i_1}^{(c)}, \ldots, s_{i_t}^{(c)})\), define the \textit{partial cheating probability} as:
\[
\mathsf{Cheat}^{(p)}\left(V_{i_1}^{(a)}, \ldots, V_{i_t}^{(a)}\right): = \max_{b} \max_{b'} \Pr\left( \mathcal{D} \; \text{is cheated by} \; b' \mid \; P_{i_{1}},\ldots,P_{i_{t}} \;\text{have}\; b\right).
\]
\if 0
where \(W = (V_{i_1}^{(a)}, \ldots, V_{i_t}^{(a)})\) is the joint distribution of the cheaters’ covert shares, \(\mathcal{F}(b^{(c)})\) is the set of all feasible forged covert shares \(b^{(c)'}\) given true shares \(b^{(c)}\), and \(supp(W)\) is the support of \(W\).
\fi
\end{definition}

\subsection{Quantum erasure-correcting codes}\label{subsec:quantum-erasure-codes}
We have used the following description of quantum erasure correcting code, described by \c{C}akan et al. from the paper \cite{ccakan2023computational}.

\begin{definition}(Quantum Erasure Correcting Code(\(\mathsf{QECC}\)) \cite{ccakan2023computational})
A pair of trace-preserving quantum operations, denoted as \( \mathsf{QC} = (\mathsf{QC}.\mathsf{Enc}, \mathsf{QC}.\mathsf{Dec}) \), is referred to as a quantum erasure correcting code (\(\mathsf{QECC}\)) over the input space \( \mathcal{H}_{\text{inp}} \) and the output space \( \mathcal{H}_{\text{out}} = \bigotimes_{i \in [n]} \mathcal{H}_i \) for a subset \( P \subseteq [n] \), if for any quantum operation \( \Upsilon \) acting on \( \mathcal{H}_{\text{out}} \) that preserves the identity on \( \mathcal{H}_i \) for all \( i \in P \), the following condition holds for any quantum state \( \rho \) in \( \mathcal{H}_{\text{inp}} \):

\[
(\mathsf{QC.Dec} \circ \Upsilon \circ \mathsf{QC.Enc})(\rho) = (\rho \otimes \sigma)
\]

for some fixed quantum state \( \sigma \).  

If \( (\mathsf{QC}.\mathsf{Enc}, \mathsf{QC}.\mathsf{Dec}_P) \) serves as a Quantum Error-Correcting Code (\(\mathsf{QECC}\)) for all subsets \( P \subseteq [n] \) where a monotone function \( f : \{0, 1\}^n \longrightarrow \{0, 1\} \) satisfies \( f(P) = 1 \), then the collection of operations \( (\mathsf{QC}.\mathsf{Enc}, (\mathsf{QC}.\mathsf{Dec}_P)_{P \subseteq [n]}) \) is said to realize \( f \) as a \(\mathsf{QECC}\). The code reconstruction function is defined as follows:
\[
\mathsf{QC.Rec}_P(\tau) = \mathsf{QC.Dec}(\tau \otimes (|0\rangle \langle 0|)^{\otimes\overline{P}}).
\]

A quantum error-correcting code that encodes \( k \) q-ary qudits into \( n \) q-ary qudits and can recover from up to \( (d - 1) \) erasures is referred to as a \( [[n, k, d]]_q \) code.
\end{definition}

\subsection{Quantum secret-sharing scheme}
Secret sharing involves dividing a secret into multiple parts, which are then distributed to different players. Only when these parts are combined, the original secret can be reconstructed. In a quantum anamorphic secret-sharing scheme, not only would the secret be protected by quantum-resistant cryptographic methods, but an additional covert secret is encoded into the shares themselves. This ensures that even if a quantum adversary were to intercept or analyze some of these shares, they would be unable to detect the existence of the hidden message without the appropriate classical or quantum key.

\medskip

\noindent For a rigorous description about quantum secret-sharing model we refer \cite{imai2003quantum,ccakan2023computational} to the reader.

\if 0

First, we will review the model of quantum secret-sharing as defined in the paper \cite{imai2003quantum} by Imai et al.. In quantum secret sharing, a dealer \(D\) shares a quantum state \(\ket{s}\) with a set of players \(P\) according to a defined monotone access structure \(\Gamma \subseteq 2^{P}\). Every set contained in that monotone access structure is called the \textit{authorized set} that can reconstruct the quantum secret \(\ket{s}\). The quantum secret \(\ket{s} \) is selected from a collection of possible quantum secrets \(\mathcal{S} = \{ \ket{s_1}, \ket{s_2}, \ldots, \ket{s_{|\mathcal{S}|}} \} \). Let \(p_{i}\) be the probability of choosing the secret \(\ket{s_i}\).
A quantum secret \(S\) can be represented by the quantum mixture:
\[
\rho_{S} = \sum_{i=1}^{\mid \mathcal{S} \mid}p_{i}\ket{s_{i}}\bra{s_i}
\]

\noindent Here the states \(\ket{s_{i}}\) are assumed to be pure states. States \( |s_i\rangle \) are assumed to be pure. For simplicity, \( i \in P \) refers to the \(i\)-th player's possible quantum shares and any quantum state he may possess. Its density matrix is denoted by \( \rho_i \). A quantum secret is described as a quantum state in a finite-dimensional complex Hilbert space \( \mathcal{H}_S \). Each player \( i \) holds a quantum share described by a Hilbert space \( \mathcal{H}_i\), that is the shares of the player \(i\) is in the Hilbert space \(\mathcal{H}_{i}\). For any non-empty subset of players \( P \subseteq [n] \), the Hilbert space describing their collective shares is defined by the tensor product \(\mathcal{H}_{P}:= \bigotimes_{i \in P} \mathcal{H}_i.\) The set of quantum states, or density operators, on a Hilbert space \( \mathcal{H} \) is defined by the following set:
\[
\mathcal{D}(\mathcal{H}) := \{\rho \in \mathcal{L}(\mathcal{H}) \mid \rho^\dagger = \rho, \rho \geq 0, \operatorname{Tr}(\rho) = 1\}.
\]
An additional reference system \(R\) is introduced to model purifications and entanglement. It is described by a Hilbert space \( \mathcal{H}_R \). The joint state of the secret and reference system can be described by a pure state \(|\Psi_{SR}\rangle \in \mathcal{H}_S \otimes \mathcal{H}_R\) where the reduced state on the secret system alone is obtained by tracing out the reference system \(\rho_S = \operatorname{Tr}_R \left( |\Psi_{SR}\rangle \langle \Psi_{SR}| \right).\) This purification ensures that the total state is pure even if the secret state \(\rho_{S}\) is mixed. 

\medskip

\noindent A distribution of quantum shares among the players is described using a completely positive trace-preserving (CPTP) map
\[
\Lambda_D : S(\mathcal{H}_S) \to S(\mathcal{H}_{[n]})
\]
where, \( S(\mathcal{H}_{[n]}) \) denotes the state space of the composite shares system. The map satisfies:
\begin{itemize}
    \item \textbf{Complete positivity:} For all extensions \( \mathcal{H}_E \) (an ancillary space),
    \[
    \Lambda_D \otimes \mathbb{I}_E : \mathcal{L}(\mathcal{H}_S \otimes \mathcal{H}_E) \longrightarrow \mathcal{L}(\mathcal{H}_{[n]} \otimes \mathcal{H}_E) \quad \text{remains positive}.
    \]
    
    \item \textbf{Trace preservation:} \( \operatorname{Tr}(\Lambda_D(\rho_S)) = \operatorname{Tr}(\rho_S) = 1 \).
\end{itemize}

\medskip
By the Stinespring Dilation Theorem, any CPTP map can be equivalently represented as a unitary transformation followed by a partial trace. There exists a Hilbert space \( \mathcal{H}_{E} \) associated with an auxiliary system such that:
\[
\Lambda_D(\rho_S) = \operatorname{Tr}_{E} \left( U (\rho_S \otimes |0\rangle\langle 0|_{E}) U^\dagger \right)
\]
where \( U : \mathcal{H}_S \otimes \mathcal{H}_{E} \to \mathcal{H}_P \otimes \mathcal{H}_{E} \) is an unitary operator and the state \( |0\rangle_{E} \) is a fixed ancillary state.

\medskip

\begin{definition}(\cite{imai2003quantum})
Let \( R \) be a reference system such that \( \Psi_{SR} \) is in a pure state. A quantum secret sharing protocol realizing an access structure \( \Gamma \) is a complete positive map which generates quantum shares \( \{P_1, \ldots, P_n\} \) from a quantum secret 
\[
\rho_S =\sum_{i=1}^{|\mathcal{S}|}p_{i}\ket{s_{i}}\bra{s_{i}},
\]
and distributes these shares among a set of players \([n]\) such that:

\begin{enumerate}
    \item For all \( P \in \Gamma \) we have \( I(R : P) = I(R : S) \) \cite{schumacher1996quantum}, and for \( P \in \Gamma \) there exists a completely positive map \( T_P : \mathcal{H}_P \longrightarrow \mathcal{H}_S \) such that
    \[
    \text{id}_R \otimes T_P : S(\mathcal{H}_R \otimes \mathcal{H}_P) \to S(R \otimes \mathcal{H}_S)
    \]
    \[
    \rho_{RP} \mapsto |RS\rangle.
    \]
    \item For all \( P \notin \Gamma \) we have that \( I(R : P) = 0 \).
\end{enumerate}
\end{definition}

\fi

\begin{definition}(No-cloning function \cite{ccakan2023computational})
A monotone function \( f \colon \{0, 1\}^n \longrightarrow \{0, 1\} \) is said to satisfy the no-cloning property and called \text{no-cloning function} if, for every subset \( P \subseteq [n] \), it holds that \( f(\overline{P}) = 0 \) while \( f(P) = 1 \). 

\end{definition}

Let the Hilbert space \(\mathcal{S}\) be the domain of secret, and the Hilbert spaces \( \mathcal{H}_1, \ldots, \mathcal{H}_n \) be the domain of shares of \(n\) players. Let \( f \colon \{0, 1\}^n \longrightarrow \{0, 1\} \) be a no-cloning monotone function. 

\medskip

\begin{definition}(Quantum secret-sharing scheme \cite{ccakan2023computational})
A \textit{quantum secret-sharing} (\(\mathsf{QSS}\)) scheme with perfect privacy that realizes a monotone function \( f \) is defined as a set of trace-preserving quantum operations:

\[
\mathsf{QSS} = (\mathsf{Share}, (\mathsf{Rec}_P)_{P \subseteq [n]})
\]
that satisfy the following conditions for all subsets \( P \subseteq [n] \):  

\noindent \(\bullet \) \textbf{Correctness:} If \( f(P) = 1 \), then the pair \( (\mathsf{Share}, \mathsf{Rec}_P) \) forms a quantum error-correcting code (\(\mathsf{QECC}\)) for \( P \), ensuring that the secret can be reconstructed.  

\noindent \(\bullet \) \textbf{Perfect Privacy}: If \( f(P) = 0 \), then for any two quantum states \( |\psi_1\rangle, |\psi_2\rangle \in \mathcal{S} \), the marginal distributions over the shares outside \( P \) remain identical, i.e.,  
\[
\Tr_{\overline{P}}(\mathsf{Share}(|\psi_1\rangle \langle \psi_1|)) = \Tr_{\overline{P}}(\mathsf{Share}(|\psi_2\rangle \langle \psi_2|)).
\]
This property ensures that unauthorized subsets gain no information about the quantum secret.
\end{definition}

\medskip

\noindent A quantum secret-sharing scheme (\(\mathsf{QSS}\)) is considered \textit{efficient} if both the sharing algorithm (\(\mathsf{QSS.Share}\)) and the reconstruction algorithm (\(\mathsf{QSS.Rec}\)) can be implemented using polynomial-size circuits \cite{ccakan2023computational}. Additionally, in efficient schemes, the size of each share is also polynomially bounded \cite{ccakan2023computational}.

\medskip

\begin{definition}( Statistical privacy for quantum secrets \cite{2023-613} \cite{ccakan2023computational})
A quantum secret-sharing scheme (\(\mathsf{QSS}\)) that realizes a monotone function \( f \) is said to be \( \varepsilon \)-statistically private if, for every subset \( P \subseteq [n] \) where \( f(P) = 0 \), and for any two secrets \( |\psi_1\rangle, |\psi_2\rangle \in \mathcal{S} \), the following condition holds:  
\[
D(\Tr_{\overline{P}} (\mathsf{Share}(|\psi_1\rangle \langle \psi_1|)), \Tr_{\overline{P}} (\mathsf{Share}(|\psi_2\rangle \langle \psi_2|))) \leq \varepsilon.
\]  
\end{definition}

\noindent Note that the perfect privacy means here \( 0 \)-statistical privacy.

\medskip

\begin{definition}(Computational privacy for quantum secrets \cite{ccakan2023computational})
A quantum secret-sharing scheme (\(\mathsf{QSS}\)) that realizes \( f \) is considered \textit{computationally private} if, for any subset \( P \subseteq [n] \) where \( f(P) = 0 \), any two secrets \( |\psi_1\rangle, |\psi_2\rangle \in \mathcal{S} \), and any QPT (quantum polynomial-time) adversary \( \{C_\lambda\}_\lambda \), the following holds:  
\[
\left| \Pr \left[ C_\lambda (\Tr_{\overline{P}} (\mathsf{Share}(|\psi_1\rangle \langle \psi_1|; 1^\lambda))) = 1 \right] - \Pr \left[ C_\lambda (\Tr_{\overline{P}} (\mathsf{Share}(|\psi_2\rangle \langle \psi_2|; 1^\lambda))) = 1 \right] \right| \leq \mathsf{negl}(\lambda).
\] 
\end{definition}

\if 0

\begin{definition}(Advantage pseudometric)
For a family \( \mathcal{F} \) of quantum circuits with single bit classical output and for any two density matrices \( \rho\), \(\sigma \) of appropriate dimension, the advantage of \( \mathcal{F} \) for distinguishing \( \rho \) versus \( \sigma \) is defined as
\[
\operatorname{Adv}_\mathcal{F}(\rho, \sigma) = \max_{C \in \mathcal{F}} | \Pr[C(\rho) = 1] - \Pr[C(\sigma) = 1] |.
\]
\end{definition}

\fi

\medskip

\begin{definition}(Information ratio of a quantum secret-sharing scheme \cite{beimel2011secret} \cite{ccakan2023computational}).
    If a secret is composed of \(t\) qubits then the \text{information ratio of a quantum secret-sharing scheme} is defined by 
\[
    \frac{\max_{i \in [n]}|\mathcal{S}_{i}|}{t}
\] 
\end{definition}

\subsubsection{Quantum Anamorphic Secret-Sharing Schemes}
\noindent In this section, we introduce the notion of an \textit{anamorphic secret-sharing}. In our proposed mathematical model, there is a \textit{dictator} \(\mathcal{D}\), who is a passive adversary here, a \textit{dealer} \(D\) who distributes shares to a set of \(n\) players according to predefined access structures, in the presence of the \textit{dictator} ensuring the correctness and privacy properties of the anamorphic secret-sharing scheme. The dealer aims to send two messages: an original message and a covert message, to the set of players. After the encryption of these two messages, combining these two ciphertexts we construct the anamorphic ciphertext which is computationally indistinguishable from the original ciphertext to the dictator. The dealer now sends the anamorphic ciphertext to the set of players along with the keys to decrypt those messages.

The dealer sends the anamorphic ciphertext and the anamorphic key to the set of \(n\) players. The authorized set of players reconstructs the original key and the covert key and then shares the original key with the dictator along with the anamorphic ciphertext. The dictator extracts the original message from the anamorphic ciphertext and then verifies the original message, that he wanted to send. The dictator cannot distinguish between original and anamorphic ciphertexts, because of the indistinguishability property and using the same decryption algorithm \(\mathsf{DOM}\) that can be applied to both original and anamorphic ciphertext to extract the same original message from either of the ciphertexts.

\medskip

Let $\mathcal{S}$ be a secret space, and let $s, \hat{s} \in \mathcal{S}$ be the original and covert secrets, respectively. Let \(\mathcal{S}_{1},\cdots,\mathcal{S}_{n}\) be the domain of shares of the players. Let $\mathsf{k}^{(o)}, \mathsf{k}^{(a)} \in \mathsf{K}$ be the normal or the original key and the anamorphic key, respectively. The anamorphic key \(\mathsf{k}^{(a)}=(\mathsf{k}^{(o)},\mathsf{k}^{(c)})\) consists of both the original and the covert keys. Let $\mathsf{ASS.Share}, \mathsf{ASS.Rec}^{\mathsf{Original}}_{P \subseteq [n]}, \mathsf{ASS.Rec}^{\mathsf{Covert}}_{P\subseteq [n]}$ denote the share function for the combined anamorphic message, the normal reconstruction function to reconstruct the original message, and the reconstruction function to recover the covert message, respectively.

\medskip

We define two encryption schemes:
\begin{itemize}
    \item \textbf{Original message encryption scheme:}
    \begin{align*}
    \mathsf{Enc}^{(o)}: &\mathcal{S} \times \mathsf{K}^{(o)} \to \mathcal{C}^{(o)} \\
    &(s,\mathsf{k}^{(o)}) \longmapsto c^{(o)}
    \end{align*}
    
    with corresponding decryption function $\mathsf{Dec}^{(o)}$.
    
    \item \textbf{Covert message encryption scheme:}
    \begin{align*}
    \mathsf{Enc}^{(c)}: &\mathcal{S} \times \mathsf{K}^{(c)} \longrightarrow \mathcal{C}^{(c)} \\
    &(\hat{s},\mathsf{k}^{(c)}) \longmapsto c^{(c)}
    \end{align*}
    with corresponding decryption function $\mathsf{Dec}^{(c)}$.
\end{itemize}

\noindent We introduce an efficiently computable embedding function to construct the anamorphic message:

\[
\Theta: \mathcal{C}^{(o)} \times \mathcal{C}^{(c)} \longrightarrow \mathcal{C}^{(a)}
\]
which produces an anamorphic ciphertext $c^{(a)} = \Theta(c^{(o)}, c^{(c)})$. Additionally, we assume the existence of an efficient extraction algorithm:
\[
\mathsf{EOC}: \mathcal{C}^{(a)} \longrightarrow \mathcal{C}^{(o)}
\]
such that for all $c^{(o)}$ and $c^{(c)}$, we have:
\[
\mathsf{EOC}(\Theta(c^{(o)}, c^{(c)})) = c^{(o)}.
\]

\noindent Let the anamorphic key be defined as:
\[
\mathsf{k}^{(a)} = (\mathsf{k}^{(o)}, \mathsf{k}^{(c)}) \in \mathsf{K}^{(o)} \times \mathsf{K}^{(c)},
\]
and let $\mathcal{R}$ be a randomness space with distribution $\mu$. Define $\mathcal{K} = \mathsf{K}^{(a)}  \times \mathcal{R}$, where \(\mathsf{K}^{(o)} \times \mathsf{K}^{(c)}=\mathsf{K}^{(a)}\).

\medskip

For \(b \in \{0,1\}\), we define \(c^{(b)}\), where \(c^{(0)}=c^{(o)}\) defines the \textit{original ciphertext} and \(c^{(1)}=c^{(a)}\) defines the \textit{anamorphic ciphertext}. 

\begin{definition}\label{def:ass}(Anamorphic Secret Sharing(\(\mathsf{ASS}))\) An anamorphic secret-sharing scheme(\(\mathsf{ASS}\)) with perfect privacy realizing the monotone function \(f\colon \{0,1\}^{n} \longrightarrow \{0,1\}\) is formally defined as a tuple $\Sigma_{\mathsf{ASS}}= (\mathsf{ASS.Share}, \mathsf{ASS.Rec_{P\subseteq [n]}^{\mathsf{AM}}})$, where \(\mathsf{ASS.Rec_{P\subseteq [n]}^{\mathsf{AM}}}=(\mathsf{ASS.Rec}^{\mathsf{Original}}_{P\subseteq [n]}, \mathsf{ASS.Rec}^{\mathsf{Covert}}_{P\subseteq [n]})\). Each of the deterministic functions is defined as follows:

\medskip

\if 0

\noindent \textbf{Original Share Distribution:} 

\begin{align*}
    \mathsf{Share}^{(o)}:& \mathsf{Encrypt}_{\mathsf{a}}(\mathcal{S},\mathsf{k}^{(o)}) \times \mathcal{K} \longrightarrow \mathcal{S}^{(o)}_{\{\mathcal{D}\}} \\
    & (\mathsf{c^{(a)}},\kappa^{(o)}) \longmapsto \left((\mathsf{c^{(a)}}, s_{1}^{(o)}),\cdots,(\mathsf{c^{(a)}}, s_{n}^{(o)}),(\mathsf{c^{(a)}}, s_{\mathcal{D}}^{(o)}) \right)
\end{align*}

\fi

\noindent \(\bullet\) \textbf{Anamorphic Share Distribution:} 

\begin{align*}
    \mathsf{ASS.Share}:& \mathcal{S} \times \mathcal{S} \times \mathcal{K} \longrightarrow \mathcal{C}^{(a)} \times \mathcal{S}_{[n]}. \\
    & (s,\hat{s},\kappa^{(a)}) \longmapsto \left(c^{(a)}, s_{1}^{(a)},\cdots,s_{n}^{(a)} \right)
\end{align*}
where \(\mathcal{K}:=\mathsf{K} \times \mathcal{R}\), with \(\kappa^{(a)}=(\mathsf{k}^{(a)},r)\) with \(r\) randomly chosen from a distribution \(\mu : \mathcal{R} \longrightarrow \mathbb{R}\), and \(\kappa^{(a)}=(\kappa^{(o)},\kappa^{(c)})\) consists of both original and covert parts. Using a secret-sharing scheme with access structure $f$, distribute the original key $\mathsf{k}^{(o)}$ into shares $\{ s_i^{(o)} \}_{i=1}^n$ and the covert key $\mathsf{k}^{(c)}$ into shares $\{ s_i^{(c)} \}_{i=1}^n$. Each player $i$ receives the key share \(s_{i}^{(a)} = (s_i^{(o)}, s_i^{(c)})\).

\medskip

\noindent \(\bullet\) \textbf{Reconstruction of the original share:} 
For any authorized subset \(P \subseteq [n]\) with \(f(P) = 1\), there exist deterministic reconstruction functions, \(\mathsf{Rec}^{(\mathsf{key}_{o})}_{P}\), which is the original key reconstruction function, and \(\mathsf{ASS.Rec}_{P}\) that reconstruct the original secret \(s\), is defined by the following commutative diagram:

\begin{align*}
    \mathsf{Rec}^{(\mathsf{key}_{o})}_{P}:&\mathcal{S}_{P} \longrightarrow \mathsf{K}^{(o)}
\end{align*}

\[
\begin{tikzcd}
\mathcal{C}^{(a)} \times \mathcal{S}_{P} \arrow[r, "\mathsf{Rec}^{(\mathsf{key}_{o})}_{P}"] \arrow[rd, "\mathsf{Dec^{(o)} \; \circ \; \mathsf{Rec}^{(\mathsf{key}_{o})}_{P}}"'] & \mathcal{C}^{(a)} \times \mathsf{K}^{(o)} \arrow[d, "\mathsf{Dec^{(o)}}"] \\
& \mathcal{S}
\end{tikzcd}
\]
where \(\mathsf{Dec^{(o)}}\) refers to the decryption of the original message algorithm, \(\mathsf{Rec}^{(\mathsf{key}_{o})}_{P}(c^{(a)}, s^{(o)}_{P}) = (c^{(a)}, \mathsf{k}^{(o)}) \), \(\mathsf{Dec^{(o)}}(c^{(a)}, \mathsf{k}^{(o)}) = s\) and we define \(\mathsf{Rec_{P \subseteq [n]}^{Original}}:=\mathsf{Dec^{(o)} \circ \mathsf{Rec}^{(\mathsf{key}_{o})}_{P \subseteq [n]}}\).

\medskip

\noindent \(\bullet\) \textbf{Reconstruction of the covert share:} 
For any authorized subset \(P \subseteq [n]\) with \(f(P) = 1\), there exist deterministic covert key reconstruction function \(\mathsf{Rec}^{(\mathsf{key}_{c})}_{P}\) and consequently, \(\mathsf{ASS.Rec}^{\mathsf{Covert}}_{P}\) that reconstructs the covert secret \(\hat{s}\), is defined by the following commutative diagram:

\begin{align*}
    \mathsf{Rec}^{(\mathsf{key}_{c})}_{P}: \mathcal{S}_{P} \longrightarrow \mathsf{K}^{(c)} 
\end{align*}

\[
\begin{tikzcd}
\mathcal{C}^{(a)} \times \mathcal{S}_{P} \arrow[r, "\mathsf{ECC} \circ \mathsf{Rec}^{(\mathsf{key}_{c})}_{P}"] \arrow[rd, " \mathsf{Dec}^{(c)} \; \circ \; \mathsf{ECC} \circ \mathsf{Rec}^{(\mathsf{key}_{c})}_{P}"'] & \mathcal{C}^{(c)} \times \mathsf{K}^{(c)} \arrow[d, "\mathsf{Dec^{(c)}}"] \\
& \mathcal{S} 
\end{tikzcd}
\]
where \(\mathsf{Dec^{(c)}}\) refers to the decryption of the covert message algorithm, \(\mathsf{ECC} \circ \mathsf{Rec}^{(\mathsf{key}_{c})}_{P}(c^{(a)}, s^{(c)}_{P}) = (c^{(a)}, \mathsf{k}^{(c)}) \). Let \[\mathsf{ECC}\colon \mathcal{C}^{(a)} \longrightarrow \mathcal{C}^{(o)}\] be the deterministic covert ciphertext extraction algorithm and \(\mathsf{ECC}(c^{(a)})=c^{(c)})\). Then, \(\mathsf{Dec^{(c)}}(c^{(c)},\mathsf{k}^{(c)}) = \hat{s}\), and we define \(\mathsf{Rec}^{\mathsf{Covert}}_{P}:= \mathsf{Dec}^{(c)} \circ \mathsf{ECC}\circ \mathsf{Rec}_{P}^{(\mathsf{key}_{c})}\).

\medskip

\noindent An anamorphic secret-sharing scheme \(\Sigma_{\mathsf{ASS}}\) must satisfy the following properties:

\noindent \(\bullet\) \textbf{Correctness:}  
\begin{itemize}
    \item \textbf{Correctness for original secret:} If \(f(P) = 1\), then for all \(s \in \mathcal{S}\),
    \[
   \Pr_{R \gets \mathcal{R}} \left[ \mathsf{Rec}_{P}^{\mathsf{Original}}(\mathsf{Share}(c^{(a)}, \kappa^{(o)})_{P} = s \right] = 1.
   \]

   \item \textbf{Correctness for anamorphic secret:} If \(f(P) = 1\), then for all \(s,\hat{s} \in \mathcal{S}\),
    \[
   \Pr_{R \gets \mathcal{R}} \left[ \mathsf{Rec}^{\mathsf{Covert}}_P(\mathsf{Share}(c^{(a)},\kappa^{(c)})_{P} = \hat{s} \right] = 1.
   \]

\end{itemize}

\noindent \(\bullet\) \textbf{Perfect Privacy:}

\begin{itemize}
   \item \textbf{Privacy for the anamorphic secret:}
  If \(f(P) = 0\), the shares reveal no information about the secrets, the anamorphic keys as well as anamorphic ciphertexts \((s,s')\). For all \(\mathsf{k}_{1}^{(a)}, \mathsf{k}_{2}^{(a)} \in \mathsf{K}\),
   \[
       \Pr_{R \gets \mathcal{R}} \left[(\mathsf{Share}(\kappa^{(a)}_{1})_{P} = v \right]
       = \Pr_{R \gets \mathcal{R}} \left[(\mathsf{Share}(\kappa^{(a)}_{2})_{P} = v \right]
   \]
\end{itemize}

\medskip

\noindent \(\bullet\) \textbf{Condition for covert reconstruction:}
If \(f(P)=1\), then the probability of reconstructing the covert secret \(\hat{s}\) using only original key shares \(\kappa^{(o)}\), given that \(s\) has been successfully reconstructed using the original key shares \(\kappa^{(o)}\), is zero:
   \[
   \Pr_{R \gets \mathcal{R}}\left[\mathsf{ASS.Rec}^{\mathsf{Covert}}_P(\mathsf{ASS.Share}(c^{(a)}, \kappa^{(o)})_P) \neq \bot\right] = 0.
   \]

\medskip

\noindent $\bullet$ \textbf{Indistinguishability of original and anamorphic ciphertexts:}  
We now describe a security game in which an adversary (or dictator) $\mathcal{D}$ is given access to shares produced from either an original encryption or anamorphic encryption. Using the shared extraction algorithm ($\mathsf{EOC}$), the players \emph{extract the original key shares(possibly based on ordering) and the original ciphertext $c^{(o)}$ from the anamorphic ciphertext} $c^{(a)}$, defined as

\[
\mathsf{EOC}: \mathcal{C}^{(a)} \longrightarrow \mathcal{C}^{(o)}
\]

which, on any anamorphic ciphertext $c^{(a)}$, outputs the original ciphertext $c^{(o)}$ embedded within $c^{(a)}$ and 
\[
\mathsf{EOK}: \{s_{i}^{(a)}\}_{i=1}^{n} \longrightarrow \{s_{i}^{(o)}\}_{i=1}^{n} 
\]
is a deterministic extraction algorithm that extracts original key shares from anamorphic key shares possibly based on ordering.
\medskip

\noindent Note that the challenger does not provide adversary access to \emph{extraction oracles} \(\mathsf{EOC}\) and \(\mathsf{EOK}\).

\smallskip

\noindent Here we define the \textbf{Real game} and the \textbf{Anamorphic game} as follows:

\medskip

\noindent \(\bullet\) \textbf{Challenge Oracle $\mathcal{O}$:}
\begin{enumerate}
    \item The challenger selects a random bit $b \in \{0,1\}$.
    
    \item If $b=0$ (\(\textbf{Real} \; \textbf{Game}(\mathsf{RealG_{a}}(\lambda,\mathcal{D}))\):
    \begin{enumerate}
        \item \(\mathsf{Gen_{o}(1^{\lambda})}\): Generate the original key $\mathsf{k}^{(o)}$.
        \item Compute $c^{(o)} \leftarrow \mathsf{Enc}^{(o)}(s,\mathsf{k}^{(o)})$.
        \item Set the challenge ciphertext $c^* := c^{(o)}$ and generate shares
        \(
        \bigl\{\, (c^{(o)}, s_i^{(o)}) \,\bigr\}_{i=1}^n.
        \)
    \end{enumerate}
    
    \item If $b=1$ (\(\textbf{Anamorphic} \; \textbf{Game}(\mathsf{AnamorphicG}_{a}(\lambda,\mathcal{D}))\)):
    \begin{enumerate}
        \item \(\mathsf{Gen_{a}(1^{\lambda})}\): Generate the anamorphic key $\mathsf{k}^{(a)} = (\mathsf{k}^{(o)},\mathsf{k}^{(c)})$.
        \item Compute
        \[
        c^{(a)} \longleftarrow \Theta\Bigl(c^{(o)},c^{(c)}\Bigr).
        \]
        \item Set $c^* := c^{(a)}$ and generate shares
        \(
        \bigl\{\, (c^{(a)}, s_i^{(o)}) \,\bigr\}_{i=1}^n .
        \)
    \end{enumerate}
\end{enumerate}

The adversary $\mathcal{D}$ is given access to the challenge shares and outputs a guess $b' \in \{0,1\}$.

\medskip

\begin{remark}
    Here we want to emphasize that here the original ciphertext is extracted from the anamorphic ciphertext and original key shares are extracted from the anamorphic key shares, then in both shares the original key shares are the same and the challenge is to distinguish the original and anamorphic ciphertexts for the adversary. Our main goal is to hide the covert ciphertext so that the adversary cannot suspect that there is a covert message within.
\end{remark}

\medskip

We define the advantage of $\mathcal{D}$ in distinguishing the \(\mathsf{Real \; Game}\) and the \(\mathsf{Anamorphic \; Game}\) as

\begin{align}
\mathsf{Adv}^{\mathsf{AME}}_{\mathcal{D}}(\lambda)
\;&=\; 
\Bigl|\, \Pr\bigl[\mathsf{RealG}_{a}(\lambda,\mathcal{D})\bigr] - \Pr\bigl[\mathsf{AnamorphicG_{a}}(\lambda,\mathcal{D})\bigr] \Bigr| \\
&=\Bigl|\, \Pr\bigl[(\mathcal{D}\text{ outputs }1) \wedge (b=1)\bigr] - \Pr\bigl[(\mathcal{D}\text{ outputs }1) \wedge (b=0)\bigr] \Bigr|.
\end{align}

The scheme satisfies the \emph{indistinguishability of original and anamorphic ciphertexts} property if for all PPT adversaries $\mathcal{D}$

\[
\mathsf{Adv}^{\mathsf{AME}}_{\mathcal{D}}(\lambda) < \mathsf{negl}(\lambda).
\]

In other words, an adversary (or dictator) is unable to distinguish (with a non-negligible advantage) whether the provided shares originated from an original encryption or from anamorphic encryption.

\medskip

\noindent \(2.\) To reconstruct the original message, the reconstruction procedure applied either to the original ciphertext or to the anamorphic ciphertext will produce the same original message
\[\mathsf{Rec}^{\mathsf{Original}}_{P \subseteq [n]}(\mathsf{Share}(c^{(a)},\kappa^{(o)})=\mathsf{Rec}^{\mathsf{Original}}_{P \subseteq [n]}(\mathsf{Share}(c^{(o)},\kappa^{(o)})=s,\]
so that the dictator cannot distinguish between \(c^{(o)}\) and \(c^{(a)}\), which one is the original ciphertext.

\medskip

\begin{remark}
Note that the definition of the anamorphic secret-sharing we described here is mainly based on classical key secret-sharing \(\mathsf{SS}\) Definition \ref{def:ss}. We will interchangeably use classical secret-sharing \(\mathsf{SS}\) while defining the secret-sharing algorithm and whenever we want to emphasize the classical key distribution separately.
\end{remark}

\medskip

\if 0

\noindent \(\bullet\) \textbf{Plausible Deniability:} Let, \( \mathcal{K} \) be a key generation algorithm that outputs a key \( k \), \( s \) be a normal secret embedded in the key and, \( (s,\hat{s}) \) be an anamorphic secret. Let \( b \in \{0,1\} \) be a bit that indicates whether an anamorphic secret is present where \( b = 0 \) indicates that the key contains only the normal secret \( S_n \) (no anamorphic secret) and, \( b = 1 \) indicates that the key contains both the normal secret \(s\) and an anamorphic secret \((s, \hat{s})\). Let \( Adv \) be a polynomial-time adversary attempting to determine whether an anamorphic secret is present in the key. The adversary \( A \) outputs a guess \( b' \in \{0,1\} \), where it tries to guess whether an anamorphic secret is present. The plausible deniability of the presence of the anamorphic secret is formalized as:

$$
\left| \Pr \left[ \mathsf{Adv} \left( \mathsf{Game}_0 \right) = 1 \right] - \Pr \left[ \mathsf{Adv} \left( \mathsf{Game}_1 \right) = 1 \right] \right| \leq \mathsf{negl}(\lambda)
,$$ where \( \mathsf{Game}_0 \) is the game where only the normal secret \( s \) is present (no anamorphic secret), \( \mathsf{Game}_1 \) is the game where both the normal secret \( s \) and the anamorphic secret \( (s,\hat{s}) \) are present, and, \( \mathsf{negl}(\lambda) \) is a negligible function in terms of the security parameter \( \lambda \).
\fi
\end{definition}

\section{The Compiler} \label{compiler}
The following compiler model has been discussed by \c{C}akan et al. in the paper \cite{ccakan2023computational}. This compiler can perfectly encrypt a quantum state using classical keys. Using similar ideas with classical secret-sharing scheme of Krawczyk \cite{krawczyk1993secret} and using some techniques from \cite{nascimento2001improving,fortescue2012reducing}, \c{C}akan et al. constructed a general compiler combining classical secret-sharing scheme and quantum erasure-correcting code [Theorem 9., Page 20, \cite{ccakan2023computational}]. In our paper, we have generalized the construction for anamorphic quantum ciphertext using multiple keys and we have also proved the correctness and perfect privacy properties accordingly with some additional techniques.

\medskip
Let \(f \colon \{0,1\}^{n} \longrightarrow \{0,1\}\) be a monotone function and \(\mathsf{SS}\) be the classical secret-sharing realizing \(f\), and \(\mathsf{QECC}\) \(\mathsf{QC}=(\mathsf{QC.Enc},\mathsf{QC.Rec})\) with \(n\) components, and for all \(x \in \{0,1\}^{n}\) realizing some monotone function \(f'(x) \geq f(x)\) which we have access to. The quantum erasure correcting code operations \(\mathsf{QC}\) corrects erasures in the complement of all the sets \(P \subseteq [n]\) such that \(f'(P)=1\) \cite{ccakan2023computational}.
Together with a classical secret-sharing scheme \( \mathsf{SS}\) the compiler implements a no-cloning monotone function \(f\) with a quantum error correcting code \(\mathsf{QC}\) that achieves a suitable no-cloning monotone function \( f'\ge f\), to establish a quantum secret sharing scheme \(\mathsf{QSS}\) that realizes \(f\) \cite{ccakan2023computational}. To construct quantum anamorphic secret-sharing \(\Sigma_{\mathsf{QASS}}\), we want to establish \( \mathsf{QASS.Share} \) algorithm to share the quantum states \( M_{o}\) and \(M_{c}\) and using The reconstruction procedure \( \mathsf{QASS.Rec^{\mathsf{AM}}_{P \subseteq [n]}}\) the set of players \( P \subseteq [n] \) reconstructs both the state \(M_{o}\) and \(M_{c}\) utilizing the decoding process for \( \mathsf{QC}\) and the algorithm \(\mathsf{DOM}\) and \(\mathsf{DCM}\).

\medskip

We now describe the quantum anamorphic secret-sharing algorithm: 

\begin{algorithm}[H]
\caption{Quantum Anamorphic Secret Sharing (QASS)}
\label{alg:qass}
\begin{algorithmic}[1]

\STATE \textbf{Input:} Anamorphic quantum state \(M_{f}^{(1)}\) and players \(P_1, \dots, P_n\).

\STATE \textbf{Output:} Shares \((s^{(a)}_i, \mathcal{E}_i)\) for each player \(P_i\).

\STATE \textbf{Step 1: Dealer Encrypts the Quantum State:}
\STATE \hspace{1cm} Dealer runs the Quantum Anamorphic Encryption algorithm \(\mathsf{QAE}\) to create the anamorphic quantum state:
\(
M_{f}^{(1)}.
\)

\STATE \textbf{Step 2: Share Generation:}

\STATE \hspace{1cm} Share the anamorphic key:
Share the anamorphic key \(\mathsf{k}^{(a)} \in \mathsf{K}\) using \(\mathsf{SS.Share}\) which yields the classical shares \((s^{(a)}_{1},\ldots,s^{(a)}_{n}),\) that is,
\[
(s^{(a)}_1, \dots, s^{(a)}_{n}) = \mathsf{SS}.\mathsf{Share}(k_{1}, k_{2}, k_{3}, k_{4}, k_{5}, k_{6}).
\]

\STATE \textbf{Step 3: Encode the anamorphic quantum state:}

\STATE \hspace{1cm} 
Encode the anamorphic quantum state \(M_{f}^{(1)}\) using \(\mathsf{QC.Enc}\), which yields entangled quantum systems \((\mathcal{E}_{1},\ldots,\mathcal{E}_{n})\).

\STATE \textbf{Step 4: Distribution of Shares:}
\STATE \hspace{1cm} Set \((s^{(a)}_{i},\mathcal{E}_{i})\) to be the final share of the player \(P_{i}\).
\end{algorithmic}
\end{algorithm}

\medskip

\if 0

\begin{algorithm}[H]
\caption{Quantum Covert Secret Sharing Reconstruction (QCSS-Rec)}
\label{alg:qcss-reconstruct}
\begin{algorithmic}[1]

\STATE \textbf{Input:} Shares \((S_i, E_i)_{i \in Q}\) from a set of players \(Q\).

\STATE \textbf{Output:} Reconstructed covert message \(M_c\).

\STATE \textbf{Step 1: Reconstruct the Anamorphic State:}
\STATE \hspace{1cm} Compute the anamorphic quantum state using quantum reconstruction:
\[
M_f^{(1)} := \mathsf{QC.Rec}_Q \left( (E_i)_{i \in Q} \right).
\]

\STATE \textbf{Step 2: Reconstruct the Classical Shares:}
\STATE \hspace{1cm} Compute the classical shared components using secret sharing reconstruction:
\[
(b, l, k, d_1) := \mathsf{SS.Rec}_Q \left( (S_i)_{i \in Q} \right).
\]

\STATE \textbf{Step 3: }

\STATE \textbf{Step 4: Compute the Difference of States:}
\STATE \hspace{1cm} Compute the difference between the two states:
\[
M_f^{(1)} - M_f^{(0)}.
\]

\STATE \textbf{Step 5: Apply the Inverse Permutation on the Covert Block:}
\STATE \hspace{1cm} Compute the intermediate state by reversing the permutation:
\[
M_c'' := P_l^\dagger \left( M_f^{(1)} - M_f^{(0)} \right) P_l.
\]

\STATE \textbf{Step 6: Remove Zero Padding and Extract the Covert Block:}
\STATE \hspace{1cm} Remove the last \((2^{d_1} - 2^{d_2})\) zero rows and columns to obtain the reduced matrix:
\[
M_c' := \text{Reduced}(M_c'').
\]

\STATE \textbf{Step 7: Apply \(\mathsf{QOTP}\) Decryption:}
\STATE \hspace{1cm} Compute the covert message by decrypting with key \(k'\):
\[
M_c := \mathrm{QOTPDec}(M_c', k').
\]

\STATE \textbf{Step 8: Output:}
\STATE \hspace{1cm} Output the reconstructed covert message \(M_c\).

\end{algorithmic}
\end{algorithm}

\fi

\medskip

For convenience of writing we have written the previous anamorphic key \(\mathsf{k}^{(a)}=(k,k',d_{1},d_{2},l,\eta)\) as \((k_{1},k_{4},k_{2},k_{5},k_{3},k_{6})\) in the respective order, where \(\mathsf{k^{(o)}}=(k_{1},k_{2},k_{3})\) is the original key and \(\mathsf{k^{(c)}}=(k_{4},k_{5},k_{6})\) is the covert key.

\medskip

Let \(\mathcal{J} \subset \mathbb{Z}^{+}\) be a finite set containing \(\eta\). Let \( f, f' \colon \{0, 1\}^n \longrightarrow \{0, 1\} \) be no-cloning monotone functions with the property that \( f' \ge f \). We define \if 0 
\( \xi_{M_{f}^{(0)}} = \mathsf{QC}.\mathsf{Enc}(M_{f}^{(0)})\),
\fi
\( \xi_{M_{f}^{(1)}} = \mathsf{QC}.\mathsf{Enc}(M_{f}^{(1)})\)  and for each \(i=1,\ldots,6\), define \( \tau_{(k_{i}, r_{i})} = |\mathsf{SS}.\mathsf{Share}(k_{i}, r_{i})\rangle \langle \mathsf{SS}.\mathsf{Share}(k_{i}, r_{i})| \), and \( \tau_{k_{i}', r} = |\mathsf{SS}.\mathsf{Share}(k_{i}', r')\rangle \langle \mathsf{SS}.\mathsf{Share}(k_{i}', r')| \) as the sharing of the keys \( k_{i},k'_{i}\) with the random inputs \( r_{i} \)s  respectively. 

\medskip

\noindent The scheme \( \mathsf{QASS} \) can be defined as follows:

\if 0

\begin{align*}
    \mathsf{QSS}.\mathsf{Share}(M_{f}^{(0)}) =\sum_{k_{1} \in \{0,1\}^{2 d_{1}}} \sum_{k_{2} \in \{1,\dots,2^{d_{1}+1}\}}\sum_{r_{1},r_{2},r_{3} \in \mathcal{R}} \sum_{k_{3} \in \operatorname{Sym}({2^{d_{1}+1}})} \frac{1}{2^{3 d_{1}+1}} \frac{1}{|\mathcal{R}|^{3}} \frac{1}{2^{d_{1}+1}!} \left[\bigotimes_{i=1}^{3} \tau_{(k_{i}, r_{i})} \otimes \xi_{M_{f}^{(0)}} \right].
\end{align*}

\begin{align*}
    \mathsf{QSS}.\mathsf{Share}(M_{f}^{(1)}) =\sum_{k_{1} \in \{0,1\}^{2 d_{1}}} \sum_{k_{2} \in \{1,\dots,2^{d_{1}+1}\}}\sum_{r_{1},r_{2},r_{3} \in \mathcal{R}} \sum_{k_{3} \in \operatorname{Sym}({2^{d_{1}+1}})} \frac{1}{2^{3 d_{1}+1}} \frac{1}{|\mathcal{R}|^{3}} \frac{1}{2^{d_{1}+1}!} \left[\bigotimes_{i=1}^{3} \tau_{(k_{i}, r_{i})} \otimes \xi_{M_{f}^{(1)}} \right].
\end{align*}

\fi

\begin{align}
\mathsf{QASS}.\mathsf{Share}(M_{f}^{(1)})=& \sum_{k_{1} \in \{0,1\}^{2 d_{1}}} \sum_{k_{4} \in \{0,1\}^{2 d_{2}}} \sum_{k_{2} \in [2^{d_{1}+1}]} \sum_{k_{5} \in [2^{d_{1}+1}]} \sum_{r_{1},r_{2},r_{3},r_{4},r_{5},r_{6} \in \mathcal{R}} 
\sum_{k_{3} \in [|\operatorname{Sym}{(2^{d_{1}+1}})|]} \sum_{k_{6}\in \mathcal{J}} \notag\\
&\frac{1}{2^{4d_{1}+2d_{2}+2}} \frac{1}{|\mathcal{R}|^{6}}\frac{1}{\mid \mathcal{J}\mid} \frac{1}{2^{d_{1}+1}!} 
\left[\bigotimes_{i=1}^{6} \tau_{(k_{i}, r_{i})} \otimes \xi_{M_{f}^{(1)}} \right].
\end{align}

and

\[
\mathsf{QASS}.\mathsf{Rec}_{P\subseteq [n]}^{\mathsf{Original}}(\sigma) = \mathsf{DOM}(\mathsf{QC}.\mathsf{Rec}_{P\subseteq [n]}(\Tr_{\mathsf{key}_{a}}(\sigma))), \mathsf{SS}.\mathsf{Rec}_{P\subseteq [n]}(\Tr_{\mathsf{state}}(\sigma)), 
\]
where \(\mathsf{key}_{a}\) corresponds to anamorphic key and 
\[
\mathsf{QASS.Rec}^{\mathsf{Covert}}_{P \subseteq [n]}(\sigma) = \mathsf{DCM}(\mathsf{QC}.\mathsf{Rec}_{P \subseteq [n]}(\Tr_{\mathsf{key}_{a}}(\sigma))), \mathsf{SS}.\mathsf{Rec}_{P \subseteq [n]}(\Tr_{\mathsf{state}}(\sigma)), 
\]

where \( \Tr_{\mathsf{key}_{a}} \) and \( \Tr_{\mathsf{state}} \) represent the process of tracing out the subsystem associated with the shares of the key and the shares of the quantum secret, respectively. In this framework, the classical key shares are represented as qubits in basis states. However, they may also be preserved as classical shares without altering the scheme.

\medskip

We describe the quantum anamorphic secret reconstruction algorithm as follows:

\begin{algorithm}[H]
\caption{Quantum Anamorphic Secret Reconstruction (\(\mathsf{QASS.Rec}\))}
\label{alg:qass-reconstruct}
\begin{algorithmic}[1]

\STATE \textbf{Input:} Shares \((s^{(a)}_i, \mathcal{E}_i)\) of each player \(P_{i}\) from a set of authorized players \(P \subseteq [n]\).

\STATE \textbf{Output:} Reconstructed original message \(M_o\) and covert message \(M_{c}\).

\STATE \textbf{Step 1: Reconstruct the Classical Shares:}
\STATE \hspace{1cm} Compute the classical shared components using classical secret-sharing(\(\mathsf{SS}\)) reconstruction:
\[
\mathsf{SS.Rec}_{P\subseteq [n]} \left( (s^{(a)}_i)_{i \in P} \right)=(k_{1}, k_{2}, k_{3},k_{4},k_{5},k_{6}).
\]

\STATE \textbf{Step 2: Reconstruct the Anamorphic State:}
\STATE \hspace{1cm} Compute the anamorphic quantum state using quantum reconstruction:
\[
\mathsf{QC.Rec}_P \left( (\mathcal{E}_i)_{i \in P} \right)=M_f^{(1)}.
\]

\STATE Apply the reconstruction algorithm \(\mathsf{QASS.Rec_{P\subseteq [n]}^{Original}}\) to reconstruct the original message.

\STATE Apply the reconstruction algorithm \(\mathsf{QASS.Rec_{P\subseteq [n]}^{Covert}}\) to reconstruct the covert message.

\STATE \textbf{Step 7: Output:}
\STATE \hspace{1cm} Output the reconstructed original message \(M_o\) and the covert message \(M_{c}\).

\end{algorithmic}
\end{algorithm}

\bigskip

\begin{theorem}\label{thm:ass}
    Let \( f, f' : \{0, 1\}^n \longrightarrow \{0, 1\} \) be no-cloning monotone functions satisfying \( f' \ge f \). Let \( \mathsf{QC} = (\mathsf{QC}.\mathsf{Enc}, (\mathsf{QASS}.\mathsf{Rec}_{P \subseteq [n]}^{\mathsf{AM}})) \) denote a quantum error-correcting code (\(\mathsf{QECC}\)) that implements \( f' \), and let \( \mathsf{SS} = (\mathsf{SS}.\mathsf{Share}, (\mathsf{SS}.\mathsf{Rec}_P)_{P \subseteq [n]}) \) represent a classical secret sharing scheme classified as [post-quantum computational, statistical, perfect] that realizes \( f \). Thus, \( \mathsf{QASS} \) represents a [computational, statistical, perfect] quantum anamorphic secret sharing scheme for \( f \), with the 
\begin{align*}
&\text{total share size for the anamorphic secret is} = \mathsf{size}(\mathsf{QC.Enc}(M_{f}^{(1)})) + (4d_{1} +2{d_{2}}+ 1) + 6 \lceil \log |\mathcal{R}| \rceil+ \\
&\lceil \log |\mathcal{J}| \rceil+ \lceil \log(2^{d_1 + 1}!)\rceil ).
\end{align*}

\if 0
\[ \text{total share size for the original secret with 3 keys} = \mathsf{size}(\mathsf{QC.Enc}(M_{f}^{(0)})) + (3d_1 +1+  3 \lceil \log |\mathcal{R}| \rceil + \lceil \log(2^{d_1 + 1}!)\rceil)), 
\]
\[ \text{total share size for the anamorphic secret with 3 keys} =\mathsf{size}(\mathsf{QC.Enc}(M_{f}^{(1)})) +(3d_{1} +1+ 3 \lceil \log |\mathcal{R}| \rceil + \lceil \log(2^{d_1 + 1}!)\rceil)\]

and

\fi

\noindent The difference between the anamorphic share size and the original share size, along with anamorphic key shares, is

\[
\mathsf{size}\Big(\mathsf{QASS.Share}(M_{f}^{(1)}\Big)-\mathsf{size}\Big(\mathsf{QASS.Share}(M_{f}^{(0)})\Big) = 0.
\]

\if 0
and
\[
\mathsf{size}\Big(\mathsf{QASS.Share}(M_{f}^{(1)}\Big)-\mathsf{size}\Big(\mathsf{QSS.Share}(M_{f}^{(1)})\Big) = d_{1}+2d_{2}+\lceil \log |\mathcal{J}| \rceil+3 \lceil \log |\mathcal{R}| \rceil).
\]
\fi

\noindent Moreover, \( \mathsf{QASS} \) exhibits efficient sharing and reconstruction protocols whenever \( \mathsf{QC} \) and \( \mathsf{SS} \) do.
\end{theorem}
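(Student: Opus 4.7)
The plan is to verify, in order, the three structural properties required by Definition~\ref{def:ass} -- (i) correctness of both the original and the covert reconstruction, (ii) privacy against unauthorized coalitions (with the privacy flavor inherited from $\mathsf{SS}$), and (iii) indistinguishability of original and anamorphic ciphertexts -- and then to tally the share size and check efficiency. Throughout, I will exploit the natural product structure of each share $(s^{(a)}_i, \mathcal{E}_i)$: the classical component is handled entirely by $\mathsf{SS}$, while the quantum component is handled by $\mathsf{QC}$ together with the $\mathsf{QAE}$ construction from Section~\ref{technical}.

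For correctness, fix an authorized coalition $P \subseteq [n]$ with $f(P)=1$. Because $f' \ge f$, we also have $f'(P)=1$, so the QECC property gives $\mathsf{QC}.\mathsf{Rec}_{P}((\mathcal{E}_i)_{i\in P}) = M_f^{(1)}$ (up to a discardable ancilla), while $\mathsf{SS}.\mathsf{Rec}_{P}((s^{(a)}_i)_{i\in P})$ returns the full key tuple $(k_1,\ldots,k_6)$. Feeding $M_f^{(1)}$ and the original subkey $(k_1,k_2,k_3)$ into Algorithm $\mathsf{DOM}$ recovers $M_o$ by Theorem~\ref{thm:DOM-correctness}, and feeding it with the covert subkey $(k_4,k_5,k_6)$ into $\mathsf{DCM}$ recovers $M_c$ by Theorem~\ref{thm:exact-correctness-X-only}. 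The condition that covert reconstruction using only the original key shares outputs $\bot$ follows because $\mathsf{DCM}$ explicitly requires $k_4$ and $k_6=\eta$, which are not contained in $(k_1,k_2,k_3)$; absent these, the covert unpadding and $\mathsf{QOTP}$ inverse cannot be executed.

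For privacy, fix $P$ with $f(P)=0$ and any two secret pairs $(M_o,M_c)$ and $(M_o',M_c')$. The view of $P$ is the tensor $\sigma_P = \tau_P^{\mathrm{cl}} \otimes \xi_P^{\mathrm{q}}$, where $\tau_P^{\mathrm{cl}}$ is the classical key-share marginal and $\xi_P^{\mathrm{q}}$ is the QECC-share marginal. The privacy of $\mathsf{SS}$ directly implies that $\tau_P^{\mathrm{cl}}$ is perfect / statistically / computationally independent of the secrets. For the quantum marginal, I would average over the uniform classical coins $(k,k',\sigma_l)$ as in Lemma~\ref{lem:Ma-avg}: the resulting marginal of $M_f^{(1)}$ equals the fixed state $\Xi_\eta$, independent of $(M_o,M_c)$, and hence so is its image under the CPTP encoder $\mathsf{QC}.\mathsf{Enc}$. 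A standard hybrid over the two coordinates (keys, then quantum marginal) then transports the distinguishability flavor of $\mathsf{SS}$ to $\mathsf{QASS}$, giving perfect, statistical, or computational privacy accordingly.

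For the indistinguishability of original versus anamorphic ciphertexts, it suffices to combine Theorem~\ref{thm:ind} (trace distance $1/\eta$) with Lemma~\ref{lemma:advbd}, and to observe by Remark~\ref{re:dom} that $\mathsf{DOM}$ applied to either $M_f^{(0)}$ or $M_f^{(1)}$ with the same key $(k_1,k_2,k_3)$ produces the identical original plaintext $M_o$, so the dictator's recovered view is consistent across the two games. The share-size count is now a bookkeeping exercise: the quantum part contributes $\mathsf{size}(\mathsf{QC}.\mathsf{Enc}(M_f^{(1)}))$, the key-tuple contributes $2d_1 + (d_1{+}1) + 2d_2 + (d_1{+}1) + \lceil\log|\mathcal{J}|\rceil + \lceil\log(2^{d_1+1}!)\rceil$ bits, and the randomness contributes $6\lceil\log|\mathcal{R}|\rceil$, giving the stated total. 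Because $M_f^{(0)}$ and $M_f^{(1)}$ live in the same Hilbert space $(\mathbb{C}^2)^{\otimes(d_1+1)}$ and the key layout is identical in both modes, the two share sizes coincide exactly, giving difference zero. Efficiency is inherited componentwise from $\mathsf{SS}$, $\mathsf{QC}$, and the polynomial-time circuits of $\mathsf{QAE}$/$\mathsf{DOM}$/$\mathsf{DCM}$. The main obstacle I anticipate is the privacy argument: cleanly composing the classical privacy of $\mathsf{SS}$ with the quantum averaging so that all three privacy flavors follow from a single hybrid, and verifying that the partial QECC shares of a $\mathsf{QAE}$-ciphertext leak nothing plaintext-dependent even when the coalition lies outside the QECC's reconstruction set.
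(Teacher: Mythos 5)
Your proposal follows essentially the same route as the paper: correctness is argued exactly as in Theorem \ref{thm:corr} ($\mathsf{SS}$ recovers the six keys, $\mathsf{QC.Rec}$ recovers $M_f^{(1)}$, then $\mathsf{DOM}$/$\mathsf{DCM}$), and your ``hybrid over keys, then quantum marginal'' is precisely the paper's $\Psi_1,\Psi_2$ argument in Theorem \ref{thm:priv}, which first swaps the key shares for shares of an independent random key via the $\epsilon/12$-privacy of $\mathsf{SS}$ and then uses the QOTP averaging (your $\Xi_\eta$/Lemma \ref{lem:Ma-avg} step) to make the quantum component secret-independent, with the computational flavor obtained by replacing trace distance with the advantage pseudometric. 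Two cosmetic remarks: the unauthorized coalition's view is a classically correlated mixture rather than the literal product $\tau_P^{\mathrm{cl}}\otimes\xi_P^{\mathrm{q}}$ you write (your subsequent hybrid already handles the correlation), and your key tally yields $4d_1+2d_2+2$ bits versus the theorem's $4d_1+2d_2+1$, a discrepancy that traces to the paper's own normalization $2^{-(4d_1+2d_2+2)}$ and is therefore an off-by-one in the statement, not a gap in your argument.
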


\noindent \textbf{Correctness and Privacy:}

\begin{theorem}\label{thm:corr}
Let \(P\subseteq [n]\) be an authorized set of players (i.e., \(f(P)=1\)). Then the reconstruction procedure
\(
\mathsf{QASS.Rec}_P(\sigma)
\)
applied to the sharing state \(\mathsf{QASS.Share}(M_f^{(1)})\) correctly reconstructs the original message \(M_o\) and the covert message \(M_c\). In other words, if
\[
\mathsf{QASS.Rec}_P(\sigma) \coloneqq \Bigl(\mathsf{DOM}\bigl(\mathsf{QC.Rec}_P(\Tr_{\mathsf{key}_a}(\sigma))\bigr),\, \mathsf{DCM}\bigl(\mathsf{QC.Rec}_P(\Tr_{\mathsf{key}_a}(\sigma))\bigr)\Bigr),
\]
then
\[
\mathsf{QASS.Rec}_P(\sigma) = (M_o, M_c),
\]
where \(M_o\) is the original message and \(M_c\) is the covert message.
\end{theorem}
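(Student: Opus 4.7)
The plan is to prove correctness by decomposing the reconstruction pipeline into three independent layers — classical key recovery, quantum state recovery, and anamorphic decryption — each of which has already been established by earlier results in the paper. The structural observation that makes this clean is that the sharing state $\sigma := \mathsf{QASS.Share}(M_f^{(1)})$ is a convex mixture of tensor products of the form $\bigotimes_{i=1}^{6}\tau_{(k_i,r_i)} \otimes \xi_{M_f^{(1)}}$, in which the classical key registers and the quantum-encoded register live on disjoint tensor factors. Hence $\mathrm{Tr}_{\mathsf{state}}(\sigma)$ and $\mathrm{Tr}_{\mathsf{key}_a}(\sigma)$ isolate the two components cleanly, and the decoders applied to them commute up to the final combination.

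First, I would apply $\mathsf{SS.Rec}_P$ to the classical part $\mathrm{Tr}_{\mathsf{state}}(\sigma)$ restricted to the shares of players in $P$. Since $f(P)=1$ and $\mathsf{SS}$ realizes $f$ with perfect correctness (Definition \ref{def:ss}), each summand in the mixture yields the exact tuple $(k_1,\dots,k_6)$ independently of the randomness $r_1,\dots,r_6$. Consequently, with probability $1$ over the sharing randomness the authorized coalition recovers the original key $\mathsf{k}^{(o)}=(k_1,k_2,k_3)$ and the covert key $\mathsf{k}^{(c)}=(k_4,k_5,k_6)$. Next, I would apply $\mathsf{QC.Rec}_P$ to $\mathrm{Tr}_{\mathsf{key}_a}(\sigma)$ restricted to $P$. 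Because $f'(P)\ge f(P)=1$ and $\mathsf{QC}$ is a $\mathsf{QECC}$ realizing $f'$, the decoder satisfies $\mathsf{QC.Rec}_P(\xi_{M_f^{(1)}}) = M_f^{(1)}\otimes \sigma_0$ for a fixed state $\sigma_0$, which is discarded; thus $M_f^{(1)}$ is recovered exactly.

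At this point I would invoke the two decryption results already proved in the paper. Theorem \ref{thm:DOM-correctness} applied with key $\mathsf{k}^{(o)}$ gives $\mathsf{DOM}(M_f^{(1)};\mathsf{k}^{(o)}) = M_o$, and Theorem \ref{thm:exact-correctness-X-only} applied with key $\mathsf{k}^{(c)}$ gives $\mathsf{DCM}(M_f^{(1)};\mathsf{k}^{(c)}) = M_c$ in the exact (ideal) regime; the finite-data accuracy and sample complexity required in any physical implementation of $\mathsf{DCM}$ are quantified in Theorem \ref{thm:finitedatapauli-Xonly}. Composing the four CPTP maps in the declared order and using linearity of the trace gives
\[
\mathsf{QASS.Rec}_P(\sigma) \;=\; \bigl(\mathsf{DOM}(M_f^{(1)};\mathsf{k}^{(o)}),\;\mathsf{DCM}(M_f^{(1)};\mathsf{k}^{(c)})\bigr) \;=\; (M_o,M_c),
\]
as required.

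The main subtlety — and the place where care is needed rather than new ideas — is the commutativity and measure-zero issue coming from the convex mixture over $(k_i,r_i)$. The step that looks most routine but warrants the most attention is verifying that, before the two partial traces are taken, the classical key register and the quantum encoded register are \emph{product}, not entangled, in every summand of the mixture; this justifies reordering $\mathrm{Tr}_{\mathsf{state}}$ with $\mathsf{QC.Rec}_P$ and $\mathrm{Tr}_{\mathsf{key}_a}$ with $\mathsf{SS.Rec}_P$. Once this independence is checked directly from the form of $\mathsf{QASS.Share}$, each summand of the mixture is mapped to the same classical–quantum pair $(M_o,M_c)$, and the convex combination collapses to a deterministic output, yielding the desired correctness. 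The statistical and computational variants follow from the same pipeline by inheriting the privacy budget of $\mathsf{SS}$ and $\mathsf{QC}$ additively.
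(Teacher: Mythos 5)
Your proposal is correct and follows essentially the same route as the paper's proof: recover the key tuple via the perfect correctness of $\mathsf{SS}$ on the classical registers, recover $M_f^{(1)}$ via the $\mathsf{QECC}$ correctness of $\mathsf{QC}$ after tracing out the key registers (using the product structure of each summand), and then apply $\mathsf{DOM}$ and $\mathsf{DCM}$ to obtain $(M_o, M_c)$. Your version is marginally more explicit in citing Theorem \ref{thm:DOM-correctness} and the exact-correctness statement for $\mathsf{DCM}$ (with the finite-data caveat), but this is the same decomposition and argument as in the paper.
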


\begin{proof}
We assume that both the classical secret sharing scheme \(\mathsf{SS}\) and the quantum encoding scheme \(\mathsf{QC}\) satisfy their correctness properties.
The scheme distributes classical shares \( s_i^{(a)} \) for each \( i\in [n] \) corresponding to the keys \( (k_1,k_2,k_3,k_4,k_5,k_6) \). By the correctness of the classical secret sharing scheme \(\mathsf{SS}\), if \(P\subseteq[n]\) is an authorized set (i.e., \(f(P)=1\)), then
\begin{equation}
\mathsf{SS.Rec}_P\Bigl(\{s_i^{(a)}\}_{i\in P}\Bigr) = (k_1,k_2,k_3,k_4,k_5,k_6).
\end{equation}

\noindent The overall sharing state is given by
\[
\sigma = \sum_{\vec{k},\, \vec{r}} \alpha(\vec{k},\vec{r}) \left(\bigotimes_{i=1}^{6} \tau_{(k_i,r_i)}\right) \otimes \xi_{M_f^{(1)}},
\]
where \(\vec{k} = (k_1,k_2,k_3,k_4,k_5,k_6)\), \(\vec{r} = (r_1,\ldots,r_6)\) and the normalization constant is
\[
\alpha(\vec{k},\vec{r}) = \frac{1}{2^{4d_1+2d_2+2}}\, \frac{1}{|\mathcal{R}|^6}\, \frac{1}{|\mathcal{J}|}\, \frac{1}{(2^{d_1+1})!}.
\]

\medskip
The subsystem corresponding to the classical key shares is denoted by \(\mathsf{key}_a\). By linearity of the trace and by the independence of the classical and quantum parts, we have
\begin{equation}
\Tr_{\mathsf{key}_a}(\sigma) = \xi_{M_f^{(1)}}.
\end{equation}
The normalization factors in the definition of \(\sigma\) guarantee that
\(
\sum_{\vec{k},\,\vec{r}} \alpha(\vec{k},\vec{r}) = 1,
\)
so that the partial trace exactly recovers the quantum component \(\xi_{M_f^{(1)}}\).

\smallskip

Since \(\xi_{M_f^{(1)}} = \mathsf{QC.Enc}(M_f^{(1)})\), the correctness of the quantum encoding scheme \(\mathsf{QC}\) implies that the reconstruction procedure applied to the quantum subsystem yields the secret, i.e.,
\begin{equation}
\mathsf{QC.Rec}_{P\subseteq [n]}\Bigl(\Tr_{\mathsf{key}_a}(\sigma)\Bigr) = \mathsf{QC.Rec}_{P\subseteq [n]}\bigl(\xi_{M_f^{(1)}}\bigr) = M_f^{(1)}.
\end{equation}

\smallskip

The decoding maps \(\mathsf{DOM}\) and \(\mathsf{DCM}\) are applied to the reconstructed quantum state \(M_f^{(1)}\) to extract the original message \(M_o\) and the covert message \(M_c\), respectively. That is,
\begin{equation}
\begin{split}
\mathsf{DOM}\bigl(M_f^{(1)}\bigr) &= M_o,\\[1mm]
\mathsf{DCM}\bigl(M_f^{(1)}\bigr) &= M_c.
\end{split}
\end{equation}
Hence, the overall reconstruction procedure yields
\begin{equation}
\begin{split}
\mathsf{QASS.Rec}_P(\sigma) &= \Bigl(\mathsf{DOM}\bigl(\mathsf{QC.Rec}_P(\Tr_{\mathsf{key}_a}(\sigma))\bigr),\, \mathsf{DCM}\bigl(\mathsf{QC.Rec}_P(\Tr_{\mathsf{key}_a}(\sigma))\bigr)\Bigr)\\[1mm]
&=\Bigl(\mathsf{DOM}\bigl(M_f^{(1)}\bigr),\, \mathsf{DCM}\bigl(M_f^{(1)}\bigr)\Bigr)\\[1mm]
&= (M_o, M_c).
\end{split}
\end{equation}
proving the correctness of the quantum anamorphic secret-sharing scheme.
\end{proof}

\begin{theorem}\label{thm:priv}
    The above quantum anamorphic secret-sharing scheme has perfect privacy.
\end{theorem}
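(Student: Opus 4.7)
The plan is to reduce the quantum statement to two already-established facts: (i) the \emph{classical} perfect privacy of $\mathsf{SS}$ applied separately to each of the six key registers, and (ii) the permutation/Pauli twirl identity proved in Theorem~\ref{thm:exp}, which says that after averaging over all the internal randomness of $\mathsf{QAE}$ the ciphertext $M_f^{(1)}$ collapses to a fixed density operator that depends only on the structural parameters $(d_1,\eta)$ and \emph{not} on the actual messages $(M_o,M_c)$. If these two ingredients can be cleanly separated on the marginal of $P$, then the view of every unauthorized coalition is the same function of the scheme parameters, which is exactly $0$-statistical (hence perfect) privacy.

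First, I would fix an arbitrary unauthorized $P\subseteq[n]$ with $f(P)=0$ and two arbitrary admissible message pairs $(M_o,M_c)$, $(\tilde M_o,\tilde M_c)$ producing anamorphic ciphertexts $M_f^{(1)}$ and $\widetilde M_f^{(1)}$. The goal is to verify
\[
\Tr_{\overline P}\!\big(\mathsf{QASS.Share}(M_f^{(1)})\big)\;=\;\Tr_{\overline P}\!\big(\mathsf{QASS.Share}(\widetilde M_f^{(1)})\big).
\]
Since each player's share $(s_i^{(a)},\mathcal{E}_i)$ lives on the disjoint union of a classical key register and a QECC register, the partial trace $\Tr_{\overline P}$ factorizes across these two blocks of registers. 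Hence I can treat the $\tau_{(k_j,r_j)}$ factors and the $\xi_{M_f^{(1)}(\vec k)}=\mathsf{QC.Enc}(M_f^{(1)}(\vec k))$ factor independently.

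Second, I would apply perfect privacy of $\mathsf{SS}$ register-by-register. For each $j\in\{1,\dots,6\}$ and each fixed $k_j$, $\sum_{r_j}\mu(r_j)\,\Tr_{\overline P}(\tau_{(k_j,r_j)})$ equals a key-independent classical state $\rho_{j,P}$. Summing next over $k_j$ with its uniform distribution only re-averages an already key-independent state, so the classical part of the $P$-marginal is $\bigotimes_{j=1}^{6}\rho_{j,P}$, a fixed object. The remaining secret dependence therefore sits entirely inside
\[
\Xi_P\;:=\;\sum_{\vec k}\alpha_{\vec k}\,\Tr_{\overline P}\!\big(\mathsf{QC.Enc}(M_f^{(1)}(\vec k))\big).
\]
Using linearity of $\mathsf{QC.Enc}$ and of the partial trace, I interchange the sum with both maps to obtain $\Xi_P=\Tr_{\overline P}\!\big(\mathsf{QC.Enc}(\mathbb{E}_{\vec k}[M_f^{(1)}])\big)$. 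By Theorem~\ref{thm:exp}, $\mathbb{E}_{\vec k}[M_f^{(1)}]=\alpha(M_a^{(1)})\,I+\beta(M_a^{(1)})\,J$ depends only on $(d_1,\eta)$, not on $(M_o,M_c)$. Consequently, $\Xi_P$ is the same for $M_f^{(1)}$ and $\widetilde M_f^{(1)}$, which yields the desired equality of $P$-marginals and therefore perfect privacy.

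The main obstacle is bookkeeping rather than any new inequality: I must verify that the six SS-randomized key registers, the permutation randomness $\sigma_l$, and the QOTP randomness are truly independent, that their product distribution in the definition of $\mathsf{QASS.Share}$ lets me commute summation with $\Tr_{\overline P}$ and with the CPTP map $\mathsf{QC.Enc}$ on the correct tensor factors, and that the ``expected state'' computation of Theorem~\ref{thm:exp} indeed subsumes exactly the same averaging measure $\alpha_{\vec k}$ used here. Once this alignment is established, Theorems~\ref{thm:exp} and the classical SS privacy combine to give, for every unauthorized $P$, a $P$-marginal that is identical for all $(M_o,M_c)$, which is precisely the perfect privacy claim in Theorem~\ref{thm:priv}.
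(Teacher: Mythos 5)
Your argument is correct, and it shares its opening with the paper's proof but finishes by a genuinely different (and shorter) route. The paper's proof also begins by factoring $\Tr_{\overline P}(\mathsf{QASS.Share}(M_f^{(1)}))$ into a key-independent classical marginal (via perfect privacy of $\mathsf{SS}$, register by register) tensored with $\Tr_{\overline P}\bigl(\mathsf{QC.Enc}(\cdot)\bigr)$ applied to the key-averaged ciphertext; but it then pivots to a hybrid argument: it introduces two intermediate states $\Psi_1,\Psi_2$ in which the key registers are replaced by shares of an independent random key, bounds $D(\mathrm{real},\Psi_1)$ and $D(\Psi_2,\mathrm{real}')$ by $\epsilon/2$ using an $\tfrac{\epsilon}{12}$-statistical-privacy assumption on $\mathsf{SS}$ together with a telescoping bound on products of six marginal distributions, shows $D(\Psi_1,\Psi_2)=0$ via the perfect hiding of the quantum one-time pad (Lemma~\ref{prop:otp}), and concludes by the triangle inequality with $\epsilon$ arbitrary. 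You instead exploit that perfect $\mathsf{SS}$ privacy makes the classical marginal \emph{exactly} constant, so it factors out of the sum over keys; linearity of $\mathsf{QC.Enc}$ and of the partial trace then reduces everything to the message-independence of $\mathbb{E}_{\vec k}[M_f^{(1)}]$, which you import from Theorem~\ref{thm:exp} (equivalently Lemma~\ref{lem:Ma-avg}), whose content is the same Pauli-twirl fact the paper uses through Lemma~\ref{prop:otp}. What you gain is a direct proof of exact equality of the $P$-marginals with no hybrids, no triangle inequality, and no product-distribution lemma; what the paper's hybrid route buys is that it simultaneously yields the $\epsilon$-statistical and (replacing trace distance by the advantage pseudometric) computational variants, which your exact-decoupling step does not give without reintroducing essentially that hybrid. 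The bookkeeping points you flag are the right ones to check, in particular that the averaging measure in $\mathsf{QASS.Share}$ restricted to $(k,k',\sigma_l)$ matches Theorem~\ref{thm:exp} and that averaging the remaining shared components ($d_1,d_2,\eta\in\mathcal{J}$) preserves message-independence; with those verified, your proof is sound for the perfect-privacy statement as claimed.
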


\begin{proof} To prove the correctness of the secret-sharing scheme, we show the existence of a reconstruction function. Let \(P \subseteq [n]\) with \(f(P)=0\), that is \(P\) is an unauthorized subset of players. Then,

\begin{align}
    &\Tr_{\overline{P}}(\mathsf{QASS.Share}(M_{f}^{(1)})) \notag\\
    &= \Tr_{\overline{P}} \Bigg[ \sum_{k_{1} \in \{0,1\}^{2 d_{1}}} 
    \sum_{k_{4} \in \{0,1\}^{2 d_{2}}} 
    \sum_{k_{2} \in \{1,\dots,2^{d_{1}+1}\}} 
    \sum_{k_{5} \in \{1,\dots,2^{d_{1}+1}\}} 
    \sum_{r_{1},r_{2},r_{3},r_{4},r_{5},r_{6} \in \mathcal{R}} 
    \sum_{k_{3} \in [|\operatorname{Sym}{(2^{d_{1}+1})|]}} \sum_{k_{6}\in \mathcal{J}} \notag\\
    &\quad \frac{1}{2^{4 d_{1}+2d_{2}+2}} \frac{1}{|\mathcal{R}|^{6}} \frac{1}{\mid \mathcal{J}\mid}\frac{1}{(2^{d_{1}+1})!}
    \quad \Bigg[ \bigotimes_{i=1}^{6} \tau_{(k_{i}, r_{i})} \otimes \xi_{M_{f}^{(1)}} \Bigg] \Bigg] \notag\\
    &=\sum_{k_{1} \in \{0,1\}^{2 d_{1}}} 
    \sum_{k_{4} \in \{0,1\}^{2 d_{2}}} 
    \sum_{k_{2} \in \{1,\dots,2^{d_{1}+1}\}} 
    \sum_{k_{5} \in \{1,\dots,2^{d_{1}+1}\}} 
    \sum_{k_{3} \in [|\operatorname{Sym}{(2^d_{1}+1})|]} \sum_{k_{6}\in \mathcal{J}}\notag \\
    &\Tr_{\overline{P}} \left[\frac{1}{2^{4d_{1}+2d_{2}+2}} \frac{1}{(2^{d_{1}+1})!} \frac{1}{\mid \mathcal{J}\mid}
    \xi_{M_{f}^{(1)}}\right] \quad \bigotimes_{i=1}^{6} \Tr_{\overline{P}} \left[ \sum_{r_{i} \in \mathcal{R}} \frac{1}{|\mathcal{R}|}  \tau_{(k_{i},r_{i})}\right] \notag\\
    &=\sum_{k_{1} \in \{0,1\}^{2 d_{1}}} 
    \sum_{k_{4} \in \{0,1\}^{2 d_{2}}} 
    \sum_{k_{2} \in \{1,\dots,2^{d_{1}+1}\}} 
    \sum_{k_{5} \in \{1,\dots,2^{d_{1}+1}\}} 
    \sum_{k_{3} \in [|\operatorname{Sym}{(2^d_{1}+1})|]} \sum_{k_{6}\in \mathcal{J}} \\
    &\Tr_{\overline{P}} \left[\frac{1}{2^{4 d_{1}+2d_{2}+2}} \frac{1}{(2^{d_{1}+1})!} \frac{1}{\mid \mathcal{J}\mid}
   \xi_{M_{f}^{(1)}}\right] \otimes \Tr_{\overline{P}}(\sigma_{P})\notag \\ 
    &= \Tr_{\overline{P}} \Bigg[ \mathsf{QC.Enc} \Bigg(\sum_{k_{1} \in \{0,1\}^{2d_{1}}} \sum_{k_{4} \in \{0,1\}^{2 d_{2}}} 
    \sum_{k_{2} \in \{1,\dots,2^{d_{1}+1}\}} 
    \sum_{k_{5} \in\{1,\dots,2^{d_{1}+1}\}} 
    \sum_{k_{3} \in [|\operatorname{Sym}{(2^{d_{1}}+1})|]} 
    \sum_{k_{6} \in \mathcal{J}} \\
    &\frac{1}{2^{4 d_{1}+2d_{2}+2}} \frac{1}{(2^{d_{1}+1})!} \frac{1}{\mid \mathcal{J}\mid}
    M_{f}^{(1)} \Bigg)\Bigg] \otimes \Tr_{\overline{P}}(\sigma_{P}) \notag\\
    &= \Tr_{\overline{P}} \left( \mathsf{QC.Enc}(\sigma_{P}')\right) \otimes \Tr_{\overline{P}}(\sigma_{P}).
\end{align}

Since the density matrix \(\sigma_{P}\) is only dependent upon the set of players \(P\), the unauthorized set of players \(P\) cannot reconstruct the secret. 

\medskip

Next, We will show that if the \(\mathsf{SS}\) is \(\frac{\epsilon}{12}\)-statistically private then the \(\mathsf{QASS}\) is \(\epsilon\)-statistically private. For any two keys \(k,k' \in \{0,1\}^{2 d_{1}} \times \{0,1\}^{2 d_{2}} \times \{1,\dots,2^{d_{1}+1}\} \times \{1,\dots,2^{d_{1}+1}\} \times \operatorname{Sym}{(2^{d_{1}+1})} \times \mathcal{J}\) ; where \(k=(k_{1},k_{4},k_{2},k_{5},k_{3},k_{6})\) and \(k'=(k_{1}',k_{4}',k_{2}',k_{5}',k_{3}',k_{6}')\).

\medskip

\noindent For each \(i=1,\ldots,6\), the partial trace over subsystem \(\overline{P}\) reduces to
\begin{equation}
\Tr_{\overline{P}} \left( \sum_{r_i \in \mathcal{R}} \frac{1}{|\mathcal{R}|} \tau_{(k_i, r_i)} \right) = \sum_{v_i} p_{v_i}^{(i)} \ket{v_i}\bra{v_i},
\end{equation}

\noindent where for each \(v=(v_{1},\ldots,v_{6}) \in \mathcal{S}_{P}\), \(p_{v_i}^{(i)} = \Pr_{r_i \gets \mathcal{R}}[\mathsf{Share}(k_i; r_i)_{P} = v_i]\) and \(p_{v_i}^{(i)'} = \Pr_{r_i \gets \mathcal{R}}[\mathsf{Share}(k_{i}'; r_i)_{P} = v_i]\) are defined to be the marginal probability distributions induced by the secret sharing process for \(k_i\) and  \(k_{i}'\) on subsystem \(\overline{P}\) and \(\ket{v_i}\bra{v_i}\) represents the basis state corresponding to \(v_i\).

\medskip

\noindent Then, 
\begin{equation}
\Tr_{\overline{P}}\left[ \bigotimes_{i=1}^{6} \left(\sum_{r_{i} \in \mathcal{R}}\frac{1}{|\mathcal{R}|} \tau_{(k_{i},r_{i})}\right)\right] =\sum_{v\in \mathsf{S}_{P}} \left(\prod_{i=1}^{6} p^{(i)}_{v_{i}} \right) \left[\bigotimes_{i=1}^{6}\ket{v_{i}}\bra{v_{i}}\right] \quad = \sum_{v\in \mathsf{S}_{P}} \left(\prod_{i=1}^{6} p^{(i)}_{v_{i}}\right) \ket{v}\bra{v},
\end{equation}

\noindent where \(\bigotimes_{i=1}^{6} \ket{v_{i}} \bra{v_{i}}=\ket{v}\bra{v}\) and we get the following relation between the trace distance and the total variation distance

\begin{align}
&D\left(\Tr_{\overline{P}}\left[ \bigotimes_{i=1}^{6} \left(\sum_{r_{i} \in \mathcal{R}}\frac{1}{|\mathcal{R}|} \tau_{(k_{i},r_{i})}\right)\right] ,\Tr_{\overline{P}}\left[ \bigotimes_{i=1}^{6} \left(\sum_{r_{i} \in \mathcal{R}}\frac{1}{|\mathcal{R}|} \tau_{(k_{i}',r_{i})}\right)\right] \right) \\
&= D \left(\sum_{v\in \mathsf{S}_{P}} \prod_{i=1}^{6} p^{(i)}_{v_{i}} \ket{v}\bra{v},\sum_{v\in \mathsf{S}_{P}} \prod_{i=1}^{6} p^{(i)'}_{v_{i}} \ket{v}\bra{v} \right) \\
& \leq \Delta\!\Bigg(\{\prod_{i=1}^6 p_{v_i}^{(i)}\}_{v \in \mathsf{S}_{P}},\;\{\prod_{i=1}^6 p_{v_i}^{(i)'}\}_{v \in \mathsf{S}_{P}}\Bigg),
\end{align}

\noindent which we got using the Lemma \ref{prop:trace-dist}. (Property v.).

\medskip

Now we prove the privacy of the \(\mathsf{QASS}\). Let \(P \subseteq [n]\) be an unauthorized subset such with \(f(P)=0\). For any two secret states \(M_{f}^{(1)}\) and \(M_{f}^{(1)'}\). Our argument is a hybrid one. First, we will contend that the perfect privacy of the one-time pad, along with Lemma \ref{prop:otp}, ensures that the composite shares of \(\overline{P}\) for two secrets \(M_{f}^{(1)}\) and \(M_{f}^{(1)'}\) will be computationally indistinguishable. 

\medskip

\noindent We define sharing of a random anamorphic key by 
\begin{align}
\kappa^{(a)} =&  \sum_{k_{1}' \in \{0,1\}^{2 d_{1}}} \sum_{k_{4}' \in \{0,1\}^{2 d_{2}}} \sum_{k_{2}' \in \{1,\dots,2^{d_{1}+1}\}} \sum_{k_{5}' \in \{1,\dots,2^{d_{1}+1}\}} \sum_{r_{1},r_{2},r_{3},r_{4},r_{5},r_{6} \in \mathcal{R}} \sum_{k_{3}' \in [|\operatorname{Sym}{(2^{d_{1}+1})|]}} \sum_{k_{6}'\in \mathcal{J}} \notag\\
&\frac{1}{2^{4 d_{1}+2 d_{2}+2}} \frac{1}{|\mathcal{R}|^{6}} \frac{1}{\mid \mathcal{J} \mid}\frac{1}{2^{d_{1}+1}!} \left[  \bigotimes_{i=1}^{6} \tau_{(k_{i}', r_{i})} \right].
\end{align}

and we define the hybrids 

\begin{align}
\Psi_{1} &= \Tr_{\overline{P}} \Bigg[ \Bigg( 
\sum_{k_{1} \in \{0,1\}^{2 d_{1}}} 
\sum_{k_{4} \in \{0,1\}^{2 d_{2}}} 
\sum_{k_{2} \in \{1,\dots,2^{d_{1}+1}\}}
\sum_{k_{5} \in \{1,\dots,2^{d_{1}+1}\}}
\sum_{k_{3} \in [|\operatorname{Sym}{(2^{d_{1}+1})|]}} \quad \frac{1}{2^{4 d_{1}+2d_{2}+2}} 
\frac{1}{(2^{d_{1}+1})!} \xi_{M_{f}^{(1)}} 
\Bigg) \\ \notag
&\otimes \kappa^{(a)} \Bigg]
\end{align}

and 

\begin{align}
\Psi_{2} &= \Tr_{\overline{P}} \Bigg[ \Bigg( 
\sum_{k_{1} \in \{0,1\}^{2 d_{1}}} 
\sum_{k_{4} \in \{0,1\}^{2 d_{2}}} 
\sum_{k_{2} \in \{1,\dots,2^{d_{1}+1}\}} 
\sum_{k_{5} \in \{1,\dots,2^{d_{1}+1}\}} 
\sum_{k_{3} \in [|\operatorname{Sym}{(2^{d_{1}+1})|]}} \quad \frac{1}{2^{4d_{1}+2d_{2}+2}} 
\frac{1}{(2^{d_{1}+1})!}
\xi_{M_{f}^{(1)'}} 
\Bigg) \\ \notag
&\otimes \kappa^{(a)} \Bigg]
\end{align}

\medskip

\noindent Our goal is to show \(D(\Psi_{1}, \Psi_{2})=0\).

\medskip

\noindent Subsequently, we will demonstrate that composite shares of the same secret are within a \(\frac{\epsilon}{2}\)-close trace distance when the shares of the key are replaced with shares of a random key, as opposed to when they are not replaced. We will show 

\begin{equation}
D\left(\Tr_{\overline{P}}\left(\mathsf{QSS.Share}(M_{f}^{(1)})\right), \Psi_{1}\right) \leq \frac{\epsilon}{2}
\label{eq:traceDistBound}
\end{equation}

and

\begin{align}
    D(\Tr_{\overline{P}}(\Psi_{2},\mathsf{QSS.Share}(M_{f}^{(1)'}))) \leq \frac{\epsilon}{2}
\end{align}

Then using the triangle inequality, we will show that 
\begin{equation}
D(\Tr_{\overline{P}}(\mathsf{QASS.Share}(M_{f}^{(1)})),\Tr_{\overline{P}}(\mathsf{QASS.Share}(M_{f}^{(1)'}))) \leq \epsilon
\end{equation}

Using the [Property (ii), Lemma \ref{prop:trace-dist}] and by distributing the partial trace, we get

\begin{align}
D(\Psi_{1}, \Psi_{2}) &= D \Bigg(\Tr_{\overline{P}} \Bigg[ 
\sum_{k_{1} \in \{0,1\}^{2d_{1}}} 
\sum_{k_{4} \in \{0,1\}^{2d_{2}}} 
\sum_{k_{2} \in \{1,\dots,2^{d_{1}+1}\}} 
\sum_{k_{5} \in \{1,\dots,2^{d_{1}+1}\}} 
\sum_{k_{3} \in [|\operatorname{Sym}(2^{d_{1}+1})|]} \notag \\
&\frac{1}{2^{4d_{1}+2d_{2}+2}} 
\frac{1}{(2^{d_{1}+1})!}
\xi_{M_{f}^{(1)}} 
\Bigg], \Tr_{\overline{P}} \Bigg[ 
\sum_{k_{1} \in \{0,1\}^{2d_{1}}} 
\sum_{k_{4} \in \{0,1\}^{2d_{2}}} 
\sum_{k_{2} \in \{1,\dots,2^{d_{1}+1}\}}
\sum_{k_{5} \in \{1,\dots,2^{d_{1}+1}\}} \notag\\
&\sum_{k_{3} \in [|\operatorname{Sym}(2^{d_{1}+1})|]}  
\frac{1}{2^{4d_{1}+2d_{2}+2}} 
\frac{1}{(2^{d_{1}+1})!}
\xi_{M_{f}^{(1)'}} 
\Bigg] \Bigg)
\end{align}

\noindent Therefore, when the key is chosen uniformly random, according to Lemma \ref{prop:otp}, the input is perfectly hidden by the quantum one-time pad and there exists a state \(\vartheta\) such that 

\begin{equation}
D(\Psi_{1},\Psi_{2})=D(\Tr_{\overline{P}}(\mathsf{QC.Enc(\vartheta)}), \mathsf{QC.Enc(\vartheta)}))=0
\end{equation}

\noindent By the [Properties v. and iv., Lemma \ref{prop:trace-dist}], we get, 

\begin{align}
   &D(\Psi_{1},\Tr_{\overline{P}}(\mathsf{QASS.Share}(M_{f}^{(1)}))) \notag\\
&= D\Bigg(\sum_{k_{1} \in \{0,1\}^{2 d_{1}}} 
\sum_{k_{4} \in \{0,1\}^{2 d_{2}}} 
\sum_{k_{2} \in \{1,\dots,2^{d_{1}+1}\}} 
\sum_{k_{5} \in \{1,\dots,2^{d_{1}+1}\}} 
\sum_{k_{3} \in [|\operatorname{Sym}{(2^{d_{1}+1})|]}} \notag \\ 
&\quad \frac{1}{2^{4 d_{1}+2d_{2}+2}} 
\frac{1}{(2^{d_{1}+1})!}
\Tr_{\overline{P}} (\xi_{M_{f}^{(1)}}) \otimes \Tr_{\overline{P}}(\kappa^{(a)}), \notag\\
&\sum_{k_{1} \in \{0,1\}^{2 d_{1}}} 
\sum_{k_{4} \in \{0,1\}^{2 d_{2}}} 
\sum_{k_{2} \in \{1,\dots,2^{d_{1}+1}\}}
\sum_{k_{5} \in \{1,\dots,2^{d_{1}+1}\}} 
\sum_{k_{3} \in [|\operatorname{Sym}{(2^{d_{1}+1})|]}} \notag \\ 
&\quad \frac{1}{2^{4d_{1}+2d_{2}+2}} 
\frac{1}{(2^{d_{1}+1})!} \Tr_{\overline{P}} (\xi_{M_{f}^{(1)}}) \otimes \Tr_{\overline{P}}\left[ \bigotimes_{i=1}^{6} \left(\sum_{r_{i} \in \mathcal{R}}\frac{1}{|\mathcal{R}|} \tau_{(k_{i},r_{i})}\right)\right] \Bigg) \\
&\leq \sum_{k_{1} \in \{0,1\}^{2 d_{1}}} 
\sum_{k_{4} \in \{0,1\}^{2 d_{2}}} 
\sum_{k_{2} \in \{1,\dots,2^{d_{1}+1}\}} 
\sum_{k_{5} \in \{1,\dots,2^{d_{1}+1}\}} 
\sum_{k_{3} \in [|\operatorname{Sym}{(2^{d_{1}+1})|]}} \quad \frac{1}{2^{4 d_{1}+2d_{2}+2}} 
\frac{1}{(2^{d_{1}+1})!} \notag\\
& D \Bigg( \Tr_{\overline{P}}(\kappa^{(a)}), \Tr_{\overline{P}} \left[ \bigotimes_{i=1}^{6} \left(\sum_{r_{i} \in \mathcal{R}}\frac{1}{|\mathcal{R}|} \tau_{(k_{i},r_{i})}\right)\right] \Bigg)
\end{align}

\medskip

\noindent By using the Lemma \ref{prop:trace-dist} (Property i.), we get

\begin{align}
    &D(\Psi_{1}, \Tr_{\overline{P}}(\mathsf{QASS.Share}(M_{f}^{(1)}))) \notag\\
    &\leq \sum_{k_{1},k_{1}' \in \{0,1\}^{2 d_{1}}} 
\sum_{k_{4},k_{4}' \in \{0,1\}^{2 d_{2}}} 
\sum_{k_{2},k_{2}' \in \{1,\dots,2^{d_{1}+1}\}} 
\sum_{k_{5},k_{5}' \in \{1,\dots,2^{d_{1}+1}\}} 
\sum_{k_{3},k_{3}' \in [|\operatorname{Sym}{(2^{d_{1}+1})|]}} \notag\\
&\quad \frac{1}{4^{4d_{1}+2d_{2}+2}} 
\frac{1}{((2^{d_{1}+1})!)^{2}} D\Bigg(\Tr_{\overline{P}}\left[ \bigotimes_{i=1}^{6} \left(\sum_{r_{i} \in \mathcal{R}}\frac{1}{|\mathcal{R}|} \tau_{(k_{i}',r_{i})}\right)\right],\Tr_{\overline{P}}\left[ \bigotimes_{i=1}^{6} \left(\sum_{r_{i} \in \mathcal{R}}\frac{1}{|\mathcal{R}|} \tau_{(k_{i},r_{i})}\right)\right] \Bigg) \notag\\
&\leq  \sum_{k_{1},k_{1}' \in \{0,1\}^{2 d_{1}}} 
\sum_{k_{4},k_{4}' \in \{0,1\}^{2 d_{2}}} 
\sum_{k_{2},k_{2}' \in \{1,\dots,2^{d_{1}+1}\}} 
\sum_{k_{5},k_{5}' \in \{1,\dots,2^{d_{1}+1}\}} 
\sum_{k_{3},k_{3}' \in [|\operatorname{Sym}{(2^{d_{1}+1})|]}} \notag\\
&\quad \frac{1}{4^{4 d_{1}+2d_{2}+2}} 
\frac{1}{((2^{d_{1}+1})!)^{2}}
\Delta\Bigg(\{\prod_{i=1}^{6} p^{(i)}_{v_{i}}\}_{v \in \mathsf{S}_{P}},\{\prod_{i=1}^{6} p^{(i)'}_{v_{i}}\}_{v \in \mathsf{S}_{P}}\Bigg) \label{eq: probdist}
\end{align}

\medskip
If \(P\) and \(P'\) are two probability distributions over a finite set \(\mathcal{S}_P\), then the statistical distance between two probability distributions \(P\) and \(P'\) is 
\begin{equation}
\Delta(P, P') = \frac{1}{2}\sum_{v \in \mathsf{S}_P} |p_v - p'_v|.
\end{equation}

\medskip

\noindent Now, considering two product distributions formed from marginals 

\[
P = \left\{ \prod_{i=1}^6 p_{v_i}^{(i)} \right\}_{v \in \mathsf{S}_P}, \quad
P' = \left\{ \prod_{i=1}^6 p_{v_i}^{(i)'} \right\}_{v \in \mathsf{S}_P}.
\]
we get

\begin{equation}
\Delta(P, P') = \frac{1}{2}\sum_{v \in \mathsf{S}_P} \left| \prod_{i=1}^6 p_{v_i}^{(i)} - \prod_{i=1}^6 p_{v_i}^{(i)'} \right|.
\end{equation}
\noindent Using the telescoping expansion of the difference
\begin{equation}
\prod_{i=1}^6 p_{v_i}^{(i)} - \prod_{i=1}^6 p_{v_i}^{(i)'} = \sum_{j=1}^6 \left( \prod_{i=1}^{j-1} p_{v_i}^{(i)'} \right) \left( p_{v_j}^{(j)} - p_{v_j}^{(j)'} \right) \left( \prod_{i=j+1}^{6} p_{v_i}^{(i)} \right),
\end{equation}
\noindent we get 
\begin{align}
\left| \prod_{i=1}^6 p_{v_i}^{(i)} - \prod_{i=1}^6 p_{v_i}^{(i)'} \right| 
&\leq \sum_{j=1}^6 \left| \prod_{i=1}^{j-1} p_{v_i}^{(i)'} \cdot (p_{v_j}^{(j)} - p_{v_j}^{(j)'}) \cdot \prod_{i=j+1}^{6} p_{v_i}^{(i)} \right| \quad \text{(by the triangular inequality)} \notag\\
&\leq \sum_{j=1}^6 |p_{v_j}^{(j)} - p_{v_j}^{(j)'}| \quad \text{(since each probability lies in \([0,1]\)).}
\end{align}

Summing over all elements of \(\mathcal{S}_P\) we get the statistical distance
\begin{align}
    \Delta(P, P') &= \frac{1}{2} \sum_{v \in \mathsf{S}_P} \left| \prod_{i=1}^6 p_{v_i}^{(i)} - \prod_{i=1}^6 p_{v_i}^{(i)'} \right| \notag\\ 
&\leq \frac{1}{2} \sum_{v \in \mathsf{S}_P} \sum_{j=1}^6 |p_{v_j}^{(j)} - p_{v_j}^{(j)'}| \notag\\
& = \frac{1}{2} \sum_{j=1}^6 \sum_{v \in \mathsf{S}_P} |p_{v_j}^{(j)} - p_{v_j}^{(j)'}| \quad (\text{as both sums are finite}) \notag\\
& = \sum_{j=1}^6 \Delta(p^{(j)}, p^{(j)'}).
\end{align}
Given for each \(j=1,\ldots,6\), \(\Delta(p^{(j)}, p^{(j)'}) \leq \frac{\epsilon}{12},\) for some \(\epsilon>0\), we have 
\begin{equation}
\Delta(P, P')=\Delta\left(\left\{\prod_{i=1}^6 p_{v_i}^{(i)}\right\}_{v \in \mathcal{S}_P}, \left\{\prod_{i=1}^6 p_{v_i}^{(i)'}\right\}_{v \in \mathcal{S}_P}\right) \leq 6 \left(\frac{\epsilon}{12}\right)=\frac{\epsilon}{2}.
\end{equation}

\medskip

\noindent This is the statistical distance between the classical sharing of keys \(k,k'\). Therefore, invoking \(\frac{\epsilon}{12}\)-statistical privacy of \(\mathsf{SS}\) and Equation \ref{eq: probdist}, we get

\begin{equation}
D(\Psi_{1}, \Tr_{\overline{P}}(\mathsf{QASS}(M_{f}^{(1)}))) \leq \frac{\epsilon}{2}.
\end{equation}
\noindent In a similar way we can prove that 
\begin{equation}
D(\Psi_{2}, \Tr_{\overline{P}}(\mathsf{QASS}(M_{f}^{(1)'}))) \leq \frac{\epsilon}{2}.
\end{equation}
\noindent Therefore,

\begin{align}
    & D(\Tr_{\overline{P}}(\mathsf{QASS.Share}(M_{f}^{(1)})),\Tr_{\overline{P}}(\mathsf{QASS.Share}(M_{f}^{(1)'}))) \notag\\
    & \leq D(\Psi_{1}, \Tr_{\overline{P}}(\mathsf{QASS}(M_{f}^{(1)}))) + D(\Psi_{1}, \Psi_{2}) + D(\Psi_{2}, \Tr_{\overline{P}}(\mathsf{QASS}(M_{f}^{(1)'}))) \notag\\
    &\leq \frac{\epsilon}{2} +0+\frac{\epsilon}{2} \notag\\
    &=\epsilon
\end{align}

\noindent Since \(\epsilon > 0\) is arbitrary, \( D(\Tr_{\overline{P}}(\mathsf{QASS.Share}(M_{f}^{(1)})),\Tr_{\overline{P}}(\mathsf{QASS.Share}(M_{f}^{(1)'})))=0\). Hence, the quantum anamorphic secret sharing scheme has perfect privacy. 

\medskip

Following the proof idea from Theorem 9., page 22, in the paper \cite{ccakan2023computational}, we can prove the computational privacy similarly by replacing the trace distance with the quantum advantage pseudometric \(\mathsf{Adv}_{\mathcal{F}}(\rho,\sigma)\) \ref{def:qadv}, and properties in Lemma \ref{prop:Adv}. 
\end{proof}

It is important to note that this compiler can be extended to accommodate states of any arbitrary dimension \cite{ccakan2023computational}. It is essential to assume that \( f' \), and consequently \( f \), adheres to the no-cloning property \(f(P)=1 \implies f(\overline{P})=0\), so that an appropriate \(\mathsf{QECC}\) can be applied \cite{ccakan2023computational}. As in the paper \cite{ccakan2023computational} \c{C}akan et al. noted that we can take any quantum erasure-correcting code \(\mathsf{QC}\) realizing any monotone function \(f'\) with \(f' \geq f\), and this means even when we do not know efficient \(\mathsf{QECC}\)s for \(f\), we may take some \(f' \geq f\) to use an efficient \(\mathsf{QECC}\).

\medskip

Let \(T^t_n\) be the \(t\)-out-of-\(n\) threshold function such that \(T^t_n(P)=1\) iff \(\mid P\mid \geq t\), then \(T^t_n \geq f\). As we have seen above choosing a correct \(f'\) is important and here for the choice of threshold function, we may take \(f'=T^t_n\) \cite{ccakan2023computational}.

\medskip

The following results are from the paper \cite{ccakan2023computational} from Section 5. and Section 6. These results are applicable in our work and the share sizes can be computed based on these results. You have included them to mention the existing works on existence of post quantum computational classical secret-sharing scheme realizing \(f\) and construction of \(f'\).

\medskip
We use the compiler we have constructed to design efficient computational quantum anamorphic secret-sharing schemes. The following lemma is due to Yao \cite{Yao89} and Cleve, Gottesman, and Lo \cite{cleve1999quantum}.

\begin{lemma}(\cite{Yao89,cleve1999quantum})
If \( f \) belongs to \(\mathsf{monotone}\) \(\mathsf{P}\), then an efficient post-quantum computational classical secret-sharing scheme realizing \( f \) can be constructed, assuming the existence of post-quantum secure one-way functions.
\end{lemma}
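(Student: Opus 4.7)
The plan is to follow the Yao-style compiler \cite{Yao89,vinod2003power} that transforms a monotone circuit for $f$ into a computationally private secret-sharing scheme, and to check that every cryptographic primitive invoked can be instantiated from post-quantum secure one-way functions in such a way that the security reduction is preserved against a QPT distinguisher. First I would fix, for a given $f\in\mathsf{monotone}\;\mathsf{P}$, a monotone circuit $C$ of size $\mathrm{poly}(n)$ computing $f$ whose gates are fan-in-two AND and OR gates; by monotonicity, there are no NOT gates. Using post-quantum secure one-way functions, I would next invoke the quantum-secure variant of the HILL construction \cite{ccakan2023computational} to obtain a length-doubling pseudorandom generator $G$ whose output is indistinguishable from uniform to any QPT adversary with only negligible advantage.

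The core construction proceeds by a recursive labeling of the wires of $C$: the output wire is labeled with the secret $s$; whenever a wire $w$ feeding into an OR gate has label $\ell_w$, both of its input wires are labeled with $\ell_w$ (replication); whenever $w$ is the output of an AND gate, its two input wires are labeled with a pair of additive shares of $\ell_w$ (so that their XOR reconstructs $\ell_w$). Each player $P_i$ finally receives the labels of all input wires of $C$ corresponding to variable $x_i$. Correctness for authorized sets then follows by induction on the depth of $C$: an authorized set $P$ with $f(v^P)=1$ has, by monotonicity and the structure of $C$, access to a ``proof of $1$'' that lets it reconstruct the labels gate-by-gate from leaves to root and therefore recover $s$. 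The naive version of this scheme already yields perfect privacy but with shares of size exponential in the depth, which is why the PRG is needed: instead of giving fresh truly random strings at each gate, the dealer expands a short seed $\sigma_w$ associated to each wire through $G$ and uses the pseudorandom output as the additive shares. Each player then receives seeds rather than expanded shares, keeping the total share size polynomial in $n$ and in the security parameter $\lambda$.

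For post-quantum computational privacy, I would proceed by a hybrid argument indexed by gates of $C$, processed in topological order from the leaves toward the output. In hybrid $H_i$ the pseudorandom strings produced at the first $i$ AND-gates encountered by an unauthorized set $P$ are replaced by truly uniform strings. Unauthorized sets $P$ with $f(v^P)=0$, together with monotonicity, guarantee that for at least one of the two input wires of each relevant AND-gate on a ``blocking cut'' of $C$, the players in $P$ do not hold any information about the corresponding seed, so the PRG output on that seed can be replaced by uniform. Distinguishing consecutive hybrids by a QPT adversary would yield a QPT distinguisher against $G$, contradicting its post-quantum security. Since there are only $\mathrm{poly}(n)$ hybrids, the total advantage remains negligible in $\lambda$, establishing post-quantum computational privacy.

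The main obstacle I anticipate is making the hybrid argument go through cleanly in the quantum setting: each hybrid replacement must isolate a single PRG invocation whose seed is information-theoretically unknown to the adversary, while all other dependencies (including quantum side information the adversary may carry across reductions) can be simulated efficiently. The correct way to handle this is to argue, for each hybrid step, that a QPT reduction can prepare the view of the remaining parts of $C$ without the seed being replaced, and then plug in either a PRG output or a truly random string received from its own challenger; since this reduction is straight-line (no rewinding), it lifts to the quantum setting without complications. Standard bookkeeping on share sizes and circuit complexities then yields efficient sharing and reconstruction procedures.
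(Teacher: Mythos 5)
The paper does not prove this lemma at all: it is imported by citation (``due to Yao and Cleve--Gottesman--Lo''), so your proposal is being measured against the standard Yao-style construction from the cited literature rather than against an in-paper argument. At that level your route is the right one: Benaloh--Leichter-style top-down labeling of a polynomial-size monotone circuit (replicate labels at OR gates, additively split at AND gates), a PRG from post-quantum one-way functions to keep shares short, a gate-by-gate hybrid argument for privacy of unauthorized sets, and the observation that all reductions are straight-line and black-box, so security lifts to QPT distinguishers. That matches what \cite{Yao89,vinod2003power} actually do, and the post-quantum lifting is exactly the argument used in \cite{ccakan2023computational}.

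There is, however, one step that would fail as literally written. You say the dealer ``uses the pseudorandom output as the additive shares'' and that ``each player then receives seeds rather than expanded shares.'' If players hold only the seeds of their input wires and nothing else, an authorized set cannot reconstruct: the values propagated onto internal wires (of which there may be many, because of fan-out) are never communicated, and merely knowing seeds does not recover them. In Yao's scheme the seed $\sigma_w$ attached to a wire $w$ is used as a key: the list of labels that the top-down process assigns to $w$ is encrypted with a one-time pad derived from $G(\sigma_w)$, these ciphertexts are appended to the shares (equivalently, given to every player), and only the seeds of the input wires are private to the corresponding players. It is this encrypt-and-publish mechanism, not substituting PRG output for the fresh randomness of the AND-split, that simultaneously keeps every share of size $\mathrm{poly}(n,\lambda)$ and lets an authorized set decrypt wire values gate by gate up to the output. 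Your hybrid argument should correspondingly be phrased over the wires on a blocking cut whose seeds are information-theoretically unknown to the unauthorized set, replacing the pads $G(\sigma_w)$ on those wires by uniform strings; with that repair the rest of your argument, including the quantum lifting, goes through.
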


\begin{lemma}(Quantum Shamir secret sharing \cite{cleve1999share})
    For any \( t > n/2 \), there exists an efficient perfect quantum secret sharing scheme that realizes \( T^t_n \) with a share size of \( O(n \log n) \).
\end{lemma}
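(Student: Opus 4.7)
The plan is to quantize the classical Shamir secret-sharing scheme over a prime-power field $\mathbb{F}_q$ with $n<q\le 2n$ (existence by Bertrand's postulate in prime powers), so that each share is a single qudit of dimension $q$ and hence occupies $\lceil\log q\rceil=O(\log n)$ bits, giving a total share size of $O(n\log n)$. First I would fix public, distinct nonzero evaluation points $\alpha_1,\dots,\alpha_n\in\mathbb{F}_q^{\times}$. For each basis secret $|s\rangle$ with $s\in\mathbb{F}_q$, define the encoding
\[
|s\rangle \ \longmapsto\ \frac{1}{q^{(t-1)/2}}\sum_{\substack{p\in\mathbb{F}_q[x]\\ \deg p<t,\ p(0)=s}} |p(\alpha_1)\rangle\otimes\cdots\otimes|p(\alpha_n)\rangle,
\]
and extend by linearity to arbitrary pure (or mixed) secret states in $\mathbb{C}^q$. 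This is exactly the CSS-type polynomial code used in \cite{cleve1999share}, and it is clearly efficiently preparable: for each ancilla register holding the $t-1$ free coefficients of $p$ placed in uniform superposition, a sequence of $n(t-1)$ controlled finite-field multiply-adds computes the $n$ evaluations coherently.

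Next I would verify correctness for any authorized set $P$ with $|P|\ge t$. Given the $t$ evaluations on $P$, the polynomial $p$ is uniquely determined by Lagrange interpolation, and in particular $s=p(0)=\sum_{i\in P}\lambda_i\, p(\alpha_i)$ for fixed Lagrange coefficients $\lambda_i\in\mathbb{F}_q$. One implements the reconstruction coherently by (i) using $t$ of the $|P|$ shares to compute a fresh copy of $|s\rangle$ into an output register via controlled additions scaled by $\lambda_i$, and then (ii) running the inverse of the encoding circuit (using the remaining freedom from the superposition over $p$) to uncompute the shares, leaving the output register in the state of the original secret. This gives a completely positive trace-preserving decoder $\mathsf{Rec}_P$ satisfying the $\mathsf{QECC}$ definition of Section~\ref{subsec:quantum-erasure-codes}.

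For perfect privacy on an unauthorized set $P'$ with $|P'|\le t-1$, I would show that the partial trace over $\overline{P'}$ of the encoding of any secret is independent of the secret. The key combinatorial fact is that for any target vector $v\in\mathbb{F}_q^{P'}$ and any $s\in\mathbb{F}_q$, the number of polynomials of degree less than $t$ with $p(0)=s$ and $p(\alpha_i)=v_i$ for $i\in P'$ equals $q^{t-1-|P'|}$, because specifying $p(0)$ together with $|P'|$ additional interpolation constraints imposes $|P'|+1\le t$ independent linear conditions on $t$ unknown coefficients. Using this, after tracing out $\overline{P'}$, the cross terms $|s\rangle\langle s'|$ pair each polynomial through $(0,s)\cup\{(\alpha_i,v_i)\}_{i\in P'}$ with one through $(0,s')\cup\{(\alpha_i,v_i)\}_{i\in P'}$; for $s\ne s'$ the corresponding evaluations on $\overline{P'}$ differ pointwise in enough places (at least $t-|P'|\ge 1$) that the inner product of the traced-out registers vanishes, while for $s=s'$ one obtains a $|P'|$-partite state independent of $s$. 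Hence the reduced state is a fixed $P'$-local density operator, establishing perfect privacy. The no-cloning requirement is automatic because $t>n/2$ forces $|\overline{P}|<t$ whenever $|P|\ge t$.

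The hard part is the privacy argument, since a naive counting bound only yields that the classical distribution of evaluations on $P'$ is uniform, but for quantum perfect privacy one must kill the quantum coherences between distinct secrets — this is what forces the bound $|P'|\le t-1$ (rather than $\le t$) and is the reason the scheme realizes exactly $T^t_n$. Once privacy and correctness are established, the share-size claim is immediate: each of the $n$ shares is a single $q$-dimensional qudit with $q=O(n)$, yielding total share size $n\lceil\log q\rceil = O(n\log n)$, and both the encoding and the Lagrange-based decoder are implementable by $\mathrm{poly}(n)$ quantum circuits over a universal gate set, so the resulting $\mathsf{QSS}$ is efficient.
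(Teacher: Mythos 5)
In the paper this lemma is not proved at all; it is imported verbatim from Cleve--Gottesman--Lo \cite{cleve1999share}, so you are in effect reconstructing their construction. Your reconstruction is correct in the special case $n=2t-1$, but the privacy step has a genuine gap for every $n<2t-1$, which is most of the range $t>n/2$. The flaw is in the counting that is supposed to kill the coherences $|s\rangle\langle s'|$: two distinct polynomials of degree $<t$ agree on at most $t-1$ points, so on the traced-out set $\overline{P'}$ (of size $n-|P'|$) they are guaranteed to differ in at least $(n-|P'|)-(t-1)$ places, not ``at least $t-|P'|$'' as you wrote. For $|P'|=t-1$ this bound is $n-2t+2$, which is positive only when $n\geq 2t-1$. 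When $n<2t-1$ the cross terms do not vanish and the reduced state on a $(t-1)$-set genuinely depends on the secret: already for $t=2$, $n=2$ your encoding $|s\rangle\mapsto q^{-1/2}\sum_a |a\alpha_1+s\rangle|a\alpha_2+s\rangle$ leaks phase information to a single share (feed in $(|s\rangle+|s'\rangle)/\sqrt2$ and compute the one-share reduced state; the off-diagonal block survives). This failure is not an artifact of your argument but structural: for a pure-state (isometric) encoding, a set can reconstruct if and only if its complement has no information, so no pure-state scheme can realize $T^t_n$ when $n<2t-1$, since then a $(t-1)$-set and its complement (of size $n-t+1\le t-2$) would both have to be unauthorized.

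The missing idea, which is exactly how \cite{cleve1999share} proceeds, is to prove the scheme only for $n=2t-1$ (where your cross-term count, or equivalently the erasure-correction/complementarity argument for the $[[2t-1,1,t]]$ polynomial code, does work), and then obtain the general $((t,n))$ scheme with $t>n/2$ by running the $((t,2t-1))$ scheme and discarding $2t-1-n$ of the shares. Discarding shares preserves correctness (any $t$ surviving shares still interpolate) and preserves privacy (a $(t-1)$-subset of the surviving shares is a $(t-1)$-subset of the original ones), but it turns the encoding into a genuinely mixed-state CPTP map, which is what evades the complementarity obstruction. With that repair, the dimension bookkeeping you give (one $q$-ary qudit per share with $q=O(t)=O(n)$, hence $O(n\log n)$ total, and $\mathrm{poly}(n)$ circuits for coherent evaluation and Lagrange interpolation) is fine and matches the statement of the lemma.
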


The following theorem is due to \c{C}akan et al..

\begin{theorem}(\cite{ccakan2023computational})
  If \( f \) is a heavily monotone function within \(\mathsf{monotone}\) \(\mathsf{P}\), then an efficient computational quantum secret-sharing scheme can be constructed to realize \( f \), assuming the existence of post-quantum secure one-way functions.
\end{theorem}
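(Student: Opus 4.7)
The plan is to invoke the compiler of Theorem \ref{thm:ass}, instantiating its two ingredients so that both sharing and reconstruction run in polynomial time, and then have the resulting quantum secret-sharing scheme inherit computational privacy from the classical component. Concretely, I would (i) pick a classical secret-sharing scheme \(\mathsf{SS}\) realizing \(f\) that is post-quantum computationally private and efficient, and (ii) pick a quantum erasure-correcting code \(\mathsf{QC}\) realizing some no-cloning monotone function \(f'\geq f\) with polynomial-size encoding and decoding circuits.

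For ingredient (i), since \(f\in\mathsf{monotone}\ \mathsf{P}\), the Yao / Cleve--Gottesman--Lo lemma quoted just above yields, from any post-quantum secure one-way function, an efficient post-quantum computational classical secret-sharing scheme realizing \(f\). For ingredient (ii), the heaviness hypothesis is precisely what makes a clean choice possible: every authorized set of \(f\) has cardinality at least \(t\ge \lfloor n/2\rfloor+1\), so the threshold function \(T^{t}_{n}\) satisfies \(T^{t}_{n}(P)\ge f(P)\) for all \(P\subseteq[n]\) and inherits the no-cloning property from \(t>n/2\). The quantum Shamir lemma then provides an efficient perfect quantum secret-sharing scheme realizing \(T^{t}_{n}\) with share size \(O(n\log n)\), equivalently a \([[n,1,n-t+1]]\) quantum erasure-correcting code suitable as \(\mathsf{QC}\).

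Feeding these two efficient primitives into the compiler of Theorem \ref{thm:ass} produces a quantum secret-sharing scheme for \(f\): correctness is supplied by Theorem \ref{thm:corr} (since authorized sets of \(f\) are also authorized under \(f'=T^{t}_{n}\), and the classical shares reconstruct the one-time-pad keys), and the overall share size is the sum of the \(\mathsf{QC}\) share size and the six classical share sizes produced by \(\mathsf{SS}\), each polynomial in \(n\) and in the security parameter. Efficiency of sharing and reconstruction is immediate because each layer (the QOTP-based encryption \(\mathsf{QAE}\), the Shamir-type \(\mathsf{QC}\), and the classical \(\mathsf{SS}\)) is itself polynomial-time.

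The main technical step, and the one I expect to be the principal obstacle, is the computational-privacy reduction: one must show that if a QPT distinguisher tells apart the partial traces \(\mathrm{Tr}_{\overline{P}}(\mathsf{QASS.Share}(\rho_{1}))\) and \(\mathrm{Tr}_{\overline{P}}(\mathsf{QASS.Share}(\rho_{2}))\) for an unauthorized \(P\), then either the quantum one-time-pad layer is broken (ruled out information-theoretically by Theorem \ref{thm:main-qind-qcpa}) or the classical scheme \(\mathsf{SS}\) is broken, contradicting post-quantum security of the underlying one-way function. This is carried out by the same hybrid argument used in the proof of Theorem \ref{thm:priv}, but with the trace distance replaced by the quantum adversarial pseudometric \(\mathsf{Adv}_{\mathcal{F}}\) of Definition \ref{def:qadv}; the subtlety is verifying that the telescoping bound over the six key shares, the data-processing / triangle-type properties of \(\mathsf{Adv}_{\mathcal{F}}\), and the QOTP-hiding of the encoded quantum state compose without an exponential blow-up, yielding an overall advantage bounded by \(\mathsf{negl}(\lambda)\).
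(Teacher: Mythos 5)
Your proposal is correct and follows essentially the same route the paper takes (via \c{C}akan et al.'s Theorem~11): instantiate the compiler with the Yao / Cleve--Gottesman--Lo post-quantum computational classical scheme for \(f\in\mathsf{monotone}\ \mathsf{P}\) and quantum Shamir sharing for \(f'=T^{t}_{n}\) with \(t=\lfloor n/2\rfloor+1\), where heaviness guarantees \(T^{t}_{n}\ge f\) and the no-cloning property, and privacy follows by the same hybrid argument with the trace distance replaced by the adversarial pseudometric \(\mathsf{Adv}_{\mathcal{F}}\). This matches the paper's intended argument, so no further comment is needed.
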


In the proof [Theorem 11., Page 25., \cite{ccakan2023computational}] for \(t=\lfloor\frac{n}{2}\rfloor+1\), \(f'=T^t_n\) is chosen for which \(T^t_n \geq f\), when \(f\) is heavy.

\medskip

\begin{definition}(\cite{ccakan2023computational})\label{def:mNP}
A monotone function \( f \) is said to be in \(\mathsf{mNP}\) if the corresponding language \( \mathcal{L} = \{x \in \{0,1\}^n \mid f(x) = 1\} \) is in \(\mathsf{NP}\).
\end{definition}

Let \( n \in \mathbb{Z}^{+} \) be a positive integer representing the number of players. Define \( \mathcal{S} \) as the Hilbert space corresponding to the secret space. Let \( \mathcal{H}_{1}, \ldots, \mathcal{H}_{n} \) be Hilbert spaces representing the share spaces for the \( n \) players. Consider \( f\colon\{0,1\}^{n} \longrightarrow \{0,1\}\) as a no-cloning monotone function that characterizes a language \( \mathcal{L} \in \mathsf{mNP} \), and let \( \mathcal{Y} \) be a polynomial-time verifier for \( \mathcal{L} \).

Anamologous to the definition of the quantum secret-sharing(\(\mathsf{QSS}\)) in \(\mathsf{mNP}\) from [Section 6, Page 26., \cite{ccakan2023computational}], we can define the \(\mathsf{QASS}\) too.

\begin{definition}(\cite{ccakan2023computational})
A quantum anamorphic secret-sharing scheme (\(\mathsf{QASS}\)) for \(\mathsf{mNP}\) that realizes an access function \( f \) is defined as a set of trace-preserving quantum operations:
\[
\Sigma^{\mathsf{mNP}}_{\mathsf{QASS}} = (\mathsf{QASS.Share}, (\mathsf{QASS.Rec}_{P\subseteq [n]}^{\mathsf{AM}}))
\]
that satisfy the following conditions for all subsets \( P \subseteq [n] \):  

\noindent \(\bullet\) \textbf{Correctness}: If \( f(P) = 1 \), then for any valid witness \( w \) such that \( \mathcal{Y}(P, w) = 1 \), the reconstruction process ensures that if \( \rho_P \) represents the shares held by the subset \( P \), then  
\[
  \mathsf{Rec}_P(\rho_P, P, w) = |\psi\rangle.
\]

\noindent \(\bullet\) \textbf{Privacy}: If \( f(P) = 0 \), then for any quantum states \( |\psi_1\rangle, |\psi_2\rangle \in \mathcal{S} \) and for any quantum polynomial-time (QPT) adversary \( \{C_\lambda\}_\lambda \), the following holds:
\[
  |\Pr[C(\Tr_{\overline{P}} (\mathsf{Share}(|\psi_1\rangle\langle\psi_1|; 1^\lambda))) = 1] - \Pr[C(\Tr_{\overline{P}} (\mathsf{Share}(|\psi_2\rangle\langle\psi_2|; 1^\lambda))) = 1] | \leq \mathsf{negl}(\lambda).
\]
\end{definition}

\medskip

\begin{lemma}(\cite{KNY14}) 
    If \(f\in \mathsf{mNP}\), there is an efficient post-quantum computational classical secret-sharing scheme realizing \(f\) based on the existence of post quantum secure witness encryption for \(\mathsf{NP}\) and one-way functions.
\end{lemma}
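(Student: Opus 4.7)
The plan is to instantiate the Komargodski--Naor--Yogev \cite{KNY14} construction for computational secret-sharing realizing monotone NP access structures, and then to verify that each step of its security argument transfers to the post-quantum setting. Let \(f\in\mathsf{mNP}\) with polynomial-time NP verifier \(\mathcal{Y}\) certifying the relation \(\mathcal{R}_f=\{(P,w):\mathcal{Y}(P,w)=1\}\), and let \(\mathcal{L}_f=\{P\subseteq[n]:f(P)=1\}\) be the associated NP language. Assume a post-quantum semantically secure witness-encryption scheme \(\mathsf{WE}=(\mathsf{WE.Enc},\mathsf{WE.Dec})\) for \(\mathsf{NP}\), together with auxiliary cryptographic primitives (pseudorandom generators, symmetric-key encryption, statistically binding and computationally hiding commitments) built from post-quantum secure one-way functions via the standard HILL-type pipeline.

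First, I would describe the KNY14-style sharing procedure. The dealer encrypts the secret \(s\) under a freshly sampled symmetric key \(K\) to obtain a public ciphertext \(c_s\), and then packages \(K\) inside one or more witness-encryption ciphertexts whose NP statements encode membership in \(\mathcal{L}_f\). Each player \(i\) is assigned a share consisting of player-specific auxiliary information (pseudorandom labels, commitments, and decommitment openings that bind the player's identity to the reconstruction procedure) together with the public objects \(c_s\) and the WE ciphertexts. The sharing is arranged so that any authorized subset \(P\) with witness \(w\) satisfying \(\mathcal{Y}(P,w)=1\) can run \(\mathsf{WE.Dec}\) to unlock \(K\) and then decrypt \(c_s\) to recover \(s\). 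All algorithms run in time polynomial in \(n\), \(|s|\), and \(|\mathcal{Y}|\), so the scheme is efficient in the sense of Definition \ref{def:P}.

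Second, I would verify correctness and post-quantum computational privacy. Correctness is immediate: given an authorized \(P\) with witness \(w\), the reconstructor computes \(K=\mathsf{WE.Dec}(P,w,\cdot)\) and then obtains \(s\) by symmetric decryption of \(c_s\). For privacy, fix an unauthorized \(P\) with \(f(P)=0\); soundness of \(\mathcal{Y}\) implies \(P\notin\mathcal{L}_f\), so post-quantum semantic security of \(\mathsf{WE}\) guarantees that the WE ciphertexts are computationally indistinguishable from encryptions of \(0\) against every QPT distinguisher. A polynomial-length hybrid chain then successively replaces the WE ciphertexts by encryptions of \(0\), swaps each unopened commitment for a commitment to \(0\), and substitutes PRG outputs by uniform strings; each hop costs at most \(\mathsf{negl}(\lambda)\) distinguishing advantage. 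In the final hybrid, the view of \(P\) is information-theoretically independent of \(s\), yielding overall indistinguishability bounded by \(\mathsf{negl}(\lambda)\).

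The main obstacle will be ensuring that every reduction in the hybrid argument survives against quantum adversaries without loss. This requires (i) a genuinely post-quantum secure witness-encryption scheme for \(\mathsf{NP}\), which is taken as the primary hypothesis; (ii) post-quantum secure pseudorandom generators and commitments, which follow from post-quantum one-way functions (e.g., LWE-based constructions) by the classical HILL theorem since that theorem relativizes to QPT distinguishers; and (iii) that every reduction is black-box and rewinding-free, so that QPT distinguishing advantages compose additively. Granting these three conditions, the aggregate advantage across the polynomial-length hybrid sequence remains negligible in \(\lambda\), and the resulting polynomial-time protocol constitutes an efficient post-quantum computational classical secret-sharing scheme realizing \(f\), as claimed.
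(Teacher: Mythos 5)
The paper offers no proof of this lemma at all: it is imported verbatim from Komargodski--Naor--Yogev \cite{KNY14} (as also used by \c{C}akan et al.~\cite{ccakan2023computational}), so what you are really doing is reconstructing the KNY14 argument and checking that it survives QPT adversaries. Your reconstruction of correctness, efficiency, and the post-quantum lifting (PRGs and commitments from post-quantum one-way functions via HILL, black-box rewinding-free reductions so QPT advantages compose additively over polynomially many hybrids) is fine and matches how this literature handles the quantum setting.

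There is, however, a genuine gap in your privacy argument, and it sits exactly at the subtle point that \cite{KNY14} was written to resolve. In the actual construction the witness-encryption statement is fixed at sharing time and is \emph{existential}: ``there exist an authorized set \(P\), an NP-witness \(w\) for \(P\), and preimages (or openings) of the published labels of the parties in \(P\).'' For an unauthorized coalition \(T\) this statement is in general \emph{true} --- the honest parties' seeds exist, the coalition merely does not know them --- so the step ``\(f(P)=0\) implies \(P\notin\mathcal{L}_f\), hence WE semantic security lets us replace the WE ciphertexts by encryptions of \(0\)'' is invalid: WE security applies only to \emph{false} statements, and the coalition's set is not the statement being encrypted. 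KNY14 fix this with the PRG trick: publish \(z_i=G(s_i)\) and give party \(i\) the seed \(s_i\); in the privacy hybrid one \emph{first} uses PRG security to replace \(z_i\) for every \(i\notin T\) by a uniform string, which with overwhelming probability lies outside the image of \(G\), and since by monotonicity every authorized set must contain some \(i\notin T\), the WE statement becomes statistically false --- only \emph{then} can WE security be invoked. Your hybrid chain lists these ingredients but in the wrong logical order and with the wrong justification for the WE hop; as written, the argument would need \emph{extractable} witness encryption, a strictly stronger assumption than the lemma's hypothesis. Reordering the hybrids (PRG switch first, WE second) and justifying the WE step by falsity of the encrypted statement rather than by \(P\notin\mathcal{L}_f\) repairs the proof.
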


The following theorem is proved by \c{C}akan et al. [Section 6, Page 26, Theorem 12] proving the existence of a \(\mathsf{QSS}\) for every heavy function in \(\mathsf{mNP}\).

\begin{theorem}\cite{ccakan2023computational}
    For any heavy function \( f\colon \{0,1\} \longrightarrow \{0,1\} \) belonging to \( \mathsf{mNP} \), there exists a computational \( \mathsf{QSS} \) that realizes \( f \) with \( \mathsf{size}(\mathsf{QSS})\) bounded above by \(\mathsf{poly}(n) \). This construction relies on the existence of post-quantum secure witness encryption for \( \mathsf{NP} \) and the existence of one-way functions.
\end{theorem}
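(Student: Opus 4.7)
The plan is to invoke the compiler of Theorem~\ref{thm:ass} with two carefully chosen ingredients: a post-quantum computational classical secret-sharing scheme $\mathsf{SS}$ realizing $f$, and an efficient quantum erasure-correcting code $\mathsf{QC}$ realizing some monotone $f'\ge f$. Since $f\in\mathsf{mNP}$, I would apply the KNY14-type construction to obtain an efficient post-quantum computational classical secret-sharing scheme $\mathsf{SS}$ for $f$, whose existence is guaranteed by post-quantum secure witness encryption for $\mathsf{NP}$ together with post-quantum one-way functions. This takes care of the classical side of the compiler.

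For the quantum side, the heaviness hypothesis is exactly what I need. Because $f$ is heavy, every authorized set $P$ satisfies $|P|\ge t$ with $t=\lfloor n/2\rfloor+1$, so the threshold function $T^t_n$ dominates $f$ pointwise, i.e.\ $T^t_n\ge f$. Since $t>n/2$, the no-cloning property $f'(P)=1\Rightarrow f'(\overline P)=0$ holds for $f'=T^t_n$, and so the quantum Shamir scheme yields an efficient perfect $\mathsf{QECC}$ realizing $T^t_n$ with share size $O(n\log n)$. Therefore the hypothesis $f'\ge f$ required by the compiler is met by a polynomial-size, efficiently computable $\mathsf{QC}$.

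Feeding $(\mathsf{SS},\mathsf{QC})$ into the compiler of Theorem~\ref{thm:ass} then produces a quantum secret-sharing scheme $\mathsf{QSS}$ realizing $f$. Correctness follows verbatim from Theorem~\ref{thm:corr}, since the authorized sets of $\mathsf{SS}$ recover the classical keys and the authorized sets of $\mathsf{QC}$ (a superset under $f'\ge f$) recover the encoded state. The computational privacy guarantee is the subtle point: I would run the hybrid argument from the proof of Theorem~\ref{thm:priv} with the trace distance replaced by the quantum advantage pseudometric $\mathsf{Adv}_{\mathcal{F}}$ of Definition~\ref{def:qadv}, using the quantum one-time pad (Lemma~\ref{prop:otp}) together with the post-quantum computational privacy of $\mathsf{SS}$ to argue that for any unauthorized $P$, the share distributions for any two secrets are QPT-indistinguishable.

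The main obstacle I expect is precisely the security reduction in the hybrid step: the KNY14-type classical scheme is only computationally private, so unlike the statistical proof of Theorem~\ref{thm:priv} one cannot use the bound $\Delta(P,P')\le\sum_j\Delta(p^{(j)},p^{(j)'})$ directly. Instead, one must reduce a distinguisher between the two QPT hybrids to a distinguisher against $\mathsf{SS}$, which requires the reduction itself to be QPT and to simulate the quantum ciphertext component $\xi_{M_f^{(1)}}$ coherently. A standard quantum hybrid over the six key components, each replaced in turn by a sharing of a uniform key, will suffice provided one verifies that the reduction can sample the remaining coordinates and apply $\mathsf{QC.Enc}$ in polynomial time; the compiler's share size bound $\mathsf{size}(\mathsf{QC.Enc}(M_f^{(1)}))+(4d_1+2d_2+1)+6\lceil\log|\mathcal R|\rceil+\lceil\log|\mathcal J|\rceil+\lceil\log((2^{d_1+1})!)\rceil$ is then $\mathsf{poly}(n)$ by the efficiency of $\mathsf{SS}$ and $\mathsf{QC}$, completing the theorem.
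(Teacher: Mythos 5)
Your proposal follows essentially the same route as the cited proof (Theorem 12 of \c{C}akan et al., which this paper invokes rather than reproves): feed a KNY14-style post-quantum computational classical scheme for $f\in\mathsf{mNP}$ and the quantum Shamir $\mathsf{QECC}$ for $f'=T^{\lfloor n/2\rfloor+1}_n\ge f$ (valid by heaviness, and no-cloning since $t>n/2$) into the compiler, then run the hybrid privacy argument with the advantage pseudometric in place of the trace distance. Your identification of the computational-privacy reduction as the only delicate step, handled by a QPT hybrid over the key components reducing to the privacy of $\mathsf{SS}$, matches the intended argument, so the proposal is correct.
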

 Consequently, the existence of \(\Sigma^{\mathsf{mNP}}_{\mathsf{QASS}}\) can be shown easily by exactly following and extending the compiler, as we have constructed in Theorem \ref{thm:ass} with similar \(f'=T^{\lfloor \frac{n}{2}\rfloor}+1\) [Theorem 12., Page 26, \cite{ccakan2023computational}]. 

\if 0
In the paper \cite{applebaum2021upslices}, Applebaum and Nir proved the existence of a perfect classical secret-sharing scheme realizing \(f\) with a share size \(1.5^{n+o(n)}\) and proved the following theorem 

\begin{theorem}(\cite{applebaum2021upslices})
    For any monotone function \( f: \{0,1\}^{n} \longrightarrow \{0,1\} \), there exists a perfect classical secret-sharing scheme that implements \( f \) with share size \( 1.5^{n+o(n)}\).
\end{theorem}

\smallskip
For perfect quantum secret-sharing \c{C}akan et al. \cite{ccakan2023computational} established the following theorem:

\begin{theorem}(\cite{ccakan2023computational})
    If \(f\colon\{0,1\} \longrightarrow \{0,1\}\) is a heavy monotone function, then there exists a perfect quantum secret-sharing scheme \(f\) with a total share size of \(1.5^{n+o(n)}\) classical bits and \(O(n \log n)\) qubits.
\end{theorem}

Similarly, for a heavy monotone function \(f\) it is easy to prove the existence of a perfect quantum anamorphic secret-sharing scheme\(\mathsf{QASS}\) with total share size  

\fi

\medskip
\section{Discussions}
\label{discussion}
\subsection{Qubit Requirements and Entropy Computations}\label{Analysis}
\noindent \(\bullet\) \textbf{Total number of qubits:} The anamorphic ciphertext \( M_f^{(1)} \) and the original ciphertext \(M_{f}^{(o)}\), both resides in \(\mathcal{D}((\mathbb{C}^2)^{\otimes (d_1 + 1)}) \). Therefore the anamorphic encryption requires \((d_1 + 1)\) qubits. Processing \( M_f^{(1)} \) also requires \( (d_1 + 1) \) qubits. Extracted messages \( M_o \) and \( M_c \) use \( d_1 \) and \( d_2 \) qubits, respectively, but the decryption circuit operates on \((d_1 + 1)\) qubits. 

\medskip

\noindent \(\bullet\) \textbf{Mutual informations, von-Neumann entropy and relative entropy:} We have defined the original ciphertext \(
M_f^{(0)} = U_{\sigma_l} M_a^{(0)} U_{\sigma_l}^\dagger,
\) where 
\[
M_a^{(0)} = 
\begin{pmatrix} \tfrac{1}{2} M_o' & 0 \\ 0 & \tfrac{1}{2} M_o' \end{pmatrix}.
\]
Because \(\sigma_l\) is unitary, the spectrum of \(M_{f}^{(0)}\) is unchanged. 

\smallskip

let \(\{\lambda_{i}\}_{i=1}^{2^{d_{1}}}\) be the eigenvalues of the original quantum density matrix \(M_{o}\) and since \(M_o'\) is obtained by using \(\mathsf{QOTP}\), the eigenvalues of the \(M_o'\) is same as the eigenvalues of \(M_o\).

\smallskip

Thus the von Neumann entropy of \(M_{f}^{(0)}\) is
\[
\begin{aligned}
\mathsf{S}(M_f^{(0)}) 
  &= -\sum_{i=1}^{2^{d_{1}}}\Bigl[2\Bigl(\frac{1}{2}\lambda_i\Bigr)\,\log\Bigl(\frac{1}{2}\lambda_i\Bigr)\Bigr] \\
  &= -\sum_{i=1}^{2^{d_{1}}} \lambda_i \Bigl[\log(\lambda_i) - \log 2\Bigr] \\
  &= -\sum_{i=1}^{2^{d_{1}}} \lambda_i\log\lambda_i \;+\; \sum_{i=1}^{2^{d_{1}}} \lambda_i \\
  &= \mathsf{S}(M_o') + 1, \qquad \qquad \qquad(\text{since} \Tr(M_{o}) = 1).
\end{aligned}
\]
  
Therefore, \(\mathsf{S}(M_f^{(0)})=\mathsf{S}(M_o') + 1\) 

\smallskip

We compute \(\mathsf{S}(M_{f}^{(1)})\) for a particular case, assuming that both \(M_o'\) and \(M_c''\) are are simultaneously diagonalizable. Let \(\{\mu_{i}\}_{i=1}^{2^{d_{1}}}\) be the eigenvalues of the embedded covert quantum density matrix \(M_{c}''\). 

Thus for each \(i=1,\dots,2^{d_{1}+1}\), we get pair of eigenvalues of \(M_f^{(1)}\) as
\(
\Bigl\{\frac{1}{2}\lambda_i \pm \frac{1}{\eta}\,\mu_i \Bigr\},\) since \(M_{o}\) is strictly positive-definite.

Then,
\[
\begin{aligned}
\mathsf{S}(M_f^{(1)}) 
  = -\sum_{i=1}^{2^{d_{1}+1}} \Biggl[ \Bigl(\frac{1}{2}\lambda_i + \frac{1}{\eta}\,\mu_i\Bigr)
    \log\Bigl(\frac{1}{2}\lambda_i + \frac{1}{\eta}\,\mu_i\Bigr) + \Bigl(\frac{1}{2}\lambda_i - \frac{1}{\eta}\,\mu_i\Bigr)
    \log\Bigl(\frac{1}{2}\lambda_i - \frac{1}{\eta}\,\mu_i\Bigr)
    \Biggr].
\end{aligned}
\]
As \(\mathsf{QOTP}\) encryption is information-theoretically secure. In other words, if the keys are unknown then the ciphertext reveals no information about the underlying plaintext. In our construction the states \(M_o'\) and \(M_c'\) are encrypted by independent random \(\mathsf{QOTP}\) keys, so that without knowledge of the key one has  
\[
I(M_o; M_f^{(1)}) = 0\quad\text{and}\quad I(M_c; M_f^{(1)}) = 0.
\]

We now compute the quantum relative entropy for a particular case, under the same assumption that \(M_{o}'\) and \(M_{c}''\) are simultaneously diagonalizable
\[
\begin{aligned}
S(M_f^{(1)}\|M_f^{(0)}) 
  &= \sum_{i=1}^{2^{d_{1}+1}}\Biggl[ \Bigl(\frac{1}{2}\lambda_i + \frac{1}{\eta}\,\mu_i\Bigr)
    \log\frac{\frac{1}{2}\lambda_i + \frac{1}{\eta}\,\mu_i}{\frac{1}{2}\lambda_i} + \Bigl(\frac{1}{2}\lambda_i - \frac{1}{\eta}\,\mu_i\Bigr)
    \log\frac{\frac{1}{2}\lambda_i - \frac{1}{\eta}\,\mu_i}{\frac{1}{2}\lambda_i}\Biggr] \\
  &= \sum_{i=1}^{2^{d_{1}+1}} \Biggl[
    \Bigl(\frac{1}{2}\lambda_i + \frac{1}{\eta}\,\mu_i\Bigr)
    \log\Bigl(1 + \frac{2\,\mu_i}{\eta\,\lambda_i}\Bigr)
    + \Bigl(\frac{1}{2}\lambda_i - \frac{1}{\eta}\,\mu_i\Bigr)
    \log\Bigl(1 - \frac{2\,\mu_i}{\eta\,\lambda_i}\Bigr)
    \Biggr].
\end{aligned}
\]

Let \(x_i = \frac{2\mu_i}{\eta\,\lambda_i}\), then 

\begin{align}
S(M_f^{(1)}\|M_f^{(0)}) 
  &= \sum_{i=1}^{2^{d_{1}+1}} \frac{1}{2}\lambda_i\Bigl[(1+x_i)\log_{2}(1+x_i)
    +(1-x_i)\log_{2}(1-x_i)\Bigr].
\end{align}
Define the function
\[
f(x) = \frac{1}{2}\Bigl[(1+x)\log_{2}(1+x) + (1-x)\log_{2}(1-x)\Bigr],
\]
so that
\[
S(M_f^{(1)}\|M_f^{(0)}) = \sum_{i=1}^{2^{d_{1}+1}} \lambda_i\, f(x_i).
\]
For $|x|<1$ and using the Taylor series expansion for the base-$2$ logarithm, we get, 
\begin{equation}
      f(x)=\frac{x^2}{2\ln2}+\frac{x^4}{12\ln2}+O(x^6).
\end{equation}
The function $f(x)$ is \emph{even} and \emph{convex} in \((-1,1)\) and easy to verify that \(f(x) \le x^2\quad \text{for } |x|\le 1.\)

\medskip
\noindent Thus, for each \(i\) we have
\[
\lambda_i\, f(x_i) \le \lambda_i\, x_i^2
  = \lambda_i\,\left(\frac{2\mu_i}{\eta\,\lambda_i}\right)^2
  = \frac{4\,\mu_i^2}{\eta^2\,\lambda_i}.
\]
Thus we get
\[
S(M_f^{(1)}\|M_f^{(0)}) \le \frac{4}{\eta^2}\sum_{i=1}^{2^{d_{1}+1}}\frac{\mu_i^2}{\lambda_i}.
\]

As, \(\frac{1}{\eta} < \mathsf{negl}(\lambda)\), this bound indicates that in our construction the anamorphic ciphertext \(M_{f}^{(1)}\) and the original ciphertext \(M_{f}^{(0)}\) are indistinguishable.

\medskip
\if 0

\noindent \(\bullet\) \textbf{Total time complexity:} 
\begin{itemize}
    \item \textbf{Time complexity for anamorphic encryption:} To encrypt \(M_{o}\) using \(\mathsf{QOTP}\) we need \(d_{1}\) qubits and \(2d_{1}\) Pauli operators \(X\) and \(Z\). Each Pauli operation has \(O(1)\) time complexity, therefore the total complexity is \(O(d_1)\). The \(\mathsf{QOTP}\) encryption of \(M_c\) requires applying \(2d_2\) Pauli operators, resulting in \(O(d_2)\) time complexity. The zero padding operation from \(M_c'\) to \(M_c''\) involves embedding \(M_c'\) into a larger Hilbert space \(\mathcal{H}_{c}^{e}\), which has time complexity \(O(2^{d_1})\) due to matrix operations. The construction of \(M_a\) involves block matrix operations, which has \(O(2^{d_1})\) time complexity and lastly, applying the permutation operation \(\sigma_l\) takes \(O(2^{d_1+1})\) time due to matrix multiplication. Hence, the total time complexity for anamorphic encryption is 
\(
O(d_1 + d_2 + 2^{d_1+1}).
\)

\item \textbf{Time complexity for anamorphic decryption:} Applying the inverse permutation \(\sigma_l^\dagger\) has the time complexity \(O(2^{d_1+1})\). The time complexity for extracting blocks from \(M_d^{(1)}\) is \(O(2^{d_1})\) and then \(\mathsf{QOTP}\) decryption of \(M_o'\) and \(M_c'\) involves applying \(d_1\) and \(d_2\) Pauli operators, respectively, resulting in \(O(d_1 + d_2)\) time complexity. Hence, the total time complexity for both original and covert decryption is also
\(
O(d_1 + d_2 + 2^{d_1+1}).
\)
\end{itemize}
\medskip

\noindent \(\bullet\) \textbf{Quanutm circuit complexity:} 
The circuit complexity is derived from the number of quantum gates required. The \(\mathsf{QOTP}\) encryption of \(M_o\) requires \(2d_1\) Pauli gates \(X\) and \(Z\) and the \(\mathsf{QOTP}\) encryption of \(M_c\) requires \(2d_2\) Pauli gates. The zero padding operation from \(M_c'\) to \(M_c''\) requires \(O(d_1 - d_2)\) ancillary qubits and \(O(1)\) gates. Constructing \(M_a\) requires \(O(2^{d_1})\) gates and applying \(\sigma_l\) requires \(O(2^{d_1+1})\) gates. 
Hence, the total quantum circuit complexity for anamorphic encryption is \(O(d_1 + d_2 + 2^{d_1+1})\) and similarly the total quantum circuit complexity for anamorphic decryption is also \(O(d_1 + d_2 + 2^{d_1+1})\).

\fi

\subsection{Possible Attacks}\label{sec:attack}
In this section, we present two possible attacks by the dictator and discuss how they can be prevented.

\noindent \textbf{Case-I:}
The adverary or the dictator is authorized to have the origianl key shares but not authorized to have covert key shares. As we proved the perfect privacy of both \(\mathsf{SS}\) and \(\mathsf{ASS}\), the dictator cannot reconstruct the covert key shares but will be able to reconstruct the original key shares.

\medskip

\noindent \textbf{Case-II:} If \textit{\(\mathcal{D}\) wants to enter as an extra player in the set of players to receive shares}. If the dictator joins the set of players as an additional player to obtain a share and subsequently demands that the authorized set of players submit their shares for message reconstruction, the authorized set of players strategically \textit{partially cheats} the dictator. Specifically, while they provide the dictator with the correct key shares necessary to reconstruct the original message, they simultaneously submit \textit{forged shares} for the other covert keys, thereby ensuring that the dictator is unable to access unauthorized information.

We now compute the Partial cheating probability
\(\mathsf{Cheat}^{(p)}\left(V_{i_1}^{(a)}, \ldots, V_{i_t}^{(a)}\right).
\)
In our construction, the anamorphic key is
\(
k=(k_1,k_4,k_2,k_5,k_3,k_6),
\)
with the original key part
\(
\mathsf{k}^{(o)}=(k_1,k_2,k_3)
\)
and the covert key part
\(
\mathsf{k}^{(c)}=(k_4,k_5,k_6).
\)
The key is chosen uniformly at random from
\(
\{0,1\}^{2d_1}\times\{0,1\}^{2d_2}\times\{1,\dots,2^{d_1+1}\}\times\{1,\dots,2^{d_1+1}\}\times\operatorname{Sym}(2^{d_1+1})\times\mathcal{J}.
\)
Thus, the covert key is uniformly distributed over
\(
\mathcal{S}^{(c)} = \{0,1\}^{2d_2}\times\{1,\dots,2^{d_1+1}\}\times\mathcal{J}
\)
with 
\(\mid \mathcal{S}^{(c)}\mid = 2^{2d_2}\cdot 2^{d_1+1}\cdot |\mathcal{J}| \;.
\)

\medskip

\noindent Suppose the honest covert shares of the \(t\) players (when they are honest) are
\(
b^{(c)} = \bigl(s^{(c)}_{i_1}, s^{(c)}_{i_2}, \dots, s^{(c)}_{i_t}\bigr),
\)
and the dictator’s reconstruction function \(\mathsf{Sec}^{(c)}\) then returns
\(
\mathsf{Sec}^{(c)}(b^{(c)}) = s^{(c)},
\)
which is the (correct) covert secret.

\smallskip
Now, assume that a coalition of cheaters wishes to \textit{partially cheat} by forging their covert shares. That is, they replace \(b^{(c)}\) by some \(b^{(c)'}\) with
\(
b^{(c)'} \neq b^{(c)}
\)
(with at least one coordinate changed) while leaving the original part untouched (so that the overall reconstructed secret is
\(
\mathsf{Sec}^{(p)}(b')=(\mathsf{Sec}^{(o)}(b),\mathsf{Sec}^{(c)}(b'))
\)
with \(\mathsf{Sec}^{(o)}(b)= s^{(o)}\) as before).
In our construction, since the key (and hence the covert secret) is chosen uniformly at random from \(\mathcal{S}^{(c)}\) and the reconstruction function \(\mathsf{Sec}^{(c)}\) is deterministic and surjective onto \(\mathcal{S}^{(c)}\), any \textit{forged} share tuple \(b^{(c)'}\) will, in effect, cause the dictator to compute a covert secret that is uniformly distributed over \(\mathcal{S}^{(c)}\).
Since the forged reconstruction \(\mathsf{Sec}^{(c)}(b^{(c)'})\) is uniform over \(\mathcal{S}^{(c)}\), the probability that it accidentally equals the honest secret \(s^{(c)}\) is
\[
\Pr\Bigl(\mathsf{Sec}^{(c)}(b^{(c)'}) = s^{(c)}\Bigr)
=\frac{1}{|\mathcal{S}^{(c)}|}.
\]
Hence, the probability that the dictator reconstructs a covert secret different from \(s^{(c)}\) (i.e. that the cheating is successful) is
\[
\Pr\Bigl(\mathsf{Sec}^{(c)}(b^{(c)'})\neq s^{(c)}\Bigr)
=\Big(1-\frac{1}{|\mathcal{S}^{(c)}|}\Big).
\]
Since the cheating probability is defined as the maximum (over all possible true covert shares \(b^{(c)}\) and over all feasible forged choices \(b^{(c)'}\)) of the above probability, we have
\begin{align}
\mathsf{Cheat}^{(p)} &\le \Big(1-\frac{1}{|\mathcal{S}^{(c)}|}\Big).\\
& = \Big(1-\frac{1}{2^{2d_2+d_1+1}\cdot |\mathcal{J}|}\,\Big), \label{eq:parchit}
\end{align}
which ensures the partial cheating probability is very high, and therefore the dictator cannot get the covert key shares with a very high probability. In fact, with optimal forging the players can achieve exactly this probability and the maximum cheating probability is equal to 
\begin{equation}
    \Big(1-\frac{1}{2^{2d_2+d_1+1}\cdot |\mathcal{J}|}\,\Big).
\end{equation}

\section{Conclusion}\label{conclude}  

In this paper, we have constructed a quantum symmetric-key anamorphic encryption scheme and an anamorphic secret-sharing scheme. For future work, we aim to explore the following problems:  

\smallskip  
\textbf{Question 1}: Construct a quantum anamorphic public-key encryption (\(\mathsf{QAPKE}\)) scheme.  

\smallskip  
\textbf{Question 2}: Develop quantum anamorphic secret-sharing using pseudorandom function-like state generators (\(\mathsf{PRFS}\)) and optimize the share size in the case of anamorphic secret-sharing.

\smallskip
\noindent The \(\mathsf{DCM}\) protocol presented here establishes a baseline for anamorphic quantum communications. For future research works in this direction, we want to ask the following questions:

\begin{enumerate}
    \item \textbf{Minimax-Optimal Sample Complexity.}
    Establishing tight sample complexity bounds remains a primary open problem. Specifically, given the covert message space \(\mathcal{H}_{M_c}=(\mathbb{C}^2)^{\otimes d_2}\) and a reconstruction fidelity requirement \(\mathbb{E}[\|\widehat{M}_c - M_c\|_1] \le \varepsilon\), the determination of
    \[
        N^*(\varepsilon,d_2,\delta) \;=\; \inf_{\widehat{M}_c} \sup_{M_c \in \mathcal{D}(\mathcal{H}_{M_c})} \left\{ N : \Pr(\|\widehat{M}_c - M_c\|_1 > \varepsilon) \le \delta \right\}
    \]
    is necessary to quantify the efficiency of our protocol. While our Pauli tomography approach scales as \(O(4^{d_2})\), recent advances in \emph{classical shadows} \cite{Huang2020} and \emph{low-rank matrix recovery} \cite{Gross2011} suggest that \(N \sim O(\mathrm{poly}(d_2)\varepsilon^{-2}\log(1/\delta))\) is attainable when \(M_c\) possesses low-rank structure. Adapting these estimators to the specific block-embedding structure of anamorphic ciphertexts can be investigated further.

    \item \textbf{Adaptive Measurement Protocols.}
    Our current protocol utilizes a fixed, non-adaptive measurement schedule (random Pauli bases). Developing \emph{adaptive} protocols, where the basis choice for the \(k\)-th copy depends on the outcomes \(m_{1},\dots,m_{k-1}\), could significantly reduce variance \cite{Huszar2012}. Formalizing this as a sequential decision problem and deriving information-theoretic lower bounds via Fano's inequality or Assouad's lemma \cite{Yu1997} would clarify the fundamental limits of anamorphic extraction.

    \item \textbf{Rigorous Noise Analysis.}
    To construct a fully rigorous model using continuous-time master equations or Kraus operator formalisms. For instance, the ideal Hadamard gate \(H_R\) should be modeled as a noisy quantum channel \(\mathcal{E}_H\). A standard model is the depolarizing channel
    \[
        \mathcal{E}_H(\rho) \;=\; (1-p)\, U_H \rho U_H^\dagger \;+\; p \frac{I}{2},
    \]
    where \(p \in [0,1]\) is the noise strength. For future work, we want to derive concentration inequalities for \(\|\widehat{B}-B\|_1\) that depend explicitly on such noise parameters and, in the case of non-Markovian noise, the mixing time of the error process \cite{Temme2010}.

    \item \textbf{Information-Theoretic Security via Quantum Privacy.}
    Establishing unconditional security requires quantifying the mutual information \(I(M_c : E)\) between the covert message and an eavesdropper's system \(E\) (which includes the public ciphertext \(M_f^{(1)}\)). We want to ask that can \(I(M_c:E)\le \delta\) be proven for negligible \(\delta\) using tools from \emph{quantum differential privacy} \cite{Aaronson2019} and approximate randomization lemmas \cite{Hayden2004}.

    \item \textbf{Continuous-Variable Extensions.}
    Is it possible to extend quantum anamorphic encryption framework to continuous-variable (CV) systems where \(\mathcal{H}_{M_c}=L^2(\mathbb{R}^{d_2})\) is relevant \cite{Weedbrook2012}?
\end{enumerate}

\section{Appendix}

\noindent  We have included useful results and properties in this section for use throughout our paper. Most of these results are taken from \cite{nielsen2001quantum} and from the Appendix section of the paper \cite{ccakan2023computational}.

\begin{lemma}(Trace distance \cite{nielsen2001quantum, ccakan2023computational}) \label{prop:trace-dist}

\noindent i. For a probability distribution \(\{p_{i}\}_{i \in I}\) and an ensembles of states \(\{\rho_{i}\}_{i \in I}\)
\[
D(\sum_{i\in I} p_{i}\rho_{i}, \sigma) \leq \sum_{i \in I} p_{i} D(\rho_{i},\sigma_{i}).
\]

\noindent ii. For any trace-preserving quantum operation \(\mathcal{E}\), 
\[
D(\mathcal{E}(\rho),\mathcal{E}(\sigma)) \leq D(\rho,\sigma).
\]

\noindent iii. Let \(AB\) be a composite system and The states are assumed to be of \(AB\),
\[
D(\rho^{A},\sigma^{A}) \leq D(\rho^{AB},\sigma^{AB}).
\]

\noindent iv. Given two density matrices \(\sigma\) and \(\rho\), and for any state \(\tau\),
\[
D(\rho \otimes \tau, \sigma \otimes \tau)=D(\rho, \sigma). 
\]

\noindent v. For any two probability distributions \(\{p_{i}\}_{i \in I}\), \(\{p'_{i}\}_{i \in I}\) and ensembles of states \(\{\rho_i\}_{i \in I}\), \(\{\sigma_i\}_{i \in I}\).
\[
D\left( \sum_{i \in I} p_{i}\rho_{i},\sum_{i \in I}p'_{i}\sigma_{i}\right) \leq \Delta (p_{i},p'_{i}) + \sum_{i \in I} p_{i}D(\rho_{i},\sigma_{i}).
\]

\noindent vi. For any two states \(\tau_{1},\tau_{2}\),
\[
D(\rho \otimes \tau_{1},\sigma\otimes \tau_{2}) \leq D(\rho,\sigma) + D(\tau_{1},\tau_{2}).
\]
\end{lemma}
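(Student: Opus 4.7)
The plan is to derive all six items from two standard facts: the variational characterization
\[
D(\rho,\sigma) \;=\; \max_{0\preceq E\preceq I}\,\mathrm{Tr}\bigl[E(\rho-\sigma)\bigr],
\]
already used in the proof of Lemma~\ref{lemma:advbd}, and the Jordan--Hahn decomposition $\rho-\sigma = Q_+ - Q_-$ into positive parts with mutually orthogonal supports, which yields $\|\rho-\sigma\|_1 = \mathrm{Tr}(Q_+)+\mathrm{Tr}(Q_-)$. With these two tools in hand, almost every part reduces to a short direct manipulation. The logical order I would follow is: first (ii), then derive (iii), (iv), and (i), and finally combine these with the triangle inequality to obtain (v) and (vi).

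For (ii), I apply Jordan--Hahn to $\rho-\sigma$ and use that $\mathcal{E}$ is positive and trace-preserving, so $\mathcal{E}(Q_\pm)\succeq 0$ and $\mathrm{Tr}(\mathcal{E}(Q_\pm))=\mathrm{Tr}(Q_\pm)$; the triangle inequality for the trace norm then gives $\|\mathcal{E}(\rho)-\mathcal{E}(\sigma)\|_1\le\|\mathcal{E}(Q_+)\|_1+\|\mathcal{E}(Q_-)\|_1=\|\rho-\sigma\|_1$. Since $\mathrm{Tr}_B$ is itself a CPTP map, (iii) is immediate from (ii). For (iv), the multiplicativity $\|(\rho-\sigma)\otimes\tau\|_1 = \|\rho-\sigma\|_1\,\|\tau\|_1$ together with $\|\tau\|_1=\mathrm{Tr}(\tau)=1$ gives the conclusion.

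For (i), I use the variational form and linearity of the trace:
\[
D\!\left(\sum_i p_i\rho_i,\,\sigma\right) \;=\; \max_E \sum_i p_i\,\mathrm{Tr}\bigl[E(\rho_i-\sigma)\bigr] \;\le\; \sum_i p_i \max_E \mathrm{Tr}\bigl[E(\rho_i-\sigma)\bigr] \;=\; \sum_i p_i\,D(\rho_i,\sigma),
\]
which handles the standard joint--convexity statement; I read the paper's notation $D(\rho_i,\sigma_i)$ on the right-hand side as a mild typographical asymmetry for $D(\rho_i,\sigma)$, since this is what the cited references state and what the remaining items require. For (v), I apply the triangle inequality with the intermediate state $\sum_i p_i\sigma_i$: the first piece is $\sum_i p_i D(\rho_i,\sigma_i)$ by (i), and the second is
\[
\tfrac{1}{2}\Big\|\sum_i (p_i-p'_i)\sigma_i\Big\|_1 \;\le\; \tfrac{1}{2}\sum_i |p_i-p'_i|\,\|\sigma_i\|_1 \;=\; \Delta(p,p'),
\]
using $\|\sigma_i\|_1=1$. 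For (vi), the triangle inequality with intermediate $\rho\otimes\tau_2$ combined with (iv) directly gives $D(\rho\otimes\tau_1,\sigma\otimes\tau_2)\le D(\rho,\sigma)+D(\tau_1,\tau_2)$.

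The main obstacle is not technical but interpretive: resolving whether the right-hand side of (i) is meant to be $\sum_i p_i D(\rho_i,\sigma)$ (the standard joint--convexity version I actually prove above) or a two-ensemble form that would require further interpolation. Once this is fixed, every step above is a short standard manipulation; the only care needed is that in the proof of (ii) the optimal $E$ from the variational characterization lives on the output space $\mathcal{K}$ rather than the input space $\mathcal{H}$, so the Jordan--Hahn route (rather than a naive pull-back of $E$) is what keeps the monotonicity argument rigorous under general CPTP maps.
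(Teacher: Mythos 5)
Your proof is correct, and it is worth noting that the paper itself gives no argument for this lemma---its ``proof'' is only a pointer to Section~9.2.1 of Nielsen--Chuang and Appendix~A of \c{C}akan et al.---so your self-contained derivation does more than the paper does, and it follows essentially the standard textbook route: the variational formula $D(\rho,\sigma)=\max_{0\preceq E\preceq I}\Tr[E(\rho-\sigma)]$ together with the Jordan--Hahn decomposition. Two small remarks. First, in (v) you invoke ``(i)'' to bound $D\bigl(\sum_i p_i\rho_i,\sum_i p_i\sigma_i\bigr)$ by $\sum_i p_i D(\rho_i,\sigma_i)$; under your own reading of (i) (fixed second argument $\sigma$) this is not literally what (i) asserts---it is joint convexity---but it follows from exactly the same one-line variational computation, since $\sum_i p_i\rho_i-\sum_i p_i\sigma_i=\sum_i p_i(\rho_i-\sigma_i)$, so you should either say this explicitly or read the paper's literal (i), with $\sigma_i$ on the right, as that joint-convexity statement. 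Your triangle-inequality route for (v) through the intermediate state $\sum_i p_i\sigma_i$ also differs from the direct splitting $\sum_i p_i\rho_i-\sum_i p'_i\sigma_i=\sum_i p_i(\rho_i-\sigma_i)+\sum_i(p_i-p'_i)\sigma_i$ used in the cited references, but both are valid. Second, in (vi) the first leg of your triangle inequality uses (iv) with a common \emph{first} factor, i.e.\ $D(\rho\otimes\tau_1,\rho\otimes\tau_2)=D(\tau_1,\tau_2)$, which is the symmetric variant of the stated (iv); it follows from the same multiplicativity $\|A\otimes B\|_1=\|A\|_1\,\|B\|_1$, so no harm, just state it. Everything else---(ii) via positivity and trace preservation applied to $Q_\pm$, (iii) as a special case because $\Tr_B$ is CPTP, and (iv) via multiplicativity of the trace norm---is sound, and your reading of the $\sigma_i$ in (i) as a typographical slip for $\sigma$ is the right call.
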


\begin{proof}
    For the proofs see Section 9.2.1 of \cite{nielsen2001quantum} and Appendix A of \cite{ccakan2023computational}.
\end{proof}

\medskip

\begin{lemma}(Properties of Adversarial Advantage Pseudometric \cite{nielsen2001quantum, ccakan2023computational})\label{prop:Adv}
    
\noindent For any circuit family \( \mathcal{F} \) and two states \( \rho, \sigma \), the following holds:  

\noindent i. \( A_\mathcal{F}(\rho, \rho) = 0 \);  

\noindent ii.  \( A_\mathcal{F}(\rho, \sigma) = A_\mathcal{F}(\sigma, \rho) \); 
    
\noindent iii.  For any state \( \tau \),  
   \[
   A_\mathcal{F}(\rho, \sigma) \leq A_\mathcal{F}(\rho, \tau) + A_\mathcal{F}(\tau, \sigma);
   \]  

\noindent iv. Assuming the states are of a composite system \( AB \),  
   \[
   A_\mathcal{F}(\rho_A \otimes |0\rangle\langle 0|, \sigma_A \otimes |0\rangle\langle 0|) \leq A_\mathcal{F}(\rho_{AB}, \sigma_{AB});
   \]  

\noindent v. For any two probability distributions, \( \{p_i\}_{i \in I} \), \( \{q_i\}_{i \in I} \), and ensembles of states \( \{\rho_i\}_{i \in I} \), \( \{\sigma_i\}_{i \in I} \),  
   \[
   A_\mathcal{F}\left(\sum_{i \in I} p_i \rho_i, \sum_{i \in I} q_i \sigma_i\right) \leq \Delta(p_i, q_i) + \sum_{i \in I} p_i A_\mathcal{F}(\rho_i, \sigma_i);
   \]  

\noindent vi. For a probability distribution \( \{p_i\}_{i \in I} \) and an ensemble of states \( \{\rho_i\}_{i \in I} \),  
   \[
   A_\mathcal
   {F}\left(\sum_{i \in I} p_i \rho_i, \sigma\right) \leq \sum_{i \in I} p_i A_\mathcal{F}(\rho_i, \sigma);
   \]  

\noindent vii. For any family \( \mathcal{F}' \) and state \( \tau \) such that there is \( C' \in \mathcal{F}' \) satisfying \( C'(\rho) = C(\rho \otimes \tau) \) and \( C'(\sigma) = C(\sigma \otimes \tau) \) for any \( C \in \mathcal{F} \),  
   \[
   A_\mathcal{F}(\rho \otimes \tau, \sigma \otimes \tau) \leq A_{\mathcal{F}'}(\rho, \sigma).
   \]  
\end{lemma}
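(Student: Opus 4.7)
\noindent\textbf{Proof plan for Lemma \ref{prop:Adv}.}
The plan is to verify each of the seven properties directly from the definition
$A_\mathcal{F}(\rho,\sigma)=\max_{C\in\mathcal{F}}\bigl|\Pr[C(\rho)=1]-\Pr[C(\sigma)=1]\bigr|$,
denoting $p_C(\eta):=\Pr[C(\eta)=1]\in[0,1]$ for any state $\eta$ of appropriate dimension. Properties (i) and (ii) are immediate: if $\rho=\sigma$ then $p_C(\rho)-p_C(\sigma)=0$ for every $C\in\mathcal{F}$, so the maximum vanishes; and since $|a-b|=|b-a|$ for all real $a,b$, swapping $\rho$ and $\sigma$ leaves every term of the maximum unchanged.

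For (iii), I would fix an arbitrary $C\in\mathcal{F}$ and apply the triangle inequality to the real numbers $p_C(\rho),p_C(\tau),p_C(\sigma)$ to obtain $|p_C(\rho)-p_C(\sigma)|\le|p_C(\rho)-p_C(\tau)|+|p_C(\tau)-p_C(\sigma)|\le A_\mathcal{F}(\rho,\tau)+A_\mathcal{F}(\tau,\sigma)$; since the right-hand side is independent of $C$, taking the maximum over $C$ on the left yields (iii). For (v) and the special case (vi), the proof would rely on the fact that measurement probabilities are affine in the state: $p_C\bigl(\sum_i p_i\rho_i\bigr)=\sum_i p_i\,p_C(\rho_i)$ and similarly for $\{q_i,\sigma_i\}$. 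Write the difference as
\[
\sum_i p_i\bigl(p_C(\rho_i)-p_C(\sigma_i)\bigr)+\sum_i (p_i-q_i)\,p_C(\sigma_i),
\]
take absolute values, bound the first sum using $|p_C(\rho_i)-p_C(\sigma_i)|\le A_\mathcal{F}(\rho_i,\sigma_i)$, bound the second using $|p_C(\sigma_i)|\le 1$ together with the definition $\Delta(p_i,q_i)=\tfrac{1}{2}\sum_i|p_i-q_i|$ and a standard rewriting of the signed-sum-of-bounded-values bound, and finally take the maximum over $C\in\mathcal{F}$. Property (vi) follows by setting all $\sigma_i=\sigma$ and $q_i=p_i$ in (v), which kills the statistical-distance term.

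Properties (iv) and (vii) are data-processing statements and will require the mild closure assumption on $\mathcal{F}$ standard in this model: the family is closed under pre-composition by a fixed, efficiently implementable CPTP map (adjoining an ancilla, tracing out a register, or tensoring with a fixed preparable state). For (iv), consider the CPTP map $\mathcal{E}(\xi_{AB}):=\Tr_B(\xi_{AB})\otimes\ket{0}\bra{0}$, which is implementable by a quantum circuit (trace out $B$ by swapping into fresh $\ket{0}$-ancillas, then discard). For every $C\in\mathcal{F}$ the composition $C\circ\mathcal{E}$ is again in $\mathcal{F}$ under the closure assumption, and by construction $p_{C\circ\mathcal{E}}(\rho_{AB})=p_C(\rho_A\otimes\ket{0}\bra{0})$ and analogously for $\sigma_{AB}$; so $|p_C(\rho_A\otimes\ket{0}\bra{0})-p_C(\sigma_A\otimes\ket{0}\bra{0})|\le A_\mathcal{F}(\rho_{AB},\sigma_{AB})$, and taking the maximum gives (iv). For (vii), the hypothesis supplies, for each $C\in\mathcal{F}$, a $C'\in\mathcal{F}'$ with $p_C(\eta\otimes\tau)=p_{C'}(\eta)$ for $\eta\in\{\rho,\sigma\}$; hence $|p_C(\rho\otimes\tau)-p_C(\sigma\otimes\tau)|=|p_{C'}(\rho)-p_{C'}(\sigma)|\le A_{\mathcal{F}'}(\rho,\sigma)$, and taking the maximum over $C\in\mathcal{F}$ concludes.

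The main obstacle I anticipate is purely conceptual rather than computational: making (iv) and (vii) rigorous requires stating explicitly the closure properties of the QPT circuit family $\mathcal{F}$ (closure under adjoining/discarding ancillas and under pre-composition with fixed efficiently-preparable CPTP maps), which are standard in the non-uniform polynomial-size circuit model but must be invoked carefully so that the composed circuits $C\circ\mathcal{E}$ and $C'$ remain polynomial-size and still lie in $\mathcal{F}$ (or $\mathcal{F}'$). All other parts are short arithmetic manipulations of the defining supremum together with the triangle inequality and affine-linearity of outcome probabilities.
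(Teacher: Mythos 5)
Your proposal is correct, but note that the paper itself does not prove this lemma at all: its ``proof'' is a one-line pointer to Appendix~A of the cited work of \c{C}akan et al., so there is no in-paper argument to compare against. Your self-contained verification follows the standard route and is sound: (i)--(iii) are immediate from the definition and the triangle inequality on the reals; (v) correctly exploits affine-linearity of $p_C$ in the state together with the decomposition $\sum_i p_i(p_C(\rho_i)-p_C(\sigma_i))+\sum_i(p_i-q_i)p_C(\sigma_i)$, where the second term is bounded by $\Delta(p,q)$ using $p_C(\sigma_i)\in[0,1]$ and $\sum_i(p_i-q_i)=0$; (vi) is the stated specialization; and (vii) is a direct consequence of the hypothesis. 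The one substantive point you raise --- that (iv) requires $\mathcal{F}$ to be closed under pre-composition with the fixed CPTP map $\xi_{AB}\mapsto\Tr_B(\xi_{AB})\otimes\ket{0}\bra{0}$ --- is a genuine gap in the lemma \emph{as stated}, since the paper's definition of the circuit family (non-uniform polynomial-size circuits over a fixed gate set with $\ket{0}$-initialized ancillas) does not explicitly assert this closure; you are right to flag it as an assumption that must be invoked, and it is the standard one in this model. Your proof is a useful addition precisely because the paper outsources the argument entirely.
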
 

\begin{proof}
    For the proofs see Appendix A of \cite{ccakan2023computational}.
\end{proof}

\bibliographystyle{plain}
\bibliography{references}

\end{document}